\newcommand{\beq}[1]{\begin{equation}\label{#1}}
\newcommand{\eeq}{\end{equation}}
\newcommand{\beqn}[1]{\begin{eqnarray}\label{#1}}
\newcommand{\eeqn}{\end{eqnarray}}
\newtheorem{thmbody}{Theorem}
\newenvironment{thm}{
\begin{thmbody}
	}{
	\end{thmbody} 
	}
\newtheorem{dfnbody}{Definition}
\newtheorem{corbody}{Corollary}
\newenvironment{cor}{
\begin{corbody}
	}{
	\end{corbody} 
	}
\newtheorem{lemmabody}{Lemma}
\newenvironment{lemma}{
\begin{lemmabody}
	}{
	\end{lemmabody} 
	}
\newtheorem{propbody}{Proposition}
\newenvironment{proof}{
	{\it Proof:}
	}{
 $\Box$
	}
\begin{document}
\title{Analogy and duality\\ between \\random channel coding and lossy source coding}

\author{Sergey~Tridenski and Ram~Zamir
}

\maketitle





%
%
%
%

Here 
we
write in a unified fashion (using ``$R(P, Q, D)$'' \cite{ZamirRose01}) the random coding exponents in channel coding and lossy source coding.\footnote{This paper is self-contained, and serves also as an addendum to our paper ``Exponential source/channel duality''. The proofs of the formulas with $R(P, Q, D)$ will be given below.}
We derive their explicit forms and
show, that, for a given random codebook distribution $Q$, the channel {\em decoding error} exponent
can be viewed as an {\em encoding success} exponent in lossy source coding,
and 
the channel {\em correct-decoding} exponent
can be viewed as an {\em encoding failure} exponent in lossy source coding.
We then extend the channel exponents to arbitrary $D$, which corresponds for $D>0$ to erasure decoding
and for $D < 0$ to list decoding.
For comparison,
we also derive the exact random coding exponent for Forney's optimum tradeoff decoder \cite{Forney68}.

In the case of source coding, we assume discrete memoryless sources with a finite alphabet ${\cal X}$ and a finite reproduction alphabet $\hat{\cal X}$.
In the case of channel coding,
we assume discrete memoryless channels with finite input and output alphabets ${\cal X}$ and ${\cal Y}$,
such that for any $(x, y) \, \in\, {\cal X} \times {\cal Y}$ the channel probability is positive $P(y \, | \, x) \, > \, 0$.
For simplicity, let $R$ denote an exponential size of a random codebook,
such that there exist block lengths $n$ for which $e^{nR}$ is integer. We assume the size of the codebook $M \, = \, e^{nR}$ for source coding,
and $M \, = \, e^{nR} \, + \, 1$ for channel coding.
Let $Q$ denote the (i.i.d.) distribution,
according to which the codebook is generated.
We use also the definition:
\begin{equation} \label{eqRTQDDefinition}
R(T, Q, D) \;\; \triangleq \;\; \min_{W(\hat{x} \,|\, x): \;\; d(T\,\circ\, W) \; \leq \; D} D(T \circ W \; \| \; T \times Q),
\end{equation}
where $T(x)$ is a distribution over ${\cal X}$, $Q(\hat{x})$ is a distribution over $\hat{\cal X}$, and $d(T\circ W)$ denotes an average distortion measure $d(x, \hat{x})$. We consider $R(T, Q, D)\,=\,+\infty$, if the set $\left\{{W(\hat{x} \,|\, x): \;\; d(T\,\circ\, W) \; \leq \; D}\right\}$ is empty.

\tableofcontents

\newpage
\section{\bf Encoding success exponent (for sources)} \label{S1}
\begin{thm} \label{thm1}
{\em For a source $P(x)$ and distortion constraint $D$,
the exponent in the probability of successful encoding is given by}
\begin{equation} \label{eqSuccess}
\lim_{n \, \rightarrow \, \infty} \; \left\{-\frac{1}{n}\ln P_{s}\right\} \;\; = \;\;
E_{s}(R, D)
\;\; \triangleq \;\;
\min_{T(x)} \; \Big\{ D(T \| P) \;\; + \;\; {\big| R(T, Q, D) \; - \; R   \big|\mathstrut}_{}^{+} \Big\},
\end{equation}
{\em except possibly for $D \, = \, D_{\min}\, = \, \min_{x, \, \hat{x}}d(x, \hat{x})$, when the RHS is a lower bound.}
\end{thm}
Note that the exponent (\ref{eqSuccess}) is zero for $R \, \geq \, R(P, Q, D)$.
This theorem is proved in Section~\ref{S19}.

\section{\bf Channel decoding error exponent} \label{S2}
For a channel $P(y \,|\, x)$, the exponent in the probability of decoding error is given by
\begin{equation} \label{eqError}
E_{e}(R) \; = \; \min_{T(x, \, y)} \; \Big\{ D(T \; \| \; Q \circ P) \;\; + \;\; {\big| R(T, Q, 0) \; - \; R   \big|\mathstrut}_{}^{+} \Big\},
\end{equation}
where $R(T, Q, D=0)$ is determined with respect to a particular distortion measure defined as
\begin{equation} \label{eqDmeasure}
d\big((x, y), \hat{x}\big) \; \triangleq \; \ln \frac{P(y \,|\, x)}{P(y \,|\, \hat{x})}.
\end{equation}
Note that this exponent is zero for $R \, \geq \, R(Q\circ P, \, Q, \, 0) \, = \, I(Q\circ P)$.

\section{\bf Encoding failure exponent (for sources)} \label{S3}
\begin{thm} \label{thm2}
{\em For a source $P(x)$ and distortion constraint $D$,
the exponent in the probability of encoding failure is given by}
\begin{equation} \label{eqFailure}
\lim_{n \, \rightarrow \, \infty} \; \left\{-\frac{1}{n}\ln P_{f}\right\} \; = \;
E_{f}(R, D) \; \triangleq \; \left\{
\begin{array}{r l}
\displaystyle \min_{T(x): \;\; R(T, \,Q, \,D) \; \geq \; R} \; D(T \| P), & \;\;\; R \; \leq \; R_{\text{max}}(D), \\
+\infty , & \;\;\; R \; > \; R_{\text{max}}(D),
\end{array}
\right.
\end{equation}
{\em where $R_{\text{max}}(D) \, \triangleq \, \max_{T(x)}R(T, Q, D)$,\footnote{$\max_{T(x)}R(T, Q, D)$ may be alternatively expressed as $\max_{x}$, but it can be $+\infty$.} with the possible exception of points of discontinuity of the function $E_{f}(R, D)$.}
\end{thm}
This exponent is zero for $R \, \leq \, R(P, Q, D)$. For $R$ above $R_{\text{max}}(D)$,
the probability of encoding failure tends to zero super-exponentially as $n$ increases, i.e. the limit of its exponent, as $n \, \rightarrow \, \infty$
(which is exactly ``the exponent'' by definition), is infinity.
This theorem is proved in Section~\ref{S21}.

\section{\bf Channel correct-decoding exponent} \label{S4}
For a channel $P(y \,|\, x)$, the exponent in the probability of correct decoding is given by
\begin{equation} \label{eqCorrect}
E_{c}(R) \; = \; \min_{T(x, \, y)} \; \Big\{ D(T \; \| \; Q \circ P) \;\; + \;\; {\big| R \; - \; R(T, Q, 0)\big|\mathstrut}_{}^{+} \Big\},
\end{equation}
where $R(T, Q, D=0)$ is determined with respect to the distortion measure $d\big((x, y), \hat{x}\big)$ (\ref{eqDmeasure}).
This exponent coincides with
\begin{equation} \label{eqCorrectEquiv}
E_{c}(R) \; = \; \left\{
\begin{array}{r l}
\displaystyle \min_{T(x, \, y): \;\; R(T, \,Q, \,0) \; \geq \; R} \; D(T \; \| \; Q \circ P),
\;\;\;\;\;\;\;\;\;\;\;\;\;\;\;\;\;\;\;\;\;\;\;\;\;\;\;\;\;\;\;\;\;\;\;\;\
& \;\;\; R \; \leq \; R_{1}, \\
\displaystyle \min_{T(x, \, y)} \; \big\{ D(T \; \| \; Q \circ P)  \;\; + \;\; R \;\; - \;\; R(T, Q, 0) \big\}, & \;\;\; R \; > \; R_{1},
\end{array}
\right.
\end{equation}
where $R_{1} \, \leq \, \max_{T(x, \, y)}R(T, Q, 0)$.
The exponent is zero for $R \, \leq \, R(Q\circ P, \, Q, \, 0) \, = \, I(Q\circ P)$.
For $R \, > \, R_{1}$, the exponent is a linearly increasing function of $R$ with constant slope $=1$.\footnote{If the exponent is for the natural base $e$, then $R$ here must be accordingly in natural units ({\em nats}).}

\section{\bf Derivation of the explicit encoding success exponent} \label{S5}
We start with a derivation of an explicit formula for $R(T, Q, D)$:
\begin{lemma} \label{lemma1}
\begin{equation} \label{eqRTQD}
R(T, Q, D) \; = \; \sup_{s\,\geq\,0} \; \bigg\{-\sum_{x}T(x) \ln \sum_{\hat{x}}Q(\hat{x})e^{-s[d(x,\, \hat{x})-D]}\bigg\}.
\end{equation}
\end{lemma}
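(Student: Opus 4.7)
The plan is to derive the formula by Lagrangian duality, treating the definition in (\ref{eqRTQDDefinition}) as a convex optimization over the conditional distribution $W(\hat{x}\,|\,x)$. The objective $D(T\circ W \,\|\, T\times Q)$ is convex in $W$, the distortion constraint $d(T\circ W)\leq D$ and the normalization constraints $\sum_{\hat{x}} W(\hat{x}\,|\,x) = 1$ for each $x$ are linear, so the problem is a standard convex program for which strong duality is expected to hold. First, I would introduce a single Lagrange multiplier $s\geq 0$ for the (scalar) distortion inequality, and keep the per-$x$ normalization constraints explicit. The Lagrangian is
\begin{equation*}
L(W,s) \;=\; \sum_{x}T(x)\sum_{\hat{x}} W(\hat{x}\,|\,x)\ln\frac{W(\hat{x}\,|\,x)}{Q(\hat{x})} \;+\; s\Big(\sum_{x,\hat{x}} T(x)W(\hat{x}\,|\,x)\,d(x,\hat{x}) - D\Big),
\end{equation*}
which decouples across $x$.

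The key step is the inner minimization for fixed $s$. For each $x$ separately, I would rewrite the $x$-slice of $L$ as $\sum_{\hat{x}} W(\hat{x}\,|\,x)\ln\frac{W(\hat{x}\,|\,x)}{Q(\hat{x})e^{-s\,d(x,\hat{x})}}$, recognize this (up to a normalizing constant) as a KL divergence, and apply the Gibbs variational principle. The minimizer is the tilted distribution $W^{\star}_{s}(\hat{x}\,|\,x) = Q(\hat{x})e^{-s\,d(x,\hat{x})}/Z_{s}(x)$ with $Z_{s}(x) = \sum_{\hat{x}'}Q(\hat{x}')e^{-s\,d(x,\hat{x}')}$, and the minimum value on that slice equals $-\ln Z_{s}(x)$. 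Weighting by $T(x)$ and adding the $-sD$ term, I absorb the $-sD$ into the logarithm to obtain the dual function
\begin{equation*}
g(s) \;=\; -\sum_{x}T(x)\ln\sum_{\hat{x}}Q(\hat{x})\,e^{-s[d(x,\hat{x})-D]}.
\end{equation*}

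Finally, I would invoke strong duality to conclude $R(T,Q,D) = \sup_{s\geq 0} g(s)$. The main technical obstacle is justifying the duality gap is zero and handling degenerate cases. For $D$ strictly greater than the minimal achievable distortion $D_{\min}(T,Q) \triangleq \min_{W} d(T\circ W)$, Slater's condition is satisfied (take $W$ achieving $d(T\circ W) < D$), and standard convex duality applies; for $D = D_{\min}(T,Q)$ one can take a limit in $s$ or invoke a continuity argument in $D$. When $D < D_{\min}(T,Q)$ the feasible set is empty and the primal is $+\infty$ by convention, and I would verify the dual matches by showing that some $x$ has $d(x,\hat{x}) > D$ for every $\hat{x}$ in the support of $Q$, making the bracketed term inside the logarithm tend to $0$ and hence $g(s)\to +\infty$ as $s\to\infty$. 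This matches the convention stated after (\ref{eqRTQDDefinition}) and completes the proof.
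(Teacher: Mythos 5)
Your proposal is correct and follows essentially the same route as the paper: the paper also rewrites the constrained minimum as $\min_{W}\sup_{s\geq 0}$ of the Lagrangian, swaps the order by the minimax theorem (objective convex in $W$, linear in $s$), and solves the inner minimization by exactly the Gibbs tilting you describe, obtaining $-\ln\sum_{\hat{x}}Q(\hat{x})e^{-s\,d(x,\hat{x})}$ per source letter before absorbing $-sD$. One minor imprecision in your treatment of the infeasible case $D<\min_{W}d(T\circ W)$: a single $x$ with $\min_{\hat{x}}d(x,\hat{x})>D$ does make its own term blow up, but other terms of $g(s)$ can tend to $-\infty$ linearly in $s$, so the clean argument is the aggregate bound $g(s)\;\geq\; s\sum_{x}T(x)\big[\min_{\hat{x}}d(x,\hat{x})-D\big]\;\rightarrow\;+\infty$, whose coefficient is positive precisely when the feasible set is empty.
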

\begin{proof}
\begin{align}
R(T, Q, D) \;\; & \triangleq \;\; \min_{W(\hat{x} \,|\, x): \;\; d(T\,\circ\, W) \; \leq \; D} D(T \circ W \; \| \; T \times Q)
\nonumber \\
& = \footnotemark \;\;
\min_{W(\hat{x} \,|\, x)} \;\sup_{s \,\geq \,0} \;\big\{ D(T \circ W \; \| \; T \times Q) \; + \; s\big[d(T\circ W) \; - \; D\big]\big\}
\label{eqRTQDminsup} \\
& \overset{(*)}{=} \;\;\;\, \sup_{s \,\geq \,0} \; \min_{W(\hat{x} \,|\, x)} \;\big\{ D(T \circ W \; \| \; T \times Q) \; + \; s\big[d(T\circ W) \; - \; D\big]\big\}
\label{eqWs} \\
& = \;\;\;\; \sup_{s \,\geq \,0} \; \min_{W(\hat{x} \,|\, x)} \;\Bigg\{ \sum_{x,\,\hat{x}}T(x)W(\hat{x} \,|\, x)\ln \frac{W(\hat{x} \,|\, x)}{Q(\hat{x})} \; + \; s\bigg[\sum_{x,\,\hat{x}}T(x)W(\hat{x} \,|\, x)d(x, \, \hat{x}) \; - \; D\bigg]\Bigg\}
\nonumber \\
& = \;\;\;\; \sup_{s \,\geq \,0} \;
\min_{W(\hat{x} \,|\, x)} \; \bigg\{ \sum_{x,\,\hat{x}}T(x)W(\hat{x} \,|\, x)\ln \frac{W(\hat{x} \,|\, x)}{Q(\hat{x})e^{-sd(x,\, \hat{x})}} \; - \; sD \bigg\}
\nonumber \\
& = \;\;\;\; \sup_{s\,\geq\,0} \; \bigg\{-\sum_{x}T(x) \ln \sum_{\hat{x}}Q(\hat{x})e^{-sd(x,\, \hat{x})} \; - \; sD\bigg\},
\nonumber
\end{align}\footnotetext{This also includes the case when the set $\{W(\hat{x} \,|\, x): \; d(T\circ W) \, \leq \, D\}$
is empty, then $R(T, Q, D)\,=\,+\infty$.}
where ($*$) follows by the minimax theorem\footnote{The equality can also be verified directly, for different values of $D$, using continuity of the minimizing solution $W_{s}$ and its limit as $s\,\rightarrow\,+\infty$, or, alternatively, showing that $R(T,Q,D)$ is a convex ($\cup$) function of $D$ and (\ref{eqWs}) is the lower convex envelope of $R(T,Q,D)$.}, since the objective function is convex ($\cup$) in $W(\hat{x} \,|\, x)$
and concave (linear) in $s$.
\end{proof}

Before we plug the explicit formula for $R(T, Q, D)$ (\ref{eqRTQD}) into the expression for the encoding success exponent,
we note the following property:
\begin{lemma} \label{lemma2}
{\em $R(T, Q, D)$ is a convex ($\cup$) function of $\,(T, D)$.}
\end{lemma}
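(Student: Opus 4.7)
The plan is to read off convexity directly from the explicit variational representation just established in Lemma~\ref{lemma1}:
\[
R(T, Q, D) \;=\; \sup_{s\,\geq\,0} \; f_{s}(T, D), \qquad f_{s}(T, D) \;\triangleq\; -\sum_{x} T(x)\ln\sum_{\hat{x}} Q(\hat{x})\,e^{-s\,d(x, \hat{x})} \;-\; sD.
\]
For each fixed $s \geq 0$, the coefficients $a_{s}(x) \triangleq -\ln\sum_{\hat{x}} Q(\hat{x})\,e^{-s\,d(x, \hat{x})}$ depend only on $x$, $Q$, and $s$, so $f_{s}(T, D) = \sum_{x} T(x)\,a_{s}(x) - sD$ is a jointly affine function of the pair $(T, D)$. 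Since a pointwise supremum of affine functions is convex, this immediately yields convexity of $R(\cdot, Q, \cdot)$ in $(T, D)$.

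To spell it out, for $\lambda \in [0, 1]$ and any two pairs $(T_{0}, D_{0}), (T_{1}, D_{1})$ I would invoke affineness of $f_{s}$ to write
\[
f_{s}\big(\lambda T_{1} + (1-\lambda)T_{0}, \; \lambda D_{1} + (1-\lambda)D_{0}\big) \;=\; \lambda\, f_{s}(T_{1}, D_{1}) \;+\; (1-\lambda)\, f_{s}(T_{0}, D_{0}),
\]
and then take $\sup_{s\geq 0}$ on both sides, using the general inequality $\sup_{s}\big[\lambda u_{s} + (1-\lambda)v_{s}\big] \leq \lambda \sup_{s} u_{s} + (1-\lambda)\sup_{s} v_{s}$ to obtain the desired midpoint/convex combination inequality.

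The only mild subtlety is the convention $R(T, Q, D) = +\infty$ when the feasible set $\{W: d(T\circ W)\leq D\}$ is empty, but this is consistent with convexity (an extended-real convex function may take the value $+\infty$) and is in fact automatic from the sup formula: if $D$ lies strictly below the minimum achievable distortion $\sum_{x} T(x)\min_{\hat{x}: \, Q(\hat{x}) > 0} d(x, \hat{x})$, then $f_{s}(T, D) \to +\infty$ as $s \to \infty$, so $\sup_{s\geq 0} f_{s}(T, D) = +\infty$ as required. There is no real obstacle to overcome here; the entire argument is a two-line consequence of Lemma~\ref{lemma1}, and the genuine work was done in proving that variational formula.
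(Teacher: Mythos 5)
Your proof is correct and is essentially identical to the paper's: both read convexity off the representation of Lemma~\ref{lemma1} as a supremum over $s$ of functions affine in $(T,D)$, using $\sup_{s}[\lambda u_{s}+(1-\lambda)v_{s}]\leq\lambda\sup_{s}u_{s}+(1-\lambda)\sup_{s}v_{s}$. Your remark on the $+\infty$ convention is a harmless (and correct) addition that the paper leaves implicit.
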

\begin{proof}
\begin{align}
& R\big(\lambda T + (1 - \lambda)\widetilde{T}, \, Q, \, \lambda D + (1 - \lambda)\,\widetilde{\!D}\big)
\nonumber \\
& \; = \; \sup_{s\,\geq\,0} \; \bigg\{-\sum_{x}\big(\lambda T(x) + (1 - \lambda)\widetilde{T}(x)\big) \ln \sum_{\hat{x}}Q(\hat{x})e^{-sd(x,\, \hat{x})} \; - \; s\big(\lambda D + (1 - \lambda)\,\widetilde{\!D}\big)\bigg\}
\nonumber \\
& \; = \; \sup_{s\,\geq\,0} \; \Bigg\{\lambda\bigg[-\sum_{x} T(x) \ln \sum_{\hat{x}}Q(\hat{x})e^{-sd(x,\, \hat{x})} \; - \; sD\bigg]\;+
\nonumber \\
& \;\;\;\;\;\;\;\;\;\;\;\;\;\;\;\; + \;
(1 - \lambda)\bigg[-\sum_{x} \widetilde{T}(x) \ln \sum_{\hat{x}}Q(\hat{x})e^{-sd(x,\, \hat{x})} \; - \; s\,\widetilde{\!D}\bigg]
\Bigg\}
\nonumber \\
& \; \leq \; \lambda \sup_{s\,\geq\,0} \; \bigg\{-\sum_{x} T(x) \ln \sum_{\hat{x}}Q(\hat{x})e^{-sd(x,\, \hat{x})} \; - \; sD\bigg\}\;+
\nonumber \\
& \;\;\;\;\;\;\;\;\;\;\;\;\;\;\;\; + \;
(1 - \lambda)\sup_{s\,\geq\,0} \; \bigg\{-\sum_{x} \widetilde{T}(x) \ln \sum_{\hat{x}}Q(\hat{x})e^{-sd(x,\, \hat{x})} \; - \; s\,\widetilde{\!D} \bigg\}
\nonumber \\
& \; = \; \lambda R(T, Q, D) \; + \; (1 - \lambda)R(\widetilde{T}, Q, \,\widetilde{\!D}).
\nonumber
\end{align}
\end{proof}

The encoding success exponent can be rewritten as
\begin{lemma} \label{lemma3}
\begin{equation} \label{eqEs}
\min_{T(x)} \; \Big\{ D(T \| P) \;\; + \;\; {\big| R(T, Q, D) \; - \; R   \big|\mathstrut}_{}^{+} \Big\}
\; = \;
\sup_{0 \, \leq \,\rho \,\leq \,1} \; \min_{T(x)} \;\big\{ D(T \| P) \; + \; \rho\big[R(T, Q, D) \; - \; R\big]\big\}.
\end{equation}
\end{lemma}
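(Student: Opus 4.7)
The plan is to exploit the elementary identity $|a|^{+} = \max_{0 \,\leq\, \rho \,\leq\, 1} \rho a$ in order to rewrite the positive part as a supremum, and then interchange the $\min$ over $T$ with the $\sup$ over $\rho$ via a minimax argument.

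First I would substitute, for each fixed $T$,
\begin{equation*}
{\big| R(T, Q, D) \; - \; R \big|\mathstrut}_{}^{+} \;\; = \;\; \sup_{0 \,\leq\, \rho \,\leq\, 1} \; \rho\big[R(T, Q, D) \; - \; R\big],
\end{equation*}
so that the left-hand side of (\ref{eqEs}) becomes
\begin{equation*}
\min_{T(x)} \; \sup_{0 \,\leq\, \rho \,\leq\, 1} \; \big\{ D(T \| P) \; + \; \rho\big[R(T, Q, D) \; - \; R\big]\big\}.
\end{equation*}
The desired identity is then exactly the assertion that $\min$ and $\sup$ may be swapped.

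To justify the swap, I would invoke the minimax theorem. Denote the objective by $f(T, \rho) \triangleq D(T \| P) \; + \; \rho\big[R(T, Q, D) \; - \; R\big]$. For each fixed $\rho \, \in \, [0, 1]$, the function $T \, \mapsto \, f(T, \rho)$ is convex ($\cup$), because $D(T \| P)$ is convex in $T$ and, by Lemma~\ref{lemma2}, $R(T, Q, D)$ is convex in $T$, and $\rho \, \geq \, 0$. For each fixed $T$, the function $\rho \, \mapsto \, f(T, \rho)$ is linear in $\rho$, hence concave. The domain of $T$ is the probability simplex over ${\cal X}$, which is convex and compact, and $[0, 1]$ is convex and compact, so the standard minimax theorem applies and yields
\begin{equation*}
\min_{T(x)} \; \sup_{0 \,\leq\, \rho \,\leq\, 1} \; f(T, \rho) \;\; = \;\; \sup_{0 \,\leq\, \rho \,\leq\, 1} \; \min_{T(x)} \; f(T, \rho),
\end{equation*}
which is the claim.

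The main subtlety I expect is the possibility that $R(T, Q, D) \, = \, +\infty$ for some $T$, in which case $f(T, \rho)$ is not finite everywhere and the naive minimax statement needs a small justification. This I would handle by adopting the convention $\rho \cdot (+\infty) = +\infty$ for $\rho > 0$ and $0$ for $\rho = 0$ (consistent with the definition of $|\cdot|^{+}$), and by noting that any $T$ with $R(T, Q, D) \, = \, +\infty$ is immediately discarded by the outer $\min$ for any $\rho > 0$; restricting $T$ to the (convex) set on which $R(T, Q, D)$ is finite leaves a convex subset of the simplex on which $f$ is finite-valued, convex in $T$ and linear in $\rho$, so the minimax theorem applies there. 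The value at $\rho = 0$ coincides with $\min_{T} D(T \| P) = 0$ on both sides and does not affect the supremum.
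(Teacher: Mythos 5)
Your proposal is correct, but it takes a genuinely different and shorter route than the paper. You rewrite the bracket via $|a|^{+}=\sup_{0\,\leq\,\rho\,\leq\,1}\rho a$ pointwise in $T$ and then perform a single $\min$--$\sup$ exchange by Sion's minimax theorem, using Lemma~\ref{lemma2} for convexity in $T$ and linearity in $\rho$ over two compact convex domains. The paper instead decomposes the left-hand side into the two exponents $E_{A}$ and $E_{B}$ as in (\ref{eqE1E2}), dualizes $E_{A}$ over unbounded $\rho\geq 0$ in (\ref{eqE1}), lower-bounds $E_{B}$ in (\ref{eqE2}), introduces the critical rate $R_{1}=R(T_{1},Q,D)$ attached to the minimizer of $D(T\|P)+R(T,Q,D)$, and matches the two pieces on either side of $R_{1}$ via (\ref{eqE11}) and (\ref{eqE21}). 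Your approach buys brevity; the paper's buys the structural information (which of $E_{A}$, $E_{B}$ is active where, and the role of $R_{1}$) that it reuses in the discussion and again in Section~\ref{S8}. One small repair is needed in your handling of the infinite values: after restricting $T$ to the convex set $S=\{T:R(T,Q,D)<+\infty\}$, the $\rho=0$ term of the outer supremum becomes $\min_{T\in S}D(T\|P)$, which is \emph{not} equal to $\min_{T}D(T\|P)=0$ when $R(P,Q,D)=+\infty$, so your closing sentence is not quite right as stated. The supremum is nevertheless unchanged: since $R(T,Q,D)$ is bounded on $S$ (divergence is bounded where finite, as the paper notes), one has $\min_{T\in S}\{D(T\|P)+\rho[R(T,Q,D)-R]\}\rightarrow\min_{T\in S}D(T\|P)$ as $\rho\downarrow 0$, so the $\rho=0$ term never exceeds the supremum over $\rho>0$, and replacing $\min_{T}$ by $\min_{T\in S}$ on the right-hand side does not change the value. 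With that one line added, your argument is complete.
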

\begin{proof}
The expression for the encoding success exponent (\ref{eqSuccess}), which is written with the help of the Csisz\'ar-K\"orner
style brackets $|\cdot|^{+}$ for compactness, translates into the minimum between two exponents:
\begin{align}
& \min_{T(x)} \; \Big\{ D(T \| P) \;\; + \;\; {\big| R(T, Q, D) \; - \; R   \big|\mathstrut}_{}^{+} \Big\}
\; = \; \min \; \big\{E_{A}(R, D), \;\; E_{B}(R, D)\big\} \; =
\label{eqE1E2} \\
& \min \; \bigg\{\min_{T(x): \;\; R(T, Q, D) \; \leq \; R} \; D(T \| P), \;\;\;\;\;\;
\min_{T(x): \;\; R(T, Q, D) \; \geq \; R} \; \big\{ D(T \| P) \; + \; R(T, Q, D) \; - \; R \big\}
\bigg\}.
\nonumber
\end{align}
Using the fact that $R(T, Q, D)$ is convex ($\cup$) in $T$, we can rewrite the left exponent as follows
\begin{align}
E_{A}(R, D) \; = \; \min_{T(x): \;\; R(T, Q, D) \; \leq \; R} \; D(T \| P) \; & = \;\,
\min_{T(x)} \;\sup_{\rho \,\geq \,0} \;\big\{ D(T \| P) \; + \; \rho\big[R(T, Q, D) \; - \; R\big]\big\}
\nonumber \\
& \overset{(*)}{=} \; \sup_{\rho \,\geq \,0} \; \min_{T(x)} \;\big\{ D(T \| P) \; + \; \rho\big[R(T, Q, D) \; - \; R\big]\big\},
\label{eqE1}
\end{align}
where ($*$) follows by the minimax theorem\footnote{Alternatively, check directly that $E_{A}(R, D)$ is convex ($\cup$) in $R$ and observe that (\ref{eqE1}) is the lower convex envelope of $E_{A}(R)$.}, because the objective function is convex ($\cup$) in $T(x)$
and concave (linear) in $\rho$.

For the right exponent we have a lower bound:
\begin{align}
E_{B}(R, D) \; & = \; \min_{T(x): \;\; R(T, Q, D) \; \geq \; R} \; \big\{ D(T \| P) \; + \; R(T, Q, D) \; - \; R \big\} \nonumber \\
& \geq \;
\sup_{\rho \,\geq \,0} \;\; \min_{T(x): \;\; R(T, Q, D) \; \geq \; R} \;\big\{ D(T \| P) \; + \; R(T, Q, D) \; - \; R \; + \; \rho\big[R \; - \; R(T, Q, D)\big]\big\}
\nonumber \\
& \geq \;
\sup_{\rho \,\geq \,0} \;\;\;\;\;\;\;\;\;\;\; \min_{T(x)}
\;\;\;\;\;\;\;\;\;\;
\big\{ D(T \| P) \; + \; R(T, Q, D) \; - \; R \; + \; \rho\big[R \; - \; R(T, Q, D)\big]\big\}
\nonumber \\
& = \;
\sup_{\rho \,\geq \,0} \;\;\;\;\;\;\;\;\;\;\; \min_{T(x)}
\;\;\;\;\;\;\;\;\;\;
\big\{ D(T \| P) \; + \; (1-\rho)\big[R(T, Q, D) \; - \; R\big]\big\}
\nonumber \\
& \geq \;
\sup_{0 \, \leq \,\rho \,\leq \,1} \;\;\;\;\;\;\; \min_{T(x)}
\;\;\;\;\;\;\;\;\;\;
\big\{ D(T \| P) \; + \; (1-\rho)\big[R(T, Q, D) \; - \; R\big]\big\}
\nonumber \\
& = \;
\sup_{0 \, \leq \,\rho \,\leq \,1} \;\;\;\;\;\;\; \min_{T(x)}
\;\;\;\;\;\;\;\;\;\;
\big\{ D(T \| P) \; + \; \rho\big[R(T, Q, D) \; - \; R\big]\big\}.
\label{eqE2}
\end{align}

Observe from (\ref{eqRTQD}), that if $\,d(x,\, \hat{x})-D \, > \, 0$ for all $(x,\, \hat{x})$, then $R(T, Q, D)\,=\,+\infty$
for any choice of $T$.
In this case (\ref{eqEs}) holds trivially.

On the other hand, if there exists at least one pair $(x,\, \hat{x})$,
such that $\,d(x,\, \hat{x})-D \, \leq \, 0$,
then there exists $T$ with finite $R(T, Q, D)$. In this case,
consider the following function of $T$:
\begin{displaymath}
D(T \| P) \; + \; R(T, Q, D).
\end{displaymath}
This is a strictly convex ($\cup$) function of $T$, because $R(T, Q, D)$ is convex and $D(T \| P)$ is strictly convex.
Consequently, there exists a unique $T_{1}$, which attains its minimum:
\begin{displaymath}
D(T_{1} \| P) \; + \; R(T_{1}, Q, D) \; = \; \min_{T(x)} \big\{D(T \| P) \; + \; R(T, Q, D)\big\}.
\end{displaymath}
Note that for $R_{1} \, = \, R(T_{1}, Q, D)$ we obtain:
\begin{displaymath}
E_{A}(R_{1}, D) \; = \; \min_{T(x): \;\; R(T, Q, D) \; \leq \; R_{1}} \; D(T \| P) \; = \;
\min_{T(x)} \;\big\{ D(T \| P) \; + \; R(T, Q, D) \; - \; R_{1}\big\}.
\end{displaymath}
Since $E_{A}(R_{1}, D)$ is finite, we conclude that for $R \geq R_{1}$ the function $E_{A}(R)$ is finite and nonincreasing.
It can be seen from (\ref{eqE1}) that $E_{A}(R, D)$ is a convex ($\cup$) function of $R$.
We conclude, that for $R \geq R_{1}$, in (\ref{eqE1}) it is sufficient to take the supremum over $0 \, \leq \,\rho \,\leq \,1$:
\begin{equation} \label{eqE11}
E_{A}(R, D) \; = \; \sup_{0 \, \leq \,\rho \,\leq \,1} \; \min_{T(x)} \;\big\{ D(T \| P) \; + \; \rho\big[R(T, Q, D) \; - \; R\big]\big\},
\;\;\;\;\;\; R \; \geq \; R_{1}.
\end{equation}
Observe further, that for $R \leq R_{1}$
\begin{align}
E_{B}(R, D) \; & = \; \min_{T(x): \;\; R(T, Q, D) \; \geq \; R} \; \big\{ D(T \| P) \; + \; R(T, Q, D) \; - \; R \big\}
\nonumber \\
& = \; D(T_{1} \| P) \; + \; R(T_{1}, Q, D) \; - \; R \nonumber \\
& = \; \min_{T(x)} \; \big\{ D(T \| P) \; + \; R(T, Q, D) \; - \; R \big\}
\nonumber \\
& \leq \;
\sup_{0 \, \leq \,\rho \,\leq \,1} \; \min_{T(x)} \;
\big\{ D(T \| P) \; + \; \rho\big[R(T, Q, D) \; - \; R\big]\big\}, \;\;\;\;\;\; R \; \leq \; R_{1}.
\label{eqE2Upper}
\end{align}
Comparing (\ref{eqE2}) and (\ref{eqE2Upper}), we conclude that the equality holds
\begin{equation} \label{eqE21}
E_{B}(R, D) \; = \; \sup_{0 \, \leq \,\rho \,\leq \,1} \; \min_{T(x)} \;\big\{ D(T \| P) \; + \; \rho\big[R(T, Q, D) \; - \; R\big]\big\},
\;\;\;\;\;\; R \; \leq \; R_{1}.
\end{equation}
Now, the result of the lemma follows by (\ref{eqE1E2}), when we compare (\ref{eqE1}) with (\ref{eqE21}) for $R \, \leq \, R_{1}$, and (\ref{eqE2}) with (\ref{eqE11}) for $R \, \geq \, R_{1}$, respectively.
\end{proof}

Finally, we are ready to
prove the following formula:
\begin{thm} \label{thm3}
\begin{equation} \label{eqES}
E_{s}(R, D)
\; = \; \sup_{0 \, \leq \,\rho \,\leq \,1} \; \Bigg\{ -\inf_{s\,\geq\,0} \; \ln \; \sum_{x}P(x)\Bigg[\sum_{\hat{x}}Q(\hat{x})e^{-s[d(x,\, \hat{x})-D]}\Bigg]^{\rho} \; - \; \rho R \Bigg\}.
\end{equation}
\end{thm}
\begin{proof}
\begin{align}
& \min_{T(x)} \; \Big\{ D(T \| P) \;\; + \;\; {\big| R(T, Q, D) \; - \; R   \big|\mathstrut}_{}^{+} \Big\}
\; \overset{(a)}{=} \;
\sup_{0 \, \leq \,\rho \,\leq \,1} \; \min_{T(x)} \;\big\{ D(T \| P) \; + \; \rho\big[R(T, Q, D) \; - \; R\big]\big\}
\nonumber \\
& \overset{(b)}{=} \;
\sup_{0 \, \leq \,\rho \,\leq \,1} \; \min_{T(x)} \;\Bigg\{ \sum_{x}T(x)\ln\frac{T(x)}{P(x)} \; + \; \rho\bigg[\sup_{s\,\geq\,0} \; \bigg\{-\sum_{x}T(x) \ln \sum_{\hat{x}}Q(\hat{x})e^{-s[d(x,\, \hat{x})-D]}\bigg\} \; - \; R\bigg]\Bigg\}
\nonumber \\
& = \;
\sup_{0 \, \leq \,\rho \,\leq \,1} \; \min_{T(x)} \;\sup_{s\,\geq\,0} \; \Bigg\{ \sum_{x}T(x)\ln\frac{T(x)}{P(x)} \; + \; \rho\bigg[-\sum_{x}T(x) \ln \sum_{\hat{x}}Q(\hat{x})e^{-s[d(x,\, \hat{x})-D]} \; - \; R\bigg]\Bigg\}
\nonumber \\
& \overset{(c)}{=} \;
\sup_{0 \, \leq \,\rho \,\leq \,1} \; \sup_{s\,\geq\,0} \; \min_{T(x)} \; \Bigg\{ \sum_{x}T(x)\ln\frac{T(x)}{P(x)} \; + \; \rho\bigg[-\sum_{x}T(x) \ln \sum_{\hat{x}}Q(\hat{x})e^{-s[d(x,\, \hat{x})-D]} \; - \; R\bigg]\Bigg\}
\nonumber \\
& = \;
\sup_{0 \, \leq \,\rho \,\leq \,1} \; \sup_{s\,\geq\,0} \; \min_{T(x)} \; \Bigg\{ \sum_{x}T(x)\ln\frac{T(x)}{P(x)\big[\sum_{\hat{x}}Q(\hat{x})e^{-s[d(x,\, \hat{x})-D]}\big]^{\rho}} \; - \; \rho R \Bigg\}
\nonumber \\
& = \;
\sup_{0 \, \leq \,\rho \,\leq \,1} \; \sup_{s\,\geq\,0} \; \Bigg\{ -\ln \sum_{x}P(x)\Bigg[\sum_{\hat{x}}Q(\hat{x})e^{-s[d(x,\, \hat{x})-D]}\Bigg]^{\rho} \; - \; \rho R \Bigg\},
\nonumber
\end{align}
where ($a$) is by (\ref{eqEs}),
in ($b$) we insert the identity (\ref{eqRTQD}) for $R(T, Q, D)$,
and ($c$) follows by the minimax theorem\footnote{Alternatively, the equality can be shown by substituting (\ref{eqRTQDminsup}) for $R(T,Q,D)$ and equating a convex function with its lower convex envelope.},
for the objective function which is convex ($\cup$) in $T(x)$ and concave\footnote{The concavity ($\cap$) is apparent from (\ref{eqWs}), where the function of $s$ is expressed as a {\em minimum} of affine functions of $s$.} ($\cap$) in $s$.
\end{proof}

\newpage
{\em Discussion:}

Let $T_{\rho}$ denote the unique solution of the minimum
\begin{displaymath}
\min_{T(x)} \big\{D(T \| P) \; + \; \rho R(T, Q, D)\big\} \; = \; D(T_{\rho} \| P) \; + \; \rho R(T_{\rho}, Q, D),
\end{displaymath}
for $\rho \, \geq \, 0$,
and define
\begin{displaymath}
R_{\rho} \; \triangleq \; R(T_{\rho}, Q, D), \;\;\;\;\;\; \rho \; \geq \; 0.
\end{displaymath}
Clearly, $T_{0} \, = \, P$, and consequently, by our definition, $R_{0} \, = \, R(P, Q, D)$.
However, note, that $\lim_{\rho \, \rightarrow \, 0}R_{\rho}$
is not necessarily equal to $R_{0}$. In general, it is less than or equal:
\begin{displaymath}
\lim_{\rho \, \rightarrow \, 0}R_{\rho} \; \leq \; R_{0} \; = \; R(P,Q,D).
\end{displaymath}
The inequality arises when $R(P, Q, D) \, = \, +\infty$ and $\lim_{\rho \, \rightarrow \, 0}R_{\rho}$ is still finite\footnote{Note that $R(T, Q, D)$ cannot ``diverge'' to infinity, as a function of $T$, since it is bounded when finite, as divergence is bounded.}.
In this case the exponent $E_{s}(R, D)$ does not decrease all the way to zero,
as $R$ increases,
but stays strictly above zero,
at the height
\begin{displaymath}
\lim_{\rho \, \rightarrow \, 0} D(T_{\rho} \| P)
\; = \; \min_{T(x): \;\; R(T, Q, D) \; < \; +\infty} D(T \| P)
\; > \; 0.
\end{displaymath}
In this particular case,
each one of the straight lines
\begin{displaymath}
D(T_{\rho} \| P) \; + \; \rho \big[R(T_{\rho}, Q, D) \; - \; R\big], \;\;\;\;\;\; \rho \; > \; 0,
\end{displaymath}
touches the curve $E_{s}(R)$, except for the line of slope zero: $E \, = \, D(T_{0} \| P)$, which is equal to zero for all $R$
and runs strictly below $E_{s}(R)$. The range of $D$, for which this behavior occurs, is given by the following

{\em Proposition 1:}
\begin{displaymath}
+\infty \; > \; \lim_{R \, \rightarrow \, \infty} E_{s}(R, D) \; > \; 0
\;\;\;\;\;\; \Longleftrightarrow \;\;\;\;\;\;
\min_{x}\min_{\hat{x}} d(x,\, \hat{x}) \; \leq \; D \; < \; \sum_{x}P(x)\min_{\hat{x}} d(x,\, \hat{x}).
\end{displaymath}

\begin{proof}
Follows from the relations
\begin{align}
R(P, Q, D)\; = \; +\infty
\;\;\;\;\;\; & \Longleftrightarrow \;\;\;\;\;\;
D \; < \; \sum_{x}P(x)\min_{\hat{x}} d(x,\, \hat{x}), \nonumber \\
R(T, Q, D)\; = \; +\infty, \;\;\; \forall \; T
\;\;\;\;\;\; & \Longleftrightarrow \;\;\;\;\;\;
D \; < \; \min_{x}\min_{\hat{x}} d(x,\, \hat{x}).
\nonumber
\end{align}
\end{proof}

\begin{figure}[ht]
\begin{center}
\epsfig{file=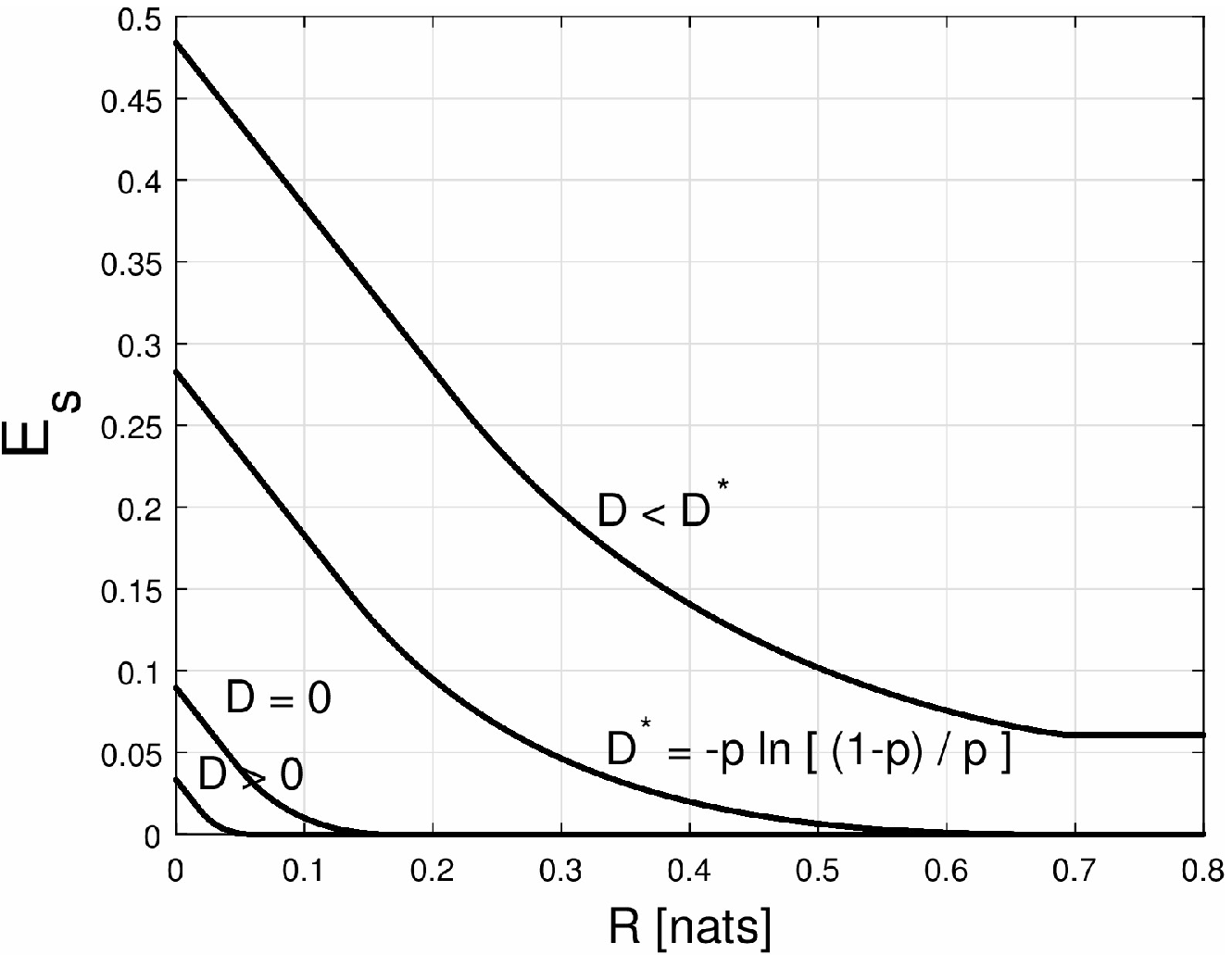, scale = 1}
\caption{
Encoding success exponent (\ref{eqES}) vs. $R$, for various $D=\{0.5, \, 0, -1, -1.7\}\cdot p\cdot \ln\frac{1-p}{p}$. Parameter $p=0.22$.\newline
Source: ${\cal X} = \{a,\,b,\,c,\,d\}$, $P(a) = P(d) = \frac{1-p}{2}$,
$P(b) = P(c) = \frac{p}{2}$.
Reproduction: $\hat{\cal X} \, = \, \{0,\,1\}$, $Q(0) = Q(1) = \frac{1}{2}$.\newline
Distortion measure: $d(a,\,0)=d(b,\,0)=d(c,\,1)=d(d,\,1)=0$, $\;d(a,\,1)=d(d,\,0)=\ln\frac{1-p}{p}$,
$\;d(b,\,1)=d(c,\,0)=-\ln\frac{1-p}{p}$.\newline
The lowest distortion for which $E_{s}(R,D)$ decreases to zero as $R$ increases: $D^{*} = \sum_{x}P(x)\min_{\hat{x}} d(x,\, \hat{x}) = -p\ln\frac{1-p}{p}$.\newline
As $D \searrow D_{\min} = \min_{x, \, \hat{x}} d(x,\, \hat{x}) = -\ln\frac{1-p}{p}$, the curves $E_{s}(R)$ tend to a ``$135^{\circ}$ angle'': $E_{s}(R, D)\nearrow \max\big\{\ln\frac{2}{p}-R,\;\ln\frac{1}{p}\big\}$. \newline
For $D < D_{\min}$ the encoding success exponent is $+\infty$.\newline
As $D \nearrow D_{\max} = \max_{x, \, \hat{x}} d(x,\, \hat{x}) = \ln\frac{1-p}{p}$, the curves $E_{s}(R)$ tend to $0$.\newline
This example corresponds also to the channel error exponent (with the same values of $D$) for the channel
${\rm BSC}(p)$ and $Q(0) = \frac{1}{2}$.}
\label{figExamples}
\end{center}
\end{figure}

\section{\bf Derivation of the explicit channel decoding error exponent} \label{S6}
We make the following substitutions in (\ref{eqES}):
\begin{align}
(x, y) \;\;\;\;\;\;\;\;\; & \longrightarrow \;\;\;\;\;\;\;\;\; x
\label{eqSubstitutions1} \\
Q(x)P(y \,|\, x) \;\;\;\;\;\;\;\;\; & \longrightarrow \;\;\;\;\;\;\;\;\; P(x)
\label{eqSubstitutions2} \\
d\big((x, y), \hat{x}\big) \; = \; \ln \frac{P(y \,|\, x)}{P(y \,|\, \hat{x})}
\;\;\;\;\;\;\;\;\; & \longrightarrow \;\;\;\;\;\;\;\;\;
d(x, \hat{x})
\label{eqSubstitutions3} \\
0 \;\;\;\;\;\;\;\;\; & \longrightarrow \;\;\;\;\;\;\;\;\; D
\label{eqSubstitutions4}
\end{align}
The result is the random coding exponent of Gallager \cite{Gallager73}:
\begin{align}
E_{e}(R) \; & = \; \sup_{0 \, \leq \,\rho \,\leq \,1} \; \Bigg\{ -\inf_{s\,\geq\,0} \;\ln \; \sum_{x, \,y}Q(x)P(y \,|\, x)\Bigg[\sum_{\hat{x}}Q(\hat{x})\left[\frac{P(y \,|\, x)}{P(y \,|\, \hat{x})}\right]^{-s}\Bigg]^{\rho} \; - \; \rho R \Bigg\}
\nonumber \\
& = \; \sup_{0 \, \leq \,\rho \,\leq \,1} \; \Bigg\{ -\inf_{s\,\geq\,0} \; \ln \; \sum_{x, \,y}Q(x)P^{1-s\rho}(y \,|\, x)\Bigg[\sum_{\hat{x}}Q(\hat{x})P^{s}(y \,|\, \hat{x})\Bigg]^{\rho} \; - \; \rho R \Bigg\}
\nonumber \\
& \!\overset{(*)}{=} \; \sup_{0 \, \leq \,\rho \,\leq \,1} \; \Bigg\{ -\ln \; \sum_{x, \, y}Q(x)P^{\frac{1}{1+\rho}}(y \,|\, x)\Bigg[\sum_{\hat{x}}Q(\hat{x})P^{\frac{1}{1+\rho}}(y \,|\, \hat{x})\Bigg]^{\rho} \; - \; \rho R \Bigg\}
\nonumber \\
& = \; \sup_{0 \, \leq \,\rho \,\leq \,1} \; \Bigg\{ -\ln \; \sum_{y}\Bigg[\sum_{\hat{x}}Q(\hat{x})P^{\frac{1}{1+\rho}}(y \,|\, \hat{x})\Bigg]^{1+\rho} \; - \; \rho R \Bigg\},
\label{eqEE}
\end{align}
where ($*$) follows by H\"older's inequality.\footnote{Together with the derivation of (\ref{eqES}) from (\ref{eqSuccess}), this is a lengthy derivation of (\ref{eqEE}). Its purpose is demonstration and a ``sanity check'': that the channel decoding error exponent is indeed a special case of the encoding success exponent for sources. A shorter straightforward derivation of the explicit channel decoding error exponent (\ref{eqEE}) can be made from $\displaystyle\min_{\substack{T(x,\,y),\,W(\hat{x} \,|\, x,\,y):\\d(T\circ W)\,\leq\,0}}\Big\{D(T \; \| \; Q \circ P) \;\; + \;\; {\big| D(T \circ W \; \| \; T \times Q) \; - \; R   \big|\mathstrut}_{}^{+}\Big\}$, which is equivalent to (\ref{eqError}) and uses the same distortion measure (\ref{eqDmeasure}).}

\section{\bf Derivation of the explicit encoding failure exponent} \label{S7}
Here we derive an explicit expression,
which does not always coincide with the encoding failure exponent (\ref{eqFailure}) for all $R$,
but gives the best convex ($\cup$) lower bound for (\ref{eqFailure}), for sufficiently lax distortion constraint $D$.

For the benefit of the next section, we give a number of lemmas first.
\begin{lemma} \label{lemma4}
{\em For any $\rho \geq 0$}
\begin{equation} \label{eqLine1}
E_{f}(R, D) \;\; \geq \;\; \min_{T(x)} \big\{ D(T \| P) \; - \; \rho \big[R(T, Q, D) \; - \; R\big]\big\},
\end{equation}
{\em with equality if $R \, = \, R(T_{\rho}, Q, D)$, where $T_{\rho}$ is a solution of the minimum:}
\begin{displaymath}
\min_{T(x)} \big\{ D(T \| P) \; - \; \rho R(T, Q, D)\big\} \;\; =\footnotemark \; D(T_{\rho} \| P) \; - \;  \rho R(T_{\rho}, Q, D).
\end{displaymath}\footnotetext{In general, $T_{\rho}$ may be not unique, and as a result, the value of $R(T_{\rho}, Q, D)$ may be not unique,
while only the difference $E_{0}(\rho) \, \triangleq \, D(T_{\rho} \|P) \, - \, \rho R(T_{\rho}, Q, D)$ is a function of $\rho$.}
\end{lemma}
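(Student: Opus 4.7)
My plan is to prove both parts of the lemma by direct manipulation of the defining minimization of $E_f(R, D)$, using the nonnegativity of $\rho$ and of $R(T,Q,D) - R$ on the feasible set.

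\textbf{Inequality.} The first step is to handle the case $R \leq R_{\text{max}}(D)$, for which $E_f(R,D) = \min_{T: R(T,Q,D) \geq R} D(T \| P)$. For any feasible $T$ (i.e., $R(T,Q,D) \geq R$), since $\rho \geq 0$, the term $-\rho[R(T,Q,D) - R]$ is nonpositive, so
\[
D(T \| P) \;\geq\; D(T\|P) \;-\; \rho\big[R(T,Q,D) - R\big].
\]
Taking the minimum over the constrained set on the left and then enlarging the feasible set on the right (which can only decrease the minimum) yields
\[
E_f(R,D) \;\geq\; \min_{T(x): \, R(T,Q,D) \geq R} \!\big\{D(T\|P) - \rho[R(T,Q,D) - R]\big\} \;\geq\; \min_{T(x)} \big\{D(T\|P) - \rho[R(T,Q,D) - R]\big\}.
\]
For $R > R_{\text{max}}(D)$ the claim is vacuous since $E_f(R,D) = +\infty$.

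\textbf{Equality when $R = R(T_\rho, Q, D)$.} The key observation is that the right-hand side of the lemma differs from the unconstrained minimum of $D(T\|P) - \rho R(T,Q,D)$ only by the additive constant $\rho R$. Since $T_\rho$ attains this unconstrained minimum by definition, we obtain
\[
\min_{T(x)} \big\{D(T\|P) - \rho[R(T,Q,D) - R]\big\} \;=\; D(T_\rho \| P) - \rho R(T_\rho,Q,D) + \rho R \;=\; D(T_\rho \| P),
\]
using the hypothesis $R = R(T_\rho, Q, D)$. On the other hand, $T_\rho$ is feasible for the $E_f$ minimization since $R(T_\rho,Q,D) = R$, so $E_f(R,D) \leq D(T_\rho \| P)$. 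Combined with the inequality already proved, this forces $E_f(R,D) = D(T_\rho \| P)$, which equals the RHS.

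\textbf{Potential obstacles.} The argument is essentially Lagrangian duality applied to a single inequality constraint, and no nontrivial minimax exchange is needed. The main subtlety to be careful about is the possible non-uniqueness of $T_\rho$ when $R(T,Q,D)$ is only convex (not strictly convex) in $T$ (see the footnote in the lemma statement); however, this does not affect the argument since the value $E_0(\rho) = D(T_\rho\|P) - \rho R(T_\rho,Q,D)$ is well-defined as a function of $\rho$, and any choice of minimizer $T_\rho$ with $R(T_\rho,Q,D) = R$ suffices to produce the matching upper bound on $E_f(R,D)$. The boundary range $R > R_{\text{max}}(D)$ is handled separately as noted above, and no issue arises there.
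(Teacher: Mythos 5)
Your proof is correct and follows essentially the same route as the paper's: the lower bound comes from dropping the constraint after adding the nonpositive penalty $-\rho[R(T,Q,D)-R]$ on the feasible set, and equality at $R=R(T_\rho,Q,D)$ follows because $T_\rho$ is then feasible for the $E_f$ minimization, giving the matching upper bound $E_f(R,D)\le D(T_\rho\|P)$. No gaps.
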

\begin{proof}
\begin{align}
E_{f}(R, D) \; & \;\, = \;\; \min_{T(x): \;\; R(T, Q, D) \; \geq \; R} D(T \| P) \;\; \overset{(*)}{=} \;\; D\big(T(R) \; \| \; P\big)
\nonumber \\
& \overset{\rho \, \geq \, 0}{\geq} \;\;
D\big(T(R) \; \| \; P\big) \; - \; \rho\big[R\big(T(R), Q, D\big) \; - \; R\big]
\nonumber \\
& \;\, \geq \;\;
\min_{T(x)} \big\{ D(T \| P) \; - \; \rho\big[R(T, Q, D) \; - \; R\big]\big\}
\nonumber
\end{align}
\begin{align}
& \;\, = \;\;
D(T_{\rho} \|P) \; - \; \rho\big[R(T_{\rho}, Q, D) \; - \; R\big],
\;\;\;\;\;\;\;\;\;\;\;\; \forall \; \rho \; \geq \; 0,
\label{eqLowerBounds1}
\end{align}
where in ($*$) we assumed that the set $\{T(x): \; R(T, Q, D) \, \geq \, R\}$
is nonempty. Otherwise $E_{f}(R, D)$ is considered to be $+\infty$ and
any lower bound is valid.

If $R \, = \, R(T_{\rho}, Q, D)$,
then by (\ref{eqLowerBounds1}) we obtain for this $R$:
\begin{displaymath}
D\big(T(R) \; \| \; P\big) \;\; \overset{(\ref{eqLowerBounds1})}{\geq} \;\; D(T_{\rho} \|P) \; - \; \rho\big[R(T_{\rho}, Q, D) \; - \; R\big]
\;\; = \;\; D(T_{\rho} \|P)
\;\; \geq \;\; D\big(T(R) \; \| \; P\big),
\end{displaymath}
where the second inequality holds because $T_{\rho}$ satisfies (with equality) the minimization constraint $R$.
\end{proof}
\begin{lemma} \label{lemma5}
\footnote{This is the ``only if'' addition to the statement of Lemma~\ref{lemma4}. This lemma will be needed in an example only.}
{\em If}
\begin{displaymath} 
E_{f}(R, D) \;\; = \;\; \min_{T(x)} \big\{ D(T \| P) \; - \; \rho \big[R(T, Q, D) \; - \; R\big]\big\},
\end{displaymath}
{\em for some $\rho > 0$, then necessarily $R \, = \, R(T_{\rho}, Q, D)$ for some $T_{\rho}\,$, such that}
\begin{displaymath}
\min_{T(x)} \big\{ D(T \| P) \; - \; \rho R(T, Q, D)\big\} \;\; = \;\; D(T_{\rho} \| P) \; - \;  \rho R(T_{\rho}, Q, D).
\end{displaymath}
\end{lemma}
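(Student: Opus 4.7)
The plan is to invert the chain of inequalities used to prove Lemma~\ref{lemma4} and to see exactly which slack vanishes when equality is assumed. Let $\rho > 0$ be given, and suppose
\begin{displaymath}
E_{f}(R, D) \;\; = \;\; \min_{T(x)} \big\{ D(T \| P) \; - \; \rho \big[R(T, Q, D) \; - \; R\big]\big\}.
\end{displaymath}
First I would handle the boundary case: if $\{T(x): \; R(T, Q, D) \, \geq \, R\}$ is empty, then $E_f(R,D)=+\infty$ by the theorem's convention, which would force the right-hand side to equal $+\infty$ as well. But for any fixed $T_0$ with $R(T_0,Q,D)<+\infty$ (which exists outside the degenerate range in Proposition~1) the objective at $T_0$ is finite, contradicting the infinite minimum. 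So the constraint set is nonempty and $E_f(R,D)=D(T(R)\|P)$ for some achieving $T(R)$ with $R(T(R),Q,D)\geq R$.

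Next I would fix any minimizer $T_\rho$ of $\min_{T}\{D(T\|P) - \rho R(T,Q,D)\}$ (which exists by compactness and continuity) and write out the two-step inequality chain that appeared inside the proof of Lemma~\ref{lemma4}:
\begin{align}
D(T(R)\|P) \;\; & \geq \;\; D(T(R)\|P) \; - \; \rho\big[R(T(R),Q,D) \; - \; R\big] \nonumber \\
& \geq \;\; \min_{T(x)}\big\{D(T\|P) \; - \; \rho[R(T,Q,D) \; - \; R]\big\} \nonumber \\
& = \;\; D(T_\rho\|P) \; - \; \rho\big[R(T_\rho,Q,D) \; - \; R\big]. \nonumber
\end{align}
Under the hypothesis, the left and right ends coincide, so both inequalities are equalities.

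The crux of the argument, and the only real ``step'' beyond bookkeeping, is reading off what each equality forces. Equality in the first line, combined with $\rho > 0$ and $R(T(R),Q,D) \geq R$, forces $R(T(R),Q,D) = R$. Equality in the second line says that $T(R)$ itself attains $\min_{T}\{D(T\|P) - \rho R(T,Q,D)\}$, i.e.\ $T(R)$ may be taken as the required $T_\rho$. Combining these two conclusions yields $R = R(T(R),Q,D) = R(T_\rho, Q, D)$ for the choice $T_\rho := T(R)$, which is exactly the claim. The main subtlety to watch for is that $T_\rho$ need not be unique (as flagged in the footnote to Lemma~\ref{lemma4}); the argument above sidesteps this by exhibiting $T(R)$ as one admissible minimizer rather than trying to identify it with a pre-chosen one.
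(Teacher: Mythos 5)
Your proof is correct, and it follows the same skeleton as the paper's: pass to the achieving distribution $T(R)$ of the constrained minimum, show the constraint $R(T(R),Q,D)\geq R$ is tight, and conclude that $T(R)$ itself may serve as $T_\rho$. The one point where you genuinely diverge is in how tightness is obtained. The paper asserts $R(T(R),Q,D)=R$ \emph{unconditionally} for $R\geq R(P,Q,D)$, justifying it by strict convexity of $D(\cdot\|P)$ together with continuity of $R(\cdot,Q,D)$ (a slack constraint would let one perturb $T(R)$ toward $P$, strictly decreasing the divergence while staying feasible). You instead extract tightness from the equality hypothesis itself: the two-line chain collapses, and equality in the first line with $\rho>0$ forces $R(T(R),Q,D)=R$. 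Your route is a bit more economical -- it needs only $\rho>0$ and the hypothesis, not the geometric active-constraint argument -- at the price of establishing a weaker intermediate fact (tightness only under the lemma's hypothesis rather than for all $R\geq R(P,Q,D)$), which is all this lemma requires. Two minor remarks: your opening appeal to ``compactness and continuity'' for the existence of a minimizer $T_\rho$ is unnecessary (your argument exhibits $T(R)$ as one directly) and is slightly delicate when $R(T,Q,D)=+\infty$ for some $T$; the paper sidesteps this by noting that the hypothesis already forces $\max_{T}R(T,Q,D)<+\infty$, since otherwise the right-hand side would be $-\infty$ for $\rho>0$, and it is worth making that observation explicit in your version as well.
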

\begin{proof}
Since for $R \, \leq \, R(P, Q, D)$ the exponent $E_{f}(R, D)$ is zero,
by the lower bound (\ref{eqLine1}) from the previous lemma we conclude that here necessarily $R \, \geq \, R(P, Q, D)$.
Note also, that the condition of the lemma implies that $R \, \leq \, \max_{T(x)}R(T, Q, D)\, < \, +\infty$.
For $R \, \geq \, R(P, Q, D)$ we can write
\begin{align}
E_{f}(R, D) \; = \; \min_{T(x): \;\; R(T, Q, D) \; \geq \; R} D(T \| P) \; & = \; D\big(T(R) \; \| \; P\big) \nonumber \\
& \overset{(*)}{=} \; D\big(T(R) \; \| \; P\big) \; - \; \rho\big[R\big(T(R), Q, D\big) \; - \; R\big]
\nonumber \\
& \geq \; \min_{T(x)} \big\{ D(T \| P) \; - \; \rho \big[R(T, Q, D) \; - \; R\big]\big\},
\nonumber
\end{align}
where in ($*$) the difference
$\big[R\big(T(R), Q, D\big) -  R\big]$ cannot be positive in the case of $R \, \geq \, R(P, Q, D)$, and must be zero, because $D(T \| P)$ is strictly convex and $R(T, Q, D)$ is a continuous function of $T$.
When we have equality in the above, $T(R)$ is a solution of the last minimum, i.e. $T(R) = T_{\rho}$.
\end{proof}
\begin{lemma} \label{lemma6}
{\em If $\;\displaystyle\max_{T(x)}R(T, Q, D)\, < \, +\infty$, then}
\begin{equation} \label{eqEFCE}
\text{\em lower convex envelope}\;\big(E_{f}(R)\big) \; = \;
\sup_{\rho \, \geq \, 0} \;
\min_{T(x)} \big\{ D(T \| P) \; - \; \rho\big[R(T, Q, D) \; - \; R\big]\big\}.
\end{equation}
{\em If $\;\displaystyle\max_{T(x)}R(T, Q, D)\, = \, +\infty$, then the right-hand side expression gives zero, which is strictly lower than $E_{f}(R)\;$
for $R \, > \, R(P, Q, D)$.}
\end{lemma}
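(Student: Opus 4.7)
My plan is to sandwich the right-hand side between inequalities involving the lower convex envelope of $E_f(R,D)$, using Lemma~\ref{lemma4} as the one non-trivial ingredient. Let $E_0(\rho) \triangleq \min_{T(x)}\{D(T\|P) - \rho R(T,Q,D)\}$, so that the right-hand side reads $\sup_{\rho \geq 0}\ell_\rho(R)$ with the affine function $\ell_\rho(R) = E_0(\rho) + \rho R$. Since each $\ell_\rho$ is a pointwise lower bound on $E_f(R,D)$ by Lemma~\ref{lemma4}, their supremum is a convex lower bound as well, so the right-hand side is below the lower convex envelope of $E_f$.

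For the reverse direction (assuming $R_{\max}(D) < +\infty$), I would exploit the characterization of the lower convex envelope as the pointwise supremum of all affine minorants. Let $\ell(R) = aR + b \leq E_f(R,D)$ be an arbitrary affine lower bound. Since $E_f \equiv 0$ on $\bigl(-\infty,\,R(P,Q,D)\bigr]$, necessarily $a \geq 0$. For any $T$, the value $R(T,Q,D)$ is finite by hypothesis, and $T$ itself is feasible for the minimization defining $E_f(R(T,Q,D))$, giving $E_f(R(T,Q,D)) \leq D(T\|P)$. Evaluating the affine bound at $R = R(T,Q,D)$ then produces $b \leq D(T\|P) - a R(T,Q,D)$, and minimizing over $T$ yields $b \leq E_0(a)$. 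Hence $\ell(R) \leq \ell_a(R) \leq$ RHS. Combined with the previous paragraph, this gives the desired equality.

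For the degenerate case $R_{\max}(D) = +\infty$, I would exhibit a concrete $T$ that drives $E_0(\rho)$ to $-\infty$ for every $\rho > 0$. By the formula of Lemma~\ref{lemma1}, $R_{\max}(D) = +\infty$ entails the existence of a source letter $x_0$ with $\min_{\hat{x}} d(x_0,\hat{x}) > D$, forcing $R(\delta_{x_0}, Q, D) = +\infty$; since $P$ has full support, $D(\delta_{x_0}\|P) = -\ln P(x_0)$ is finite, so plugging $T = \delta_{x_0}$ into $D(T\|P) - \rho R(T,Q,D)$ yields $-\infty$ whenever $\rho > 0$. Thus only the term $\rho = 0$ can contribute nontrivially, collapsing the right-hand side to $\min_T D(T\|P) = 0$. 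The strict gap $E_f(R) > 0$ for $R > R(P,Q,D)$ follows from compactness of the simplex together with strict positivity of $D(\cdot\|P)$ away from $P$: in this range, $T = P$ is infeasible in the minimization defining $E_f$, so the attained optimizer $T^{\star} \neq P$ has strictly positive divergence.

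The main obstacle I anticipate is the routine but subtle continuity bookkeeping around the infinite values of $R(T,Q,D)$: one must justify that $R(T,Q,D)$ is lower semi-continuous in $T$ (so the constraint sets $\{T : R(T,Q,D) \geq R\}$ are closed and the infimum in $E_f$ is attained), and one must carefully handle the possible jump of $E_f$ to $+\infty$ at $R = R_{\max}(D)$, where the lower convex envelope can remain finite but the affine-minorant argument needs to accommodate values beyond that boundary. These verifications are transparent from the explicit form in Lemma~\ref{lemma1}, but need explicit treatment to complete the proof.
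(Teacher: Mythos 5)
Your proof is correct and follows the same overall sandwich strategy as the paper (Lemma~\ref{lemma4} gives the right-hand side as a convex minorant of $E_{f}$, hence a lower bound on the envelope; a separate argument gives the reverse inequality; the degenerate case is handled by exhibiting a point mass $\delta_{x_0}$ with $R(\delta_{x_0},Q,D)=+\infty$, exactly as in the paper). Where you genuinely diverge is in the reverse direction. The paper argues via supporting lines: for each slope $\rho\geq 0$ the equality case of Lemma~\ref{lemma4} produces a finite touching point $R_{\rho}=R(T_{\rho},Q,D)$ at which the line $E_{0}(\rho)+\rho R$ meets the curve, and since $E_{f}$ is nondecreasing the supremum of these touching lines is the lower convex envelope. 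You instead use the dual characterization of the envelope as the supremum of affine minorants, and show that any minorant $aR+b\leq E_{f}$ is dominated by $\ell_{a}$, via the observation that each $T$ is feasible for $E_{f}$ at $R=R(T,Q,D)$ so that $aR(T,Q,D)+b\leq D(T\|P)$. Your route has the advantage of not requiring attainment of the minimizer $T_{\rho}$ or the existence of a finite touching point for every slope; the paper's route yields as a by-product the explicit points $R_{\rho}$ where the envelope coincides with $E_{f}$, which it exploits later (Lemma~\ref{lemma5} and the example of Figure~\ref{figExamples3}). Two small points you already flag and should keep: your step $D(\delta_{x_0}\|P)<\infty$ (and likewise the paper's claim that the minimum is $-\infty$ for $\rho>0$) tacitly uses full support of $P$, and the lower semi-continuity of $R(\cdot,Q,D)$ needed for attainment in the strict-gap claim follows from the representation in Lemma~\ref{lemma1} as a supremum of linear functions of $T$.
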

\begin{proof}
By (\ref{eqLine1}) of Lemma~\ref{lemma4} we have a lower bound:
\begin{equation} \label{eqEFLowerBound}
E_{f}(R, D) \; \geq \; \sup_{\rho \, \geq \, 0} \;
\min_{T(x)} \big\{ D(T \| P) \; - \; \rho\big[R(T, Q, D) \; - \; R\big]\big\}.
\end{equation}

Observe, that if
$\;\displaystyle\max_{T(x)}R(T, Q, D)\, = \, +\infty$,
then
the minimum in (\ref{eqEFLowerBound}) is $-\infty$
for all $\rho \, > \, 0$, and for $\rho \, = \, 0$ the minimum is $D(P \| P) \, = \, 0$.
We conclude, that if $\;\displaystyle\max_{T(x)}R(T, Q, D)\, = \, +\infty$,
then the lower bound (\ref{eqEFLowerBound}) is $0$. Apparently, this is not a tight lower bound if $R \, > \, R(P, Q, D)$,
i.e. no ``strong Lagrangian duality'' in this case.

On the other hand, if $\;\displaystyle\max_{T(x)}R(T, Q, D)\, < \, +\infty$, then
for any $\rho \, \geq \, 0$ there exists at least one (i.e. possibly not unique) {\em finite} $R_{\rho} \, = \, R(T_{\rho}, Q, D)$,
where by Lemma~\ref{lemma4} the curve $E_{f}(R)$ touches the straight line lower bound (\ref{eqLine1}).
We conclude, that the curve $E_{f}(R)$ touches the straight line lower bounds (\ref{eqLine1}) for each slope value $\rho \geq 0$.
Since $E_{f}(R)$ is a nondecreasing function, it follows, that the supremum of the straight lines over $\rho \geq 0$ (\ref{eqEFLowerBound}) is the lower convex envelope of $E_{f}(R)$.
\end{proof}
\begin{lemma} \label{lemma7}
\begin{displaymath}
D \; \geq \; \max_{x} \min_{\hat{x}} d(x, \hat{x})
\;\;\;\;\;\; \Longleftrightarrow \;\;\;\;\;\;
\max_{T(x)}R(T, Q, D)\, < \, +\infty.
\end{displaymath}
\end{lemma}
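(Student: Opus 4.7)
The plan is to characterize when the feasibility set $\bigl\{W(\hat x\,|\,x): d(T\circ W) \leq D\bigr\}$ appearing in the definition~(\ref{eqRTQDDefinition}) of $R(T,Q,D)$ is non-empty, and to obtain a uniform-in-$T$ upper bound on $R(T,Q,D)$ when it is. Since $R(T,Q,D) = +\infty$ precisely when this set is empty, both directions of the equivalence will reduce to explicit constructions: a conditional kernel $W$ for the forward direction, and a ``bad'' source distribution $T$ for the reverse direction. The only delicate point to keep in mind is that pointwise finiteness of $R(T,Q,D)$ for every $T$ does not, by itself, imply $\max_{T}R(T,Q,D) < +\infty$ without a separate compactness argument, so the forward direction must produce a bound that is manifestly uniform in $T$.

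For the forward implication ($\Rightarrow$), assume $D \geq \max_{x}\min_{\hat x}d(x,\hat x)$. For each $x$ choose $\hat x^{*}(x) \in \arg\min_{\hat x}d(x,\hat x)$, restricting attention without loss of generality to the support of $Q$ (reproduction letters outside $\mathrm{supp}(Q)$ never appear in the random codebook). Consider the deterministic conditional $W^{*}(\hat x\,|\,x) = \mathbf{1}\{\hat x = \hat x^{*}(x)\}$. For every $T$,
\[
d(T\circ W^{*}) \;=\; \sum_{x}T(x)\,d\bigl(x,\hat x^{*}(x)\bigr) \;\leq\; \max_{x}\min_{\hat x}d(x,\hat x) \;\leq\; D,
\]
so $W^{*}$ is feasible, and
\[
R(T,Q,D) \;\leq\; D(T\circ W^{*}\,\|\,T\times Q) \;=\; \sum_{x}T(x)\ln\frac{1}{Q(\hat x^{*}(x))} \;\leq\; \max_{x}\ln\frac{1}{Q(\hat x^{*}(x))},
\]
which is a finite bound independent of $T$; hence $\max_{T}R(T,Q,D) < +\infty$.

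For the reverse implication ($\Leftarrow$), I would argue by contrapositive. If $D < \max_{x}\min_{\hat x}d(x,\hat x)$, fix some $x_{0}$ with $d(x_{0},\hat x) > D$ for every $\hat x \in \hat{\cal X}$. Taking $T$ to be the point mass at $x_{0}$, every conditional $W$ satisfies $d(T\circ W) = \sum_{\hat x}W(\hat x\,|\,x_{0})\,d(x_{0},\hat x) > D$, so the feasibility set is empty, $R(T,Q,D) = +\infty$ by the convention stated after~(\ref{eqRTQDDefinition}), and therefore $\max_{T}R(T,Q,D) = +\infty$, completing the proof.
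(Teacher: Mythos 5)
Your proof is correct, but it runs through the primal definition (\ref{eqRTQDDefinition}) rather than the paper's route, which plugs in the explicit dual formula (\ref{eqRTQD}) of Lemma~\ref{lemma1} and reduces $\max_{T}R(T,Q,D)$ to $\max_{x}\sup_{s\geq 0}\{-\ln\sum_{\hat x}Q(\hat x)e^{-s[d(x,\hat x)-D]}\}$ by linearity in $T$, after which finiteness is read off letter by letter from the behaviour as $s\to\infty$. Your approach buys independence from Lemma~\ref{lemma1}: the forward direction exhibits a feasible deterministic $W^{*}$ with the explicit uniform bound $\max_{x}\ln\frac{1}{Q(\hat x^{*}(x))}$, and the reverse direction exhibits an infeasible point mass; your explicit concern about uniformity in $T$ is exactly what the paper's $\max_{T}=\max_{x}$ reduction handles implicitly. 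One caveat applies to both arguments equally: the equivalence as stated needs the minimizing $\hat x$ to carry positive $Q$-probability (otherwise $D(T\circ W^{*}\|T\times Q)=+\infty$ in your construction, and the corresponding term vanishes from the sum in the paper's), so the $\min_{\hat x}$ must effectively be read over $\mathrm{supp}(Q)$; you flag this by restricting to the support, but note that this restriction slightly strengthens the hypothesis of the forward implication relative to the literal statement unless $Q$ has full support.
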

\begin{proof}
Observe, that if
there exists at least one $x$ for which
$\min_{\hat{x}} \{d(x,\, \hat{x})-D\} \, > \, 0$, then, using the expression for $R(T,Q,D)$ (\ref{eqRTQD}), we obtain
\begin{displaymath}
\max_{T(x)} R(T,Q,D) \;\; = \;\; \max_{x} \; \sup_{s\,\geq\,0} \; \bigg\{-\ln \sum_{\hat{x}}Q(\hat{x})e^{-s[d(x,\, \hat{x})-D]}\bigg\}
\;\; = \;\; +\infty.
\end{displaymath}
On the other hand, if for every $x$ holds $\;\min_{\hat{x}} \{d(x,\, \hat{x})-D\} \, \leq \, 0$,
then by the same expression we have $\;\displaystyle\max_{T(x)}R(T, Q, D)\, < \, +\infty$.
\end{proof}
\begin{lemma} \label{lemma8}
{\em For $\rho \, \geq \, 0$}
\begin{equation} \label{eqRho}
\min_{T(x)} \big\{ D(T \| P) \; - \; \rho R(T, Q, D)\big\}
\; = \;
-\sup_{s\,\geq\,0} \; \ln \; \sum_{x}P(x)\Bigg[\sum_{\hat{x}}Q(\hat{x})e^{-s[d(x,\, \hat{x})-D]}\Bigg]^{-\rho}.
\end{equation}
{\em If $D \; \geq \; \max_{x} \min_{\hat{x}} d(x, \hat{x})$, then the minimum on the LHS is achieved by some $T^{*}(x)$ if and only if}
\begin{equation} \label{eqTfin}
T^{*}(x) \; = \; \lim_{s\,\rightarrow \, s^{*}} T_{\rho, s}(x), \;\;\;\;\;\;
T_{\rho, s}(x) \; \propto \; P(x)\Bigg[\sum_{\hat{x}}Q(\hat{x})e^{-s[d(x,\, \hat{x})-D]}\Bigg]^{-\rho},
\end{equation}
{\em where $s^{*}$ is any limit (which may be finite or $+\infty$), which achieves the supremum on the RHS.}
\end{lemma}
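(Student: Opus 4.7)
The plan is to substitute the explicit formula (\ref{eqRTQD}) for $R(T,Q,D)$ into the LHS, swap the resulting nested minimization, and solve the inner problem by the classical Gibbs variational principle. Writing $Z_s(x) \triangleq \sum_{\hat x} Q(\hat x) e^{-s[d(x,\hat x)-D]}$, and noting that $-\rho \sup_{s\geq 0}(\cdot) = \inf_{s \geq 0}(-\rho \cdot)$ since $\rho \geq 0$, the substitution produces
\[
\min_{T(x)}\bigl\{D(T\|P) - \rho R(T,Q,D)\bigr\} \;=\; \min_{T(x)}\;\inf_{s \geq 0}\Bigl\{D(T\|P) + \rho \sum_x T(x) \ln Z_s(x)\Bigr\},
\]
after which I exchange the two infima (trivial for nested minimizations).

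For each fixed $s$, the inner problem has the form $\min_T\{D(T\|P) + \sum_x T(x) g(x)\}$ with $g(x) = \rho \ln Z_s(x)$. I apply the standard Gibbs identity
\[
\min_{T(x)}\sum_x T(x) \ln \frac{T(x)}{P(x)e^{-g(x)}} \;=\; -\ln \sum_x P(x)e^{-g(x)},
\]
whose unique minimizer is $T_{\rho,s}(x) \propto P(x)\, Z_s(x)^{-\rho}$. Substituting back, the outer $\inf_s\{-\ln(\cdot)\}$ equals $-\sup_s\ln(\cdot)$, giving exactly the identity (\ref{eqRho}).

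For the ``if and only if'' characterization I invoke the hypothesis $D \geq \max_x\min_{\hat x} d(x,\hat x)$, which by Lemma~\ref{lemma7} guarantees $\max_T R(T,Q,D) < +\infty$, so all quantities in sight are finite. Forward direction: if $T^*$ attains the LHS minimum, pick $s^*$ (possibly $+\infty$) achieving the supremum in (\ref{eqRTQD}) at $T=T^*$; then the chain of equalities above is tight at $(T^*,s^*)$, so $T^*$ must coincide with the unique Gibbs minimizer $T_{\rho,s^*}$. Conversely, if $s^*$ attains the RHS supremum, the corresponding $T_{\rho,s^*}$ realizes the LHS minimum via the same chain read backwards.

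The main obstacle is the case $s^* = +\infty$, in which $T_{\rho,s^*}$ only makes sense as a pointwise limit. This is precisely where the distortion condition enters: for every $x$ the inequality $\min_{\hat x}[d(x,\hat x) - D] \leq 0$ keeps $Z_s(x)$ bounded away from $0$, so $P(x)\, Z_s(x)^{-\rho}$ remains bounded as $s \to +\infty$ and its normalization converges to a well-defined distribution $T^*$. Continuity of $D(T\|P) + \rho \sum_x T(x) \ln Z_s(x)$ in $(T,s)$ along this limit transfers optimality from the finite-$s$ Gibbs minimizers to $T^*$. Without the distortion condition, some coordinates of $T_{\rho,s}$ may fail to converge, which is exactly why the characterization must be stated under the assumption $D \geq \max_x \min_{\hat x} d(x, \hat x)$.
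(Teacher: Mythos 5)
Your proposal is correct and follows essentially the same route as the paper: substitute the explicit formula (\ref{eqRTQD}) for $R(T,Q,D)$, interchange the optimizations over $T$ and $s$ (the paper does this via a chain of inequalities evaluated at $s^{*}(T)$ rather than an explicit swap, but it is the same computation), and apply the Gibbs variational identity, with the condition $D \geq \max_{x}\min_{\hat{x}} d(x,\hat{x})$ invoked exactly as in the paper to keep the infimum finite and make the limiting Gibbs distribution well-defined and necessary for equality.
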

\begin{proof}
\begin{align}
D(T \| P) \; - \; \rho R(T, Q, D) \;
& \overset{(a)}{=} \;
\sum_{x}T(x)\ln\frac{T(x)}{P(x)} \; - \; \rho\sup_{s\,\geq\,0} \; \bigg\{-\sum_{x}T(x) \ln \sum_{\hat{x}}Q(\hat{x})e^{-s[d(x,\, \hat{x})-D]}\bigg\}
\nonumber \\
& \overset{(b)}{=} \;
\inf_{s\,\geq\,0} \; \bigg\{ \sum_{x}T(x)\ln\frac{T(x)}{P(x)} \; + \; \rho\sum_{x}T(x) \ln \sum_{\hat{x}}Q(\hat{x})e^{-s[d(x,\, \hat{x})-D]}\bigg\}
\nonumber \\
& = \;
\lim_{s\,\rightarrow\,s^{*}(T)}\;\sum_{x}T(x)\ln\frac{T(x)}{P(x)\big[\sum_{\hat{x}}Q(\hat{x})e^{-s[d(x,\, \hat{x})-D]}\big]^{-\rho}}
\nonumber
\end{align}
\begin{align}
& =
\lim_{s\,\rightarrow\,s^{*}(T)} \left\{\sum_{x}T(x)\ln\frac{T(x)}{\displaystyle\frac{P(x)\big[\sum_{\hat{x}}Q(\hat{x})e^{-s[d(x,\, \hat{x})-D]}\big]^{-\rho}}
{\sum_{a}P(a)\big[\sum_{\hat{x}}Q(\hat{x})e^{-s[d(a,\, \hat{x})-D]}\big]^{-\rho}}}
- \ln \sum_{x}P(x)\Bigg[\sum_{\hat{x}}Q(\hat{x})e^{-s[d(x,\, \hat{x})-D]}\Bigg]^{-\rho}\right\}
\nonumber \\
& \geq \;
-\lim_{s\,\rightarrow\,s^{*}(T)}\;\ln \sum_{x}P(x)\Bigg[\sum_{\hat{x}}Q(\hat{x})e^{-s[d(x,\, \hat{x})-D]}\Bigg]^{-\rho}
\; \geq \;
-\lim_{s\,\rightarrow\,s^{*}}\;\ln \sum_{x}P(x)\Bigg[\sum_{\hat{x}}Q(\hat{x})e^{-s[d(x,\, \hat{x})-D]}\Bigg]^{-\rho}
\nonumber \\
& = \;
\inf_{s\,\geq\,0} \; \Bigg\{ -\ln \sum_{x}P(x)\Bigg[\sum_{\hat{x}}Q(\hat{x})e^{-s[d(x,\, \hat{x})-D]}\Bigg]^{-\rho}\Bigg\},
\label{eqInf}
\end{align}
where in ($a$) we use (\ref{eqRTQD}) for $R(T, Q, D)$, and ($b$) holds for $\rho \, \geq \, 0$.
Observe, that both inequalities above become equalities if
\begin{equation} \label{eqTopt}
T(x) \; = \; \lim_{s\,\rightarrow \, s^{*}} T_{\rho, s}(x), \;\;\;\;\;\;
T_{\rho, s}(x) \; \propto \; P(x)\Bigg[\sum_{\hat{x}}Q(\hat{x})e^{-s[d(x,\, \hat{x})-D]}\Bigg]^{-\rho},
\end{equation}
where $s^{*}$ is a limit, achieving the infimum in (\ref{eqInf}).
We conclude, that the infimum (\ref{eqInf}) coincides with the minimum over $T(x)$,
i.e. obtain the desired result (\ref{eqRho}).
Note, however, that (\ref{eqTopt}) is not a {\em necessary} condition in the case when the infimum (\ref{eqInf}) is $-\infty$.

Note further, that if $D \; \geq \; \max_{x} \min_{\hat{x}} d(x, \hat{x})$,
then (by Lemma~\ref{lemma7}) $\max_{T(x)}R(T, Q, D)\, < \, +\infty$ and the infimum in (\ref{eqInf}) accordingly must be finite (not $-\infty$) for any $\rho \, \geq \, 0$.
In this case, the lower bound (\ref{eqInf}) is attained {\em if and only if} $\;T(x)$ is given by (\ref{eqTopt}).
\end{proof}

Now, by Lemma~\ref{lemma6}, Lemma~\ref{lemma7}, and identity (\ref{eqRho}) of Lemma~\ref{lemma8} we have the following
\begin{thm} \label{thm4}
{\em For distortion constraint $\;D \; \geq \; \max_{x} \min_{\hat{x}} d(x, \hat{x})$,}
\begin{equation} \label{eqEF}
\text{\em lower convex envelope}\;\big(E_{f}(R)\big) \; = \;
\sup_{\rho \, \geq \, 0} \; \Bigg\{-\sup_{s\,\geq\,0} \; \ln \; \sum_{x}P(x)\Bigg[\sum_{\hat{x}}Q(\hat{x})e^{-s[d(x,\, \hat{x})-D]}\Bigg]^{-\rho} \; + \; \rho R\Bigg\}.
\end{equation}
{\em For $\;D \; < \; \max_{x} \min_{\hat{x}} d(x, \hat{x})$, the right-hand side expression gives zero, which is strictly lower than $E_{f}(R)$,
if $R \, > \, R(P, Q, D)$.}
\end{thm}
\begin{figure}[ht]
\begin{center}
\epsfig{file=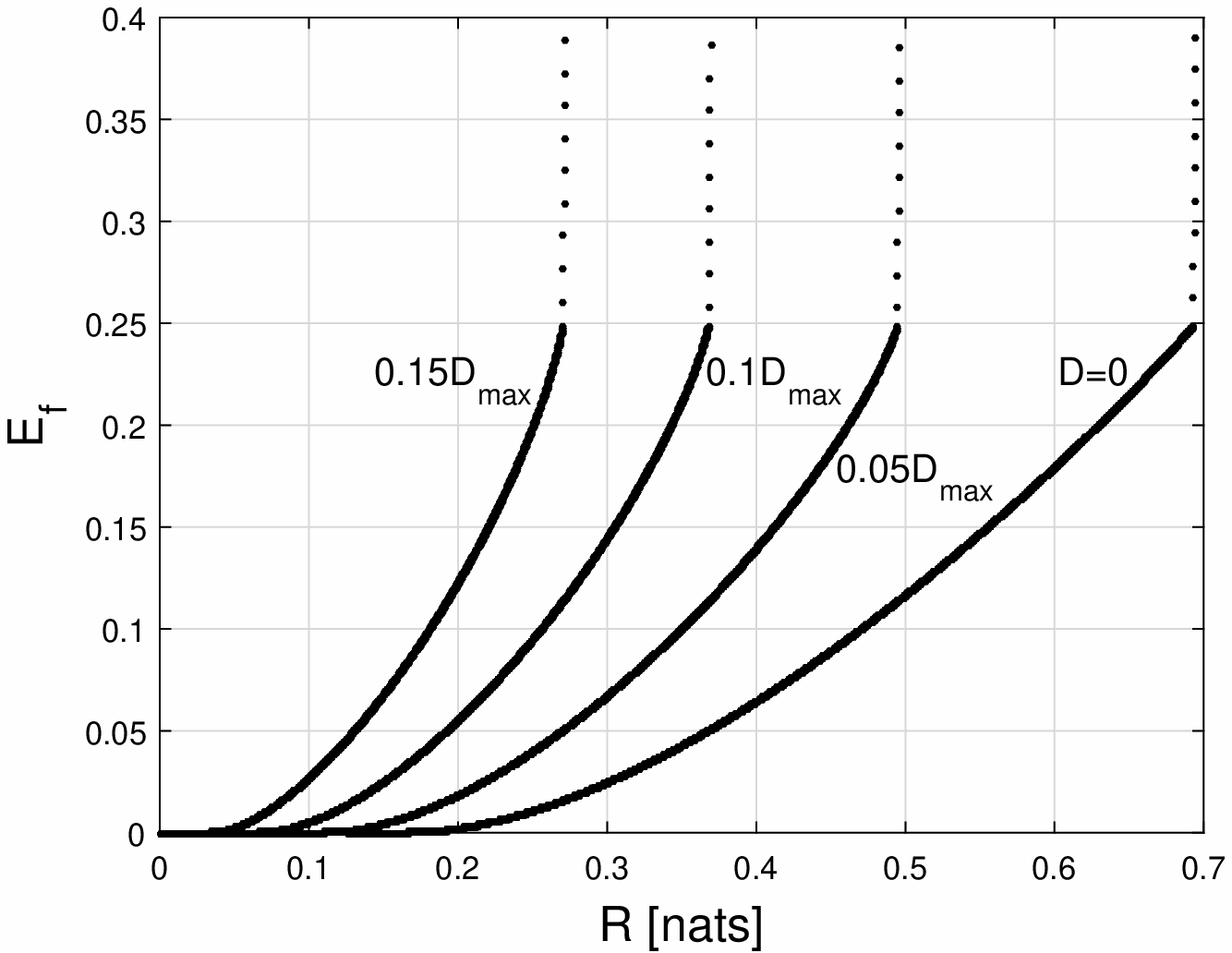, scale = 1}
\caption{Encoding failure exponent (\ref{eqEF}) vs. $R$, for $D\,=\,\{0, \, 0.05, 0.10, 0.15\}\cdot \ln\frac{1-p}{p} \,\geq \, 0 \, = \, \max_{x} \min_{\hat{x}} d(x, \hat{x})$.
Parameter $p=0.22$.\newline
Source: ${\cal X} = \{a,\,b,\,c,\,d\}$, $P(a) = P(d) = \frac{1-p}{2}$,
$P(b) = P(c) = \frac{p}{2}$.
Reproduction: $\hat{\cal X} \, = \, \{0,\,1\}$, $Q(0) = Q(1) = \frac{1}{2}$.\newline
Distortion measure: $d(a,\,0)=d(b,\,0)=d(c,\,1)=d(d,\,1)=0$, $\;d(a,\,1)=d(d,\,0)=\ln\frac{1-p}{p}$,
$\;d(b,\,1)=d(c,\,0)=-\ln\frac{1-p}{p}$.\newline
As $R \, \nearrow \, \max_{T(x)}R(T,Q,D)$, each curve $E_{f}(R) \, \nearrow \, \log\frac{1}{1-p} \, \approx \, 0.249$.\newline
For $R \, > \, \max_{T(x)}R(T,Q,D)$ the encoding failure exponent is $+\infty$.\newline
For $D \, \geq \, D_{max} \, = \, \max_{x,\,\hat{x}}d(x,\hat{x}) \, = \, \ln\frac{1-p}{p}$ the encoding failure exponent is $+\infty$.\newline
This example corresponds also to the channel correct-decoding exponent (with the same values of $D$) for the channel
${\rm BSC}(p)$ and $Q(0) = \frac{1}{2}$.}
\label{figExamples2}
\end{center}
\end{figure}

\begin{figure}[ht]
\begin{center}
\epsfig{file=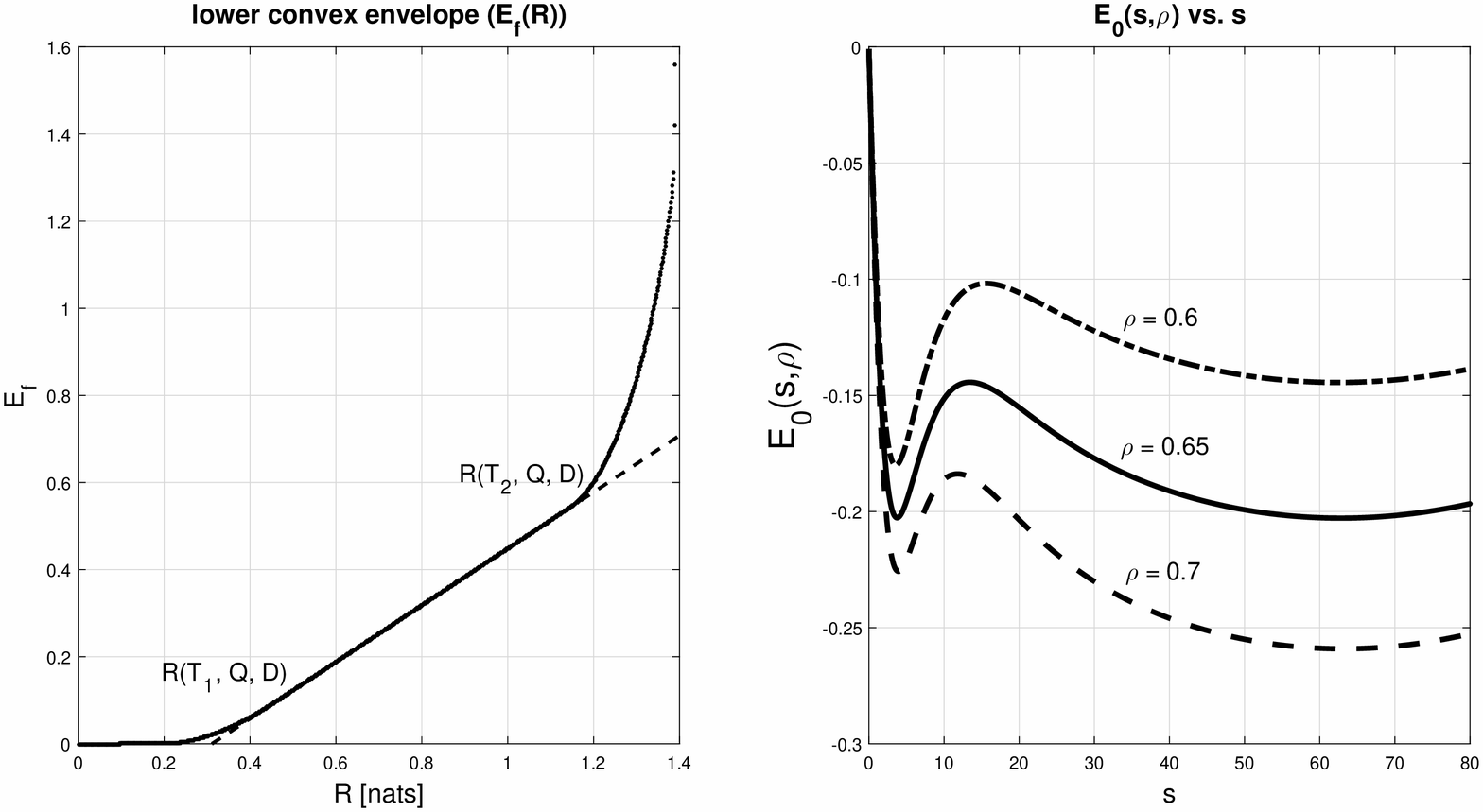, scale = 0.5}
\caption[]{Encoding failure exponent: an example where $E_{f}(R)$ (\ref{eqFailure}) does not always coincide with its lower convex envelope (\ref{eqEF}).\newline
Source / reproduction: $|{\cal X}| = |\hat{\cal X}| = 5$, $P = [0.2923, 0.0142, 0.2673, 0.3210, 0.1051]$,
$Q = [0.2573, 0.0908, 0.2437, 0.0294, 0.3787]$.\newline
Distortion measure and constraint: $\; d(x,\hat{x}) - D \; = \; \left[\begin{array}{r r r r r}
-0.0799 &  0.1580 &  0.0425 &  0.0673 & -0.3449 \\
 0.0815 &  0.2024 & -0.1511 &  0.1030 &  0.4020 \\
 0.0147 & -0.0079 &  0.7994 &  0.6861 &  0.1450 \\
 0.8545 &  0.9160 &  0.9066 &  0.5624 & -0.0015 \\
-0.2179 & -0.4107 & -0.0435 & -0.2367 & -0.2594
\end{array}\right]$.\newline
Observe, that the difference $d(x,\hat{x}) - D$ satisfies the condition of Theorem~\ref{thm4} (each row has negative values).
Therefore (\ref{eqEF}) holds.\newline
The lower convex envelope (\ref{eqEF}) is depicted in the left graph. The envelope has a segment with constant slope $\rho = 0.65$.\newline
The right graph shows, that there are exactly $2$ different values $s^{*}$, achieving the supremum inside $-\ln$ in (\ref{eqEF}) for $\rho = 0.65$:\newline
$E_{0}(\rho = 0.65) \, = \, \inf_{s\,\geq\,0}E_{0}(s,\rho = 0.65)\, = \, E_{0}(s_{i}^{*},\,\rho = 0.65) \, = \, D(T_{i} \| P) - \rho R(T_{i}, Q, D), \;\;\; i \, = \, 1,\,2$.\newline
Note from the right graph, that there are $2$ ``modes'' (local minima) in the curves of $E_{0}(s,\rho)$ vs. $s$, for each $\rho$.\newline
The left mode provides a unique solution
$R_{\rho} \, = \, R(T_{\rho, s^{*}}, Q, D) \, < \, R(T_{1}, Q, D)$ for $\rho < 0.65$.\newline
The right mode provides a unique solution
$R_{\rho} \, = \, R(T_{\rho, s^{*}}, Q, D) \, > \, R(T_{2}, Q, D)$ for $\rho > 0.65$.\newline
$E_{f}(R)$ (\ref{eqFailure}) must run {\em strictly above} its lower convex envelope (\ref{eqEF}) for
$R(T_{1}, Q, D)\, < \, R \, < \, R(T_{2}, Q, D)$, according to Lemmas~\ref{lemma8} and~\ref{lemma5}.
}
\label{figExamples3}
\end{center}
\end{figure}

\section{\bf Derivation of the explicit channel correct-decoding exponent} \label{S8}
We would like to show that the channel correct-decoding exponent (\ref{eqCorrect}) is equivalent to (\ref{eqCorrectEquiv}),
and find an explicit expression for it.

The expression (\ref{eqCorrect}), which is written with the help of the Csisz\'ar-K\"orner
style brackets $|\cdot|^{+}$ for compactness, translates into the minimum between two exponents:
\begin{align}
& \min_{T(x, \, y)} \; \Big\{ D(T \; \| \; Q \circ P) \;\; + \;\; {\big| R \; - \; R(T, Q, 0)\big|\mathstrut}_{}^{+} \Big\}
\; = \; \min \; \big\{E_{A}(R), \;\; E_{B}(R)\big\} \; =
\label{eqChannelE1E2} \\
& \min \; \bigg\{\min_{T(x,\,y): \;\; R(T, \,Q, \,0) \; \geq \; R} \; D(T \; \| \; Q \circ P), \;\;\;
\min_{T(x,\,y): \;\; R(T, \,Q, \,0) \; \leq \; R} \; \big\{ D(T \; \| \; Q \circ P) \; + \; R \; - \; R(T, Q, 0) \big\}
\bigg\}.
\nonumber
\end{align}
Note, that the left exponent $E_{A}(R)$ is the same as (\ref{eqFailure}), after we make the substitutions (\ref{eqSubstitutions1})-(\ref{eqSubstitutions4}).
Therefore, in order to characterize this exponent, we can use Lemmas~\ref{lemma4}, \ref{lemma6}, \ref{lemma7}, \ref{lemma8}, with the substitutions.

With the distortion measure (\ref{eqDmeasure}) we have
\begin{align}
\max_{x, \,y} \;\min_{\hat{x}}\; d\big((x, y), \hat{x}\big) \;\; & = \;\; \max_{x} \; \max_{y} \; \min_{\hat{x}}\; \ln \frac{P(y \,|\, x)}{P(y \,|\, \hat{x})}
\nonumber \\
& \leq \;\; \max_{x} \; \max_{y} \;\;\;\;\;\;\;\;\, \ln \frac{P(y \,|\, x)}{P(y \,|\, x)} \;\; = \;\; 0
\;\; = \;\; \;\;\;\;\;\;\;\; \max_{y} \; \min_{\hat{x}}\; \ln \frac{P(y \,|\, \hat{x})}{P(y \,|\, \hat{x})}
\nonumber \\
& \;\;\;\;\;\;\;\;\;\;\;\;\;\;\;\;\;\;\;\;\;\;\;\;\;\;\;\;\;\;\;\;\;\;\;\;\;\;\;\;\;\;\;\;\;\;\;\;\;\;\;\;\;\;\;\;\;\;\;
\leq \;\; \max_{x} \; \max_{y} \; \min_{\hat{x}}\; \ln \frac{P(y \,|\, x)}{P(y \,|\, \hat{x})},
\nonumber \\
\max_{x, \,y} \;\min_{\hat{x}}\; d\big((x, y), \hat{x}\big) \;\; & = \;\; 0,
\label{eqDistorionZero}
\end{align}
which is precisely the condition of Lemma~\ref{lemma7} with $D\,=\,0$ (satisfied with equality).
Therefore, by Lemma~\ref{lemma7} and Lemma~\ref{lemma6}, for the left exponent $E_{A}(R)$ we can write
\begin{equation} \label{eqLowerConvexEnvelopeE1}
\text{\em lower convex envelope}\;\big(E_{A}(R)\big) \; = \;
\sup_{\rho \, \geq \, 0} \;
\min_{T(x, \, y)} \big\{ D(T \; \| \; Q \circ P) \; - \; \rho\big[R(T, Q, 0) \; - \; R\big]\big\}.
\end{equation}

Similarly to the case of the encoding success (and the channel decoding error) exponent,
in order to compare between $E_{A}(R)$ and $E_{B}(R)$,
it is useful to consider the following function of $T$
(this time with a minus before $R(T,Q,0)$):
\begin{displaymath}
D(T \; \| \; Q \circ P) \; - \; R(T, Q, 0).
\end{displaymath}
Its minimum is given by identity (\ref{eqRho}) of Lemma~\ref{lemma8}:
\begin{align}
\min_{T(x, \, y)} \big\{ D(T \; \| \; Q \circ P) \; - \; R(T, Q, 0)\big\}
& \;\; \overset{(\ref{eqRho})}{=} \;\;
-\sup_{s\,\geq\,0} \; \ln \; \sum_{x, \, y}Q(x)P(y \,|\, x)\Bigg[\sum_{\hat{x}}Q(\hat{x})\left[\frac{P(y \,|\, x)}{P(y \,|\, \hat{x})}\right]^{-s}\Bigg]^{-1}
\nonumber \\
& \;\;\, = \;\;
-\sup_{s\,\geq\,0} \; \ln \; \sum_{y}\bigg[\sum_{x}
\frac
{Q(x)P^{s}(y \,|\, x)}
{\sum_{\hat{x}}Q(\hat{x})P^{s}(y \,|\, \hat{x})}
P(y \,|\, x)\bigg]
\nonumber \\
& \;\;\, \geq \;\;
-\ln \; \sum_{y} \max_{x}P(y \,|\, x)
\nonumber \\
& \;\;\, = \;\;
-\lim_{s\,\rightarrow\,+\infty} \; \ln \; \sum_{y}\bigg[\sum_{x}
\frac
{Q(x)P^{s}(y \,|\, x)}
{\sum_{\hat{x}}Q(\hat{x})P^{s}(y \,|\, \hat{x})}
P(y \,|\, x)\bigg]
\nonumber \\
& \;\;\, \geq \;\;
-\sup_{s\,\geq\,0} \; \ln \; \sum_{y}\bigg[\sum_{x}
\frac
{Q(x)P^{s}(y \,|\, x)}
{\sum_{\hat{x}}Q(\hat{x})P^{s}(y \,|\, \hat{x})}
P(y \,|\, x)\bigg].
\nonumber
\end{align}
That is, the supremum is achieved 
when $s\,\rightarrow\,+\infty$:
\begin{align}
\min_{T(x, \, y)} \big\{ D(T \; \| \; Q \circ P) \; - \; R(T, Q, 0)\big\}
& \;\;\, = \;\;
-\lim_{s\,\rightarrow\,+\infty} \; \ln \; \sum_{x, \, y}Q(x)P(y \,|\, x)\Bigg[\sum_{\hat{x}}Q(\hat{x})\left[\frac{P(y \,|\, x)}{P(y \,|\, \hat{x})}\right]^{-s}\Bigg]^{-1}
\nonumber \\
& \;\;\, = \;\;
D(T_{1} \; \| \; Q \circ P) \; - \;  R(T_{1},\,Q, \,0),
\label{eqDRTQD1}
\end{align}
where by (\ref{eqTfin})
\begin{displaymath}
T_{1}(x, y) \;\; \propto \;\; \lim_{s\,\rightarrow \, +\infty} Q(x)P(y \,|\, x)\Bigg[\sum_{\hat{x}}Q(\hat{x})\left[\frac{P(y \,|\, x)}{P(y \,|\, \hat{x})}\right]^{-s}\Bigg]^{-1}.
\end{displaymath}
By Lemma~\ref{lemma4} we conclude, that $E_{A}(R)$ touches the line
\begin{displaymath}
\min_{T(x, \, y)} \big\{D(T \; \| \; Q \circ P) \; - \; \big[R(T, Q, 0) \; - \; R\big]\big\}
\end{displaymath}
at $R_{1} \, = \, R(T_{1}, Q, 0)$.

For $0 \, \leq \, \rho \, < \, 1$, by Lemma~\ref{lemma8} we obtain
\begin{align}
\min_{T(x, \, y)} \big\{ D(T \; \| \; Q \circ P) \; - \; \rho R(T, Q, 0)\big\}
\; & = \;
-\ln \; \sup_{s\,\geq\,0} \; \sum_{x, \, y}Q(x)P(y \,|\, x)\Bigg[\sum_{\hat{x}}Q(\hat{x})\left[\frac{P(y \,|\, x)}{P(y \,|\, \hat{x})}\right]^{-s}\Bigg]^{-\rho}
\nonumber \\
& = \;
-\ln \; \sup_{s\,\geq\,0} \; \sum_{x, \, y}Q(x)P^{1 + s\rho}(y \,|\, x)\Bigg[\sum_{\hat{x}}Q(\hat{x})P^{s}(y \,|\, \hat{x})\Bigg]^{-\rho}
\nonumber \\
& \!\overset{(*)}{=} \;
-\ln \; \sum_{x, \, y}Q(x)P^{\frac{1}{1-\rho}}(y \,|\, x)\Bigg[\sum_{\hat{x}}Q(\hat{x})P^{\frac{1}{1-\rho}}(y \,|\, \hat{x})\Bigg]^{-\rho}
\nonumber \\
& = \;
-\ln \; \sum_{y}\Bigg[\sum_{\hat{x}}Q(\hat{x})P^{\frac{1}{1-\rho}}(y \,|\, \hat{x})\Bigg]^{1-\rho},
\label{eqExplicitrho}
\end{align}
where ($*$) follows by H\"older's inequality.
For each $0 \, \leq \, \rho \, < \, 1$
\begin{equation} \label{eqDRTQD}
\min_{T(x, \, y)} \big\{ D(T \; \| \; Q \circ P) \; - \; \rho R(T, Q, 0)\big\}
\; = \;
D(T_{\rho} \; \| \; Q \circ P) \; - \; \rho R(T_{\rho}, \,Q, \,0),
\end{equation}
where by (\ref{eqTfin})
\begin{displaymath}
T_{\rho}(x, y) \;\; \propto \;\; Q(x)P(y \,|\, x)\Bigg[\sum_{\hat{x}}Q(\hat{x})\left[\frac{P(y \,|\, x)}{P(y \,|\, \hat{x})}\right]^{-\frac{1}{1-\rho}}\Bigg]^{-\rho}.
\end{displaymath}
By Lemma~\ref{lemma4}, it appears, that $E_{A}(R)$ touches each line
\begin{displaymath}
\min_{T(x, \, y)} \big\{D(T \; \| \; Q \circ P) \; - \; \rho \big[R(T, Q, 0) \; - \; R\big]\big\}, \;\;\;\;\;\; 0 \, \leq \, \rho \, \leq \, 1,
\end{displaymath}
at $R_{\rho} \, = \, R(T_{\rho}, Q, 0)$.
Since $E_{A}(R)$ is nondecreasing,
by continuity of $R(T_{\rho}, Q, 0)$ as a function of $\rho$
for $0 \, \leq \, \rho \, \leq \, 1$,
we conclude that
\begin{equation} \label{eqConvexE1}
E_{A}(R) \; = \;
\sup_{0 \, \leq \, \rho \, \leq \, 1} \;
\min_{T(x, \, y)} \big\{ D(T \; \| \; Q \circ P) \; - \; \rho\big[R(T, Q, 0) \; - \; R\big]\big\},
\;\;\;\;\;\; R \, \leq \, R_{1}.
\end{equation}

Now we proceed to the right exponent $E_{B}(R)$,
which is lower-bounded as follows:
\begin{align}
E_{B}(R) \; & = \; \min_{T(x, \, y): \;\; R(T, \,Q, \,0) \; \leq \; R} \; \big\{ D(T \; \| \; Q \circ P) \; + \; R \; - \;  R(T, Q, 0) \big\} \nonumber \\
& \geq \;
\sup_{\rho \,\geq \,0} \;\; \min_{T(x, \, y): \;\;R(T, \,Q, \,0) \; \leq \; R} \;\big\{ D(T \; \| \; Q \circ P) \; + \; R \; - \; R(T, Q, 0) \; + \; \rho\big[R(T, Q, 0) \; - \; R\big]\big\}
\nonumber \\
& \geq \;
\sup_{\rho \,\geq \,0} \;\;\;\;\;\;\;\;\;\;\; \min_{T(x, \, y)}
\;\;\;\;\;\;\;\;\;\;
\big\{ D(T \; \| \; Q \circ P) \; + \; R \; - \; R(T, Q, 0) \; + \; \rho\big[R(T, Q, 0) \; - \; R\big]\big\}
\nonumber \\
& = \;
\sup_{\rho \,\geq \,0} \;\;\;\;\;\;\;\;\;\;\; \min_{T(x, \, y)}
\;\;\;\;\;\;\;\;\;\;
\big\{ D(T \; \| \; Q \circ P) \; - \; (1-\rho)\big[R(T, Q, 0) \; - \; R\big]\big\}
\nonumber \\
& \geq \;
\sup_{0 \, \leq \,\rho \,\leq \,1} \;\;\;\;\;\;\; \min_{T(x, \, y)}
\;\;\;\;\;\;\;\;\;\;
\big\{ D(T \; \| \; Q \circ P) \; - \; (1-\rho)\big[R(T, Q, 0) \; - \; R\big]\big\}
\nonumber \\
& = \;
\sup_{0 \, \leq \,\rho \,\leq \,1} \;\;\;\;\;\;\; \min_{T(x, \, y)}
\;\;\;\;\;\;\;\;\;\;
\big\{ D(T \; \| \; Q \circ P) \; - \; \rho\big[R(T, Q, 0) \; - \; R\big]\big\}.
\label{eqE2F}
\end{align}
Observe further, that for $R \geq R_{1}$
\begin{align}
E_{B}(R) \; & = \; \min_{T(x, \, y): \;\; R(T, \,Q, \,0) \; \leq \; R} \; \big\{ D(T \; \| \; Q \circ P) \; + \; R \; - \;  R(T, Q, 0) \big\}
\nonumber \\
& = \; D(T_{1} \; \| \; Q \circ P) \; + \; R \; - \;  R(T_{1}, Q, 0)
\nonumber \\
& = \; \min_{T(x, \, y)} \; \big\{ D(T \; \| \; Q \circ P) \; + \; R \; - \;  R(T, Q, 0) \big\}
\label{eqE2Line} \\
& \leq \;
\sup_{0 \, \leq \,\rho \,\leq \,1} \; \min_{T(x, \, y)} \;
\big\{ D(T \; \| \; Q \circ P) \; - \; \rho\big[R(T, Q, 0) \; - \; R\big]\big\}, \;\;\;\;\;\; R \; \geq \; R_{1}.
\label{eqE2Upper2}
\end{align}

Comparing (\ref{eqConvexE1}) with (\ref{eqE2F}), for $R \, \leq \, R_{1}$, and (\ref{eqLowerConvexEnvelopeE1}) with (\ref{eqE2F})-(\ref{eqE2Upper2}), for $R \, \geq \, R_{1}$, respectively, we ascertain the validity of (\ref{eqCorrectEquiv}),
and obtain the explicit expression\footnote{Starting from (\ref{eqFailure})-(\ref{eqCorrect}), it is a lengthy derivation of the explicit channel correct-decoding exponent. Its purpose is to prove (\ref{eqCorrectEquiv}), which shows the relation to the encoding failure exponent for sources (\ref{eqFailure}),
and a sanity check of (\ref{eqCorrect}). A much shorter derivation of the explicit channel correct-decoding exponent can be made from an alternative expression: $\displaystyle\min_{T(x,\,y)}\Big\{D(T \; \| \; Q \,\circ \,P) \;\; + \;\; {\big| R \; - \; D(T \; \| \; Q \times [T]_{y})   \big|\mathstrut}_{}^{+}\Big\}$. This expression, unlike (\ref{eqCorrect}), leads to {\em convex} objective functions,
and has itself a simple derivation/explanation (alternative to the derivation of (\ref{eqCorrect})) as the channel correct-decoding exponent.}:
\begin{align}
E_{c}(R) \; & \; = \;\;
\sup_{0 \, \leq \, \rho \, \leq \, 1} \;
\min_{T(x, \, y)} \big\{ D(T \; \| \; Q \circ P) \; - \; \rho\big[R(T, Q, 0) \; - \; R\big]\big\}
\nonumber \\
& \overset{(*)}{=} \; \sup_{0 \, \leq \,\rho \, < \,1} \; \Bigg\{ -\ln \; \sum_{y}\Bigg[\sum_{\hat{x}}Q(\hat{x})P^{\frac{1}{1-\rho}}(y \,|\, \hat{x})\Bigg]^{1-\rho} \; + \; \rho R \Bigg\},
\label{eqEC}
\end{align}
where ($*$) follows from (\ref{eqDRTQD1})-(\ref{eqDRTQD}), the fact that $T_{\rho} \, \rightarrow\, T_{1}$, as $\rho \, \rightarrow\, 1$, and by continuity of $D(T \; \| \; Q \circ P)$ and $R(T, Q, 0)$, as functions of $T$.
This, together with the Gallager expression (\ref{eqEE}) forms a single convex ($\cup$) ``error/correct-decoding'' exponent curve:
\begin{displaymath}
E_{e-c}(R) \;\; = \;\;
\sup_{-1 \, \leq \,\rho \, < \,1\;\;} \; \Bigg\{ -\ln \; \sum_{y}\Bigg[\sum_{\hat{x}}Q(\hat{x})P^{\frac{1}{1-\rho}}(y \,|\, \hat{x})\Bigg]^{1-\rho} \; + \; \rho R \Bigg\}.
\end{displaymath}

\section{{\bf Extension of the channel decoding error exponent to arbitrary} \texorpdfstring{$D$}{\em D}} \label{S9}
The channel decoding error exponent (\ref{eqError}) can be written for arbitrary $D$ as
\begin{equation} \label{eqErrorD}
\lim_{n \, \rightarrow \, \infty} \; \left\{-\frac{1}{n}\ln P_{e}\right\} \; = \;
E_{e}(R, D)
\; \triangleq \;
\min_{T(x, \, y)} \; \Big\{ D(T \; \| \; Q \circ P) \;\; + \;\; {\big| R(T, Q, D) \; - \; R   \big|\mathstrut}_{}^{+} \Big\},
\end{equation}
where $R(T, Q, D)$ is determined with respect to the distortion measure $d\big((x, y), \hat{x}\big)$ (\ref{eqDmeasure}),
with the possible exception of $D \, = \, D_{\min}\, = \, \min_{(x, \, y), \, \hat{x}}d\big((x, y), \hat{x}\big)$,
when the RHS is a lower bound.
This exponent is exactly the same as the encoding success exponent (\ref{eqSuccess}), after we make the substitutions (\ref{eqSubstitutions1})-(\ref{eqSubstitutions3}).

Note, that the original exponent (\ref{eqSuccess}) corresponds to the ``encoding success'' condition
\begin{equation} \label{eqSuccessCondition}
\sum_{x, \,\hat{x}}T(x)W(\hat{x} \,|\, x)d(x, \, \hat{x}) \;\; \leq \;\; D,
\end{equation}
where the joint distribution $T(x)W(\hat{x} \,|\, x)$ {\em represents} the joint type of a source sequence $\bf x$ and a reproduction sequence $\hat{\bf x}_{m}$.
That is, the encoding success condition (\ref{eqSuccessCondition}) is an extension (to the set of all distributions) of the condition on the joint type
of sequences of length $n$:
\begin{displaymath}
\sum_{x, \,\hat{x}}P_{{\bf x}, \, \hat{\bf x}_{m}}(x, \,\hat{x})d(x, \, \hat{x}) \;\; \leq \;\; D.
\end{displaymath}
This condition, in turn, represents the encoding success event:
\begin{displaymath}
\big\{\exists \, m\,: \;\;\; d({\bf X}, \hat{\bf X}_{m}) \;\; \leq \;\; nD\big\}.
\end{displaymath}

It is obvious from the definition of the encoding success condition, that the decoding error exponent (\ref{eqErrorD})
corresponds to a decoding error condition:
\begin{align}
\sum_{x, \, y, \,\hat{x}}T(x, y)W(\hat{x} \,|\, x, y)d\big((x, y), \, \hat{x}\big) \;\; & \leq \;\; D \nonumber \\
\sum_{x, \, y, \,\hat{x}}T(x, y)W(\hat{x} \,|\, x, y)\ln \frac{P(y \,|\, x)}{P(y \,|\, \hat{x})} \;\; & \leq \;\; D
\nonumber \\
\sum_{x, \, y, \,\hat{x}}T(x, y)W(\hat{x} \,|\, x, y)\ln P(y \,|\, x) \;\; & \leq \;\; D \; + \;
\sum_{x, \, y, \,\hat{x}}T(x, y)W(\hat{x} \,|\, x, y)\ln P(y \,|\, \hat{x}),
\nonumber
\end{align}
where $T(x, y)W(\hat{x} \,|\, x, y)$ represents the joint type of a transmitted codeword ${\bf x}_{m}$, a received vector $\bf y$,
and a competing codeword ${\bf x}_{m'}$. The decoding error condition represents the decoding error event:
\begin{equation} \label{eqErrorEvent}
\left\{\exists \, m'\,\neq \,m\,: \;\;\; \ln\frac{P({\bf Y} \,|\, {\bf X}_{m})}{P({\bf Y} \,|\, {\bf X}_{m'})} \;\; \leq \;\; nD\right\}.
\end{equation}

A positive $D$ amounts to a stricter receiver, which requires a confidence distance greater than $nD$ between the log-likelihoods of the most likely codeword and the second most likely codeword, in order to make a decision. In this case, the decoding error event consists of an erasure and an undetected error.

A negative $D$ amounts to a list decoder. All codewords with log-likelihoods within a distance of less than $-nD$ from the most likely codeword are in the list.
A decoding error occurs when the transmitted codeword is not in the list.

\section{{\bf Explicit channel decoding error exponent with arbitrary} \texorpdfstring{$D$}{\em D}} \label{S10}
The substitutions (\ref{eqSubstitutions1})-(\ref{eqSubstitutions3}) into (\ref{eqES}) give an explicit form of (\ref{eqErrorD}):
\begin{equation} \label{eqDecErrorArbD}
E_{e}(R, D)
\; = \; \sup_{0 \, \leq \,\rho \,\leq \,1} \; \Bigg\{ -\inf_{s\,\geq\,0} \; \ln \; \sum_{x, \,y}Q(x)P(y \,|\, x)\Bigg[\sum_{\hat{x}}Q(\hat{x})\left[\frac{P(y \,|\, x)}{P(y \,|\, \hat{x})}\,e^{-D}\right]^{-s}\Bigg]^{\rho} \; - \; \rho R \Bigg\}.
\end{equation}

\section{{\bf Extension of the channel correct-decoding exponent to arbitrary} \texorpdfstring{$D$}{\em D}} \label{S11}
A natural extension is possible with respect to the decoding error event defined by (\ref{eqErrorEvent}).
In this case, the correct-decoding exponent is given by
\begin{equation} \label{eqCorrectExtended}
E_{c}^{*}(R, D) \; = \; \min_{T(x, \, y): \;\; R(T, \, Q, \, D) \; \geq \; R} \; D(T \; \| \; Q \circ P),
\end{equation}
with the possible exception of points of discontinuity of this function.
This exponent is exactly the same as the encoding failure exponent (\ref{eqFailure}), after we make the substitutions (\ref{eqSubstitutions1})-(\ref{eqSubstitutions3}),
also in the case $D=0$. The superscript $^{*}$ serves to indicate that this exponent is different from (\ref{eqCorrect}) or (\ref{eqCorrectEquiv}), for $D=0$,
as here the receiver declares an error also when there is only an equality in (\ref{eqErrorEvent}), i.e. no tie-breaking\footnote{This distinction is important in the case of the correct-decoding exponent, but not in the case of the decoding error exponent.}.

\section{{\bf Explicit channel correct-decoding exponent with arbitrary} \texorpdfstring{$D$}{\em D}} \label{S12}
Since (\ref{eqCorrectExtended}) is equivalent to the encoding failure exponent (\ref{eqFailure}), we can use Theorem~\ref{thm4} with substitutions (\ref{eqSubstitutions1})-(\ref{eqSubstitutions3}) and (\ref{eqDistorionZero}).
For distortion constraint $D \geq 0$:
\begin{align}
& \text{\em lower convex envelope}\;\big(E_{c}^{*}(R)\big) \; = \;
\nonumber \\
& \sup_{\rho \, \geq \, 0} \; \Bigg\{-\sup_{s\,\geq\,0} \; \ln \; \sum_{x, \,y}Q(x)P(y \,|\, x)\Bigg[\sum_{\hat{x}}Q(\hat{x})\left[\frac{P(y \,|\, x)}{P(y \,|\, \hat{x})}\,e^{-D}\right]^{-s}\Bigg]^{-\rho} \; + \; \rho R\Bigg\}.
\nonumber
\end{align}
For $D < 0$, the right-hand side expression gives zero, which is strictly lower than $E_{c}^{*}(R)$,
if $R > R(Q \circ P, \, Q, \,D)$.

\section{\bf Random coding error exponent of Forney's decoder (lower bound)} \label{S13}
In \cite{Forney68} the decoding error event, given that message $m$ is transmitted, is defined as
\begin{equation} \label{eqForneyErrorEvent}
{\cal E}_{m} \; \triangleq \; \bigg\{\ln\;\frac{P({\bf Y} \,|\, {\bf X\mathstrut}_{m})}{\sum_{m'\,\neq\,m}P({\bf Y} \,|\, {\bf X\mathstrut}_{m'})} \;\; < \;\; nD\bigg\}.
\end{equation}
This is different from the definition of the decoding error event (\ref{eqErrorEvent}) we have used
in order to establish duality between channel decoding and source encoding.
The sum over $m'$, which appears in Forney's metric (\ref{eqForneyErrorEvent}), and consists of an exponentially large number ($e^{nR}\,$) of terms, can be written equivalently as another sum --- of a polynomial number of terms --- over conditional types of different ${\bf X\mathstrut}_{m'}$ given a transmitted-received vector pair\footnote{Conditioning on a transmitted vector is not necessary for our derivation.} $({\bf X\mathstrut}_{m}, {\bf Y})$.
Denote these conditional types as ${P\mathstrut}_{\hat{\bf x} \, | \,{\bf x}, \, {\bf y}}^{}$. Then we can write the sum using indicator functions as
\begin{align}
\sum_{m'\,\neq\,m}P({\bf Y} \,|\, {\bf X\mathstrut}_{m'}) \;\; & = \;\;
\;\sum_{{P\mathstrut}_{\hat{\bf x} \, | \,{\bf x}, \, {\bf y}}^{}}
\;\sum_{m'\,\neq\,m} P({\bf Y} \,|\, {\bf X\mathstrut}_{m'})\cdot
\mathbbm{1}_{\displaystyle\big\{{\bf X\mathstrut}_{m'} \; \in \; T\big({P\mathstrut}_{\hat{\bf x} \, | \,{\bf x}, \, {\bf y}}^{}, \, {\bf X\mathstrut}_{m}, {\bf Y}\big)\big\}}(m')
\nonumber \\
& = \;\; \!\!\sum_{\;\;\;{P\mathstrut}_{\hat{\bf x} \, | \,{\bf x}, \, {\bf y}}^{}}
{e\mathstrut}_{}^{-nE({P\mathstrut}_{\hat{\bf x} \, | \,{\bf x}, \, {\bf y}}^{},\, {\bf X\mathstrut}_{m}, {\bf Y})}
\label{eqTypeExp} \\
& \leq \;\; \!\!\sum_{\;\;\;{P\mathstrut}_{\hat{\bf x} \, | \,{\bf x}, \, {\bf y}}^{}}
{e\mathstrut}_{}^{-nE_{\min}({\bf X\mathstrut}_{m}, {\bf Y})}
\;\; \leq \;\; {{(n\, + \, 1)\mathstrut}^{|{\cal X}|\cdot|{\cal X}|\cdot|{\cal Y}|}}\cdot
{e\mathstrut}_{}^{-nE_{\min}({\bf X\mathstrut}_{m}, {\bf Y})},
\label{eqForneySum}
\end{align}
where some of the exponents $E\big({P\mathstrut}_{\hat{\bf x} \, | \,{\bf x}, \, {\bf y}}^{}, {\bf X\mathstrut}_{m}, {\bf Y}\big)$ may have value $+\infty$ (when the corresponding conditional type class $T\big({P\mathstrut}_{\hat{\bf x} \, | \,{\bf x}, \, {\bf y}}^{}, \, {\bf X\mathstrut}_{m}, {\bf Y}\big)$ is not represented among ${\bf X\mathstrut}_{m'}$),
but not all the exponents are $+\infty$ at the same time,
and
$E_{\min}\big({\bf X\mathstrut}_{m}, {\bf Y}\big)\,\triangleq\,\min_{{P\mathstrut}_{\hat{\bf x} \, | \,{\bf x}, \, {\bf y}}^{}}E\big({P\mathstrut}_{\hat{\bf x} \, | \,{\bf x}, \, {\bf y}}^{}, {\bf X\mathstrut}_{m}, {\bf Y}\big)\,<\,+\infty$.
Note, that here the exponents $E\big({P\mathstrut}_{\hat{\bf x} \, | \,{\bf x}, \, {\bf y}}^{}, {\bf X\mathstrut}_{m}, {\bf Y}\big)$
and their minimum over conditional types $E_{\min}\big({\bf X\mathstrut}_{m}, {\bf Y}\big)$ are random variables,
also given $({\bf X\mathstrut}_{m}, {\bf Y})$.\footnote{It is convenient to think that $E\big({P\mathstrut}_{\hat{\bf x} \, | \,{\bf x}, \, {\bf y}}^{}, {\bf X\mathstrut}_{m}, {\bf Y}\big)$ is a function of all stochastic matrices ${P\mathstrut}_{\hat{\bf x} \, | \,{\bf x}, \, {\bf y}}^{}$ possible for a given block length $n$, regardless of the joint type of $\big({\bf X\mathstrut}_{m}, {\bf Y}\big)$. If a stochastic matrix ${P\mathstrut}_{\hat{\bf x} \, | \,{\bf x}, \, {\bf y}}^{}$ is not compatible with a certain joint type of $\big({\bf X\mathstrut}_{m}, {\bf Y}\big)$, then simply $T\big({P\mathstrut}_{\hat{\bf x} \, | \,{\bf x}, \, {\bf y}}^{}, \, {\bf X\mathstrut}_{m}, {\bf Y}\big)\,=\,\emptyset\;$ and $\;E\big({P\mathstrut}_{\hat{\bf x} \, | \,{\bf x}, \, {\bf y}}^{}, {\bf X\mathstrut}_{m}, {\bf Y}\big)\,=\,+\infty$, i.e. the corresponding term in (\ref{eqTypeExp}) is zero. Given a joint type of $\big({\bf X\mathstrut}_{m}, {\bf Y}\big)$ (and nothing else), the random variables $E\big({P\mathstrut}_{\hat{\bf x} \, | \,{\bf x}, \, {\bf y}}^{}, {\bf X\mathstrut}_{m}, {\bf Y}\big)$ and $E_{\min}\big({\bf X\mathstrut}_{m}, {\bf Y}\big)$ become independent of $\big({\bf X\mathstrut}_{m}, {\bf Y}\big)$, but their proper definitions still require a reference to $\big({\bf X\mathstrut}_{m}, {\bf Y}\big)$.}

Using types, we can upper-bound the ensemble average probability of error, given that message $m$ is transmitted, as follows
\begin{align}
& 
\Pr \, \{{\cal E}_{m}\}
\; \leq \sum_{\;\;\;{P\mathstrut}_{{\bf x}, \,{\bf y}, \, \hat{\bf x}}^{}}
\,\Pr\,\big\{({\bf X\mathstrut}_{m}, {\bf Y}) \, \in \, T({P\mathstrut}_{{\bf x}, \,{\bf y}}^{})\big\}
\,\cdot\,\Pr\,\Big\{E\big({P\mathstrut}_{\hat{\bf x} \, | \,{\bf x}, \, {\bf y}}^{}, {\bf X\mathstrut}_{m}, {\bf Y}\big) = E_{\min}\big({\bf X\mathstrut}_{m}, {\bf Y}\big), \; {\cal E}_{m} \,\Big|\, {P\mathstrut}_{{\bf x}, \,{\bf y}}^{}\Big\}
\nonumber \\
& \overset{(a)}{\leq} \sum_{\;\;\;{P\mathstrut}_{{\bf x}, \,{\bf y}, \, \hat{\bf x}}^{}}
\,\Pr\,\big\{({\bf X\mathstrut}_{m}, {\bf Y}) \, \in \, T({P\mathstrut}_{{\bf x}, \,{\bf y}}^{})\big\} \,\cdot\,
\Pr\,\bigg\{E\big({P\mathstrut}_{\hat{\bf x} \, | \,{\bf x}, \, {\bf y}}^{}, {\bf X\mathstrut}_{m}, {\bf Y}\big) = E_{\min}\big({\bf X\mathstrut}_{m}, {\bf Y}\big),
\nonumber \\
& \;\;\;\;\;\;\;\;\;\;\;\;\;\;\;\;
\;\;\;\;\;\;\;\;\;\;\;\;\;\;\;\;\;\;\;\;\;\;\;\;\;\;\;\;\;\;\;\;\,
\underbrace{E_{\min}\big({\bf X\mathstrut}_{m}, {\bf Y}\big)
\, \leq \, \frac{|{\cal X}||{\cal X}||{\cal Y}|\ln(n + 1)}{n} \, - \, \frac{\ln P({\bf Y} \,|\, {\bf X\mathstrut}_{m})}{n} \, + \, D}_{\supseteq \, {\cal E}_{m}}
 \; \bigg| \; {P\mathstrut}_{{\bf x}, \,{\bf y}}^{}\bigg\}
\nonumber \\
& \overset{(b)}{=} \sum_{\;\;\;{P\mathstrut}_{{\bf x}, \,{\bf y}, \, \hat{\bf x}}^{}}
\,\Pr\,\big\{({\bf X\mathstrut}_{m}, {\bf Y}) \, \in \, T({P\mathstrut}_{{\bf x}, \,{\bf y}}^{})\big\} \,\cdot\,
\Pr\,\bigg\{E\big({P\mathstrut}_{\hat{\bf x} \, | \,{\bf x}, \, {\bf y}}^{}, {\bf X\mathstrut}_{m}, {\bf Y}\big) = E_{\min}\big({\bf X\mathstrut}_{m}, {\bf Y}\big),
\nonumber \\
& \;\;\;\;\;\;\;\;\;\;\;
\;\;\;\;\;\;\;\;\;\;\;\;\;\;\;\;\;\;\;\;\;\;\;\;\;\;\;\;\;\,
E\big({P\mathstrut}_{\hat{\bf x} \, | \,{\bf x}, \, {\bf y}}^{}, {\bf X\mathstrut}_{m}, {\bf Y}\big)
\, \leq \, \frac{|{\cal X}||{\cal X}||{\cal Y}|\ln(n + 1)}{n} \, - \, \mathbb{E}\,[\ln P(Y \,|\, X)] \, + \, D
\; \bigg| \; {P\mathstrut}_{{\bf x}, \,{\bf y}}^{}\bigg\}
\nonumber \\
& \leq \sum_{\;\;\;{P\mathstrut}_{{\bf x}, \,{\bf y}, \, \hat{\bf x}}^{}}
\,\Pr\,\big\{({\bf X\mathstrut}_{m}, {\bf Y}) \, \in \, T({P\mathstrut}_{{\bf x}, \,{\bf y}}^{})\big\} \,\times
\nonumber \\
& \;\;\;\;\;\;\;\;\;\;\;\;\;\;\;\;
\;\;\;\;\;\;\;\;\;\;\;\;\;\;\;\;\,
\Pr\,\bigg\{E\big({P\mathstrut}_{\hat{\bf x} \, | \,{\bf x}, \, {\bf y}}^{}, {\bf X\mathstrut}_{m}, {\bf Y}\big)
\, \leq \, \frac{|{\cal X}||{\cal X}||{\cal Y}|\ln(n + 1)}{n} \, - \, \mathbb{E}\,[\ln P(Y \,|\, X)] \, + \, D
\; \bigg| \; {P\mathstrut}_{{\bf x}, \,{\bf y}}^{}\bigg\}
\nonumber \\
& \overset{(c)}{\leq} \sum_{\;\;\;{P\mathstrut}_{{\bf x}, \,{\bf y}, \, \hat{\bf x}}^{}}
\,\Pr\,\big\{({\bf X\mathstrut}_{m}, {\bf Y}) \, \in \, T({P\mathstrut}_{{\bf x}, \,{\bf y}}^{})\big\} \,\cdot\,
\Pr\,\Big\{E\big({P\mathstrut}_{\hat{\bf x} \, | \,{\bf x}, \, {\bf y}}^{}, {\bf X\mathstrut}_{m}, {\bf Y}\big)
\, \leq \, \epsilon_{1} \, - \, \mathbb{E}\,[\ln P(Y \,|\, X)] \, + \, D
\; \Big| \; {P\mathstrut}_{{\bf x}, \,{\bf y}}^{}\Big\}
\nonumber \\
& \overset{(d)}{=} \; \!\!
\sum_{
\;\;\;{P\mathstrut}_{{\bf x}, \,{\bf y}, \, \hat{\bf x}}^{}\,:\;\;
f({P\mathstrut}_{{\bf x}, \,{\bf y}, \, \hat{\bf x}}^{})
\;\; \leq \;\; R}
\,\Pr\,\big\{({\bf X\mathstrut}_{m}, {\bf Y}) \, \in \, T({P\mathstrut}_{{\bf x}, \,{\bf y}}^{})\big\} \cdot
\Pr\,\Big\{
E\big({P\mathstrut}_{\hat{\bf x} \, | \,{\bf x}, \, {\bf y}}^{}, {\bf X\mathstrut}_{m}, {\bf Y}\big) \, > \, g\big({P\mathstrut}_{{\bf x}, \,{\bf y}, \, \hat{\bf x}}^{}\big),
\nonumber \\
& \;\;\;\;\;\;\;\;\;\;\;\;\;\;\;\;\;\;\;\;\;\;\;\;\;\;\;\;\;\;\;\;\;\;\;\;
\;\;\;\;\;\;\;\;\;\;\;\;\;\;\;\;\;\;\;\;\;\;\;\;\;\;\;\;\;\;\;\;
E\big({P\mathstrut}_{\hat{\bf x} \, | \,{\bf x}, \, {\bf y}}^{}, {\bf X\mathstrut}_{m}, {\bf Y}\big)
\, \leq \, \epsilon_{1} \, - \, \mathbb{E}\,[\ln P(Y \,|\, X)] \, + \, D
\; \Big| \; {P\mathstrut}_{{\bf x}, \,{\bf y}}^{}\Big\}
\nonumber \\
& \;\, + \!\!
\sum_{
\;\;\;{P\mathstrut}_{{\bf x}, \,{\bf y}, \, \hat{\bf x}}^{}\,:\;\;
f({P\mathstrut}_{{\bf x}, \,{\bf y}, \, \hat{\bf x}}^{})
\;\; \leq \;\; R}
\,\Pr\,\big\{({\bf X\mathstrut}_{m}, {\bf Y}) \, \in \, T({P\mathstrut}_{{\bf x}, \,{\bf y}}^{})\big\} \cdot
\Pr\,\Big\{
E\big({P\mathstrut}_{\hat{\bf x} \, | \,{\bf x}, \, {\bf y}}^{}, {\bf X\mathstrut}_{m}, {\bf Y}\big) \, \leq \, g\big({P\mathstrut}_{{\bf x}, \,{\bf y}, \, \hat{\bf x}}^{}\big),
\nonumber \\
& \;\;\;\;\;\;\;\;\;\;\;\;\;\;\;\;\;\;\;\;\;\;\;\;\;\;\;\;\;\;\;\;\;\;\;\;
\;\;\;\;\;\;\;\;\;\;\;\;\;\;\;\;\;\;\;\;\;\;\;\;\;\;\;\;\;\;\;
\underbrace{E\big({P\mathstrut}_{\hat{\bf x} \, | \,{\bf x}, \, {\bf y}}^{}, {\bf X\mathstrut}_{m}, {\bf Y}\big)
\, \leq \, \epsilon_{1} \, - \, \mathbb{E}\,[\ln P(Y \,|\, X)] \, + \, D}_{\text{delete}}
\; \Big| \; {P\mathstrut}_{{\bf x}, \,{\bf y}}^{}\Big\}
\nonumber
\end{align}
\begin{align}
& \;\, + \!\!
\sum_{
\;\;\;{P\mathstrut}_{{\bf x}, \,{\bf y}, \, \hat{\bf x}}^{}\,:\;\;
f({P\mathstrut}_{{\bf x}, \,{\bf y}, \, \hat{\bf x}}^{})
\;\; > \;\; R}
\,\Pr\,\big\{({\bf X\mathstrut}_{m}, {\bf Y}) \, \in \, T({P\mathstrut}_{{\bf x}, \,{\bf y}}^{})\big\} \cdot
\Pr\,\Big\{
E\big({P\mathstrut}_{\hat{\bf x} \, | \,{\bf x}, \, {\bf y}}^{}, {\bf X\mathstrut}_{m}, {\bf Y}\big) \, > \, h\big({P\mathstrut}_{{\bf x}, \,{\bf y}, \, \hat{\bf x}}^{}\big),
\nonumber \\
& \;\;\;\;\;\;\;\;\;\;\;\;\;\;\;\;\;\;\;\;\;\;\;\;\;\;\;\;\;\;\;\;\;\;\;\;
\;\;\;\;\;\;\;\;\;\;\;\;\;\;\;\;\;\;\;\;\;\;\;\;\;\;\;\;\;\;\;\;
E\big({P\mathstrut}_{\hat{\bf x} \, | \,{\bf x}, \, {\bf y}}^{}, {\bf X\mathstrut}_{m}, {\bf Y}\big)
\, \leq \, \epsilon_{1} \, - \, \mathbb{E}\,[\ln P(Y \,|\, X)] \, + \, D
\; \Big| \; {P\mathstrut}_{{\bf x}, \,{\bf y}}^{}\Big\}
\nonumber \\
& \;\, + \!\!
\sum_{
\;\;\;{P\mathstrut}_{{\bf x}, \,{\bf y}, \, \hat{\bf x}}^{}\,:\;\;
f({P\mathstrut}_{{\bf x}, \,{\bf y}, \, \hat{\bf x}}^{})
\;\; > \;\; R}
\,\Pr\,\big\{({\bf X\mathstrut}_{m}, {\bf Y}) \, \in \, T({P\mathstrut}_{{\bf x}, \,{\bf y}}^{})\big\} \cdot
\Pr\,\Big\{
E\big({P\mathstrut}_{\hat{\bf x} \, | \,{\bf x}, \, {\bf y}}^{}, {\bf X\mathstrut}_{m}, {\bf Y}\big) \, \leq \, h\big({P\mathstrut}_{{\bf x}, \,{\bf y}, \, \hat{\bf x}}^{}\big),
\nonumber \\
& \;\;\;\;\;\;\;\;\;\;\;\;\;\;\;\;\;\;\;\;\;\;\;\;\;\;\;\;\;\;\;\;\;\;\;\;
\;\;\;\;\;\;\;\;\;\;\;\;\;\;\;\;\;\;\;\;\;\;\;\;\;\;\;\;\;\;\;
\underbrace{E\big({P\mathstrut}_{\hat{\bf x} \, | \,{\bf x}, \, {\bf y}}^{}, {\bf X\mathstrut}_{m}, {\bf Y}\big)
\, \leq \, \epsilon_{1} \, - \, \mathbb{E}\,[\ln P(Y \,|\, X)] \, + \, D}_{\text{delete}}
\; \Big| \; {P\mathstrut}_{{\bf x}, \,{\bf y}}^{}\Big\}
\nonumber \\
& \overset{(e)}{\leq} \; \!\!
\sum_{
\;\;\;{P\mathstrut}_{{\bf x}, \,{\bf y}, \, \hat{\bf x}}^{}\,:\;\;
f({P\mathstrut}_{{\bf x}, \,{\bf y}, \, \hat{\bf x}}^{})
\;\; \leq \;\; R}
\,\Pr\,\big\{({\bf X\mathstrut}_{m}, {\bf Y}) \, \in \, T({P\mathstrut}_{{\bf x}, \,{\bf y}}^{})\big\} \,\times
\nonumber \\
& \;\;\;\;\;\;\;\;\;\;\;\;\;\;\;\;\;\;\;\;\;\;\;\;\;\;\;\;\;\;\;
\;\;\;\;\;\;\;\;\;\;\;\;\;\;\;\;\;\;\;\;\;\;\;\;\;\;\;\;\;\;\;\;\;\;\;\;\;\;\;\;\;\;\,
\Pr\,\Big\{g\big({P\mathstrut}_{{\bf x}, \,{\bf y}, \, \hat{\bf x}}^{}\big)
\, \leq \, \epsilon_{1} \, - \, \mathbb{E}\,[\ln P(Y \,|\, X)] \, + \, D
\; \Big| \; {P\mathstrut}_{{\bf x}, \,{\bf y}}^{}\Big\}
\nonumber \\
& \;\, + \!\!
\sum_{
\;\;\;{P\mathstrut}_{{\bf x}, \,{\bf y}, \, \hat{\bf x}}^{}\,:\;\;
f({P\mathstrut}_{{\bf x}, \,{\bf y}, \, \hat{\bf x}}^{})
\;\; \leq \;\; R}
\,\Pr\,\big\{({\bf X\mathstrut}_{m}, {\bf Y}) \, \in \, T({P\mathstrut}_{{\bf x}, \,{\bf y}}^{})\big\} \cdot
\Pr\,\Big\{
E\big({P\mathstrut}_{\hat{\bf x} \, | \,{\bf x}, \, {\bf y}}^{}, {\bf X\mathstrut}_{m}, {\bf Y}\big) \, \leq \, g\big({P\mathstrut}_{{\bf x}, \,{\bf y}, \, \hat{\bf x}}^{}\big)
\; \Big| \; {P\mathstrut}_{{\bf x}, \,{\bf y}}^{}\Big\}
\nonumber \\
& \;\, + \!\!
\sum_{
\;\;\;{P\mathstrut}_{{\bf x}, \,{\bf y}, \, \hat{\bf x}}^{}\,:\;\;
f({P\mathstrut}_{{\bf x}, \,{\bf y}, \, \hat{\bf x}}^{})
\;\; > \;\; R}
\,\Pr\,\big\{({\bf X\mathstrut}_{m}, {\bf Y}) \, \in \, T({P\mathstrut}_{{\bf x}, \,{\bf y}}^{})\big\} \cdot
\Pr\,\Big\{
E\big({P\mathstrut}_{\hat{\bf x} \, | \,{\bf x}, \, {\bf y}}^{}, {\bf X\mathstrut}_{m}, {\bf Y}\big) \, < \, +\infty,
\nonumber \\
& \;\;\;\;\;\;\;\;\;\;\;\;\;\;\;\;\;\;\;\;\;\;\;\;\;\;\;\;\;\;\;
\;\;\;\;\;\;\;\;\;\;\;\;\;\;\;\;\;\;\;\;\;\;\;\;\;\;\;\;\;\;\;\;\;\;\;\;\;\;\;\;\;\;\,
\;\;\;\;\;\;\;\,
h\big({P\mathstrut}_{{\bf x}, \,{\bf y}, \, \hat{\bf x}}^{}\big)
\, \leq \, \epsilon_{1} \, - \, \mathbb{E}\,[\ln P(Y \,|\, X)] \, + \, D
\; \Big| \; {P\mathstrut}_{{\bf x}, \,{\bf y}}^{}\Big\}
\nonumber \\
& \;\, + \!\!
\sum_{
\;\;\;{P\mathstrut}_{{\bf x}, \,{\bf y}, \, \hat{\bf x}}^{}\,:\;\;
f({P\mathstrut}_{{\bf x}, \,{\bf y}, \, \hat{\bf x}}^{})
\;\; > \;\; R}
\,\Pr\,\big\{({\bf X\mathstrut}_{m}, {\bf Y}) \, \in \, T({P\mathstrut}_{{\bf x}, \,{\bf y}}^{})\big\} \cdot
\Pr\,\Big\{
E\big({P\mathstrut}_{\hat{\bf x} \, | \,{\bf x}, \, {\bf y}}^{}, {\bf X\mathstrut}_{m}, {\bf Y}\big) \, \leq \, h\big({P\mathstrut}_{{\bf x}, \,{\bf y}, \, \hat{\bf x}}^{}\big)
\; \Big| \; {P\mathstrut}_{{\bf x}, \,{\bf y}}^{}\Big\}
\nonumber \\
& \overset{(f)}{=} \; \!\!
\sum_{
\;\;\;{P\mathstrut}_{{\bf x}, \,{\bf y}, \, \hat{\bf x}}^{}\,:\;\;
f({P\mathstrut}_{{\bf x}, \,{\bf y}, \, \hat{\bf x}}^{})
\;\; \leq \;\; R}
\,\Pr\,\big\{({\bf X\mathstrut}_{m}, {\bf Y}) \, \in \, T({P\mathstrut}_{{\bf x}, \,{\bf y}}^{})\big\} \, \times
\nonumber \\
& \;\;\;\;\;\;\;\;\;\;\;\;\;\;\;\;\;\;\;\;\;\;\;\;\;\;\;\;\;\;\;
\;\;\;\;\;\;\;\;\;\;\;\;\;\;\;\;\;\;\;\;\;\;\;\;\;\;\;\;\;\;\;\;\;\;\;\;\;\;\;\;\;\;\;
\mathbbm{1}_{\displaystyle\big\{g\big({P\mathstrut}_{{\bf x}, \,{\bf y}, \, \hat{\bf x}}^{}\big)
\; \leq \; \epsilon_{1} \, - \, \mathbb{E}\,[\ln P(Y \,|\, X)] \, + \, D\big\}}
({P\mathstrut}_{{\bf x}, \,{\bf y}, \, \hat{\bf x}}^{})
\nonumber \\
&
\nonumber \\
& \;\, + \!\!
\sum_{
\;\;\;{P\mathstrut}_{{\bf x}, \,{\bf y}, \, \hat{\bf x}}^{}\,:\;\;
f({P\mathstrut}_{{\bf x}, \,{\bf y}, \, \hat{\bf x}}^{})
\;\; \leq \;\; R}
\,\Pr\,\big\{({\bf X\mathstrut}_{m}, {\bf Y}) \, \in \, T({P\mathstrut}_{{\bf x}, \,{\bf y}}^{})\big\} \cdot
\Pr\,\Big\{
E\big({P\mathstrut}_{\hat{\bf x} \, | \,{\bf x}, \, {\bf y}}^{}, {\bf X\mathstrut}_{m}, {\bf Y}\big) \, \leq \, g\big({P\mathstrut}_{{\bf x}, \,{\bf y}, \, \hat{\bf x}}^{}\big)
\; \Big| \; {P\mathstrut}_{{\bf x}, \,{\bf y}}^{}\Big\}
\nonumber \\
& \;\, + \!\!
\sum_{
\;\;\;{P\mathstrut}_{{\bf x}, \,{\bf y}, \, \hat{\bf x}}^{}\,:\;\;
f({P\mathstrut}_{{\bf x}, \,{\bf y}, \, \hat{\bf x}}^{})
\;\; > \;\; R}
\,\Pr\,\big\{({\bf X\mathstrut}_{m}, {\bf Y}) \, \in \, T({P\mathstrut}_{{\bf x}, \,{\bf y}}^{})\big\}\cdot
\Pr\,\Big\{E\big({P\mathstrut}_{\hat{\bf x} \, | \,{\bf x}, \, {\bf y}}^{}, {\bf X\mathstrut}_{m}, {\bf Y}\big) < +\infty \; \Big| \; {P\mathstrut}_{{\bf x}, \,{\bf y}}^{}\Big\} \, \times
\nonumber \\
& \;\;\;\;\;\;\;\;\;\;\;\;\;\;\;\;\;\;\;\;\;\;\;\;\;\;\;\;\;\;\;
\;\;\;\;\;\;\;\;\;\;\;\;\;\;\;\;\;\;\;\;\;\;\;\;\;\;\;\;\;\;\;\;\;\;\;\;\;\;\;\;\;\;\;
\mathbbm{1}_{\displaystyle\big\{h\big({P\mathstrut}_{{\bf x}, \,{\bf y}, \, \hat{\bf x}}^{}\big)
\; \leq \; \epsilon_{1} \, - \, \mathbb{E}\,[\ln P(Y \,|\, X)] \, + \, D\big\}}
({P\mathstrut}_{{\bf x}, \,{\bf y}, \, \hat{\bf x}}^{})
\nonumber \\
&
\nonumber \\
& \;\, + \!\!
\sum_{
\;\;\;{P\mathstrut}_{{\bf x}, \,{\bf y}, \, \hat{\bf x}}^{}\,:\;\;
f({P\mathstrut}_{{\bf x}, \,{\bf y}, \, \hat{\bf x}}^{})
\;\; > \;\; R}
\,\Pr\,\big\{({\bf X\mathstrut}_{m}, {\bf Y}) \, \in \, T({P\mathstrut}_{{\bf x}, \,{\bf y}}^{})\big\} \cdot
\Pr\,\Big\{
E\big({P\mathstrut}_{\hat{\bf x} \, | \,{\bf x}, \, {\bf y}}^{}, {\bf X\mathstrut}_{m}, {\bf Y}\big) \, \leq \, h\big({P\mathstrut}_{{\bf x}, \,{\bf y}, \, \hat{\bf x}}^{}\big)
\; \Big| \; {P\mathstrut}_{{\bf x}, \,{\bf y}}^{}\Big\},
\label{eqBoundFGH}
\end{align}
for sufficiently large $n$, which is needed for ($c$) to hold.
Explanation of steps:\newline
($a$) follows by the definition of the error event ${\cal E}_{m}$ (\ref{eqForneyErrorEvent}) and the bound on Forney's sum (\ref{eqForneySum});\newline
($b$) is an identity, with a notation, given $\big\{({\bf X\mathstrut}_{m}, {\bf Y}) \, \in \, T({P\mathstrut}_{{\bf x}, \,{\bf y}}^{})\big\}$:
\begin{equation} \label{eqExpectation}
\frac{\ln P({\bf Y} \,|\, {\bf X\mathstrut}_{m})}{n} \;\; = \;\; \sum_{x, \, y}{P\mathstrut}_{{\bf x}, \,{\bf y}}^{}(x, y)\ln P(y \,|\, x) \;\; \triangleq \;\;
\mathbb{E}\,[\ln P(Y \,|\, X)]\,;
\end{equation}
($c$) holds for any $\epsilon_{1}\, > \, 0$, for sufficiently large $n$, such that
\begin{equation} \label{eqNLarge}
\frac{|{\cal X}||{\cal X}||{\cal Y}|\ln(n + 1)}{n} \; \leq \; \epsilon_{1}\,;
\end{equation}
($d$) is an identity, for arbitrary functions $f\big({P\mathstrut}_{{\bf x}, \,{\bf y}, \, \hat{\bf x}}^{}\big)$,
$g\big({P\mathstrut}_{{\bf x}, \,{\bf y}, \, \hat{\bf x}}^{}\big)$, $h\big({P\mathstrut}_{{\bf x}, \,{\bf y}, \, \hat{\bf x}}^{}\big)$;\newline
($e$) uses {\em if-then} relations between events:
\begin{align}
\Big\{E\big({P\mathstrut}_{\hat{\bf x} \, | \,{\bf x}, \, {\bf y}}^{}, {\bf X\mathstrut}_{m}, {\bf Y}\big) \, > \, g\big({P\mathstrut}_{{\bf x}, \,{\bf y}, \, \hat{\bf x}}^{}\big), \;
E\big({P\mathstrut}_{\hat{\bf x} \, | \,{\bf x}, \, {\bf y}}^{}, {\bf X\mathstrut}_{m}, {\bf Y}\big)
\, & \leq \, \epsilon_{1} \, - \, \mathbb{E}\,[\ln P(Y \,|\, X)] \, + \, D\Big\} \;\; \Rightarrow
\nonumber \\
\Big\{g\big({P\mathstrut}_{{\bf x}, \,{\bf y}, \, \hat{\bf x}}^{}\big)
\, & \leq \, \epsilon_{1} \, - \, \mathbb{E}\,[\ln P(Y \,|\, X)] \, + \, D\Big\},
\nonumber \\
\Big\{
E\big({P\mathstrut}_{\hat{\bf x} \, | \,{\bf x}, \, {\bf y}}^{}, {\bf X\mathstrut}_{m}, {\bf Y}\big) \, > \, h\big({P\mathstrut}_{{\bf x}, \,{\bf y}, \, \hat{\bf x}}^{}\big),\;
E\big({P\mathstrut}_{\hat{\bf x} \, | \,{\bf x}, \, {\bf y}}^{}, {\bf X\mathstrut}_{m}, {\bf Y}\big)
\, & \leq \, \epsilon_{1} \, - \, \mathbb{E}\,[\ln P(Y \,|\, X)] \, + \, D\Big\} \;\; \Rightarrow
\nonumber \\
\Big\{
E\big({P\mathstrut}_{\hat{\bf x} \, | \,{\bf x}, \, {\bf y}}^{}, {\bf X\mathstrut}_{m}, {\bf Y}\big) \, < \, +\infty, \;
h\big({P\mathstrut}_{{\bf x}, \,{\bf y}, \, \hat{\bf x}}^{}\big)
\, & \leq \, \epsilon_{1} \, - \, \mathbb{E}\,[\ln P(Y \,|\, X)] \, + \, D\Big\};
\nonumber
\end{align}
($f$) is an identity, when the functions
$g\big({P\mathstrut}_{{\bf x}, \,{\bf y}, \, \hat{\bf x}}^{}\big)$ and $h\big({P\mathstrut}_{{\bf x}, \,{\bf y}, \, \hat{\bf x}}^{}\big)$
are arbitrary deterministic. In this case, the events
\begin{align}
\big\{g\big({P\mathstrut}_{{\bf x}, \,{\bf y}, \, \hat{\bf x}}^{}\big)
\, & \leq \, \epsilon_{1} \, - \, \mathbb{E}\,[\ln P(Y \,|\, X)] \, + \, D\big\},
\nonumber \\
\big\{h\big({P\mathstrut}_{{\bf x}, \,{\bf y}, \, \hat{\bf x}}^{}\big)
\, & \leq \, \epsilon_{1} \, - \, \mathbb{E}\,[\ln P(Y \,|\, X)] \, + \, D\big\}
\nonumber
\end{align}
are deterministic conditions (i.e., they either hold with probability $1$ or with probability $0$), and indicator functions can be used in place of probabilities.

The upper bound (\ref{eqBoundFGH}) was devised with something like the following lemma in mind:
\begin{lemma} \label{lemma9}
{\em Let $Z_{i}\,\sim\, \text{i.i.d}\;\text{Bernoulli}\left({e\mathstrut}^{-nI}\right)$,
$i \, = \, 1, \, 2, \, ... \, , \, {e\mathstrut}^{nR}$. For $\epsilon \, > \, 0$,}

{\em if $I\,\leq\,R \, + \, \epsilon$, then}
\begin{equation} \label{eqRI}
\Pr\,\Bigg\{\sum_{i \, = \, 1}^{{e\mathstrut}^{nR}}Z_{i} \, \geq \, {e\mathstrut}^{n(R\,-\,I\,+\,2\epsilon)}\Bigg\}
\;\; < \;\; \exp\big\{-[{e\mathstrut}^{n\epsilon}\,-\,(e - 1){e\mathstrut}^{-n\epsilon}]\big\},
\;\;\;\;\;\;\;\; {e\mathstrut}^{2n\epsilon} \; \geq \; e \, - \, 1,
\end{equation}

{\em if $I\,>\,R$, then}
\begin{equation} \label{eqIR}
\Pr\,\Bigg\{\sum_{i \, = \, 1}^{{e\mathstrut}^{nR}}Z_{i} \, \geq \, {e\mathstrut}^{n\epsilon}\Bigg\}
\;\; < \;\; \exp\big\{-[{e\mathstrut}^{n\epsilon}\,-\,e\,+\,1]\big\}.
\end{equation}
\end{lemma}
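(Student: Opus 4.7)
The plan is to apply the standard Chernoff bound (Markov's inequality on the moment generating function) with a deliberately \emph{non-optimized} choice of parameter $s=1$. Let $S \triangleq \sum_{i=1}^{e^{nR}} Z_i$ and $\mu \triangleq \mathbb{E}[S] = e^{n(R-I)}$. For any $s>0$,
\begin{equation*}
\Pr\{S \geq t\} \; \leq \; e^{-st}\,\mathbb{E}[e^{sS}] \; = \; e^{-st}\bigl(1 + e^{-nI}(e^s-1)\bigr)^{e^{nR}} \; < \; \exp\bigl(\mu(e^s-1) - st\bigr),
\end{equation*}
where the strict inequality comes from $(1+x)^N < e^{Nx}$ for $x>0$, $N\geq 1$. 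Specializing to $s=1$ yields the clean form
\begin{equation*}
\Pr\{S \geq t\} \; < \; \exp\bigl(\mu(e-1) - t\bigr).
\end{equation*}

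For the regime $I \leq R + \epsilon$, the plan is to substitute $t = e^{n(R-I+2\epsilon)} = \mu \cdot e^{2n\epsilon}$, giving the exponent $-\mu\bigl[e^{2n\epsilon} - (e-1)\bigr]$. Under the stated condition $e^{2n\epsilon} \geq e-1$ the bracket is nonnegative, so the bound weakens monotonically when $\mu$ is replaced by its minimum value compatible with $I \leq R+\epsilon$, namely $\mu \geq e^{-n\epsilon}$. This produces
\begin{equation*}
-e^{-n\epsilon}\bigl[e^{2n\epsilon} - (e-1)\bigr] \;=\; -\bigl[e^{n\epsilon} - (e-1)e^{-n\epsilon}\bigr],
\end{equation*}
which is exactly \eqref{eqRI}. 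For the regime $I > R$, substitute $t = e^{n\epsilon}$ and exploit $\mu = e^{n(R-I)} < 1$ to bound $\mu(e-1) < e-1$, yielding the exponent $-\bigl[e^{n\epsilon} - e + 1\bigr]$ and hence \eqref{eqIR}.

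There is no serious obstacle: the only non-obvious ingredient is the choice $s=1$ (rather than the textbook optimum $s = \ln(t/\mu)$), which sidesteps optimization and makes the side condition $e^{2n\epsilon} \geq e-1$ emerge naturally as the requirement that the Chernoff exponent be negative. All remaining steps are direct algebraic manipulations of exponentials.
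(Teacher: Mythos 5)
Your proposal is correct and is essentially the paper's own argument: the paper likewise applies Markov's inequality to $e^{S}$ (an unoptimized Chernoff bound with parameter $1$), bounds $\bigl(1+(e-1)e^{-nI}\bigr)^{e^{nR}}$ via $(1+x)^{1/x}<e$ to get $\exp\{(e-1)e^{n(R-I)}\}$, and then makes the same two substitutions for the threshold in the two regimes. The only cosmetic difference is that you phrase the bound for general $s$ before setting $s=1$, whereas the paper fixes the parameter from the outset.
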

\begin{proof}
This is an unoptimized Chernoff bound, with the parameter in the exponent $=1$:
\begin{align}
\Pr\,\Bigg\{\sum_{i \, = \, 1}^{{e\mathstrut}^{nR}}Z_{i} \, \geq \, {e\mathstrut}^{n(\Delta\,+\,\epsilon)}\Bigg\}
\;\; & = \;\;
\Pr\,\Bigg\{\exp\Bigg\{\sum_{i \, = \, 1}^{{e\mathstrut}^{nR}}Z_{i}\Bigg\} \, \geq \, \exp\Big\{{e\mathstrut}^{n(\Delta\,+\,\epsilon)}\Big\}\Bigg\}
\nonumber \\
& \overset{(a)}{\leq} \;\;
\exp\Big\{-{e\mathstrut}^{n(\Delta\,+\,\epsilon)}\Big\}\cdot \mathbb{E}\,\left[\exp\Bigg\{\sum_{i \, = \, 1}^{{e\mathstrut}^{nR}}Z_{i}\Bigg\}\right]
\nonumber \\
& = \;\;
\exp\Big\{-{e\mathstrut}^{n(\Delta\,+\,\epsilon)}\Big\}\cdot \prod_{i \, = \, 1}^{{e\mathstrut}^{nR}}\mathbb{E}\,\left[{e\mathstrut}^{Z_{i}}\right]
\nonumber \\
& = \;\;
\exp\Big\{-{e\mathstrut}^{n(\Delta\,+\,\epsilon)}\Big\}\cdot \left[1\, + \, (e-1){e\mathstrut}^{-nI}\right]^{{e\mathstrut}^{nR}}
\nonumber \\
& = \;\;
\exp\Big\{-{e\mathstrut}^{n(\Delta\,+\,\epsilon)}\Big\}\cdot \bigg[\underbrace{\left(1\, + \, (e-1){e\mathstrut}^{-nI}\right)^{\frac{1}{(e-1){e\mathstrut}^{-nI}}}}_{<\,e}\bigg]^{(e-1){e\mathstrut}^{-nI}\cdot\,{e\mathstrut}^{nR}}
\nonumber \\
& \overset{(b)}{<} \;\;
\exp\Big\{-{e\mathstrut}^{n(\Delta\,+\,\epsilon)}\Big\}\cdot \exp\left\{(e-1){e\mathstrut}^{n(R\,-\,I)}\right\}
\nonumber \\
& = \;\;
\left\{
\begin{array}{l l}
\exp\left\{-{e\mathstrut}^{n(R\,-\,I)}\left[{e\mathstrut}^{2n\epsilon}\,-\,e\,+\,1\right]\right\}, & \;\;\; \Delta \; = \; R \, - \, I \, + \, \epsilon,\\
\exp\left\{-{e\mathstrut}^{n\epsilon}\,+\,(e - 1){e\mathstrut}^{n(R\,-\,I)}\right\}, & \;\;\; \Delta \; = \; 0,
\end{array}
\right.
\nonumber
\end{align}
where ($a$) is Markov's inequality (yielding at this step an unoptimized Chernoff bound with parameter $1$), and ($b$) holds because $\;(1 + x)^{1/x} \, < \, e$.

For the case $I\,\leq\,R + \epsilon$, we take the bound with $\Delta \, = \, R  -  I + \epsilon$ and obtain
\begin{align}
...\;\; < \;\; \exp\left\{-{e\mathstrut}^{n(R\,-\,I)}\left[{e\mathstrut}^{2n\epsilon}\,-\,e\,+\,1\right]\right\}
\;\; & \leq \;\; \exp\left\{-\left[{e\mathstrut}^{n\epsilon}\,-\,(e - 1){e\mathstrut}^{-n\epsilon}\right]\right\},
\;\;\;\;\;\; {e\mathstrut}^{2n\epsilon}\,-\,e\,+\,1 \; \geq \; 0.
\nonumber 
\end{align}
For the case $I\,>\,R$, we take the bound with $\Delta \, = \, 0$ and obtain
\begin{align}
\exp\left\{-{e\mathstrut}^{n\epsilon}\,+\,(e - 1){e\mathstrut}^{n(R\,-\,I)}\right\}
\;\; & < \;\; \exp\left\{-{e\mathstrut}^{n\epsilon}\,+\,e \,-\, 1\right\}.
\nonumber
\end{align}
\end{proof}

In order to use Lemma~\ref{lemma9}, recall that the probability of a conditional type is bounded from above and below as
\begin{align}
\exp\big\{-nf\big({P\mathstrut}_{{\bf x}, \,{\bf y}, \, \hat{\bf x}}^{}\big)\big\}
\; & \triangleq \;
\exp\Big\{-nD\big({P\mathstrut}_{{\bf x}, \,{\bf y}, \, \hat{\bf x}}^{}(x, y, \hat{x})\,\big\|\,{P\mathstrut}_{{\bf x}, \,{\bf y}}^{}(x, y)\cdot Q(\hat{x})\big)\Big\}
\label{eqDefF} \\
& \geq \;
\Pr\,
\Big\{{\bf X\mathstrut}_{m'} \; \in \; T\big({P\mathstrut}_{\hat{\bf x} \, | \,{\bf x}, \, {\bf y}}^{}, \, {\bf X\mathstrut}_{m}, {\bf Y}\big)\;\Big|\;
({\bf X\mathstrut}_{m}, {\bf Y}) \, \in \, T({P\mathstrut}_{{\bf x}, \,{\bf y}}^{})\Big\}
\; \triangleq \;
\exp\{-nI\}
\label{eqCondType} \\
& \geq \;
{(n\, + \, 1)\mathstrut}^{-|{\cal X}|\cdot|{\cal X}|\cdot|{\cal Y}|}\cdot
\exp\big\{-nf\big({P\mathstrut}_{{\bf x}, \,{\bf y}, \, \hat{\bf x}}^{}\big)\big\}.
\nonumber
\end{align}
These definitions give
\begin{equation} \label{eqIandF}
f\big({P\mathstrut}_{{\bf x}, \,{\bf y}, \, \hat{\bf x}}^{}\big) \;\; \leq \;\; I
\;\; \leq \;\; f\big({P\mathstrut}_{{\bf x}, \,{\bf y}, \, \hat{\bf x}}^{}\big) \; + \; \frac{|{\cal X}||{\cal X}||{\cal Y}|\ln(n + 1)}{n}.
\end{equation}

If $f\big({P\mathstrut}_{{\bf x}, \,{\bf y}, \, \hat{\bf x}}^{}\big) \, \leq \, R$,
then for $n$ sufficiently large, as in (\ref{eqNLarge}), we get $I \, \leq \, R + \epsilon_{1}$.
If $n$
satisfies (\ref{eqNLarge}), then it
is also large enough to satisfy ${e\mathstrut}^{2n\epsilon_{1}}\, \geq \, e - 1$.
For such $n$, the first part of Lemma~\ref{lemma9} holds for the following:
\begin{align}
& \Pr\,\bigg\{E\big({P\mathstrut}_{\hat{\bf x} \, | \,{\bf x}, \, {\bf y}}^{}, {\bf X\mathstrut}_{m}, {\bf Y}\big)
\; \leq \; \underbrace{- \mathbb{E}_{{P\mathstrut}_{\hat{\bf x}, \,{\bf y}}^{}}[\ln P(Y \,|\, \hat{X})]
 - R + f({P\mathstrut}_{{\bf x}, \,{\bf y}, \, \hat{\bf x}}^{}) - 2\epsilon_{1}}_{g({P\mathstrut}_{{\bf x}, \,{\bf y}, \, \hat{\bf x}}^{})}
\; \bigg| \;
({\bf X\mathstrut}_{m}, {\bf Y}) \, \in \, T({P\mathstrut}_{{\bf x}, \,{\bf y}}^{})\bigg\}
\nonumber \\
\overset{(a)}{=} \;\; & \Pr\,\bigg\{\sum_{m' \, \neq \, m}P({\bf Y} \,|\, {\bf X\mathstrut}_{m'})\cdot
\mathbbm{1}_{\displaystyle\big\{{\bf X\mathstrut}_{m'} \; \in \; T\big({P\mathstrut}_{\hat{\bf x} \, | \,{\bf x}, \, {\bf y}}^{}, \, {\bf X\mathstrut}_{m}, {\bf Y}\big)\big\}}(m')
\; \geq
\nonumber \\
& \;\;\;\;\;\;\;\;\;\;\;\;\;\;\;\;\;\;\;\;\;\;\;\;\,\,
\exp\Big\{n\big(\mathbb{E}_{{P\mathstrut}_{\hat{\bf x}, \,{\bf y}}^{}}[\ln P(Y \,|\, \hat{X})] \, + \, R\,-\,f({P\mathstrut}_{{\bf x}, \,{\bf y}, \, \hat{\bf x}}^{})\,+\,2\epsilon_{1}\big)\Big\}
\; \bigg| \;
({\bf X\mathstrut}_{m}, {\bf Y}) \, \in \, T({P\mathstrut}_{{\bf x}, \,{\bf y}}^{})\bigg\}
\nonumber \\
= \;\; & \Pr\,\bigg\{\sum_{m' \, \neq \, m}
\mathbbm{1}_{\displaystyle\big\{{\bf X\mathstrut}_{m'} \; \in \; T\big({P\mathstrut}_{\hat{\bf x} \, | \,{\bf x}, \, {\bf y}}^{}, \, {\bf X\mathstrut}_{m}, {\bf Y}\big)\big\}}(m')
\; \geq \; {e\mathstrut}^{n\big(R\,-\,f({P\mathstrut}_{{\bf x}, \,{\bf y}, \, \hat{\bf x}}^{})\,+\,2\epsilon_{1}\big)}
\; \bigg| \;
({\bf X\mathstrut}_{m}, {\bf Y}) \, \in \, T({P\mathstrut}_{{\bf x}, \,{\bf y}}^{})\bigg\}
\nonumber \\
\overset{(b)}{\leq} \;\; & \Pr\,\Bigg\{\sum_{m' \, = \, 1}^{{e\mathstrut}^{nR}}
\mathbbm{1}_{\displaystyle\big\{{\bf X\mathstrut}_{m'} \; \in \; T\big({P\mathstrut}_{\hat{\bf x} \, | \,{\bf x}, \, {\bf y}}^{}, \, {\bf X\mathstrut}_{m}, {\bf Y}\big)\big\}}(m')
\; \geq \; {e\mathstrut}^{n(R\,-\,I\,+\,2\epsilon_{1})}
\; \Bigg| \;
({\bf X\mathstrut}_{m}, {\bf Y}) \, \in \, T({P\mathstrut}_{{\bf x}, \,{\bf y}}^{})\Bigg\}
\nonumber \\
\overset{(c)}{<} \;\; & \exp\big\{-[{e\mathstrut}^{n\epsilon_{1}}\,-\,(e - 1){e\mathstrut}^{-n\epsilon_{1}}]\big\},
\;\;\;\;\;\;\;\;\;
\;\;\;\;\;\;\;\;\;
\frac{|{\cal X}||{\cal X}||{\cal Y}|\ln(n + 1)}{n}
\; \leq \; \epsilon_{1},
\label{eqLemmaFirstPart}
\end{align}
where in ($a$) we use the definition of $E\big({P\mathstrut}_{\hat{\bf x} \, | \,{\bf x}, \, {\bf y}}^{}, {\bf X\mathstrut}_{m}, {\bf Y}\big)$
(\ref{eqTypeExp}), and notation (\ref{eqExpectation}) with ${P\mathstrut}_{\hat{\bf x}, \,{\bf y}}^{}$; in ($b$) we
assume the size of the codebook
$M\, = \, {e\mathstrut}^{nR}\,+\, 1$,
and use $I \, \geq \, f\big({P\mathstrut}_{{\bf x}, \,{\bf y}, \, \hat{\bf x}}^{}\big)$;
and ($c$) holds by (\ref{eqCondType}) and (\ref{eqRI}) of the lemma.
If we choose
\begin{equation} \label{eqDefG}
g\big({P\mathstrut}_{{\bf x}, \,{\bf y}, \, \hat{\bf x}}^{}\big) \;\; \triangleq \;\;
- \mathbb{E}_{{P\mathstrut}_{\hat{\bf x}, \,{\bf y}}^{}}[\ln P(Y \,|\, \hat{X})] \, - \, R \, + \,
f\big({P\mathstrut}_{{\bf x}, \,{\bf y}, \, \hat{\bf x}}^{}\big)
\, - \, 2\epsilon_{1},
\end{equation}
then with (\ref{eqLemmaFirstPart}) we obtain that the {\em second sum} in (\ref{eqBoundFGH}) is upper-bounded as
\begin{align}
& \sum_{
\;\;\;{P\mathstrut}_{{\bf x}, \,{\bf y}, \, \hat{\bf x}}^{}\,:\;\;
f({P\mathstrut}_{{\bf x}, \,{\bf y}, \, \hat{\bf x}}^{})
\;\; \leq \;\; R}
\,\Pr\,\big\{({\bf X\mathstrut}_{m}, {\bf Y}) \, \in \, T({P\mathstrut}_{{\bf x}, \,{\bf y}}^{})\big\} \cdot
\Pr\,\Big\{
E\big({P\mathstrut}_{\hat{\bf x} \, | \,{\bf x}, \, {\bf y}}^{}, {\bf X\mathstrut}_{m}, {\bf Y}\big) \, \leq \, g\big({P\mathstrut}_{{\bf x}, \,{\bf y}, \, \hat{\bf x}}^{}\big)
\; \Big| \; {P\mathstrut}_{{\bf x}, \,{\bf y}}^{}\Big\}
\nonumber \\
< \;\;
& \sum_{
\;\;\;{P\mathstrut}_{{\bf x}, \,{\bf y}, \, \hat{\bf x}}^{}\,:\;\;
f({P\mathstrut}_{{\bf x}, \,{\bf y}, \, \hat{\bf x}}^{})
\;\; \leq \;\; R}
\,\Pr\,\big\{({\bf X\mathstrut}_{m}, {\bf Y}) \, \in \, T({P\mathstrut}_{{\bf x}, \,{\bf y}}^{})\big\}\cdot
\exp\big\{-[{e\mathstrut}^{n\epsilon_{1}}\,-\,(e - 1){e\mathstrut}^{-n\epsilon_{1}}]\big\}
\nonumber \\
\leq \;\;
& \;\;\;\;\;\;\;\;
{(n\, + \, 1)\mathstrut}^{|{\cal X}|\cdot|{\cal X}|\cdot|{\cal Y}|}\,\cdot\,
\exp\big\{-[{e\mathstrut}^{n\epsilon_{1}}\,-\,(e - 1){e\mathstrut}^{-n\epsilon_{1}}]\big\},
\;\;\;\;\;\;
\forall\,n: \;\;
\frac{|{\cal X}||{\cal X}||{\cal Y}|\ln(n + 1)}{n}
\; \leq \; \epsilon_{1}.
\label{eqSecondSum}
\end{align}

On the other hand, if $f\big({P\mathstrut}_{{\bf x}, \,{\bf y}, \, \hat{\bf x}}^{}\big) \, > \, R$,
then by (\ref{eqIandF}) also $I \, > \, R$,
and the second part of Lemma~\ref{lemma9} holds for the following:
\begin{align}
& \Pr\,\bigg\{E\big({P\mathstrut}_{\hat{\bf x} \, | \,{\bf x}, \, {\bf y}}^{}, {\bf X\mathstrut}_{m}, {\bf Y}\big)
\; \leq \; \underbrace{- \mathbb{E}_{{P\mathstrut}_{\hat{\bf x}, \,{\bf y}}^{}}[\ln P(Y \,|\, \hat{X})] \, - \, {\epsilon}_{2}}_{h({P\mathstrut}_{{\bf x}, \,{\bf y}, \, \hat{\bf x}}^{})}
\; \bigg| \;
({\bf X\mathstrut}_{m}, {\bf Y}) \, \in \, T({P\mathstrut}_{{\bf x}, \,{\bf y}}^{})\bigg\}
\nonumber \\
\overset{(a)}{=} \;\; & \Pr\,\bigg\{\sum_{m' \, \neq \, m}P({\bf Y} \,|\, {\bf X\mathstrut}_{m'})\cdot
\mathbbm{1}_{\displaystyle\big\{{\bf X\mathstrut}_{m'} \; \in \; T\big({P\mathstrut}_{\hat{\bf x} \, | \,{\bf x}, \, {\bf y}}^{}, \, {\bf X\mathstrut}_{m}, {\bf Y}\big)\big\}}(m')
\; \geq
\nonumber \\
& \;\;\;\;\;\;\;\;\;\;\;\;\;\;\;\;\;\;\;\;\;\;\;\;\;\;\;\;\;\;\;\,
\exp\Big\{n\big(\mathbb{E}_{{P\mathstrut}_{\hat{\bf x}, \,{\bf y}}^{}}[\ln P(Y \,|\, \hat{X})] \, + \, {\epsilon}_{2}\big)\Big\}
\; \bigg| \;
({\bf X\mathstrut}_{m}, {\bf Y}) \, \in \, T({P\mathstrut}_{{\bf x}, \,{\bf y}}^{})\bigg\}
\nonumber
\end{align}
\begin{align}
= \;\; & \Pr\,\bigg\{\sum_{m' \, \neq \, m}
\mathbbm{1}_{\displaystyle\big\{{\bf X\mathstrut}_{m'} \; \in \; T\big({P\mathstrut}_{\hat{\bf x} \, | \,{\bf x}, \, {\bf y}}^{}, \, {\bf X\mathstrut}_{m}, {\bf Y}\big)\big\}}(m')
\; \geq \; {e\mathstrut}^{n{\epsilon}_{2}}
\; \bigg| \;
({\bf X\mathstrut}_{m}, {\bf Y}) \, \in \, T({P\mathstrut}_{{\bf x}, \,{\bf y}}^{})\bigg\}
\nonumber \\
\overset{(b)}{=} \;\; & \Pr\,\Bigg\{\sum_{m' \, = \, 1}^{{e\mathstrut}^{nR}}
\mathbbm{1}_{\displaystyle\big\{{\bf X\mathstrut}_{m'} \; \in \; T\big({P\mathstrut}_{\hat{\bf x} \, | \,{\bf x}, \, {\bf y}}^{}, \, {\bf X\mathstrut}_{m}, {\bf Y}\big)\big\}}(m')
\; \geq \; {e\mathstrut}^{n{\epsilon}_{2}}
\; \Bigg| \;
({\bf X\mathstrut}_{m}, {\bf Y}) \, \in \, T({P\mathstrut}_{{\bf x}, \,{\bf y}}^{})\Bigg\}
\nonumber \\
\overset{(c)}{<} \;\; & \exp\big\{-[{e\mathstrut}^{n{\epsilon}_{2}}\,-\,e\,+\,1]\big\},
\;\;\;\;\;\;\;\;\; \epsilon_{2} \; > \; 0,
\label{eqLemmaSecondPart}
\end{align}
where in ($a$) we use the definition of $E\big({P\mathstrut}_{\hat{\bf x} \, | \,{\bf x}, \, {\bf y}}^{}, {\bf X\mathstrut}_{m}, {\bf Y}\big)$
(\ref{eqTypeExp}), and notation (\ref{eqExpectation}) with ${P\mathstrut}_{\hat{\bf x}, \,{\bf y}}^{}$;
in ($b$) we assume the codebook size $M\, = \, {e\mathstrut}^{nR}\,+\, 1$;
and ($c$) holds by (\ref{eqCondType}) and (\ref{eqIR}) of the lemma.
If we choose
\begin{equation} \label{eqDefH}
h\big({P\mathstrut}_{{\bf x}, \,{\bf y}, \, \hat{\bf x}}^{}\big) \;\; \triangleq \;\;
- \mathbb{E}_{{P\mathstrut}_{\hat{\bf x}, \,{\bf y}}^{}}[\ln P(Y \,|\, \hat{X})] \, - \, {\epsilon}_{2},
\end{equation}
then with (\ref{eqLemmaSecondPart}) we obtain that the {\em fourth sum} in (\ref{eqBoundFGH}) is upper-bounded as
\begin{align}
& \sum_{
\;\;\;{P\mathstrut}_{{\bf x}, \,{\bf y}, \, \hat{\bf x}}^{}\,:\;\;
f({P\mathstrut}_{{\bf x}, \,{\bf y}, \, \hat{\bf x}}^{})
\;\; > \;\; R}
\,\Pr\,\big\{({\bf X\mathstrut}_{m}, {\bf Y}) \, \in \, T({P\mathstrut}_{{\bf x}, \,{\bf y}}^{})\big\} \cdot
\Pr\,\Big\{
E\big({P\mathstrut}_{\hat{\bf x} \, | \,{\bf x}, \, {\bf y}}^{}, {\bf X\mathstrut}_{m}, {\bf Y}\big) \, \leq \, h\big({P\mathstrut}_{{\bf x}, \,{\bf y}, \, \hat{\bf x}}^{}\big)
\; \Big| \; {P\mathstrut}_{{\bf x}, \,{\bf y}}^{}\Big\}
\nonumber \\
< \;\;
& \sum_{
\;\;\;{P\mathstrut}_{{\bf x}, \,{\bf y}, \, \hat{\bf x}}^{}\,:\;\;
f({P\mathstrut}_{{\bf x}, \,{\bf y}, \, \hat{\bf x}}^{})
\;\; > \;\; R}
\,\Pr\,\big\{({\bf X\mathstrut}_{m}, {\bf Y}) \, \in \, T({P\mathstrut}_{{\bf x}, \,{\bf y}}^{})\big\}\cdot
\exp\big\{-[{e\mathstrut}^{n{\epsilon}_{2}}\,-\,e\,+\,1]\big\}
\nonumber \\
\leq \;\;
& \;\;\;\;\;\;\;\;\;\;
{(n\, + \, 1)\mathstrut}^{|{\cal X}|\cdot|{\cal X}|\cdot|{\cal Y}|}\,\cdot\,
\exp\big\{-[{e\mathstrut}^{n{\epsilon}_{2}}\,-\,e\,+\,1]\big\},
\;\;\;\;\;\;\;\;\; \epsilon_{2} \; > \; 0.
\label{eqFourthSum}
\end{align}

With the definitions (\ref{eqDefF}), (\ref{eqDefG}), (\ref{eqDefH}) at hand,
we are ready to bound also the first and the third sums in (\ref{eqBoundFGH}).

The {\em first sum} in (\ref{eqBoundFGH}) is upper-bounded as follows:
\begin{align}
& \sum_{
\;\;\;{P\mathstrut}_{{\bf x}, \,{\bf y}, \, \hat{\bf x}}^{}\,:\;\;
f({P\mathstrut}_{{\bf x}, \,{\bf y}, \, \hat{\bf x}}^{})
\;\; \leq \;\; R}
\,\Pr\,\big\{({\bf X\mathstrut}_{m}, {\bf Y}) \, \in \, T({P\mathstrut}_{{\bf x}, \,{\bf y}}^{})\big\} \, \times
\nonumber \\
& \;\;\;\;\;\;\;\;\;\;\;\;\;\;\;\;\;\;\;\;\;\;\;\;\;\;\;\;\;\;\;
\;\;\;\;\;\;\;\;\;\;\;\;\;\;\;\;\;\;\;\;\;\;\;\;\;\;\;\;\;\;\;\;\;
\mathbbm{1}_{\displaystyle\big\{g\big({P\mathstrut}_{{\bf x}, \,{\bf y}, \, \hat{\bf x}}^{}\big)
\; \leq \; \epsilon_{1} \, - \, \mathbb{E}_{{P\mathstrut}_{{\bf x}, \,{\bf y}}^{}}[\ln P(Y \,|\, X)] \, + \, D\big\}}
({P\mathstrut}_{{\bf x}, \,{\bf y}, \, \hat{\bf x}}^{})
\nonumber \\
&
\nonumber \\
\overset{(a)}{\leq} &
\sum_{
\;\;\;{P\mathstrut}_{{\bf x}, \,{\bf y}, \, \hat{\bf x}}^{}\,:\;\;
f({P\mathstrut}_{{\bf x}, \,{\bf y}, \, \hat{\bf x}}^{})
\;\; \leq \;\; R}
\exp\Big\{-nD\big({P\mathstrut}_{{\bf x}, \,{\bf y}}^{}\;\big\|\;Q \circ P\big)\Big\} \,\times
\nonumber \\
& \;\;\;\;\;\;\;\;\;\;\;\;\;\;\;\;\;\;\;\;\;\;\;\;\;\;\;\;\;\;\;
\;\;\;\;\;\;\;\;\;\;\;\;\;\;\;\;\;\;\;\;\;\;\;\;\;\;\;\;\;\;\;\;\;
\mathbbm{1}_{
\displaystyle\big\{g\big({P\mathstrut}_{{\bf x}, \,{\bf y}, \, \hat{\bf x}}^{}\big)
\; \leq \; \epsilon_{1} \, - \, \mathbb{E}_{{P\mathstrut}_{{\bf x}, \,{\bf y}}^{}}[\ln P(Y \,|\, X)] \, + \, D\big\}}
({P\mathstrut}_{{\bf x}, \,{\bf y}, \, \hat{\bf x}}^{})
\nonumber \\
&
\nonumber \\
\overset{(b)}{=} &
\;\;\;
\sum_{
\;\;\;{P\mathstrut}_{{\bf x}, \,{\bf y}, \, \hat{\bf x}}^{}}
\exp\Big\{-nD\big({P\mathstrut}_{{\bf x}, \,{\bf y}}^{}\;\big\|\;Q \circ P\big)\Big\} \,\times
\nonumber \\
& \;\;\;\;\;\;\;\;\;\;\;\;\;\;\;\;\;\;\;
\mathbbm{1}_{\bigg\{\substack{\displaystyle\mathbb{E}_{{P\mathstrut}_{{\bf x}, \,{\bf y}}^{}}[\ln P(Y \,|\, X)]
\, - \,\mathbb{E}_{{P\mathstrut}_{\hat{\bf x}, \,{\bf y}}^{}}[\ln P(Y \,|\, \hat{X})]
\,+\,
f\big({P\mathstrut}_{{\bf x}, \,{\bf y}, \, \hat{\bf x}}^{}\big)
\; \leq \; R \, + \, D \, + \, 3\epsilon_{1} \\
\;\;\;\;\;\;\;\;\;\;\;\;\;\;\;\;\;\;\;\;\;\;\;\;\;\;\;\;\;\;\;\;\;\;\;\;\;\;\;\;\;\;\;\;\;\;\;\;\;\;\;
\;\;\;\;\;\;\;\;\;\;\;\;\;\;\;\;
\displaystyle f\big({P\mathstrut}_{{\bf x}, \,{\bf y}, \, \hat{\bf x}}^{}\big)
\; \leq \; R}\bigg\}}
({P\mathstrut}_{{\bf x}, \,{\bf y}, \, \hat{\bf x}}^{})
\nonumber
\end{align}
\begin{align}
\overset{(c)}{\leq} &
\;\;\;
\sum_{
\;\;\;{P\mathstrut}_{{\bf x}, \,{\bf y}, \, \hat{\bf x}}^{}}
\exp\big\{-n\,\,\widetilde{\!\!E}{\mathstrut}_{1}^{\,\text{types}}(R, D \, + \, 3\epsilon_{1})\big\}
\nonumber \\
\overset{(d)}{\leq} &
\;\;\;
\sum_{
\;\;\;{P\mathstrut}_{{\bf x}, \,{\bf y}, \, \hat{\bf x}}^{}}
\exp\big\{-n\,\,\widetilde{\!\!E}_{1}(R, D \, + \, 3\epsilon_{1})\big\} \;\; \leq \;\;
{(n\, + \, 1)\mathstrut}^{|{\cal X}|\cdot|{\cal X}|\cdot|{\cal Y}|}\,\cdot\,
\exp\big\{-n\,\,\widetilde{\!\!E}_{1}(R, D \, + \, 3\epsilon_{1})\big\},
\label{eqDmin1}
\end{align}
where in\newline
($a$) we use the bound
\begin{equation} \label{eqTypeProb}
\Pr\,\big\{({\bf X\mathstrut}_{m}, {\bf Y}) \, \in \, T({P\mathstrut}_{{\bf x}, \,{\bf y}}^{})\big\}
\; \leq \;
\exp\Big\{-nD\big({P\mathstrut}_{{\bf x}, \,{\bf y}}^{}(x, y)\;\big\|\;Q(x)\cdot P(y \,|\, x)\big)\Big\};
\end{equation}
($b$) collect all the conditions in the indicator function and substitute the definition of $g\big({P\mathstrut}_{{\bf x}, \,{\bf y}, \, \hat{\bf x}}^{}\big)$
(\ref{eqDefG});\newline
($c$) the minimal exponent $\,\,\widetilde{\!\!E}{\mathstrut}_{1}^{\,\text{types}}(R, D \, + \, 3\epsilon_{1})$ is determined by minimization over types ${P\mathstrut}_{{\bf x}, \,{\bf y}, \, \hat{\bf x}}^{}$, corresponding to block length $n$, subject to the two conditions, which appear in the indicator function:
\begin{align}
\,\,\widetilde{\!\!E}{\mathstrut}_{1}^{\,\text{types}}(R, D) \;\; \triangleq \;\; & \min_{{P\mathstrut}_{{\bf x}, \,{\bf y}, \, \hat{\bf x}}^{}(x, \,y, \,\hat{x})} \, D\big({P\mathstrut}_{{\bf x}, \,{\bf y}}^{} \;\big\|\; Q\circ P\big)
\label{eqFirstSumExponentTypes} \\
& \;\;\;\;\;\;
\text{subject to:} \;\;\;\;\;\;\;\;
\mathbb{E}_{{P\mathstrut}_{{\bf x}, \,{\bf y}, \, \hat{\bf x}}^{}}\bigg[\ln \frac{P(Y \,|\, X)}{P(Y \,|\, \hat{X})}\bigg]
\, + \, D\big({P\mathstrut}_{{\bf x}, \,{\bf y}, \, \hat{\bf x}}^{} \,\big\|\, {P\mathstrut}_{{\bf x}, \,{\bf y}}^{} \!\times Q\big)
\; \leq \; R \, + \, D,
\nonumber \\
& \;\;\;\;\;\;\;\;\;\;\;\;\;\;\;\;\;\;\;\;\;\;\;\;\;\;\;\;\;\;\;\;\;\;\;\;\;\;\;\;\;\;\;\;\;\;\;\;\;\;\;\;\;\;\;
\;\;\;\;\;\;\;\;\;\;\;\;\;
D\big({P\mathstrut}_{{\bf x}, \,{\bf y}, \, \hat{\bf x}}^{} \,\big\|\, {P\mathstrut}_{{\bf x}, \,{\bf y}}^{} \!\times Q\big)
\; \leq \; R,
\nonumber
\end{align}
where we use also the definition of $f\big({P\mathstrut}_{{\bf x}, \,{\bf y}, \, \hat{\bf x}}^{}\big)$ (\ref{eqDefF});\newline
($d$) the minimal exponent $\,\,\widetilde{\!\!E}{\mathstrut}_{1}^{\,\text{types}}(R, D \, + \, 3\epsilon_{1})$ is lower-bounded further by the result of the same minimization,
denoted as $\,\,\widetilde{\!\!E}_{1}(R, D \, + \, 3\epsilon_{1})$,
performed over all possible joint distributions
$T(x,y)\cdot W(\hat{x} \, | \, x, y)\,$:
\begin{align}
\,\,\widetilde{\!\!E}_{1}(R, D) \;\; \triangleq \;\; & \min_{T(x,\,y),\,W(\hat{x}\,|\,x,\,y)} \, D(T \;\|\; Q \circ P)
\label{eqFirstSumExponent} \\
& \;\;\;\;\;\;\;\;
\text{subject to:} \;\;\;\;\;\;\;\;
\mathbb{E}_{\,T\,\circ\, W}\bigg[\ln \frac{P(Y \,|\, X)}{P(Y \,|\, \hat{X})}\bigg]
\, + \, D(T \circ W \,\|\, T \times Q)
\; \leq \; R \, + \, D,
\nonumber \\
& \;\;\;\;\;\;\;\;\;\;\;\;\;\;\;\;\;\;\;\;\;\;\;\;\;\;\;\;\;\;\;\;\;\;\;\;\;\;\;\;\;\;\;\;\;\;\;\;\;\;\;\;\;\;\;
\;\;\;\;\;\;\;\;\;\;\;\;\;\;
D(T \circ W \,\|\, T \times Q)
\; \leq \; R.
\nonumber
\end{align}

Finally, the {\em third sum} in (\ref{eqBoundFGH}) is upper-bounded as follows:
\begin{align}
&
\sum_{
\;\;\;{P\mathstrut}_{{\bf x}, \,{\bf y}, \, \hat{\bf x}}^{}\,:\;\;
f({P\mathstrut}_{{\bf x}, \,{\bf y}, \, \hat{\bf x}}^{})
\;\; > \;\; R}
\,\Pr\,\big\{({\bf X\mathstrut}_{m}, {\bf Y}) \, \in \, T({P\mathstrut}_{{\bf x}, \,{\bf y}}^{})\big\}\cdot
\Pr\,\Big\{E\big({P\mathstrut}_{\hat{\bf x} \, | \,{\bf x}, \, {\bf y}}^{}, {\bf X\mathstrut}_{m}, {\bf Y}\big) < +\infty \; \Big| \; {P\mathstrut}_{{\bf x}, \,{\bf y}}^{}\Big\} \, \times
\nonumber \\
& \;\;\;\;\;\;\;\;\;\;\;\;\;\;\;\;\;\;\;\;\;\;\;\;\;\;\;\;\;\;\;
\;\;\;\;\;\;\;\;\;\;\;\;\;\;\;\;\;\;\;\;\;\;\;\;\;\;\;\;\;\;\;\;\;\;
\mathbbm{1}_{\displaystyle\big\{h\big({P\mathstrut}_{{\bf x}, \,{\bf y}, \, \hat{\bf x}}^{}\big)
\; \leq \; \epsilon_{1} \, - \, \mathbb{E}_{{P\mathstrut}_{{\bf x}, \,{\bf y}}^{}}[\ln P(Y \,|\, X)] \, + \, D\big\}}
({P\mathstrut}_{{\bf x}, \,{\bf y}, \, \hat{\bf x}}^{})
\nonumber \\
&
\nonumber \\
\overset{(a)}{\leq} &
\sum_{
\;\;\;{P\mathstrut}_{{\bf x}, \,{\bf y}, \, \hat{\bf x}}^{}\,:\;\;
f({P\mathstrut}_{{\bf x}, \,{\bf y}, \, \hat{\bf x}}^{})
\;\; > \;\; R}
\exp\Big\{-nD\big({P\mathstrut}_{{\bf x}, \,{\bf y}}^{}\;\big\|\;Q \circ P\big)\Big\}\cdot
\exp\Big\{-n\left[D\big({P\mathstrut}_{{\bf x}, \,{\bf y}, \, \hat{\bf x}}^{} \,\big\|\, {P\mathstrut}_{{\bf x}, \,{\bf y}}^{} \!\times Q\big) \, - \, R\right]\Big\}
\, \times
\nonumber \\
& \;\;\;\;\;\;\;\;\;\;\;\;\;\;\;\;\;\;\;\;\;\;\;\;\;\;\;\;\;\;\;
\;\;\;\;\;\;\;\;\;\;\;\;\;\;\;\;\;\;\;\;\;\;\;\;\;\;\;\;\;\;\;\;\;\;
\mathbbm{1}_{\displaystyle\big\{h\big({P\mathstrut}_{{\bf x}, \,{\bf y}, \, \hat{\bf x}}^{}\big)
\; \leq \; \epsilon_{1} \, - \, \mathbb{E}_{{P\mathstrut}_{{\bf x}, \,{\bf y}}^{}}[\ln P(Y \,|\, X)] \, + \, D\big\}}
({P\mathstrut}_{{\bf x}, \,{\bf y}, \, \hat{\bf x}}^{})
\nonumber
\end{align}
\begin{align}
\overset{(b)}{=} &
\sum_{
\;\;\;{P\mathstrut}_{{\bf x}, \,{\bf y}, \, \hat{\bf x}}^{}}
\exp\Big\{-n\left[D\big({P\mathstrut}_{{\bf x}, \,{\bf y}}^{}\;\big\|\;Q \circ P\big)\, + \,
D\big({P\mathstrut}_{{\bf x}, \,{\bf y}, \, \hat{\bf x}}^{} \,\big\|\, {P\mathstrut}_{{\bf x}, \,{\bf y}}^{} \!\times Q\big) \, - \, R
\right]\Big\}
\, \times
\nonumber \\
& \;\;\;\;\;\;\;\;\;\;\;\;\;\;\;\;\;\;\;\;\;\;\;\;\;\;\;\;\;\;\;
\;\;\;\;\;\;\;\;\;\;\;\;
\mathbbm{1}_{
\bigg\{\substack{\displaystyle
\mathbb{E}_{{P\mathstrut}_{{\bf x}, \,{\bf y}}^{}}[\ln P(Y \,|\, X)] \,
- \, \mathbb{E}_{{P\mathstrut}_{\hat{\bf x}, \,{\bf y}}^{}}[\ln P(Y \,|\, \hat{X})]
\; \leq \; D \, + \, \epsilon_{1} \, + \, {\epsilon}_{2}\\
\;\;\;\;\;\;\;\;\;\;\;\;\;\;
\displaystyle \;D\big({P\mathstrut}_{{\bf x}, \,{\bf y}, \, \hat{\bf x}}^{} \,\big\|\, {P\mathstrut}_{{\bf x}, \,{\bf y}}^{} \!\times Q\big)
\; > \; R
}\bigg\}}
({P\mathstrut}_{{\bf x}, \,{\bf y}, \, \hat{\bf x}}^{})
\nonumber \\
\overset{(c)}{\leq} &
\;\;\;
\sum_{
\;\;\;{P\mathstrut}_{{\bf x}, \,{\bf y}, \, \hat{\bf x}}^{}}
\exp\big\{-nE_{2}^{\,\text{types}}(R, D \, + \, \epsilon_{1} \, + \, {\epsilon}_{2})\big\}
\nonumber \\
\overset{(d)}{\leq} &
\;\;\;
\sum_{
\;\;\;{P\mathstrut}_{{\bf x}, \,{\bf y}, \, \hat{\bf x}}^{}}
\exp\big\{-nE_{2}(R, D \, + \, \epsilon_{1} \, + \, {\epsilon}_{2})\big\} \;\; \leq \;\;
{(n\, + \, 1)\mathstrut}^{|{\cal X}|\cdot|{\cal X}|\cdot|{\cal Y}|}\,\cdot\,
\exp\big\{-nE_{2}(R, D \, + \, \epsilon_{1} \, + \, {\epsilon}_{2})\big\},
\label{eqDmin2}
\end{align}
where\newline
($a$) follows by (\ref{eqTypeProb}) and the union bound
\begin{equation} \label{eqUBTypesR}
\Pr\,\Big\{E\big({P\mathstrut}_{\hat{\bf x} \, | \,{\bf x}, \, {\bf y}}^{}, {\bf X\mathstrut}_{m}, {\bf Y}\big) < +\infty \; \Big| \; {P\mathstrut}_{{\bf x}, \,{\bf y}}^{}\Big\}
\; \leq \;
\exp\Big\{-n\left[D\big({P\mathstrut}_{{\bf x}, \,{\bf y}, \, \hat{\bf x}}^{}(x, y, \hat{x}) \,\big\|\, {P\mathstrut}_{{\bf x}, \,{\bf y}}^{}(x, y) \cdot Q(\hat{x})\big) \, - \, R\right]\Big\};
\end{equation}
($b$) uses the definitions of $h\big({P\mathstrut}_{{\bf x}, \,{\bf y}, \, \hat{\bf x}}^{}\big)$ (\ref{eqDefH})
and $f\big({P\mathstrut}_{{\bf x}, \,{\bf y}, \, \hat{\bf x}}^{}\big)$ (\ref{eqDefF});\newline
($c$) the minimal exponent $E_{2}^{\,\text{types}}(R, D \, + \, \epsilon_{1} \, + \, {\epsilon}_{2})$ is determined by minimization over types ${P\mathstrut}_{{\bf x}, \,{\bf y}, \, \hat{\bf x}}^{}$, corresponding to block length $n$, subject to the two conditions, which appear in the indicator function:
\begin{align}
E_{2}^{\,\text{types}}(R, D) \;\; \triangleq \;\; & \min_{{P\mathstrut}_{{\bf x}, \,{\bf y}, \, \hat{\bf x}}^{}(x, \,y, \,\hat{x})} \, \Big\{D\big({P\mathstrut}_{{\bf x}, \,{\bf y}}^{} \;\big\|\; Q\circ P\big)\, + \,
D\big({P\mathstrut}_{{\bf x}, \,{\bf y}, \, \hat{\bf x}}^{} \,\big\|\, {P\mathstrut}_{{\bf x}, \,{\bf y}}^{} \!\times Q\big) \, - \, R\Big\}
\label{eqE2Types} \\
& \;\;\;\;\;\;
\text{subject to:} \;\;\;\;\;\;\;\;
\;\;\;\;\;\;\;\;\;\;\;\;\;\;\;\;\;\;\;\;\;\;\;\,
\mathbb{E}_{{P\mathstrut}_{{\bf x}, \,{\bf y}, \, \hat{\bf x}}^{}}\bigg[\ln \frac{P(Y \,|\, X)}{P(Y \,|\, \hat{X})}\bigg]
\; \leq \; D,
\nonumber \\
& \;\;\;\;\;\;\;\;\;\;\;\;\;\;\;\;\;\;\;\;\;\;\;\;\;\;\;\;\;\;\;\;\;\;\;\;\;\;\;\;\;\;\;\;\;\;\;\;\;\;\;\;\;
D\big({P\mathstrut}_{{\bf x}, \,{\bf y}, \, \hat{\bf x}}^{} \,\big\|\, {P\mathstrut}_{{\bf x}, \,{\bf y}}^{} \!\times Q\big)
\; > \; R;
\nonumber
\end{align}
($d$) $E_{2}^{\,\text{types}}(R, D \, + \, \epsilon_{1} \, + \, {\epsilon}_{2})$ is lower-bounded further by the result of the same minimization,
denoted as $E_{2}(R, D \, + \, \epsilon_{1} \, + \, {\epsilon}_{2})$,
performed over all possible joint distributions
$T(x,y)\cdot W(\hat{x} \, | \, x, y)\,$:
\begin{align}
E_{2}(R, D) \;\; \triangleq \;\; & \min_{T(x,\,y),\,W(\hat{x}\,|\,x,\,y)}\, \Big\{D(T \;\|\; Q \circ P)\, + \,
D(T \circ W \,\|\, T \times Q) \, - \, R\Big\}
\label{eqThirdSumExponent} \\
& \;\;\;\;\;\;\;\;
\text{subject to:} \;\;\;\;\;\;\;\;\;
\;\;\;\;\;\;\;\;\;\;\;\;\;\;\,
\mathbb{E}_{\,T\,\circ\, W}\bigg[\ln \frac{P(Y \,|\, X)}{P(Y \,|\, \hat{X})}\bigg]
\; \leq \; D,
\nonumber \\
& \;\;\;\;\;\;\;\;\;\;\;\;\;\;\;\;\;\;\;\;\;\;\;\;\;\;\;\;\;\;\;\;\;\;\;\;\;\;\;\;\;\;\;\;\;\;\;\;\;\;\;
D(T \circ W \,\|\, T \times Q)
\; \geq \; R.
\nonumber
\end{align}

Comparing the bounds (\ref{eqSecondSum}), (\ref{eqFourthSum}), (\ref{eqDmin1}), and (\ref{eqDmin2}),
we conclude, that, for $n$ sufficiently large, the exponent in the upper bound (\ref{eqBoundFGH})
is lower-bounded by $\,\min\big\{\,\,\widetilde{\!\!E}_{1}(R, D \, + \, 3\epsilon_{1}), \, E_{2}(R, D \, + \, \epsilon_{1} \, + \, {\epsilon}_{2})\big\} \, - \, \epsilon_{1}$.
Since $\epsilon_{1}$ and $\epsilon_{2}$ are arbitrary,
they can be replaced with zeros,
resulting in the following
\begin{thm} \label{thm5}
\begin{align}
& \liminf_{n \, \rightarrow \, \infty} \; \left\{-\frac{1}{n}\ln \Pr \, \{{\cal E}_{m}\}\right\}
\;\; \geq \;\;
\min\big\{\,\,\widetilde{\!\!E}_{1}(R, D), \, E_{2}(R, D)\big\}
\nonumber \\
& = \;\;
\min \left\{
\min_{\substack{T(x,\,y),\,W(\hat{x} \,|\, x,\,y):\\
\\
d(T\,\circ\, W)\,+\,D(T\, \circ\, W \;\|\; T \,\times\, Q)\;\leq\; D \, + \, R\\
\\
D(T\, \circ\, W \;\|\; T \,\times\, Q)\;\leq\;R
}}
\Big\{D(T \; \| \; Q \circ P)\Big\}
\right.,
\nonumber \\
& \;\;\;\;\;\;\;\;\;\;\;\;\;\;\;\;\;\;\;\;\;\;\;\;\,
\left.
\min_{\substack{T(x,\,y),\,W(\hat{x} \,|\, x,\,y):\\
\\
d(T\,\circ\, W) \; \leq \; D\\
\\
D(T\, \circ\, W \;\|\; T \,\times\, Q) \; \geq \; R
}}
\Big\{D(T \; \| \; Q \circ P)\, + \,
D(T \circ W \;\|\; T \times Q) \, - \, R\Big\}
\right\},
\label{eqLowerBoundForney}
\end{align}
{\em where
$\;d(T\,\circ\, W) \; \triangleq \; \mathbb{E}_{\,T\,\circ\, W}\big[d\big((X, Y), \hat{X}\big)\big]\; \triangleq \;\mathbb{E}_{\,T\,\circ\, W}\Big[\ln \frac{P(Y \,|\, X)}{P(Y \,|\, \hat{X})}\Big]$.}
\end{thm}

\section{\bf Explicit lower bound on the random coding error exponent of Forney's decoder} \label{S14}
We use Theorem~\ref{thm5} to prove the following\footnote{This result is redundant, as below we derive the same expression as the true exponent.}
\begin{thm} \label{thm6}
\begin{align}
& \liminf_{n \, \rightarrow \, \infty} \;\left\{ -\frac{1}{n}\ln \Pr \, \{{\cal E}_{m}\}\right\}
\;\; \geq \;\;
\nonumber \\
& 
\min \left\{
\;\;\;
\sup_{\rho\,\geq\,0}
\;\;
\Bigg\{-\inf_{0\,\leq\,s\,\leq\,1} \; \ln \; \sum_{x,\,y}Q(x)P(y\,|\,x)\left[\sum_{\hat{x}} Q(\hat{x})
\left[\frac{P(y \,|\, x)}{P(y \,|\, \hat{x})}\,e^{-D}\right]^{-s}\right]^{\rho} - \; \rho R \Bigg\}
\right.,
\nonumber \\
& \;\;\;\;\;\;\;\;\;\;
\left.\sup_{0\,\leq\,\rho\,\leq\,1}
\Bigg\{-\;\;
\inf_{s\,\geq\,0}\;
\;\;\ln \;
\sum_{x,\,y}Q(x)P(y\,|\,x)\left[\sum_{\hat{x}} Q(\hat{x})
\left[\frac{P(y \,|\, x)}{P(y \,|\, \hat{x})}\,e^{-D}\right]^{-s}\right]^{\rho} - \; \rho R \Bigg\}
\;\right\}.
\nonumber
\end{align}
\end{thm}
\begin{proof}
\begin{align}
&
\,\,\widetilde{\!\!E}_{1}(R, D) \; = \; \min_{\substack{T(x,\,y),\,W(\hat{x} \,|\, x,\,y):\\
\\
d(T\,\circ\, W)\,+\,D(T\, \circ\, W \;\|\; T \,\times\, Q)\;\leq\; D \, + \, R\\
\\
D(T\, \circ\, W \;\|\; T \,\times\, Q)\;\leq\;R
}}
\Big\{D(T \; \| \; Q \circ P)\Big\}
\nonumber \\
& \geq \;\;
\sup_{\substack{\alpha\,\geq\,0,\;\beta\,\geq\,0\\ \alpha\,+\beta\,> \,0}}\;
\min_{\substack{T(x,\,y),\,W(\hat{x} \,|\, x,\,y):\\
\\
d(T\,\circ\, W)\,+\,D(T\, \circ\, W \;\|\; T \,\times\, Q)\;\leq\; D \, + \, R\\
\\
D(T\, \circ\, W \;\|\; T \,\times\, Q)\;\leq\;R
}}
\Big\{D(T \; \| \; Q \circ P)
\nonumber \\
& \;\;\;\;\;\;\;\;\;\;\;\;\;\;\;\;\;\;\;\;\;\;\;\;\;\;\;\;\;\;\;\;\;\;\;\;\;\;\;\;\;\;\;\;
+\,\alpha\big[D(T\, \circ\, W \;\|\; T \,\times\, Q)\,+\, d(T\,\circ\, W)\, - \, D \, - \, R\big]
\nonumber \\
& \;\;\;\;\;\;\;\;\;\;\;\;\;\;\;\;\;\;\;\;\;\;\;\;\;\;\;\;\;\;\;\;\;\;\;\;\;\;\;\;\;\;\;\;
+\,\beta\big[D(T\, \circ\, W \;\|\; T \,\times\, Q) \, - \, R\big]
\Big\}
\nonumber
\end{align}
\begin{align}
& \geq \;\;
\sup_{\substack{\alpha\,\geq\,0,\;\beta\,\geq\,0\\ \alpha\,+\beta\,> \,0}}
\;\min_{T(x,\,y),\,W(\hat{x} \,|\, x,\,y)}
\Big\{D(T \; \| \; Q \circ P)
\nonumber \\
& \;\;\;\;\;\;\;\;\;\;\;\;\;\;\;\;\;\;\;\;\;\;\;\;\;\;\;\;\;\;\;\;\;\;\;\;\;\;\;\;\;\;\;\;
+\,\alpha\big[D(T\, \circ\, W \;\|\; T \,\times\, Q)\,+\,d(T\,\circ\, W) \, - \, D \, - \, R\big]
\nonumber \\
& \;\;\;\;\;\;\;\;\;\;\;\;\;\;\;\;\;\;\;\;\;\;\;\;\;\;\;\;\;\;\;\;\;\;\;\;\;\;\;\;\;\;\;\;
+\,\beta\big[D(T\, \circ\, W \;\|\; T \,\times\, Q) \, - \, R\big]
\Big\}
\nonumber \\
& = \;\;
\sup_{\substack{\alpha\,\geq\,0,\;\beta\,\geq\,0\\ \alpha\,+\beta\,> \,0}}
\;\min_{T(x,\,y)}
\Big\{D(T \; \| \; Q \circ P) \, - \, \alpha D \, - \,(\alpha+\beta)R
\nonumber \\
& \;\;\;\;\;\;\;\;\;\;\;\;\;\;\;\;\;\;\;\;\;\;\;\;\;\;\;\;\;\;\;\;\;\;\;\;\;\;\;\;\;\;\;\;
+\,\min_{W(\hat{x} \,|\, x,\,y)}\big[(\alpha+\beta)D(T\, \circ\, W \;\|\; T \,\times\, Q)\,+\,\alpha d(T\,\circ\, W)\big]\Big\}
\nonumber \\
& = \;\;
\sup_{\substack{\alpha\,\geq\,0,\;\beta\,\geq\,0\\ \alpha\,+\beta\,> \,0}}
\;\min_{T(x,\,y)}
\Bigg\{\sum_{x,\,y}T(x, y)\ln \frac{T(x, y)}{Q(x)P(y\,|\,x)} \, - \, \alpha D \, - \,(\alpha+\beta)R
\nonumber \\
& \;\;\;
+\,\min_{W(\hat{x} \,|\, x,\,y)}\bigg[(\alpha+\beta) \sum_{x,\,y, \,\hat{x}}T(x, y)W(\hat{x} \,|\, x, \, y)\ln \frac{W(\hat{x} \,|\, x, \, y)}{Q(\hat{x})}\,+\,\alpha \sum_{x, \, y, \,\hat{x}}T(x, y)W(\hat{x} \,|\, x, y)d\big((x, y), \hat{x}\big)\bigg]\Bigg\}
\nonumber \\
& = \;\;
\sup_{\substack{\alpha\,\geq\,0,\;\beta\,\geq\,0\\ \alpha\,+\beta\,> \,0}}
\;\min_{T(x,\,y)}
\Bigg\{\sum_{x,\,y}T(x, y)\ln \frac{T(x, y)}{Q(x)P(y\,|\,x)} \, - \, \alpha D \, - \,(\alpha+\beta)R
\nonumber \\
& \;\;\;\;\;\;\;\;\;\;\;\;\;\;\;\;\;\;\;\;\;\;\;\;\;\;\;\;\;\;\;\;\;\;\;\;\;\;\;\;\;\;\;\;
+\,\min_{W(\hat{x} \,|\, x,\,y)}\bigg[(\alpha+\beta) \sum_{x,\,y, \,\hat{x}}T(x, y)W(\hat{x} \,|\, x, \, y)\ln \frac{W(\hat{x} \,|\, x, \, y)}{Q(\hat{x})
{e\mathstrut}^{-\frac{\alpha}{\alpha\,+\,\beta}d((x, \,y), \,\hat{x})}
}\bigg]\Bigg\}
\nonumber \\
& = \;\;
\sup_{\substack{\alpha\,\geq\,0,\;\beta\,\geq\,0\\ \alpha\,+\beta\,> \,0}}
\;\min_{T(x,\,y)}
\Bigg\{\sum_{x,\,y}T(x, y)\ln \frac{T(x, y)}{Q(x)P(y\,|\,x)} \, - \, \alpha D \, - \,(\alpha+\beta)R
\nonumber \\
& \;\;\;\;\;\;\;\;\;\;\;\;\;\;\;\;\;\;\;\;\;\;\;\;\;\;\;\;\;\;\;\;\;\;\;\;\;\;\;\;\;\;\;\;
-\,(\alpha+\beta) \sum_{x,\,y}T(x, y)\ln \sum_{\hat{x}} Q(\hat{x})
{e\mathstrut}^{-\frac{\alpha}{\alpha\,+\,\beta}d((x, \,y), \,\hat{x})}\Bigg\}
\nonumber \\
& = \;\;
\sup_{\substack{\alpha\,\geq\,0,\;\beta\,\geq\,0\\ \alpha\,+\beta\,> \,0}}
\;\min_{T(x,\,y)}
\left\{\sum_{x,\,y}T(x, y)\ln \frac{T(x, y)}{Q(x)P(y\,|\,x)\left[\sum_{\hat{x}} Q(\hat{x})
{e\mathstrut}^{-\frac{\alpha}{\alpha\,+\,\beta}d((x, \,y), \,\hat{x})}\right]^{\alpha\,+\beta}} \, - \, \alpha D \, - \,(\alpha+\beta)R \right\}
\nonumber \\
& = \;\;
\sup_{\substack{\alpha\,\geq\,0,\;\beta\,\geq\,0\\ \alpha\,+\beta\,> \,0}}
\left\{-\ln \sum_{x,\,y}Q(x)P(y\,|\,x)\bigg[\sum_{\hat{x}} Q(\hat{x})
{e\mathstrut}^{-\frac{\alpha}{\alpha\,+\,\beta}\left[d((x, \,y), \,\hat{x})\,-\,D\right]}\bigg]^{\alpha\,+\beta} - \;\; (\alpha+\beta)R \right\}
\label{eqE1BoundAlphaBeta} \\
& = \;\;
\sup_{\rho\,>\,0}
\left\{-\inf_{0\,\leq\,s\,\leq\,1} \; \ln \; \sum_{x,\,y}Q(x)P(y\,|\,x)\bigg[\sum_{\hat{x}} Q(\hat{x})
{e\mathstrut}^{-s\left[d((x, \,y), \,\hat{x})\,-\,D\right]}\bigg]^{\rho} - \; \rho R \right\},
\label{eqE1Bound}
\end{align}
where we define $\;\rho \, \triangleq \, \alpha\,+\,\beta\;$ and $\;s \, \triangleq \, \frac{\alpha}{\alpha\,+\,\beta}$.
The case $\rho \, = \, 0$ can also be included, because it gives bound zero, which is always true.
\begin{align}
&
E_{2}(R, D) \; = \, \min_{\substack{T(x,\,y),\,W(\hat{x} \,|\, x,\,y):\\
\\
d(T\,\circ\, W) \; \leq \; D\\
\\
D(T\, \circ\, W \;\|\; T \,\times\, Q) \; \geq \; R
}}
\Big\{D(T \; \| \; Q \circ P)\, + \,
D(T \circ W \;\|\; T \times Q) \, - \, R\Big\}
\nonumber
\end{align}
\begin{align}
& \geq \;\;
\sup_{\alpha\,\geq\,0}\;\sup_{\beta\,\geq\,0}
\min_{\substack{T(x,\,y),\,W(\hat{x} \,|\, x,\,y):\\
\\
d(T\,\circ\, W) \; \leq \; D\\
\\
D(T\, \circ\, W \;\|\; T \,\times\, Q) \; \geq \; R
}}
\Big\{D(T \; \| \; Q \circ P)\, + \,
D(T \circ W \;\|\; T \times Q) \, - \, R
\nonumber \\
& \;\;\;\;\;\;\;\;\;\;\;\;\;\;\;\;\;\;\;\;\;\;\;\;\;\;\;\;\;\;\;\;
+\,\alpha\big[d(T\,\circ\, W)\, - \, D\big]
\,-\,\beta\big[D(T \circ W \;\|\; T \times Q) \, - \, R\big]\Big\}
\nonumber \\
& \geq \;\;
\sup_{\alpha\,\geq\,0}\;\sup_{\beta\,\geq\,0}\;
\min_{T(x,\,y),\,W(\hat{x} \,|\, x,\,y)}
\Big\{D(T \; \| \; Q \circ P)\, + \,
D(T \circ W \;\|\; T \times Q) \, - \, R
\nonumber \\
& \;\;\;\;\;\;\;\;\;\;\;\;\;\;\;\;\;\;\;\;\;\;\;\;\;\;\;\;\;\;\;\;
+\,\alpha\big[d(T\,\circ\, W)\, - \, D\big]
\,-\,\beta\big[D(T \circ W \;\|\; T \times Q) \, - \, R\big]\Big\}
\nonumber \\
& = \;\;
\sup_{\alpha\,\geq\,0}\;\sup_{\beta\,\geq\,0}\;
\min_{T(x,\,y),\,W(\hat{x} \,|\, x,\,y)}
\Big\{D(T \; \| \; Q \circ P)
\nonumber \\
& \;\;\;\;\;\;\;\;\;\;\;\;\;\;\;\;\;\;\;\;\;\;\;\;\;\;\;\;\;\;\;\;
+\,\alpha\big[d(T\,\circ\, W)\, - \, D\big]
\,+\,(1-\beta)\big[D(T \circ W \;\|\; T \times Q) \, - \, R\big]\Big\}
\nonumber \\
& \geq \;\;
\sup_{\alpha\,\geq\,0}\;\sup_{0\,<\,\beta\,\leq\,1}\;
\min_{T(x,\,y),\,W(\hat{x} \,|\, x,\,y)}
\Big\{D(T \; \| \; Q \circ P)
\nonumber \\
& \;\;\;\;\;\;\;\;\;\;\;\;\;\;\;\;\;\;\;\;\;\;\;\;\;\;\;\;\;\;\;\;
+\,\alpha\big[d(T\,\circ\, W)\, - \, D\big]
\,+\,\beta\big[D(T \circ W \;\|\; T \times Q) \, - \, R\big]\Big\}
\nonumber \\
& \overset{(a)}{=} \;\;
\sup_{\alpha\,\geq\,0}\;\sup_{0\,<\,\beta\,\leq\,1}
\left\{-\ln \sum_{x,\,y}Q(x)P(y\,|\,x)\bigg[\sum_{\hat{x}} Q(\hat{x})
{e\mathstrut}^{-\frac{\alpha}{\beta}\left[d((x, \,y), \,\hat{x})\,-\,D\right]}\bigg]^{\beta} - \;\; \beta R \right\}
\nonumber \\
& \overset{(b)}{=} \;\;
\sup_{0\,<\,\rho\,\leq\,1}
\left\{-\inf_{s\,\geq\,0} \; \ln \; \sum_{x,\,y}Q(x)P(y\,|\,x)\bigg[\sum_{\hat{x}} Q(\hat{x})
{e\mathstrut}^{-s\left[d((x, \,y), \,\hat{x})\,-\,D\right]}\bigg]^{\rho} - \; \rho R \right\},
\label{eqE2Bound}
\end{align}
where ($a$) is obtained by the same steps as (\ref{eqE1BoundAlphaBeta}),
and in ($b$) we define $\rho \, \triangleq \, \beta$ and $s\, \triangleq \, \frac{\alpha}{\beta}$.
Similarly, the case $\rho \, = \, 0$ can also be included.
It remains to substitute $d((x, y), \hat{x})\, = \, \ln \frac{P(y\,|\,x)}{P(y\,|\,\hat{x})}$
into (\ref{eqE1Bound}) and (\ref{eqE2Bound}), and combine them for the final result.
\end{proof}

\section{\bf Upper bound on the random coding error exponent of Forney's decoder} \label{S15}

The sum in Forney's metric (\ref{eqForneyErrorEvent})
can be lower-bounded as follows
\begin{equation} \label{eqForneySumLower}
\sum_{m'\,\neq\,m}P({\bf Y} \,|\, {\bf X\mathstrut}_{m'})
\;\; \geq \;\;
{e\mathstrut}_{}^{-nE({P\mathstrut}_{\hat{\bf x} \, | \,{\bf x}, \, {\bf y}}^{},\, {\bf X\mathstrut}_{m}, {\bf Y})},
\;\;\;\;\;\;\;\;\; \forall \;{P\mathstrut}_{\hat{\bf x} \, | \,{\bf x}, \, {\bf y}}^{},
\end{equation}
where the exponents
$E\big({P\mathstrut}_{\hat{\bf x} \, | \,{\bf x}, \, {\bf y}}^{}, {\bf X\mathstrut}_{m}, {\bf Y}\big)$ are defined as in (\ref{eqTypeExp}).

Using types, we can lower-bound the ensemble average probability of error, given that message $m$ is transmitted, as follows
\begin{align}
& 
\Pr \, \{{\cal E}_{m}\}
\; \geq \max_{\;\;\;{P\mathstrut}_{{\bf x}, \,{\bf y}}^{}}
\,\Pr\,\big\{({\bf X\mathstrut}_{m}, {\bf Y}) \, \in \, T({P\mathstrut}_{{\bf x}, \,{\bf y}}^{})\big\}
\,\cdot\,\Pr\,\big\{{\cal E}_{m} \,\big|\, {P\mathstrut}_{{\bf x}, \,{\bf y}}^{}\big\}
\nonumber \\
& \overset{(a)}{\geq} \max_{\;\;\;{P\mathstrut}_{{\bf x}, \,{\bf y}, \, \hat{\bf x}}^{}}
\,\Pr\,\big\{({\bf X\mathstrut}_{m}, {\bf Y}) \, \in \, T({P\mathstrut}_{{\bf x}, \,{\bf y}}^{})\big\} \,\cdot\,
\Pr\,\bigg\{
\underbrace{E\big({P\mathstrut}_{\hat{\bf x} \, | \,{\bf x}, \, {\bf y}}^{}, {\bf X\mathstrut}_{m}, {\bf Y}\big)
\, < \,  - \, \frac{\ln P({\bf Y} \,|\, {\bf X\mathstrut}_{m})}{n} \, + \, D}_{\subseteq \, {\cal E}_{m}}
 \; \bigg| \; {P\mathstrut}_{{\bf x}, \,{\bf y}}^{}\bigg\}
\nonumber
\end{align}
\begin{align}
& \overset{(b)}{=} \max_{\;\;\;{P\mathstrut}_{{\bf x}, \,{\bf y}, \, \hat{\bf x}}^{}}
\,\Pr\,\big\{({\bf X\mathstrut}_{m}, {\bf Y}) \, \in \, T({P\mathstrut}_{{\bf x}, \,{\bf y}}^{})\big\} \,\cdot\,
\Pr\,\Big\{
E\big({P\mathstrut}_{\hat{\bf x} \, | \,{\bf x}, \, {\bf y}}^{}, {\bf X\mathstrut}_{m}, {\bf Y}\big)
\, < \,  - \, \mathbb{E}\,[\ln P(Y \,|\, X)] \, + \, D
 \; \Big| \; {P\mathstrut}_{{\bf x}, \,{\bf y}}^{}\Big\}
\nonumber \\
& \overset{(c)}{\geq} \; \!\!
\max_{
\;\;\;{P\mathstrut}_{{\bf x}, \,{\bf y}, \, \hat{\bf x}}^{}\,:\;\;
f({P\mathstrut}_{{\bf x}, \,{\bf y}, \, \hat{\bf x}}^{})
\;\; \leq \;\; R\,-\,2\epsilon_{1}}
\,\Pr\,\big\{({\bf X\mathstrut}_{m}, {\bf Y}) \, \in \, T({P\mathstrut}_{{\bf x}, \,{\bf y}}^{})\big\} \cdot 
\Pr\,\Big\{
E\big({P\mathstrut}_{\hat{\bf x} \, | \,{\bf x}, \, {\bf y}}^{}, {\bf X\mathstrut}_{m}, {\bf Y}\big) \, \leq \, g\big({P\mathstrut}_{{\bf x}, \,{\bf y}, \, \hat{\bf x}}^{}\big),
\nonumber \\
& \;\;\;\;\;\;\;\;\;\;\;\;\;\;\;\;\;\;\;\;\;\;\;\;\;\;\;\;\;\;\;\;\;\;\;\;
\;\;\;\;\;\;\;\;\;\;\;\;\;\;\;\;\;\;\;\;\;\;\;\;\;\;\;\;\;\;\;\;
\;\;\;\;\;\;\;\;\;\;\;\;\;\,
g\big({P\mathstrut}_{{\bf x}, \,{\bf y}, \, \hat{\bf x}}^{}\big)
\, < \, - \, \mathbb{E}\,[\ln P(Y \,|\, X)] \, + \, D
\; \Big| \; {P\mathstrut}_{{\bf x}, \,{\bf y}}^{}\Big\}
\nonumber \\
& = \; \!\!
\max_{
\;\;\;{P\mathstrut}_{{\bf x}, \,{\bf y}, \, \hat{\bf x}}^{}\,:\;\;
f({P\mathstrut}_{{\bf x}, \,{\bf y}, \, \hat{\bf x}}^{})
\;\; \leq \;\; R\,-\,2\epsilon_{1}}
\,\Pr\,\big\{({\bf X\mathstrut}_{m}, {\bf Y}) \, \in \, T({P\mathstrut}_{{\bf x}, \,{\bf y}}^{})\big\}\,\times
\nonumber \\
& \;\;\;\;\;\;\;\;\;\;\;\;\;\;\;\;\;\;\;\;\;\;\;\;\;\;\;\;\;\;\;\;
\;\;\;\;\;\;\;\;\;\;\;\;\;\;\;\;\;\;\;\;\;\;\;\;\;\;\;\;\;\;\;\;\;\;\;\;\;\;
\bigg[1 \; - \; \Pr\,\Big\{
E\big({P\mathstrut}_{\hat{\bf x} \, | \,{\bf x}, \, {\bf y}}^{}, {\bf X\mathstrut}_{m}, {\bf Y}\big) \, > \, g\big({P\mathstrut}_{{\bf x}, \,{\bf y}, \, \hat{\bf x}}^{}\big)\; \Big| \; {P\mathstrut}_{{\bf x}, \,{\bf y}}^{}\Big\}\bigg]\,\times
\nonumber \\
& \;\;\;\;\;\;\;\;\;\;\;\;\;\;\;\;\;\;\;\;\;\;\;\;\;\;\;\;\;\;\;\;\;\;\;\;
\;\;\;\;\;\;\;\;\;\;\;\;\;\;\;\;\;\;\;\;\;\;\;\;\;\;\;\;\;\;\;\;
\;\;\;\;\;\;\;\;\;\,
\mathbbm{1}_{\displaystyle\big\{g\big({P\mathstrut}_{{\bf x}, \,{\bf y}, \, \hat{\bf x}}^{}\big)
\, < \, - \, \mathbb{E}\,[\ln P(Y \,|\, X)] \, + \, D\big\}}
({P\mathstrut}_{{\bf x}, \,{\bf y}, \, \hat{\bf x}}^{}).
\label{eqBoundFG}
\end{align}
Explanation of steps:\newline
($a$) follows by the definition of the error event ${\cal E}_{m}$ (\ref{eqForneyErrorEvent}) and the lower bound on Forney's sum (\ref{eqForneySumLower});\newline
($b$) uses notation (\ref{eqExpectation});\newline
($c$) holds for any functions $f\big({P\mathstrut}_{{\bf x}, \,{\bf y}, \, \hat{\bf x}}^{}\big)$ and $g\big({P\mathstrut}_{{\bf x}, \,{\bf y}, \, \hat{\bf x}}^{}\big)$, because
\begin{align}
\Big\{
E\big({P\mathstrut}_{\hat{\bf x} \, | \,{\bf x}, \, {\bf y}}^{}, {\bf X\mathstrut}_{m}, {\bf Y}\big) \, \leq \, g\big({P\mathstrut}_{{\bf x}, \,{\bf y}, \, \hat{\bf x}}^{}\big), \;\;
g\big({P\mathstrut}_{{\bf x}, \,{\bf y}, \, \hat{\bf x}}^{}\big)
\, & < \, - \, \mathbb{E}\,[\ln P(Y \,|\, X)] \, + \, D\Big\}
\;\; \Rightarrow
\nonumber \\
\Big\{
E\big({P\mathstrut}_{\hat{\bf x} \, | \,{\bf x}, \, {\bf y}}^{}, {\bf X\mathstrut}_{m}, {\bf Y}\big)
\, & < \, - \, \mathbb{E}\,[\ln P(Y \,|\, X)] \, + \, D\Big\}.
\nonumber
\end{align}
We use also another version of ($c$),
written with functions
$f\big({P\mathstrut}_{{\bf x}, \,{\bf y}, \, \hat{\bf x}}^{}\big)$ and $h\big({P\mathstrut}_{{\bf x}, \,{\bf y}, \, \hat{\bf x}}^{}\big)$
as
\begin{align}
& 
\Pr \, \{{\cal E}_{m}\} \; \geq
\nonumber \\
& \max_{
\;\;\;{P\mathstrut}_{{\bf x}, \,{\bf y}, \, \hat{\bf x}}^{}\,:\;\;
f({P\mathstrut}_{{\bf x}, \,{\bf y}, \, \hat{\bf x}}^{})
\;\; \geq \;\; R}
\,\Pr\,\big\{({\bf X\mathstrut}_{m}, {\bf Y}) \, \in \, T({P\mathstrut}_{{\bf x}, \,{\bf y}}^{})\big\} \cdot 
\Pr\,\Big\{
E\big({P\mathstrut}_{\hat{\bf x} \, | \,{\bf x}, \, {\bf y}}^{}, {\bf X\mathstrut}_{m}, {\bf Y}\big) \, \leq \, h\big({P\mathstrut}_{{\bf x}, \,{\bf y}, \, \hat{\bf x}}^{}\big),
\nonumber \\
& \;\;\;\;\;\;\;\;\;\;\;\;\;\;\;\;\;\;\;\;\;\;\;\;\;\;\;\;\;\;\;\;\;\;\;\;
\;\;\;\;\;\;\;\;\;\;\;\;\;\;\;\;\;\;\;\;\;\;\;\;\;\;\;\;
\;\;\;\;\;\;\;\;\;\;\;\;\;\,
h\big({P\mathstrut}_{{\bf x}, \,{\bf y}, \, \hat{\bf x}}^{}\big)
\, < \, - \, \mathbb{E}\,[\ln P(Y \,|\, X)] \, + \, D
\; \Big| \; {P\mathstrut}_{{\bf x}, \,{\bf y}}^{}\Big\} \; =
\nonumber \\
& \max_{
\;\;\;{P\mathstrut}_{{\bf x}, \,{\bf y}, \, \hat{\bf x}}^{}\,:\;\;
f({P\mathstrut}_{{\bf x}, \,{\bf y}, \, \hat{\bf x}}^{})
\;\; \geq \;\; R}
\,\Pr\,\big\{({\bf X\mathstrut}_{m}, {\bf Y}) \, \in \, T({P\mathstrut}_{{\bf x}, \,{\bf y}}^{})\big\}\cdot
\Pr\,\Big\{
E\big({P\mathstrut}_{\hat{\bf x} \, | \,{\bf x}, \, {\bf y}}^{}, {\bf X\mathstrut}_{m}, {\bf Y}\big) \, \leq \, h\big({P\mathstrut}_{{\bf x}, \,{\bf y}, \, \hat{\bf x}}^{}\big)\; \Big| \; {P\mathstrut}_{{\bf x}, \,{\bf y}}^{}\Big\}\,\times
\nonumber \\
& \;\;\;\;\;\;\;\;\;\;\;\;\;\;\;\;\;\;\;\;\;\;\;\;\;\;\;\;\;\;\;\;\;\;\;\;
\;\;\;\;\;\;\;\;\;\;\;\;\;\;\;\;\;\;\;\;\;\;\;\;\;\;\;\;
\;\;\;\;\;\;\;\;\;\,
\mathbbm{1}_{\displaystyle\big\{h\big({P\mathstrut}_{{\bf x}, \,{\bf y}, \, \hat{\bf x}}^{}\big)
\, < \, - \, \mathbb{E}\,[\ln P(Y \,|\, X)] \, + \, D\big\}}
({P\mathstrut}_{{\bf x}, \,{\bf y}, \, \hat{\bf x}}^{}).
\label{eqBoundFH}
\end{align}
The lower bounds (\ref{eqBoundFG}), (\ref{eqBoundFH}) were constructed for the use with the
following lemma
\begin{lemma} \label{lemma10}
{\em Let $Z_{i}\,\sim\, \text{i.i.d}\;\text{Bernoulli}\left({e\mathstrut}^{-nI}\right)$,
$i \, = \, 1, \, 2, \, ... \, , \, {e\mathstrut}^{nR}$.}

{\em If $I\,\leq\,R \, - \, \epsilon$, with $\epsilon \, > \, 0$, then}
\begin{equation} \label{eqSPpart}
\Pr\,\Bigg\{\sum_{i \, = \, 1}^{{e\mathstrut}^{nR}}Z_{i} \; < \; {e\mathstrut}^{n(R\,-\,I\,-\,\epsilon)}\Bigg\}
\;\; < \;\; \frac{{e\mathstrut}^{-n\epsilon}}{{(1\,-\,{e\mathstrut}^{-n\epsilon})\mathstrut}^{2}}.
\end{equation}

{\em If $I\,\geq\,R$, then}
\begin{equation} \label{eqLowSlopePart}
\Pr\,\Bigg\{\sum_{i \, = \, 1}^{{e\mathstrut}^{nR}}Z_{i} \; \geq \; 1\Bigg\}
\;\; > \;\; {e\mathstrut}^{-n(I\,-\,R)}\,\cdot\,
\underbrace{
{\Big(1\,-\,{e\mathstrut}^{-nR}\Big)\mathstrut}^{{e\mathstrut}^{nR}}}_{ \rightarrow \, 1/e}.
\end{equation}
\end{lemma}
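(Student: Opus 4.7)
\textbf{Plan of proof for Lemma~\ref{lemma10}.} The two parts are independent, and each is a direct second-moment or first-moment computation for a binomial sum $S \triangleq \sum_{i=1}^{e^{nR}} Z_i$ with $\mathbb{E}[S] = \mu = e^{n(R-I)}$ and $\mathrm{Var}(S) = e^{n(R-I)}(1-e^{-nI}) \leq \mu$.

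For the first part, I would use Chebyshev's inequality rather than a Chernoff bound, because the left tail we must control is only by a \emph{polynomial} factor ($e^{-n\epsilon}$) away from the mean, and the target bound itself is only polynomial in $e^{-n\epsilon}$. Observe that $\{S < e^{n(R-I-\epsilon)}\} = \{S < e^{-n\epsilon}\mu\} \subseteq \{|S-\mu| > (1-e^{-n\epsilon})\mu\}$, so Chebyshev gives
\begin{equation}
\Pr\bigl\{S < e^{-n\epsilon}\mu\bigr\} \;\leq\; \frac{\mathrm{Var}(S)}{(1-e^{-n\epsilon})^2\mu^2} \;\leq\; \frac{1}{(1-e^{-n\epsilon})^2\,\mu} \;=\; \frac{e^{-n(R-I)}}{(1-e^{-n\epsilon})^2}.
\nonumber
\end{equation}
The hypothesis $I \leq R - \epsilon$ yields $e^{-n(R-I)} \leq e^{-n\epsilon}$, producing exactly the bound (\ref{eqSPpart}).

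For the second part, I would start from the exact expression $\Pr\{S \geq 1\} = 1 - (1-e^{-nI})^{e^{nR}}$, and apply the elementary identity $1-(1-p)^k = p\sum_{j=0}^{k-1}(1-p)^j$. Bounding every term in the sum below by the smallest term $(1-p)^{k-1}$, with $p = e^{-nI}$ and $k = e^{nR}$, gives
\begin{equation}
\Pr\{S\geq 1\} \;\geq\; p\,k\,(1-p)^{k-1} \;=\; e^{-n(I-R)}\,(1-e^{-nI})^{e^{nR}-1}.
\nonumber
\end{equation}
Since $I \geq R$ implies $e^{-nI} \leq e^{-nR}$, monotonicity yields $(1-e^{-nI})^{e^{nR}-1} \geq (1-e^{-nR})^{e^{nR}-1} \geq (1-e^{-nR})^{e^{nR}}$, which is exactly (\ref{eqLowSlopePart}).

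The only subtlety is recognizing that a Chebyshev bound (not Chernoff) is what is needed in the first part, since the deviation tolerated is polynomial in the ensemble size rather than exponential; the rest is bookkeeping. I do not anticipate an obstacle beyond choosing the right tail inequality.
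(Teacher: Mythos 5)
Your proposal is correct and follows essentially the same route as the paper: Chebyshev's inequality applied to the deviation $(1-e^{-n\epsilon})\mu$ for the first part, and the lower bound $pk(1-p)^{k-1}$ (which the paper obtains as the probability of exactly one success, and you obtain via the geometric-sum identity) followed by the same monotonicity step $e^{-nI}\leq e^{-nR}$ for the second part. The only difference is cosmetic phrasing; the key quantities and inequalities coincide with the paper's.
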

\begin{proof}
For $I\,\leq\,R \, - \, \epsilon$
\begin{align}
\Pr\,\Bigg\{\sum_{i \, = \, 1}^{{e\mathstrut}^{nR}}Z_{i} \; < \; {e\mathstrut}^{n(R\,-\,I\,-\,\epsilon)}\Bigg\}
\;\; & = \;\;
\Pr\,\Bigg\{\sum_{i \, = \, 1}^{{e\mathstrut}^{nR}}\big(Z_{i}\,-\,{e\mathstrut}^{-nI}\big) \; < \; {e\mathstrut}^{n(R\,-\,I\,-\,\epsilon)}\,-\,{e\mathstrut}^{n(R\,-\,I)}\Bigg\}
\nonumber \\
& = \;\;
\Pr\,\Bigg\{\sum_{i \, = \, 1}^{{e\mathstrut}^{nR}}\big(Z_{i}\,-\,{e\mathstrut}^{-nI}\big) \; < \;
{e\mathstrut}^{n(R\,-\,I)}(
{e\mathstrut}^{-n\epsilon}-\,1)\Bigg\}
\nonumber \\
& \leq \;\;
\Pr\,\Bigg\{\Bigg|\sum_{i \, = \, 1}^{{e\mathstrut}^{nR}}\big(Z_{i}\,-\,{e\mathstrut}^{-nI}\big)\Bigg| \; > \;
{e\mathstrut}^{n(R\,-\,I)}(
1 \, - \, {e\mathstrut}^{-n\epsilon})\Bigg\}
\nonumber \\
& \overset{(*)}{\leq} \;\;
\frac{1 \, - \, {e\mathstrut}^{-nI}}{{(1\,-\,{e\mathstrut}^{-n\epsilon})\mathstrut}^{2}}\cdot{e\mathstrut}^{-n(R\,-\,I)}
\;\; < \;\;
\frac{{e\mathstrut}^{-n(R\,-\,I)}}{{(1\,-\,{e\mathstrut}^{-n\epsilon})\mathstrut}^{2}}
\;\; \leq \;\;
\frac{{e\mathstrut}^{-n\epsilon}}{{(1\,-\,{e\mathstrut}^{-n\epsilon})\mathstrut}^{2}},
\nonumber
\end{align}
where
($*$)
is Chebyshev's inequality.

For $I\,\geq\,R$
\begin{align}
\Pr\,\Bigg\{\sum_{i \, = \, 1}^{{e\mathstrut}^{nR}}Z_{i} \; \geq \; 1\Bigg\}
\;\; & > \;\;
\sum_{i \, = \, 1}^{{e\mathstrut}^{nR}}
\Pr\,\big\{Z_{i} \; = \; 1\big\}
\prod_{j \, \neq \, i}
\Pr\,\big\{Z_{j} \; = \; 0\big\}
\nonumber \\
& = \;\;
{e\mathstrut}^{nR}
{e\mathstrut}^{-nI}
{\Big(1\,-\,{e\mathstrut}^{-nI}\Big)\mathstrut}^{{e\mathstrut}^{nR}-\,1}
\;\;
> \;\;
{e\mathstrut}^{nR}
{e\mathstrut}^{-nI}
{\Big(1\,-\,{e\mathstrut}^{-nR}\Big)\mathstrut}^{{e\mathstrut}^{nR}}.
\nonumber
\end{align}
\end{proof}

Let $f\big({P\mathstrut}_{{\bf x}, \,{\bf y}, \, \hat{\bf x}}^{}\big)$ and $I$ be defined as in
(\ref{eqDefF}) and (\ref{eqCondType}).
If $f\big({P\mathstrut}_{{\bf x}, \,{\bf y}, \, \hat{\bf x}}^{}\big) \, \leq \, R \, - \, 2\epsilon_{1}$,
then for $n$ sufficiently large, as in (\ref{eqNLarge}),  we obtain by (\ref{eqIandF}):
$I \; \leq \; f\big({P\mathstrut}_{{\bf x}, \,{\bf y}, \, \hat{\bf x}}^{}\big)\,+\,\epsilon_{1} \; \leq \; R \, - \, \epsilon_{1}$.
For such $n$, the first part of Lemma~\ref{lemma10} holds for the following:
\begin{align}
& \Pr\,\bigg\{E\big({P\mathstrut}_{\hat{\bf x} \, | \,{\bf x}, \, {\bf y}}^{}, {\bf X\mathstrut}_{m}, {\bf Y}\big)
\; > \; \underbrace{- \mathbb{E}_{{P\mathstrut}_{\hat{\bf x}, \,{\bf y}}^{}}[\ln P(Y \,|\, \hat{X})]
 - R + f({P\mathstrut}_{{\bf x}, \,{\bf y}, \, \hat{\bf x}}^{}) + 2\epsilon_{1}}_{g({P\mathstrut}_{{\bf x}, \,{\bf y}, \, \hat{\bf x}}^{})}
\; \bigg| \;
({\bf X\mathstrut}_{m}, {\bf Y}) \, \in \, T({P\mathstrut}_{{\bf x}, \,{\bf y}}^{})\bigg\}
\nonumber \\
\overset{(a)}{=} \;\; & \Pr\,\bigg\{\sum_{m' \, \neq \, m}P({\bf Y} \,|\, {\bf X\mathstrut}_{m'})\cdot
\mathbbm{1}_{\displaystyle\big\{{\bf X\mathstrut}_{m'} \; \in \; T\big({P\mathstrut}_{\hat{\bf x} \, | \,{\bf x}, \, {\bf y}}^{}, \, {\bf X\mathstrut}_{m}, {\bf Y}\big)\big\}}(m')
\; <
\nonumber \\
& \;\;\;\;\;\;\;\;\;\;\;\;\;\;\;\;\;\;\;\;\;\;\;\;\,\,
\exp\Big\{n\big(\mathbb{E}_{{P\mathstrut}_{\hat{\bf x}, \,{\bf y}}^{}}[\ln P(Y \,|\, \hat{X})] \, + \, R\,-\,f({P\mathstrut}_{{\bf x}, \,{\bf y}, \, \hat{\bf x}}^{})\,-\,2\epsilon_{1}\big)\Big\}
\; \bigg| \;
({\bf X\mathstrut}_{m}, {\bf Y}) \, \in \, T({P\mathstrut}_{{\bf x}, \,{\bf y}}^{})\bigg\}
\nonumber \\
= \;\; & \Pr\,\bigg\{\sum_{m' \, \neq \, m}
\mathbbm{1}_{\displaystyle\big\{{\bf X\mathstrut}_{m'} \; \in \; T\big({P\mathstrut}_{\hat{\bf x} \, | \,{\bf x}, \, {\bf y}}^{}, \, {\bf X\mathstrut}_{m}, {\bf Y}\big)\big\}}(m')
\; < \; {e\mathstrut}^{n\big(R\,-\,f({P\mathstrut}_{{\bf x}, \,{\bf y}, \, \hat{\bf x}}^{})\,-\,2\epsilon_{1}\big)}
\; \bigg| \;
({\bf X\mathstrut}_{m}, {\bf Y}) \, \in \, T({P\mathstrut}_{{\bf x}, \,{\bf y}}^{})\bigg\}
\nonumber \\
\overset{(b)}{\leq} \;\; & \Pr\,\Bigg\{\sum_{m' \, = \, 1}^{{e\mathstrut}^{nR}}
\mathbbm{1}_{\displaystyle\big\{{\bf X\mathstrut}_{m'} \; \in \; T\big({P\mathstrut}_{\hat{\bf x} \, | \,{\bf x}, \, {\bf y}}^{}, \, {\bf X\mathstrut}_{m}, {\bf Y}\big)\big\}}(m')
\; < \; {e\mathstrut}^{n(R\,-\,I\,-\,\epsilon_{1})}
\; \Bigg| \;
({\bf X\mathstrut}_{m}, {\bf Y}) \, \in \, T({P\mathstrut}_{{\bf x}, \,{\bf y}}^{})\Bigg\}
\nonumber \\
\overset{(c)}{<} \;\; &  \frac{{e\mathstrut}^{-n\epsilon_{1}}}{{(1\,-\,{e\mathstrut}^{-n\epsilon_{1}})\mathstrut}^{2}},
\;\;\;\;\;\;\;\;\;
\;\;\;\;\;\;\;\;\;
\frac{|{\cal X}||{\cal X}||{\cal Y}|\ln(n + 1)}{n}
\; \leq \; \epsilon_{1},
\label{eqLemmaPartOne}
\end{align}
where in \newline
($a$) the definition of $E\big({P\mathstrut}_{\hat{\bf x} \, | \,{\bf x}, \, {\bf y}}^{}, {\bf X\mathstrut}_{m}, {\bf Y}\big)$
(\ref{eqTypeExp}) is used, and notation (\ref{eqExpectation}) with ${P\mathstrut}_{\hat{\bf x}, \,{\bf y}}^{}$;\newline
($b$) the codebook size is assumed to be
$M\,=\,{e\mathstrut}^{nR}\,+\,1$;
the RHS of the inequality is increased 
by substitution of $I \, \leq \, f\big({P\mathstrut}_{{\bf x}, \,{\bf y}, \, \hat{\bf x}}^{}\big) \, + \, \epsilon_{1}$;
\newline
($c$) holds by the definition of $I$ (\ref{eqCondType}) and the first statement of the lemma (\ref{eqSPpart}).\newline
If we choose
\begin{equation} \label{eqDeG}
g\big({P\mathstrut}_{{\bf x}, \,{\bf y}, \, \hat{\bf x}}^{}\big) \;\; \triangleq \;\;
- \mathbb{E}_{{P\mathstrut}_{\hat{\bf x}, \,{\bf y}}^{}}[\ln P(Y \,|\, \hat{X})] \, - \, R \, + \,
f\big({P\mathstrut}_{{\bf x}, \,{\bf y}, \, \hat{\bf x}}^{}\big)
\, + \, 2\epsilon_{1},
\end{equation}
then with (\ref{eqLemmaPartOne}) the lower bound (\ref{eqBoundFG}) becomes
\begin{align}
& \max_{
\;\;\;{P\mathstrut}_{{\bf x}, \,{\bf y}, \, \hat{\bf x}}^{}}
\,\Pr\,\big\{({\bf X\mathstrut}_{m}, {\bf Y}) \, \in \, T({P\mathstrut}_{{\bf x}, \,{\bf y}}^{})\big\}\,\cdot\,
\bigg[1 \; - \; \Pr\,\Big\{
E\big({P\mathstrut}_{\hat{\bf x} \, | \,{\bf x}, \, {\bf y}}^{}, {\bf X\mathstrut}_{m}, {\bf Y}\big) \, > \, g\big({P\mathstrut}_{{\bf x}, \,{\bf y}, \, \hat{\bf x}}^{}\big)\; \Big| \; {P\mathstrut}_{{\bf x}, \,{\bf y}}^{}\Big\}\bigg]\,\times
\nonumber \\
& \;\;\;\;\;\;\;\;\;\;\;\;\;\;\;\;\;\;\;\;\;\;\;\;\;\;\;\;\;\;\;\;\;\;\;\;\;\;\;\;\;\;\;\;\;\;\;\;
\;\;\;\;\;\;\;\;\;\;\;\;\;\;\;\;\;
\mathbbm{1}_{\bigg\{\substack{\displaystyle g\big({P\mathstrut}_{{\bf x}, \,{\bf y}, \, \hat{\bf x}}^{}\big)
\; < \; - \, \mathbb{E}_{{P\mathstrut}_{{\bf x}, \,{\bf y}}^{}}[\ln P(Y \,|\, X)] \, + \, D \\
\displaystyle f\big({P\mathstrut}_{{\bf x}, \,{\bf y}, \, \hat{\bf x}}^{}\big)
\; \leq \;  R\,-\,2\epsilon_{1}
\;\;\;\;\;\;\;\;\;\;\;\;\;\;\;\;\;\;\;\;\;\;\;\;\;\;\;
}\bigg\}}
({P\mathstrut}_{{\bf x}, \,{\bf y}, \, \hat{\bf x}}^{})
\nonumber \\
& \overset{(a)}{\geq} \; \!\!
\max_{
\;\;\;{P\mathstrut}_{{\bf x}, \,{\bf y}, \, \hat{\bf x}}^{}}
{(n\, + \, 1)\mathstrut}^{-|{\cal X}|\cdot|{\cal Y}|}\cdot
\exp\Big\{-nD\big({P\mathstrut}_{{\bf x}, \,{\bf y}}^{}\;\big\|\;Q \circ P\big)\Big\}\,\cdot \,
\left[1 \; - \; \frac{{e\mathstrut}^{-n\epsilon_{1}}}{{(1\,-\,{e\mathstrut}^{-n\epsilon_{1}})\mathstrut}^{2}}\right]\,\times
\nonumber \\
& \;\;\;\;\;\;\;\;\;\;\;\;\;\;\;\;\;
\mathbbm{1}_{\bigg\{\substack{\displaystyle\mathbb{E}_{{P\mathstrut}_{{\bf x}, \,{\bf y}}^{}}[\ln P(Y \,|\, X)]
\, - \,\mathbb{E}_{{P\mathstrut}_{\hat{\bf x}, \,{\bf y}}^{}}[\ln P(Y \,|\, \hat{X})]
\,+\,
f\big({P\mathstrut}_{{\bf x}, \,{\bf y}, \, \hat{\bf x}}^{}\big)
\; < \; R \, + \, D \, - \, 2\epsilon_{1} \\
\;\;\;\;\;\;\;\;\;\;\;\;\;\;\;\;\;\;\;\;\;\;\;\;\;\;\;\;\;\;\;\;\;\;\;\;\;\;\;\;\;\;\;\;\;\;\;\;\;\;\;
\;\;\;\;\;\;\;\;\;\;\;\;\;\;\;\;\;\;\;\;\;\;\;\;\;\;\;\;\;\;\;
\displaystyle f\big({P\mathstrut}_{{\bf x}, \,{\bf y}, \, \hat{\bf x}}^{}\big)
\; \leq \; R \, - \, 2\epsilon_{1}}\bigg\}}
({P\mathstrut}_{{\bf x}, \,{\bf y}, \, \hat{\bf x}}^{})
\nonumber \\
&
\nonumber \\
& \overset{(b)}{\geq} \;
\exp\big\{-n\big[\,\,\widetilde{\!\!E}{\mathstrut}_{1}^{\,\text{types}}(R \, - \, 2\epsilon_{1}, \, D)\, + \, \epsilon_{2}\big]\big\}
\;\; \overset{(c)}{\geq} \;\; \exp\big\{-n\big[\,\,\widetilde{\!\!E}{\mathstrut}_{1}(R \, - \, 2\epsilon_{1} \, - \, \epsilon_{3}, \, D)\, + \, \epsilon_{2} \, + \, \epsilon_{3}\big]\big\}.
\label{eqTermOne}
\end{align}
Explanation of steps:\newline
($a$) follows by the lower bound on the probability of the joint type
\begin{equation} \label{eqTypeProbBelow}
\Pr\,\big\{({\bf X\mathstrut}_{m}, {\bf Y}) \, \in \, T({P\mathstrut}_{{\bf x}, \,{\bf y}}^{})\big\}
\; \geq \;
{(n\, + \, 1)\mathstrut}^{-|{\cal X}|\cdot|{\cal Y}|}\cdot\exp\Big\{-nD\big({P\mathstrut}_{{\bf x}, \,{\bf y}}^{}(x, y)\;\big\|\;Q(x)\cdot P(y \,|\, x)\big)\Big\},
\end{equation}
and (\ref{eqLemmaPartOne}), (\ref{eqDeG});\newline
($b$) holds for $n$ sufficiently large for a given $\epsilon_{2}\, > \, 0$, and uses the definition of the minimal exponent
similar to
(\ref{eqFirstSumExponentTypes}):
\begin{align}
\,\,\widetilde{\!\!E}{\mathstrut}_{1}^{\,\text{types}}(R, D) \;\; \triangleq \;\; & \min_{{P\mathstrut}_{{\bf x}, \,{\bf y}, \, \hat{\bf x}}^{}(x, \,y, \,\hat{x})} \, D\big({P\mathstrut}_{{\bf x}, \,{\bf y}}^{} \;\big\|\; Q\circ P\big)
\nonumber \\
& \;\;\;\;\;\;
\text{subject to:} \;\;\;\;\;\;\;\;
\mathbb{E}_{{P\mathstrut}_{{\bf x}, \,{\bf y}, \, \hat{\bf x}}^{}}\bigg[\ln \frac{P(Y \,|\, X)}{P(Y \,|\, \hat{X})}\bigg]
\, + \, D\big({P\mathstrut}_{{\bf x}, \,{\bf y}, \, \hat{\bf x}}^{} \,\big\|\, {P\mathstrut}_{{\bf x}, \,{\bf y}}^{} \!\times Q\big)
\; < \; R \, + \, D,
\nonumber \\
& \;\;\;\;\;\;\;\;\;\;\;\;\;\;\;\;\;\;\;\;\;\;\;\;\;\;\;\;\;\;\;\;\;\;\;\;\;\;\;\;\;\;\;\;\;\;\;\;\;\;\;\;\;\;\;
\;\;\;\;\;\;\;\;\;\;\;\;\;
D\big({P\mathstrut}_{{\bf x}, \,{\bf y}, \, \hat{\bf x}}^{} \,\big\|\, {P\mathstrut}_{{\bf x}, \,{\bf y}}^{} \!\times Q\big)
\; \leq \; R,
\nonumber
\end{align}
and the definition of $f\big({P\mathstrut}_{{\bf x}, \,{\bf y}, \, \hat{\bf x}}^{}\big)$ (\ref{eqDefF}).\newline
($c$) Let ${T\mathstrut}^{*}\circ {W\mathstrut}^{*}$ denote the joint distribution, achieving $\,\,\widetilde{\!\!E}_{1}(R \, - \, 2\epsilon_{1} \, - \, \epsilon_{3}, \, D)$,
defined by (\ref{eqFirstSumExponent}), for some $\epsilon_{3} \, > \, 0$.
This implies
\begin{align}
D\big({T\mathstrut}^{*} \;\|\; Q \circ P\big) \;\; & = \;\; \,\,\widetilde{\!\!E}_{1}(R \, - \, 2\epsilon_{1} \, - \, \epsilon_{3}, \, D),
\label{eqAchieves} \\
\mathbb{E}_{\,{T\mathstrut}^{*}\circ\, {W\mathstrut}^{*}}\bigg[\ln \frac{P(Y \,|\, X)}{P(Y \,|\, \hat{X})}\bigg]
\, + \, D\big({T\mathstrut}^{*} \circ {W\mathstrut}^{*} \,\|\, {T\mathstrut}^{*} \times Q\big)
\;\; & \leq \;\; R \, + \, D \, - \, 2\epsilon_{1} \, - \, \epsilon_{3},
\nonumber \\
D\big({T\mathstrut}^{*} \circ {W\mathstrut}^{*} \,\|\, {T\mathstrut}^{*} \times Q\big)
\;\; & \leq \;\; R \, - \, 2\epsilon_{1} \, - \, \epsilon_{3}.
\nonumber
\end{align}
Let ${T\mathstrut}_{n}^{*}\circ {W\mathstrut}_{n}^{*}$
denote a quantized version of the joint distribution ${T\mathstrut}^{*} \circ {W\mathstrut}^{*}$
with precision $\frac{1}{n}$, i.e. a joint type with denominator $n$.
Note, that the divergences, as functions of $T\circ W$, have bounded derivatives, and also the ratio $\ln \frac{P(y \,|\, x)}{P(y \,|\, \hat{x})}$
is bounded. Therefore, for any $\epsilon_{3}\,>\,0$ there exists $n$ large enough,
such that the quantized distribution ${T\mathstrut}_{n}^{*}\circ {W\mathstrut}_{n}^{*}$
satisfies
\begin{align}
D\big({T\mathstrut}_{n}^{*} \;\|\; Q \circ P\big) \;\; & \leq \;\; D({T\mathstrut}^{*} \;\|\; Q \circ P) \, + \, \epsilon_{3},
\label{eqContDist} \\
\mathbb{E}_{\,{T\mathstrut}_{n}^{*}\,\circ\, {W\mathstrut}_{n}^{*}}\bigg[\ln \frac{P(Y \,|\, X)}{P(Y \,|\, \hat{X})}\bigg]
\, + \, D\big({T\mathstrut}_{n}^{*} \circ {W\mathstrut}_{n}^{*} \,\|\, {T\mathstrut}_{n}^{*} \times Q\big)
\;\; & < \;\; R \, + \, D \, - \, 2\epsilon_{1},
\nonumber \\
D\big({T\mathstrut}_{n}^{*} \circ {W\mathstrut}_{n}^{*} \,\|\, {T\mathstrut}_{n}^{*} \times Q\big)
\;\; & \leq \;\; R \, - \, 2\epsilon_{1}.
\nonumber
\end{align}
It follows from the last two inequalities that for $n$ sufficiently large
\begin{equation} \label{eqQWay}
D\big({T\mathstrut}_{n}^{*} \;\|\; Q \circ P\big) \;\; \geq \;\; \,\,\widetilde{\!\!E}{\mathstrut}_{1}^{\,\text{types}}(R \, - \, 2\epsilon_{1}, \, D).
\end{equation}
The relations (\ref{eqQWay}), (\ref{eqContDist}), (\ref{eqAchieves}) together give
\begin{displaymath}
\,\,\widetilde{\!\!E}_{1}(R \, - \, 2\epsilon_{1} \, - \, \epsilon_{3}, \, D) \, + \, \epsilon_{3} \;\; \geq \;\; \,\,\widetilde{\!\!E}{\mathstrut}_{1}^{\,\text{types}}(R \, - \, 2\epsilon_{1}, \, D).
\end{displaymath}
This explains ($c$).

Now we return to the bound (\ref{eqBoundFH}).
If $f\big({P\mathstrut}_{{\bf x}, \,{\bf y}, \, \hat{\bf x}}^{}\big) \, \geq \, R$,
then also $I \, \geq \, R$ by (\ref{eqIandF}), and the second part of Lemma~\ref{lemma10} holds for the following:
\begin{align}
& \Pr\,\bigg\{E\big({P\mathstrut}_{\hat{\bf x} \, | \,{\bf x}, \, {\bf y}}^{}, {\bf X\mathstrut}_{m}, {\bf Y}\big)
\; \leq \; \underbrace{- \mathbb{E}_{{P\mathstrut}_{\hat{\bf x}, \,{\bf y}}^{}}[\ln P(Y \,|\, \hat{X})]}_{h({P\mathstrut}_{{\bf x}, \,{\bf y}, \, \hat{\bf x}}^{})}
\; \bigg| \;
({\bf X\mathstrut}_{m}, {\bf Y}) \, \in \, T({P\mathstrut}_{{\bf x}, \,{\bf y}}^{})\bigg\}
\nonumber \\
\overset{(a)}{=} \;\; & \Pr\,\bigg\{\sum_{m' \, \neq \, m}P({\bf Y} \,|\, {\bf X\mathstrut}_{m'})\cdot
\mathbbm{1}_{\displaystyle\big\{{\bf X\mathstrut}_{m'} \; \in \; T\big({P\mathstrut}_{\hat{\bf x} \, | \,{\bf x}, \, {\bf y}}^{}, \, {\bf X\mathstrut}_{m}, {\bf Y}\big)\big\}}(m')
\; \geq
\nonumber \\
& \;\;\;\;\;\;\;\;\;\;\;\;\;\;\;\;\;\;\;\;\;\;\;\;\;\;\;\;\;\;\;\;\;\;
\exp\big\{n\mathbb{E}_{{P\mathstrut}_{\hat{\bf x}, \,{\bf y}}^{}}[\ln P(Y \,|\, \hat{X})]\big\}
\; \bigg| \;
({\bf X\mathstrut}_{m}, {\bf Y}) \, \in \, T({P\mathstrut}_{{\bf x}, \,{\bf y}}^{})\bigg\}
\nonumber \\
= \;\; & \Pr\,\bigg\{\sum_{m' \, \neq \, m}
\mathbbm{1}_{\displaystyle\big\{{\bf X\mathstrut}_{m'} \; \in \; T\big({P\mathstrut}_{\hat{\bf x} \, | \,{\bf x}, \, {\bf y}}^{}, \, {\bf X\mathstrut}_{m}, {\bf Y}\big)\big\}}(m')
\; \geq \; 1
\; \bigg| \;
({\bf X\mathstrut}_{m}, {\bf Y}) \, \in \, T({P\mathstrut}_{{\bf x}, \,{\bf y}}^{})\bigg\}
\nonumber
\end{align}
\begin{align}
\overset{(b)}{>} \;\; &
{e\mathstrut}^{-n(I\,-\,R)}\,\cdot\,
{\Big(1\,-\,{e\mathstrut}^{-nR}\Big)\mathstrut}^{{e\mathstrut}^{nR}}
\nonumber \\
\overset{(c)}{\geq} \;\; &
{e\mathstrut}^{-n\left(f({P\mathstrut}_{{\bf x}, \,{\bf y}, \, \hat{\bf x}}^{})\,-\,R \, + \, \epsilon_{1}\right)}\,\cdot\,
{\Big(1\,-\,{e\mathstrut}^{-nR}\Big)\mathstrut}^{{e\mathstrut}^{nR}},
\;\;\;\;\;\;\;\;\;
\;\;\;\;\;\;\;\;\;
\frac{|{\cal X}||{\cal X}||{\cal Y}|\ln(n + 1)}{n}
\; \leq \; \epsilon_{1},
\label{eqLemmaPartTwo}
\end{align}
where ($a$) follows by the definition (\ref{eqTypeExp}), ($b$) follows by (\ref{eqLowSlopePart}) of the lemma,
($c$) holds for sufficiently large $n$, as in (\ref{eqNLarge}), because for such $n$, according to
(\ref{eqIandF}), we obtain
$I \; \leq \; f\big({P\mathstrut}_{{\bf x}, \,{\bf y}, \, \hat{\bf x}}^{}\big)\,+\,\epsilon_{1}$.
If we choose
\begin{equation} \label{eqDeH}
h\big({P\mathstrut}_{{\bf x}, \,{\bf y}, \, \hat{\bf x}}^{}\big)
\;\; \triangleq \;\;
- \mathbb{E}_{{P\mathstrut}_{\hat{\bf x}, \,{\bf y}}^{}}[\ln P(Y \,|\, \hat{X})],
\end{equation}
then with (\ref{eqLemmaPartTwo}) the lower bound (\ref{eqBoundFH}) becomes
\begin{align}
& \max_{
\;\;\;{P\mathstrut}_{{\bf x}, \,{\bf y}, \, \hat{\bf x}}^{}}
\,\Pr\,\big\{({\bf X\mathstrut}_{m}, {\bf Y}) \, \in \, T({P\mathstrut}_{{\bf x}, \,{\bf y}}^{})\big\}\,\cdot\,
\Pr\,\Big\{
E\big({P\mathstrut}_{\hat{\bf x} \, | \,{\bf x}, \, {\bf y}}^{}, {\bf X\mathstrut}_{m}, {\bf Y}\big) \, \leq \, h\big({P\mathstrut}_{{\bf x}, \,{\bf y}, \, \hat{\bf x}}^{}\big)\; \Big| \; {P\mathstrut}_{{\bf x}, \,{\bf y}}^{}\Big\}\,\times
\nonumber \\
& \;\;\;\;\;\;\;\;\;\;\;\;\;\;\;\;\;\;\;\;\;\;\;\;\;\;\;\;\;\;\;\;\;\;\;\;\;\;\;\;\;\;\;\;\;\;\;\;
\;\;\;\;\;\;\;\;\;\;\;\;\;\;\;\;\;
\mathbbm{1}_{\bigg\{\substack{\displaystyle h\big({P\mathstrut}_{{\bf x}, \,{\bf y}, \, \hat{\bf x}}^{}\big)
\; < \; - \, \mathbb{E}_{{P\mathstrut}_{{\bf x}, \,{\bf y}}^{}}[\ln P(Y \,|\, X)] \, + \, D \\
\displaystyle f\big({P\mathstrut}_{{\bf x}, \,{\bf y}, \, \hat{\bf x}}^{}\big)
\; \geq \;  R
\;\;\;\;\;\;\;\;\;\;\;\;\;\;\;\;\;\;\;\;\;\;\;\;\;\;\;\;\;\;\;\;\;\;\;\;\;
}\bigg\}}
({P\mathstrut}_{{\bf x}, \,{\bf y}, \, \hat{\bf x}}^{})
\nonumber \\
&
\nonumber \\
& \overset{(a)}{\geq} \; \!\!
\max_{
\;\;\;{P\mathstrut}_{{\bf x}, \,{\bf y}, \, \hat{\bf x}}^{}}
{(n\, + \, 1)\mathstrut}^{-|{\cal X}|\cdot|{\cal Y}|}\cdot
\exp\Big\{-nD\big({P\mathstrut}_{{\bf x}, \,{\bf y}}^{}\;\big\|\;Q \circ P\big)\Big\}\,\cdot \,
\exp\Big\{-n\left(f\big({P\mathstrut}_{{\bf x}, \,{\bf y}, \, \hat{\bf x}}^{}\big)\,-\,R \, + \, \epsilon_{1}\right)\Big\}\,\times
\nonumber \\
&
\;\;\;\;\;\;\;\;\;\;\;\;\;\;\;\;\;\;\;\;\;\;\;\;\;\;\;\;\;\;\;\;\;\;\;\;\;\;\;\;\;\;\;\;\;\;\;\;\;\;\;
\;\;\;\;\;\;\;\;\;\;\;\;\;\;\;\;\;\;\;\;\;\;\;\;\;\;\;\;\;\;\;\;\;\;\;\;\;\;\;
\underbrace{
{\Big(1\,-\,{e\mathstrut}^{-nR}\Big)\mathstrut}^{{e\mathstrut}^{nR}}}_{\rightarrow \, 1/e}\times
\nonumber \\
& \;\;\;\;\;\;\;\;\;\;\;\;\;\;\;\;\;\;\;\;\;\;\;\;\;\;\;\;\;\;\;\;\;\;\;\;\;\;\;\;\;\;\;\;\;\;\;\;\;\;\;\;\;\;\;\;
\mathbbm{1}_{\bigg\{\substack{\displaystyle\mathbb{E}_{{P\mathstrut}_{{\bf x}, \,{\bf y}}^{}}[\ln P(Y \,|\, X)]
\, - \,\mathbb{E}_{{P\mathstrut}_{\hat{\bf x}, \,{\bf y}}^{}}[\ln P(Y \,|\, \hat{X})]
\; < \; D  \\
\;\;\;\;\;\;\;\;\;\;\;\;\;\;\;\;\;\;\;\;\;\;\;\;\;\;\;\;\;\;\;\;\;\;\;\;\;\;\;\;\;\;\;\;\;\;\;\;\;\;\;
\;\;\;\;\;\;\;\;\;\;\;\;\;
\displaystyle f\big({P\mathstrut}_{{\bf x}, \,{\bf y}, \, \hat{\bf x}}^{}\big)
\; \geq \; R}\bigg\}}
({P\mathstrut}_{{\bf x}, \,{\bf y}, \, \hat{\bf x}}^{})
\nonumber \\
&
\nonumber \\
& \overset{(b)}{\geq} \;
\exp\big\{-n\big[E_{2}^{\,\text{types}}(R, D) \, + \, \epsilon_{1} \, + \, \epsilon_{4}\big]\big\}
\nonumber \\
& \overset{(c)}{\geq} \; \exp\big\{-n\big[E_{2}(R \, + \, \epsilon_{5}, \, D \, - \, \epsilon_{5})\, + \, \epsilon_{1} \, + \, \epsilon_{4} \, + \, 2\epsilon_{5}\big]\big\}.
\label{eqTermTwo}
\end{align}
Explanation of steps:\newline
($a$) follows by the lower bound on the probability of the joint type (\ref{eqTypeProbBelow})
and (\ref{eqLemmaPartTwo}), (\ref{eqDeH});\newline
($b$) holds for $n$ sufficiently large for a given $\epsilon_{4}\, > \, 0$, and uses
the definition of $f\big({P\mathstrut}_{{\bf x}, \,{\bf y}, \, \hat{\bf x}}^{}\big)$ (\ref{eqDefF})
with the definition of the minimal exponent
\begin{align}
E_{2}^{\,\text{types}}(R, D) \;\; \triangleq \;\; & \min_{{P\mathstrut}_{{\bf x}, \,{\bf y}, \, \hat{\bf x}}^{}(x, \,y, \,\hat{x})} \, \Big\{D\big({P\mathstrut}_{{\bf x}, \,{\bf y}}^{} \;\big\|\; Q\circ P\big)\, + \,
D\big({P\mathstrut}_{{\bf x}, \,{\bf y}, \, \hat{\bf x}}^{} \,\big\|\, {P\mathstrut}_{{\bf x}, \,{\bf y}}^{} \!\times Q\big) \, - \, R\Big\}
\label{eqE2TypesEq} \\
& \;\;\;\;\;\;
\text{subject to:} \;\;\;\;\;\;\;\;
\;\;\;\;\;\;\;\;\;\;\;\;\;\;\;\;\;\;\;\;\;\;\;\,
\mathbb{E}_{{P\mathstrut}_{{\bf x}, \,{\bf y}, \, \hat{\bf x}}^{}}\bigg[\ln \frac{P(Y \,|\, X)}{P(Y \,|\, \hat{X})}\bigg]
\; < \; D,
\nonumber \\
& \;\;\;\;\;\;\;\;\;\;\;\;\;\;\;\;\;\;\;\;\;\;\;\;\;\;\;\;\;\;\;\;\;\;\;\;\;\;\;\;\;\;\;\;\;\;\;\;\;\;\;\;\;
D\big({P\mathstrut}_{{\bf x}, \,{\bf y}, \, \hat{\bf x}}^{} \,\big\|\, {P\mathstrut}_{{\bf x}, \,{\bf y}}^{} \!\times Q\big)
\; \geq \; R,
\nonumber
\end{align}
which differs from (\ref{eqE2Types}) by the inequality symbols ``$<$'' and ``$\geq$''.\newline
($c$) Let ${T\mathstrut}^{*}\circ {W\mathstrut}^{*}$ denote the joint distribution, achieving $E_{2}(R \, + \, \epsilon_{5}, \, D \, - \, \epsilon_{5})$,
defined by (\ref{eqThirdSumExponent}), for some $\epsilon_{5} \, > \, 0$.
This implies
\begin{align}
D\big({T\mathstrut}^{*} \;\|\; Q \circ P\big)
\, + \, D\big({T\mathstrut}^{*} \circ {W\mathstrut}^{*} \,\|\, {T\mathstrut}^{*} \times Q\big) \, - \, R \, - \, \epsilon_{5}
\;\; & = \;\; E_{2}(R \, + \, \epsilon_{5}, \, D \, - \, \epsilon_{5}),
\label{eqAchievesEps5} \\
\mathbb{E}_{\,{T\mathstrut}^{*}\circ\, {W\mathstrut}^{*}}\bigg[\ln \frac{P(Y \,|\, X)}{P(Y \,|\, \hat{X})}\bigg]
\;\; & \leq \;\; D \, - \, \epsilon_{5},
\nonumber \\
D\big({T\mathstrut}^{*} \circ {W\mathstrut}^{*} \,\|\, {T\mathstrut}^{*} \times Q\big)
\;\; & \geq \;\; R \, + \, \epsilon_{5}.
\nonumber
\end{align}
Let ${T\mathstrut}_{n}^{*}\circ {W\mathstrut}_{n}^{*}$
denote a quantized version of the joint distribution ${T\mathstrut}^{*} \circ {W\mathstrut}^{*}$
with precision $\frac{1}{n}$, i.e. a joint type with denominator $n$.
Since the divergences, as functions of $T\circ W$, have bounded derivatives, and also the ratio $\ln \frac{P(y \,|\, x)}{P(y \,|\, \hat{x})}$
is bounded, for any $\epsilon_{5}\,>\,0$ there exists $n$ large enough,
such that the quantized distribution ${T\mathstrut}_{n}^{*}\circ {W\mathstrut}_{n}^{*}$
satisfies
\begin{align}
D\big({T\mathstrut}_{n}^{*} \;\|\; Q \circ P\big)
\, + \, D\big({T\mathstrut}_{n}^{*} \circ {W\mathstrut}_{n}^{*} \,\|\, {T\mathstrut}_{n}^{*} \times Q\big) \, - \, \epsilon_{5}
\;\; & \leq \;\; D\big({T\mathstrut}^{*} \;\|\; Q \circ P\big)
\, + \, D\big({T\mathstrut}^{*} \circ {W\mathstrut}^{*} \,\|\, {T\mathstrut}^{*} \times Q\big),
\label{eqContDistEps5} \\
\mathbb{E}_{\,{T\mathstrut}_{n}^{*}\,\circ\, {W\mathstrut}_{n}^{*}}\bigg[\ln \frac{P(Y \,|\, X)}{P(Y \,|\, \hat{X})}\bigg]
\;\; & < \;\; D,
\nonumber \\
D\big({T\mathstrut}_{n}^{*} \circ {W\mathstrut}_{n}^{*} \,\|\, {T\mathstrut}_{n}^{*} \times Q\big)
\;\; & \geq \;\; R.
\nonumber
\end{align}
It follows from the last two inequalities that for $n$ sufficiently large
\begin{equation} \label{eqQWayR}
D\big({T\mathstrut}_{n}^{*} \;\|\; Q \circ P\big)
\, + \, D\big({T\mathstrut}_{n}^{*} \circ {W\mathstrut}_{n}^{*} \,\|\, {T\mathstrut}_{n}^{*} \times Q\big) \, - \, R
\;\; \geq \;\;
E_{2}^{\,\text{types}}(R, D),
\end{equation}
where $E_{2}^{\,\text{types}}(R, D)$ is defined as in (\ref{eqE2TypesEq}).
The relations (\ref{eqQWayR}), (\ref{eqContDistEps5}), (\ref{eqAchievesEps5}) give
\begin{displaymath}
E_{2}(R \, + \, \epsilon_{5}, \, D \, - \, \epsilon_{5}) \, + \, 2\epsilon_{5}
\;\; \geq \;\;
E_{2}^{\,\text{types}}(R, D).
\end{displaymath}
This explains ($c$).

The lower bounds on the probability (\ref{eqBoundFG}), (\ref{eqBoundFH}) are replaced now by (\ref{eqTermOne}), (\ref{eqTermTwo}),
resulting in the upper bound on the error exponent:
\begin{align}
\limsup_{n \, \rightarrow \, \infty} \; \left\{-\frac{1}{n}\ln \Pr \, \{{\cal E}_{m}\}\right\}
\;\; & \leq \;\;
\min \Big\{\,\,\widetilde{\!\!E}_{1}(R \, - \, 2\epsilon_{1} \, - \, \epsilon_{3}, \, D)\, + \, \epsilon_{2} \, + \, \epsilon_{3},
\nonumber \\
& \;\;\;\;\;\;\;\;\;\;\;\;\;\;\;\;\;\,
E_{2}(R \, + \, \epsilon_{5}, \, D \, - \, \epsilon_{5})\, + \, \epsilon_{1} \, + \, \epsilon_{4} \, + \, 2\epsilon_{5}\Big\}.
\nonumber
\end{align}
Since $\epsilon_{1}$, $\epsilon_{2}$, $\epsilon_{3}$, $\epsilon_{4}$, $\epsilon_{5}$ are arbitrary,
they can be replaced with zeros and limits, as follows
\begin{thm} \label{thm7}
\begin{align}
& \limsup_{n \, \rightarrow \, \infty} \; \left\{-\frac{1}{n}\ln \Pr \, \{{\cal E}_{m}\}\right\}
\;\; \leq \;\;
\lim_{\epsilon\,\rightarrow\,0}\;
\min \big\{ \,\,\widetilde{\!\!E}_{1}(R \, - \, \epsilon, \, D),\;
E_{2}(R \, + \, \epsilon, \, D \, - \,\epsilon)\big\}
\label{eqEpsilonLim} \\
& = \;\;
\min \left\{
\inf_{\substack{T(x,\,y),\,W(\hat{x} \,|\, x,\,y):\\
\\
d(T\,\circ\, W)\,+\,D(T\, \circ\, W \;\|\; T \,\times\, Q)\;<\; D \, + \, R\\
\\
D(T\, \circ\, W \;\|\; T \,\times\, Q)\;<\;R
}}
\Big\{D(T \; \| \; Q \circ P)\Big\}
\right.,
\nonumber \\
& \;\;\;\;\;\;\;\;\;\;\;\;\;\;\;\;\;\;\;\;\;\;\;\;\,
\left.
\inf_{\substack{T(x,\,y),\,W(\hat{x} \,|\, x,\,y):\\
\\
d(T\,\circ\, W) \; < \; D\\
\\
D(T\, \circ\, W \;\|\; T \,\times\, Q) \; > \; R
}}
\Big\{D(T \; \| \; Q \circ P)\, + \,
D(T \circ W \;\|\; T \times Q) \, - \, R\Big\}
\right\},
\label{eqUpperBoundForney}
\end{align}
{\em where
$\;d(T\,\circ\, W) \; = \; \mathbb{E}_{\,T\,\circ\, W}\big[d\big((X, Y), \hat{X}\big)\big]\; = \;\mathbb{E}_{\,T\,\circ\, W}\Big[\ln \frac{P(Y \,|\, X)}{P(Y \,|\, \hat{X})}\Big]$.}
\end{thm}
Note, that the only difference of the upper bound of Theorem~\ref{thm7} from the lower bound of Theorem~\ref{thm5} is that here the conditions are {\em strict} inequalities.

\section{\bf Derivation of the explicit random coding error exponent of Forney's decoder} \label{S16}
In order to compare the bounds (\ref{eqUpperBoundForney}) and (\ref{eqLowerBoundForney}),
it is convenient to rewrite (\ref{eqEpsilonLim}) and (\ref{eqLowerBoundForney}), and replace the {\em second argument}
of $\min$ with another expression. Observe that the {\em first argument} in the minimum of (\ref{eqEpsilonLim}) can be upper-bounded as
\begin{align}
\,\,\widetilde{\!\!E}_{1}(R \, - \, \epsilon, \, D) \;\; & = \;\;
\min_{\substack{T(x,\,y),\,W(\hat{x} \,|\, x,\,y):\\
\\
d(T\,\circ\, W)\,+\,D(T\, \circ\, W \;\|\; T \,\times\, Q)\;\leq\; D \, + \, R \, - \, \epsilon \\
\\
D(T\, \circ\, W \;\|\; T \,\times\, Q) \; \leq \; R \, - \, \epsilon
}}
\Big\{D(T \; \| \; Q \circ P)\Big\}
\nonumber \\
& \leq \;\;
\min_{\substack{T(x,\,y),\,W(\hat{x} \,|\, x,\,y):\\
\\
d(T\,\circ\, W)\,+\,D(T\, \circ\, W \;\|\; T \,\times\, Q)\;\leq\; D \, + \, R \, - \, \epsilon\\
\\
d(T\,\circ\, W)\;\leq\; D \, - \, \epsilon \\
\\
D(T\, \circ\, W \;\|\; T \,\times\, Q) \; \leq \; R \, - \, \epsilon
}}
\Big\{D(T \; \| \; Q \circ P)\Big\}
\nonumber \\
& = \;\;
\;\;\;\;\;\;\;\;\;
\min_{\substack{T(x,\,y),\,W(\hat{x} \,|\, x,\,y):\\
\\
d(T\,\circ\, W) \; \leq \; D \, - \, \epsilon \\
\\
D(T\, \circ\, W \;\|\; T \,\times\, Q) \; \leq \; R \, - \, \epsilon
}}
\;\;\;\;\;\;\;\;\;
\Big\{D(T \; \| \; Q \circ P)\Big\} \;\; \triangleq \;\; E_{1}(R \, - \, \epsilon, \, D \, - \, \epsilon).
\label{eqE1Defined}
\end{align}
Therefore, the minimum of (\ref{eqEpsilonLim}) becomes
\begin{align}
& \min\big\{\,\,\widetilde{\!\!E}_{1}(R \, - \, \epsilon, \, D), \; E_{2}(R \, + \, \epsilon, \, D \, - \, \epsilon)\big\}
\nonumber \\
= \;\; & \min\big\{\,\,\widetilde{\!\!E}_{1}(R \, - \, \epsilon, \, D), \; E_{1}(R \, - \, \epsilon, \, D \, - \, \epsilon), \; E_{2}(R \, + \, \epsilon, \, D \, - \, \epsilon)\big\}
\nonumber \\
= \;\; & \min\Big\{\,\,\widetilde{\!\!E}_{1}(R \, - \, \epsilon, \, D), \;
\underbrace{
\min\big\{E_{1}(R \, - \, \epsilon, \, D \, - \, \epsilon), \; E_{2}(R \, + \, \epsilon, \, D \, - \, \epsilon)\big\}
}
\Big\}.
\label{eqNewRightTerm}
\end{align}
Similarly (using $\epsilon\,=\,0$), for the minimum in (\ref{eqLowerBoundForney}) we obtain
\begin{align}
\min\big\{\,\,\widetilde{\!\!E}_{1}(R, D), \, E_{2}(R, D)\big\}
\;\; = \;\; & \min\Big\{\,\,\widetilde{\!\!E}_{1}(R, D), \, \underbrace{\min\big\{E_{1}(R, D), \, E_{2}(R, D)\big\}}_{\triangleq \, \,\,\widetilde{\!\!E\mathstrut}_{2}(R, D)}\Big\}
\nonumber \\
= \;\; & \min\big\{\,\,\widetilde{\!\!E}_{1}(R, D), \, \,\,\widetilde{\!\!E}_{2}(R, D)\big\}.
\label{eqE2TildeDefined}
\end{align}
The next lemma serves to clarify the relationship between the new right argument in (\ref{eqNewRightTerm}) and $\,\,\widetilde{\!\!E}_{2}(R, D)$
defined in (\ref{eqE2TildeDefined}), as follows:
\begin{lemma} \label{lemma11}
{\em If $D \, > \,  D_{\min} \, = \, \min_{(x, \, y), \, \hat{x}} d\big((x, y), \hat{x}\big)$, then for $R\,>\,0$}
\begin{displaymath}
\lim_{\epsilon\,\rightarrow\,0}\;
\min\big\{E_{1}(R \, - \, \epsilon, \, D), \; E_{2}(R \, + \, \epsilon, \, D)\big\}
\; = \; \,\,\widetilde{\!\!E}_{2}(R, D).
\end{displaymath}
\end{lemma}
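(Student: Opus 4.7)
\noindent\textit{Proof plan.} The plan is to rewrite both sides in a common ``soft'' form, dispose of one direction by a free-lunch inequality, and handle the other by a case analysis on the position of the $\widetilde{E}_{2}$-minimizer relative to the boundary $\{D(T\circ W\,\|\,T\times Q)=R\}$, treating the degenerate boundary case by a two-stage perturbation powered by the strict inequalities $D>D_{\min}$ and $R>0$.

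First I would set $F(T,W):=D(T\,\|\,Q\circ P)+\bigl|D(T\circ W\,\|\,T\times Q)-R\bigr|^{+}$ and, by splitting $\widetilde{E}_{2}(R,D)=\min\{E_{1}(R,D),E_{2}(R,D)\}$ according to whether $D(T\circ W\,\|\,T\times Q)\le R$ or $\ge R$, identify $\widetilde{E}_{2}(R,D)=\min_{d(T\circ W)\le D}F(T,W)$. The same bookkeeping shows that $\min\{E_{1}(R-\epsilon,D),E_{2}(R+\epsilon,D)\}$ equals the minimum of $F$ over the restricted feasible set $\{d(T\circ W)\le D,\ D(T\circ W\,\|\,T\times Q)\notin(R-\epsilon,R+\epsilon)\}$, with an additive $-\epsilon$ on the side $D(T\circ W\,\|\,T\times Q)\ge R+\epsilon$. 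Since the restricted feasible set is smaller and the restricted objective differs from $F$ by at most $\epsilon$, this immediately yields $\min\{E_{1}(R-\epsilon,D),E_{2}(R+\epsilon,D)\}\ge\widetilde{E}_{2}(R,D)-\epsilon$ and hence $\liminf_{\epsilon\to 0}\ge\widetilde{E}_{2}(R,D)$.

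For the matching $\limsup$ bound I would fix an optimal $(T^{*},W^{*})$ for $\widetilde{E}_{2}(R,D)$ and set $R^{*}:=D(T^{*}\circ W^{*}\,\|\,T^{*}\times Q)$. When $R^{*}\neq R$, the pair $(T^{*},W^{*})$ is itself feasible for one of $E_{1}(R-\epsilon,D),\ E_{2}(R+\epsilon,D)$ for every $\epsilon<|R-R^{*}|$ and attains $\widetilde{E}_{2}(R,D)$ up to a $-\epsilon$ shift on the $E_{2}$ side, so nothing remains. The only nontrivial case is $R^{*}=R$ for every minimizer. Using $D>D_{\min}$, I would pick $(x_{0},y_{0},\hat{x}_{0})$ with $d((x_{0},y_{0}),\hat{x}_{0})=D_{\min}$, let $\delta$ denote its point-mass on $\mathcal{X}\times\mathcal{Y}\times\hat{\mathcal{X}}$, and define a two-stage perturbation
\[
S_{\alpha}:=(1-\alpha)\,(T^{*}\!\circ\!W^{*})+\alpha\,\delta,\qquad S_{\alpha,\beta}:=(1-\beta)\,S_{\alpha}+\beta\,(T_{\alpha}\!\times\!Q),
\]
where $T_{\alpha}$ is the $(x,y)$-marginal of $S_{\alpha}$, preserved under stage (ii). Linearity of $d(\cdot)$ yields $d(S_{\alpha})\le(1-\alpha)D+\alpha D_{\min}<D$ for $\alpha>0$, creating a strict distortion slack, and convexity of $S\mapsto D(S\,\|\,[S]_{x,y}\!\times\!Q)$ yields $D(S_{\alpha,\beta}\,\|\,T_{\alpha}\!\times\!Q)\le(1-\beta)\,D(S_{\alpha}\,\|\,T_{\alpha}\!\times\!Q)$, which by continuity is arbitrarily close to $(1-\beta)R$, hence strictly below $R$ for $\beta>0$ and $\alpha$ sufficiently small.

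I would then choose $\alpha(\epsilon),\beta(\epsilon)\to 0$ so that $D(S_{\alpha,\beta}\,\|\,T_{\alpha}\!\times\!Q)\le R-\epsilon$ and $d(S_{\alpha,\beta})\le D$ hold simultaneously; continuity of $D(T_{\alpha}\,\|\,Q\circ P)$ in $\alpha$ then forces $D(T_{\alpha}\,\|\,Q\circ P)\to D(T^{*}\,\|\,Q\circ P)=\widetilde{E}_{2}(R,D)$, delivering the $\limsup$ bound. The hard part will be the coupled scaling of these two parameters: the stage-(i) slack $D-d(S_{\alpha})$ is of order $\alpha(D-D_{\min})$, the required divergence reduction forces $\beta\gtrsim\epsilon/R$, and the stage-(ii) distortion constraint in turn bounds $\beta$ by a quantity of order $\alpha(D-D_{\min})/\bigl(d(T^{*}\!\times\!Q)-D\bigr)_{+}$; one therefore takes $\alpha$ of order $\epsilon$, which is precisely where both strict inequalities $D>D_{\min}$ and $R>0$ enter, to guarantee that the two slacks are positive and the coupled scaling closes.
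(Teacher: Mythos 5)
Your overall architecture matches the paper's: the ``free-lunch'' inequality $\min\{E_{1}(R-\epsilon,D),E_{2}(R+\epsilon,D)\}\ge\widetilde{E}_{2}(R,D)-\epsilon$ disposes of one direction (the paper leaves this implicit), and the other direction is a case split on whether a $\widetilde{E}_{2}$-minimizer $(T^{*},W^{*})$ satisfies $D(T^{*}\circ W^{*}\,\|\,T^{*}\times Q)=R$, with a perturbation argument in the boundary case. Where you diverge is the perturbation itself. The paper keeps $T^{*}$ \emph{fixed} and perturbs only $W^{*}$: since $D>D_{\min}$ there is a distinct $W'$ with $d(T^{*}\circ W')\le D$, and strict convexity of $W\mapsto D(T^{*}\circ W\,\|\,T^{*}\times Q)$ along the segment $W_{\lambda}=\lambda W'+(1-\lambda)W^{*}$ moves the divergence off $R$ while staying arbitrarily close to it, with $D(T^{*}\,\|\,Q\circ P)$ untouched. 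You instead perturb the full joint law, including the marginal $T$, in two stages, and then must recover the objective value by continuity of $D(T_{\alpha}\,\|\,Q\circ P)$ and track a quantitative coupling between $\alpha$, $\beta$ and $\epsilon$. This is more work for the same conclusion, but the coupled-scaling bookkeeping you sketch does close.

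There is, however, one concrete gap in your construction: you place the point mass $\delta$ at an \emph{arbitrary} triple $(x_{0},y_{0},\hat{x}_{0})$ achieving $D_{\min}$, with no guarantee that $Q(x_{0})>0$ or $Q(\hat{x}_{0})>0$. The lemma makes no full-support assumption on $Q$ (and the paper elsewhere explicitly works with degenerate $Q$). If $Q(x_{0})=0$ then $D(T_{\alpha}\,\|\,Q\circ P)=+\infty$ for every $\alpha>0$; if $Q(\hat{x}_{0})=0$ then $D(S_{\alpha}\,\|\,T_{\alpha}\times Q)=+\infty$. Either way every continuity statement your argument rests on --- ``$D(S_{\alpha}\,\|\,T_{\alpha}\times Q)$ is arbitrarily close to $R$'' and ``$D(T_{\alpha}\,\|\,Q\circ P)\to D(T^{*}\,\|\,Q\circ P)$'' --- fails outright, and you cannot simply restrict to triples in $\mathrm{supp}(Q\circ P)\times\mathrm{supp}(Q)$, because $D>D_{\min}$ does not guarantee that any such triple has distortion strictly below $D$. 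The repair is essentially to retreat to the paper's perturbation: leave $T^{*}$ alone (so absolute continuity with respect to $Q\circ P$ is inherited from the finiteness of $\widetilde{E}_{2}(R,D)$) and perturb only $W^{*}$ inside $\{W:\,d(T^{*}\circ W)\le D,\ \mathrm{supp}(W)\subseteq\mathrm{supp}(Q)\}$.
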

\begin{proof}
Consider the definition of $\,\,\widetilde{\!\!E}_{2}(R, D) \, = \, \min\big\{E_{1}(R, D), \, E_{2}(R, D)\big\}\,$:
\begin{align}
&
\,\,\widetilde{\!\!E}_{2}(R, D) \;\; = \;\;
\min \left\{
\min_{\substack{T(x,\,y),\,W(\hat{x} \,|\, x,\,y):\\
\\
d(T\,\circ\, W) \; \leq \; D \\
\\
D(T\, \circ\, W \;\|\; T \,\times\, Q) \; \leq \; R
}}
\Big\{D(T \; \| \; Q \circ P)\Big\}
\right.,
\nonumber \\
& \;\;\;\;\;\;\;\;\;\;\;\;\;\;\;\;\;\;\;\;\;\;\;\;\;\;\;
\;\;\;\;\;
\left.
\min_{\substack{T(x,\,y),\,W(\hat{x} \,|\, x,\,y):\\
\\
d(T\,\circ\, W) \; \leq \; D\\
\\
D(T\, \circ\, W \;\|\; T \,\times\, Q) \; \geq \; R
}}
\Big\{D(T \; \| \; Q \circ P)\, + \,
D(T \circ W \;\|\; T \times Q) \, - \, R\Big\}
\right\}.
\nonumber
\end{align}
Let ${T\mathstrut}^{*}\circ {W\mathstrut}^{*}$ be the joint distribution,
achieving $\,\,\widetilde{\!\!E}_{2}(R, D)$.

If
$D\big({T\mathstrut}^{*} \circ {W\mathstrut}^{*} \;\|\; {T\mathstrut}^{*} \times Q\big) \, \neq \, R$,
then
for sufficiently small $\epsilon \, > \, 0$ the minimum is achieved
\begin{displaymath}
\min\big\{E_{1}(R \, - \, \epsilon, \, D), \; E_{2}(R \, + \, \epsilon, \, D)\big\}
\; = \;
\min\big\{E_{1}(R, D), \; E_{2}(R, D)\big\}
\; = \; \,\,\widetilde{\!\!E}_{2}(R, D),
\end{displaymath}
and the statement of the lemma holds.

If exactly $D\big({T\mathstrut}^{*} \circ {W\mathstrut}^{*} \;\|\; {T\mathstrut}^{*} \times Q\big) \, = \, R \, > \, 0$,
then, given the condition of the lemma $D \, > \,  D_{\min} \, = \, \min_{(x, \, y), \, \hat{x}} d\big((x, y), \hat{x}\big)$,
there exists $W' \, \neq \, {W\mathstrut}^{*}$, such that\footnote{Note, that for $\,D\, = \, D_{\min}\,$ a {\em distinct} $\;W' \, \neq \, {W\mathstrut}^{*}$, satisfying (\ref{eqD}), may not exist.}
\begin{equation} \label{eqD}
d\big({T\mathstrut}^{*}\,\circ\, W'\big) \; \leq \; D.
\end{equation}
Since $D(T \circ W \;\|\; T \times Q)$ is strictly convex in $W$,
there exists ${W}_{\lambda} \, = \, \lambda W' \, + \, (1 - \lambda){W\mathstrut}^{*}$
such that $D\big({T\mathstrut}^{*} \circ {W}_{\lambda} \;\|\; {T\mathstrut}^{*} \times Q\big)\, \neq \, R\;$
and is arbitrarily close to $R$. Note also that $d\big({T\mathstrut}^{*}\,\circ\, {W}_{\lambda}\big) \; \leq \; D$.
The convergence in the lemma follows.
\end{proof}

Since the limit of the right argument in (\ref{eqNewRightTerm}) can be taken in two steps as
\begin{align}
\lim_{\epsilon\,\rightarrow\,0}\;
& \min\big\{E_{1}(R \, - \, \epsilon, \, D \, - \, \epsilon), \; E_{2}(R \, + \, \epsilon, \, D \, - \, \epsilon)\big\}
\nonumber \\
\; = \;
\lim_{\epsilon_{2}\,\rightarrow\,0}\;
\lim_{\epsilon_{1}\,\rightarrow\,0}\;
& \min\big\{E_{1}(R \, - \, \epsilon_{1}, \, D \, - \, \epsilon_{2}), \; E_{2}(R \, + \, \epsilon_{1}, \, D \, - \, \epsilon_{2})\big\},
\nonumber
\end{align}
Lemma~\ref{lemma11} implies, that for any\footnote{For $D\,\leq\,D_{\min}$ both sides of (\ref{eqNewRightTermSimplified}) are trivially $+\infty$.} $D$ 
\begin{equation} \label{eqNewRightTermSimplified}
\lim_{\epsilon\,\rightarrow\,0}\;
\min\big\{E_{1}(R \, - \, \epsilon, \, D \, - \, \epsilon), \; E_{2}(R \, + \, \epsilon, \, D \, - \, \epsilon)\big\}
\; = \;
\lim_{\epsilon\,\rightarrow\,0}\;
\,\,\widetilde{\!\!E}_{2}(R, \, D \, - \, \epsilon).
\end{equation}
Thus, 
with the help of (\ref{eqNewRightTerm}) and (\ref{eqNewRightTermSimplified}),
the bound (\ref{eqEpsilonLim})
can be rewritten as
\begin{equation} \label{eqEpsilonLimSimplified}
\lim_{\epsilon\,\rightarrow\,0}\;
\min \big\{ \,\,\widetilde{\!\!E}_{1}(R \, - \, \epsilon, \, D),\;
E_{2}(R \, + \, \epsilon, \, D \, - \,\epsilon)\big\}
\; = \;
\lim_{\epsilon\,\rightarrow\,0}\;
\min \big\{ \,\,\widetilde{\!\!E}_{1}(R \, - \, \epsilon, \, D),\;
\,\,\widetilde{\!\!E}_{2}(R, \, D \, - \, \epsilon)\big\}.
\end{equation}

Thus far, we have obtained the two alternative expressions (\ref{eqE2TildeDefined}) and (\ref{eqEpsilonLimSimplified})
for the bounds (\ref{eqLowerBoundForney}) and (\ref{eqEpsilonLim}), respectively, replacing the second argument of $\min$
in each one of the bounds.

In what follows, first we obtain the explicit formula for the expression (\ref{eqE2TildeDefined}),
which is equivalent to the lower bound (\ref{eqLowerBoundForney}).
Then we conclude about the values of $(R, D)$ for which the bounds (\ref{eqE2TildeDefined}) and (\ref{eqEpsilonLimSimplified})
may not agree, which may occur only at the points of discontinuity of the bounds as functions of $(R, D)$.

It is convenient to express the new second argument $\,\,\widetilde{\!\!E}_{2}(R, D)$ with the help of $R(T, Q, D)$, defined in (\ref{eqRTQDDefinition}),
which is written here with substitutions (\ref{eqSubstitutions1}) and (\ref{eqSubstitutions3}):
\begin{equation} \label{eqRTQDChannel}
R(T, Q, D) \;\; \triangleq \;\; \min_{W(\hat{x} \,|\, x, \, y): \;\; d(T\,\circ\, W) \; \leq \; D} D(T \circ W \; \| \; T \times Q).
\end{equation}
Using $R(T, Q, D)$, we can write the following two identities:
\begin{align}
& \,\,\widetilde{\!\!E}_{2}(R, D) \;\; = \;\;
\min\big\{E_{1}(R, D),\, E_{2}(R, D)\big\} \;\; =
\nonumber \\
&
\min \left\{
\min_{\substack{T(x,\,y),\,W(\hat{x} \,|\, x,\,y):\\
\\
d(T\,\circ\, W) \; \leq \; D \\
\\
D(T\, \circ\, W \;\|\; T \,\times\, Q) \; \leq \; R
}}
\Big\{D(T \; \| \; Q \circ P)\Big\}
\right.,
\nonumber \\
& \;\;\;\;\;\;\;\;\;\,
\left.
\min_{\substack{T(x,\,y),\,W(\hat{x} \,|\, x,\,y):\\
\\
d(T\,\circ\, W) \; \leq \; D\\
\\
D(T\, \circ\, W \;\|\; T \,\times\, Q) \; \geq \; R
}}
\Big\{D(T \; \| \; Q \circ P)\, + \,
D(T \circ W \;\|\; T \times Q) \, - \, R\Big\}
\right\} \;\; =
\nonumber
\end{align}
\begin{align}
&
\min \left\{
\min_{\substack{T(x,\,y),\,W(\hat{x} \,|\, x,\,y):\\
\\
d(T\,\circ\, W) \; \leq \; D \\
\\
D(T\, \circ\, W \;\|\; T \,\times\, Q) \; \leq \; R
}}
\Big\{D(T \; \| \; Q \circ P)\Big\}, \;\;
\min_{T(x,\,y):\;
R(T, Q, D) \; \geq \; R}
\Big\{D(T \; \| \; Q \circ P)\, + \,
R(T, Q, D) \, - \, R\Big\}
\right\},
\label{eqE1TildeE2}
\end{align}
\begin{equation} \label{eqE1Tilde}
E_{1}(R, D) \;\; = \;\;
\min_{\substack{T(x,\,y),\,W(\hat{x} \,|\, x,\,y):\\
\\
d(T\,\circ\, W) \; \leq \; D \\
\\
D(T\, \circ\, W \;\|\; T \,\times\, Q) \; \leq \; R
}}
\Big\{D(T \; \| \; Q \circ P)\Big\}
\;\; = \;\;
\min_{T(x,\,y):\;
R(T, Q, D) \; \leq \; R}
\Big\{D(T \; \| \; Q \circ P)\Big\}.
\end{equation}
Combining the two, we have
\begin{align}
& \,\,\widetilde{\!\!E}_{2}(R, D) \;\; =
\nonumber \\
&
\min \left\{
\min_{T(x,\,y):\;
R(T, Q, D) \; \leq \; R}
\Big\{D(T \; \| \; Q \circ P)\Big\}, \;\;
\min_{T(x,\,y):\;
R(T, Q, D) \; \geq \; R}
\Big\{D(T \; \| \; Q \circ P)\, + \,
R(T, Q, D) \, - \, R\Big\}
\right\}
\nonumber \\
&
\;\;\;\;\;\;\;\;\;\;\;\;\;\;\;\;
= \;\;
\min_{T(x, \, y)} \; \Big\{ D(T \; \| \; Q \circ P) \;\; + \;\; {\big| R(T, Q, D) \; - \; R   \big|\mathstrut}_{}^{+} \Big\}.
\label{eqE2Tilde}
\end{align}
Observe, that this is exactly the same as (\ref{eqErrorD}), the channel decoding error exponent for the decoding error event defined in (\ref{eqErrorEvent}).
Consequently, (\ref{eqE2Tilde}) also equals (\ref{eqDecErrorArbD}). Recall, that (\ref{eqDecErrorArbD}) is equivalent to (\ref{eqErrorD}),
and derives from (\ref{eqErrorD}) by exactly the same derivation as (\ref{eqES}) from (\ref{eqSuccess}), with substitutions (\ref{eqSubstitutions1})-(\ref{eqSubstitutions3}).

The {\em first argument} $\,\,\widetilde{\!\!E}_{1}(R, D)$ in the minimum (\ref{eqE2TildeDefined}) can also be expressed alternatively,
with the help of a similar function, a ``coupled'' version of $R(T, Q, D)$, defined as
\begin{equation} \label{eqRTQR}
{R\mathstrut}^{c}(T, Q, D) \;\; \triangleq \;\;
\min_{W(\hat{x} \,|\, x, \, y): \;\; d(T\,\circ\, W) \, + \, D(T \,\circ\, W \; \| \; T\, \times\, Q) \; \leq \; D} D(T \circ W \; \| \; T \times Q).
\end{equation}
This definition gives
\begin{align}
\,\,\widetilde{\!\!E}_{1}(R, D) \;\; & = \;\;
\min_{\substack{T(x,\,y),\,W(\hat{x} \,|\, x,\,y):\\
\\
d(T\,\circ\, W)\,+\,D(T\, \circ\, W \;\|\; T \,\times\, Q)\;\leq\; D \, + \, R\\
\\
D(T\, \circ\, W \;\|\; T \,\times\, Q)\;\leq\;R
}}
\Big\{D(T \; \| \; Q \circ P)\Big\}
\;\; = \;\;
\min_{\substack{T(x,\,y):\\
{R\mathstrut}^{c}(T, \, Q, \, D \, + \, R) \; \leq \; R}}
\Big\{D(T \; \| \; Q \circ P)\Big\}.
\label{eqE1RD}
\end{align}
An explicit formula for ${R\mathstrut}^{c}(T, Q, D)$ is given by
\begin{lemma} \label{lemma12}
\begin{equation} \label{eqRTQRs}
{R\mathstrut}^{c}(T, Q, D) \; = \; \sup_{\mu\,\geq\,0} \; \Bigg\{-\sum_{x, \, y}T(x, y) \ln {\bigg[\sum_{\hat{x}}Q(\hat{x})e^{-\frac{\mu}{1\,+\,\mu}\big[d((x,\,y), \, \hat{x}) \, - \, D \big]}\bigg]\mathstrut}^{1\,+\,\mu}\Bigg\}.
\end{equation}
\end{lemma}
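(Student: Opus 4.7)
The plan is to mimic the proof of Lemma~\ref{lemma1}, using Lagrangian duality for the single-constraint minimization that defines ${R\mathstrut}^{c}(T, Q, D)$. First I would introduce a Lagrange multiplier $\mu \geq 0$ for the combined constraint $d(T \circ W) + D(T \circ W \,\|\, T \times Q) \leq D$ and write
\begin{align}
{R\mathstrut}^{c}(T, Q, D) \;\; & = \;\;
\min_{W(\hat{x} \,|\, x, \,y)} \; \sup_{\mu \, \geq \, 0} \; \Big\{ D(T \circ W \,\|\, T \times Q) \, + \, \mu\big[d(T \circ W) \, + \, D(T \circ W \,\|\, T \times Q) \, - \, D\big]\Big\} \nonumber \\
& = \;\; \min_{W(\hat{x} \,|\, x, \,y)} \; \sup_{\mu \, \geq \, 0} \; \Big\{ (1 + \mu) D(T \circ W \,\|\, T \times Q) \, + \, \mu\, d(T \circ W) \, - \, \mu D\Big\}.
\nonumber
\end{align}
Exactly as in (\ref{eqWs}), the objective is convex ($\cup$) in $W$ and linear (hence concave) in $\mu$, so the minimax theorem permits interchanging $\min$ and $\sup$.

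Next I would solve the inner minimization in closed form. After the interchange, factor $(1+\mu)$ out of the terms depending on $W$ by absorbing $\mu\, d$ as $(1+\mu)\cdot\tfrac{\mu}{1+\mu}d$, so the inner problem becomes, for each fixed $\mu$,
\begin{displaymath}
\min_{W(\hat{x} \,|\, x, \,y)} \; \bigg\{(1 + \mu)\sum_{x,\,y,\,\hat{x}} T(x,y)\,W(\hat{x} \,|\, x, y) \ln \frac{W(\hat{x} \,|\, x, y)}{Q(\hat{x})\,e^{-\frac{\mu}{1\,+\,\mu}\, d((x,y),\hat{x})}}\bigg\} \; - \; \mu D.
\end{displaymath}
For each $(x,y)$ this is the familiar ``tilted KL'' minimization: letting $M(x,y) \triangleq \sum_{\hat{x}} Q(\hat{x}) e^{-\frac{\mu}{1+\mu} d((x,y),\hat{x})}$, the minimum is attained at $W^{*}(\hat{x} \,|\, x, y) \propto Q(\hat{x}) e^{-\frac{\mu}{1+\mu} d((x,y),\hat{x})}$ and equals $-\ln M(x,y)$.

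Substituting gives
\begin{displaymath}
{R\mathstrut}^{c}(T, Q, D) \;\; = \;\; \sup_{\mu \, \geq \, 0}\;\Big\{-(1+\mu)\sum_{x,\,y} T(x,y)\ln M(x,y) \; - \; \mu D\Big\},
\end{displaymath}
and the final step is the algebraic identity
\begin{displaymath}
-(1+\mu)\ln M(x,y) - \mu D \;\; = \;\; -\ln\bigg[\sum_{\hat{x}}Q(\hat{x})\,e^{-\frac{\mu}{1\,+\,\mu}[d((x,y),\hat{x}) - D]}\bigg]^{1\,+\,\mu},
\end{displaymath}
obtained from $(e^{\frac{\mu}{1+\mu}D})^{1+\mu} = e^{\mu D}$. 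Summing against $T(x,y)$ yields precisely (\ref{eqRTQRs}).

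The only real obstacle is the degenerate case where the feasibility set $\{W: d(T\circ W) + D(T\circ W \,\|\, T\times Q) \leq D\}$ is empty, so that ${R\mathstrut}^{c}(T, Q, D) = +\infty$ by convention. I would handle this exactly as in the footnote to (\ref{eqRTQDminsup}): in that case the inner $\sup_{\mu\geq0}$ is $+\infty$ for every $W$, hence the minimax identity still holds with both sides equal to $+\infty$, and one can verify the right-hand side of (\ref{eqRTQRs}) also diverges as $\mu \to \infty$ because the exponent $(1+\mu)\ln M(x,y) + \mu D$ becomes arbitrarily negative for the relevant $(x,y)$.
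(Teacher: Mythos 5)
Your proposal is correct and follows essentially the same route as the paper: Lagrangian relaxation of the single combined constraint with multiplier $\mu\geq 0$, a min--sup interchange justified by convexity in $W$ and linearity in $\mu$ (the paper's primary justification is that $R^{c}(T,Q,D)$ is convex in $D$ so the Lagrangian dual is its own lower convex envelope, with the minimax theorem offered as the alternative you chose), closed-form solution of the inner minimization by the tilted distribution $W^{*}\propto Q(\hat{x})e^{-\frac{\mu}{1+\mu}d}$, and the same final algebraic absorption of $e^{\mu D}$. The empty-feasibility case is handled exactly as in the paper's footnote.
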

\begin{proof}
\begin{align}
& {R\mathstrut}^{c}(T, Q, D) \;\; \triangleq \;\; \min_{W(\hat{x} \,|\, x, \, y): \;\; d(T\,\circ\, W) \, + \, D(T \,\circ\, W \; \| \; T \,\times\, Q) \; \leq \; D} D(T \circ W \; \| \; T \times Q)
\nonumber
\end{align}
\begin{align}
& = \footnotemark \;\;
\min_{W(\hat{x} \,|\, x, \, y)} \;\sup_{\mu \,\geq \,0} \;\big\{ D(T \circ W \; \| \; T \times Q) \; + \; \mu\big[d(T\circ W) \, + \, D(T \circ W \; \| \; T \times Q) \, - \, D\big]\big\}
\nonumber \\
& \overset{(*)}{=} \;\;\;\;\, \sup_{\mu \,\geq \,0} \; \min_{W(\hat{x} \,|\, x, \, y)} \;\big\{ D(T \circ W \; \| \; T \times Q) \; + \; \mu\big[d(T\circ W) \, + \, D(T \circ W \; \| \; T \times Q) \, - \, D\big]\big\}
\label{eqRTQREnvelope} \\
& = \;\;\;\;\; \sup_{\mu \,\geq \,0} \; \min_{W(\hat{x} \,|\, x, \, y)} \;\Bigg\{ (1+\mu)\sum_{x,\,y, \, \hat{x}}T(x, y)W(\hat{x} \,|\, x, y)\ln \frac{W(\hat{x} \,|\, x, y)}{Q(\hat{x})}
\nonumber \\
&
\;\;\;\;\;\;\;\;\;\;\;\;\;\;\;\;\;\;\;\;\;\;\;\;\;\;\;\;\;\;\;\;\;\;\,
 + \; \mu\bigg[\sum_{x,\,y, \,\hat{x}}T(x, y)W(\hat{x} \,|\, x, y)d\big((x, y), \hat{x}\big) \; - \; D\bigg]\Bigg\}
\nonumber \\
& = \;\;\;\;\; \sup_{\mu \,\geq \,0} \;
\min_{W(\hat{x} \,|\, x, \, y)} \; \Bigg\{ (1 + \mu)\sum_{x,\,y, \,\hat{x}}T(x, y)W(\hat{x} \,|\, x, y)\ln \frac{W(\hat{x} \,|\, x, y)}{Q(\hat{x})e^{-\frac{\mu}{1\,+\,\mu}d((x,\,y), \, \hat{x})}} \; - \; \mu D \Bigg\}
\nonumber \\
& = \;\;\;\;\; \sup_{\mu\,\geq\,0} \; \Bigg\{-\sum_{x, \, y}T(x, y) \ln {\bigg[\sum_{\hat{x}}Q(\hat{x})e^{-\frac{\mu}{1\,+\,\mu}d((x,\,y), \, \hat{x})}\bigg]\mathstrut}^{1\,+\,\mu} \; - \; \mu D\Bigg\},
\label{eqRTQRExpression}
\end{align}
\footnotetext{This also includes the case when the set $\{W(\hat{x} \,|\, x, \,y): \; d(T\,\circ\, W) \, + \, D(T \,\circ\, W \; \| \; T \times Q) \; \leq \; D\}$
is empty, then ${R\mathstrut}^{c}(T, Q, D)\,=\,+\infty$.}
where the equality ($*$) holds because ${R\mathstrut}^{c}(T, Q, D)$ is a convex ($\cup$) function of $D$ (checked directly) and (\ref{eqRTQREnvelope})
is its lower convex envelope, therefore they must coincide.
Alternatively, ($*$) follows by the minimax theorem, since the objective function is convex ($\cup$) in $W(\hat{x} \,|\, x, y)$
and concave (linear) in $\mu$.
\end{proof}

The last expression (\ref{eqRTQRExpression}) helps to recognize the following property of ${R\mathstrut}^{c}(T, \, Q, \, D)$:
\begin{lemma} \label{lemma13}
{\em ${R\mathstrut}^{c}(T, Q, D)$ is a convex ($\cup$) function of the pair $\,(T, D)$.}
\end{lemma}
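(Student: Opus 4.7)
The plan is to mimic verbatim the argument used for Lemma~\ref{lemma2}, replacing the explicit formula (\ref{eqRTQD}) for $R(T,Q,D)$ with the explicit formula (\ref{eqRTQRExpression}) just derived in Lemma~\ref{lemma12} for ${R\mathstrut}^{c}(T,Q,D)$. The crucial structural observation is that in the representation
\begin{displaymath}
{R\mathstrut}^{c}(T, Q, D) \; = \; \sup_{\mu\,\geq\,0} \; \Bigg\{-\sum_{x, \, y}T(x, y) \ln {\bigg[\sum_{\hat{x}}Q(\hat{x})e^{-\frac{\mu}{1\,+\,\mu}d((x,\,y), \, \hat{x})}\bigg]\mathstrut}^{1\,+\,\mu} \; - \; \mu D\Bigg\},
\end{displaymath}
the bracketed objective, for each fixed $\mu\geq 0$, is a linear (hence affine) function of the pair $(T,D)$: linear in $T$ because it is of the form $\sum_{x,y}T(x,y)\,c_{\mu}(x,y)$ for constants $c_{\mu}(x,y)$ depending only on $\mu$, $Q$, and $d$, and linear in $D$ through the isolated term $-\mu D$. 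Crucially, unlike in formula (\ref{eqRTQRs}), the dependence on $D$ has been pulled out of the nonlinear exponent/logarithm.

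With this in hand, I would write the three-line chain exactly as in Lemma~\ref{lemma2}: substitute $(\lambda T + (1-\lambda)\widetilde{T},\, \lambda D + (1-\lambda)\widetilde{D})$ into the above formula, split the affine integrand inside $\sup_{\mu\geq 0}$ as $\lambda(\cdot_{T,D}) + (1-\lambda)(\cdot_{\widetilde{T},\widetilde{D}})$, and then invoke the elementary inequality
\begin{displaymath}
\sup_{\mu\,\geq\,0}\,\big[\lambda\,a(\mu) + (1-\lambda)\,b(\mu)\big] \;\; \leq \;\; \lambda\sup_{\mu\,\geq\,0}\,a(\mu) \; + \; (1-\lambda)\sup_{\mu\,\geq\,0}\,b(\mu)
\end{displaymath}
to conclude
\begin{displaymath}
{R\mathstrut}^{c}\big(\lambda T + (1-\lambda)\widetilde{T},\, Q,\, \lambda D + (1-\lambda)\widetilde{D}\big) \;\; \leq \;\; \lambda\,{R\mathstrut}^{c}(T, Q, D) \; + \; (1-\lambda)\,{R\mathstrut}^{c}(\widetilde{T}, Q, \widetilde{D}).
\end{displaymath}

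There is no serious obstacle: the whole point of Lemma~\ref{lemma12} was to exhibit ${R\mathstrut}^{c}(\cdot,Q,\cdot)$ as a pointwise supremum of jointly affine functions of $(T,D)$, and such suprema are automatically convex. The only subtlety worth flagging is that one must use the rearranged form (\ref{eqRTQRExpression}), not (\ref{eqRTQRs}), since in the latter $D$ still sits inside the exponent and inside a logarithm, so joint affineness in $(T,D)$ is no longer manifest.
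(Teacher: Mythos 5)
Your proof is correct and is exactly the argument the paper intends: the paper's own proof of Lemma~\ref{lemma13} consists of the single line ``The same as of Lemma~\ref{lemma2},'' i.e.\ precisely the supremum-of-affine-functions argument you spell out using the explicit formula from Lemma~\ref{lemma12}. (Minor aside: the two forms (\ref{eqRTQRs}) and (\ref{eqRTQRExpression}) are equivalent, since pulling $e^{\frac{\mu}{1+\mu}D}$ out of the bracket in (\ref{eqRTQRs}) recovers the isolated $-\mu D$ term, so affineness in $D$ holds in either form — your flagged ``subtlety'' is cosmetic, not substantive.)
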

\begin{proof}
The same as of Lemma~\ref{lemma2}.
\end{proof}

This, in turn, results in convexity of $\,\,\widetilde{\!\!E}_{1}(R, D)$:
\begin{lemma} \label{lemma14}
{\em $\,\,\widetilde{\!\!E}_{1}(R, D)$ is a convex ($\cup$) function of $\,(R, D)$.}
\end{lemma}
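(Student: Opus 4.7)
The plan is to exploit the alternative formulation of $\,\,\widetilde{\!\!E}_1(R, D)$ given in (\ref{eqE1RD}), namely
\begin{displaymath}
\,\,\widetilde{\!\!E}_{1}(R, D) \;\; = \;\;
\min_{T(x,\,y):\;\;{R\mathstrut}^{c}(T, \, Q, \, D \, + \, R) \; \leq \; R}
\Big\{D(T \; \| \; Q \circ P)\Big\},
\end{displaymath}
together with the joint convexity of ${R\mathstrut}^{c}(T, Q, D)$ in $(T, D)$ established in Lemma~\ref{lemma13}, and the well-known joint convexity of $D(T \,\|\, Q \circ P)$ in $T$.

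Pick two points $(R_{1}, D_{1})$ and $(R_{2}, D_{2})$ and $\lambda \in [0, 1]$. Let $T_{i}$ denote a minimizer for $\,\,\widetilde{\!\!E}_{1}(R_{i}, D_{i})$, for $i = 1, 2$ (if $\,\,\widetilde{\!\!E}_{1}(R_{i}, D_{i}) = +\infty$ for some $i$, the convexity inequality is trivial). Set $T_{\lambda} \,\triangleq\, \lambda T_{1} + (1 - \lambda) T_{2}$, $R_{\lambda} \,\triangleq\, \lambda R_{1} + (1 - \lambda) R_{2}$, $D_{\lambda} \,\triangleq\, \lambda D_{1} + (1 - \lambda) D_{2}$. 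Since $D_{\lambda} + R_{\lambda} = \lambda (D_{1} + R_{1}) + (1 - \lambda)(D_{2} + R_{2})$, Lemma~\ref{lemma13} yields
\begin{displaymath}
{R\mathstrut}^{c}(T_{\lambda}, \, Q, \, D_{\lambda} + R_{\lambda}) \;\; \leq \;\; \lambda \, {R\mathstrut}^{c}(T_{1}, \, Q, \, D_{1} + R_{1}) \; + \; (1 - \lambda)\, {R\mathstrut}^{c}(T_{2}, \, Q, \, D_{2} + R_{2}) \;\; \leq \;\; \lambda R_{1} + (1 - \lambda) R_{2} \;\; = \;\; R_{\lambda},
\end{displaymath}
so $T_{\lambda}$ is feasible for the minimization defining $\,\,\widetilde{\!\!E}_{1}(R_{\lambda}, D_{\lambda})$.

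The conclusion follows by convexity of $D(\,\cdot\, \| \, Q \circ P)$:
\begin{displaymath}
\,\,\widetilde{\!\!E}_{1}(R_{\lambda}, D_{\lambda}) \;\; \leq \;\; D(T_{\lambda} \, \| \, Q \circ P) \;\; \leq \;\; \lambda\, D(T_{1} \, \| \, Q \circ P) \; + \; (1 - \lambda)\, D(T_{2} \, \| \, Q \circ P) \;\; = \;\; \lambda\, \,\,\widetilde{\!\!E}_{1}(R_{1}, D_{1}) \; + \; (1 - \lambda)\, \,\,\widetilde{\!\!E}_{1}(R_{2}, D_{2}).
\end{displaymath}
The only subtlety is that the joint convexity of ${R\mathstrut}^{c}$ in $(T, D)$ must be transported to joint convexity of the feasibility constraint in $(T, R, D)$, which works precisely because the argument $D + R$ is affine in $(R, D)$; no real obstacle arises.
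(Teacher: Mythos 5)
Your proof is correct and follows essentially the same route as the paper's: both use the reformulation (\ref{eqE1RD}), apply Lemma~\ref{lemma13} to show the convex combination of the two minimizers is feasible at the intermediate point (exploiting that $D+R$ is affine in $(R,D)$), and finish with convexity of the divergence in $T$. The paper merely presents the chain of inequalities in the opposite direction, starting from $\lambda\,\,\widetilde{\!\!E}_{1}(R_1,D_1)+(1-\lambda)\,\,\widetilde{\!\!E}_{1}(R_2,D_2)$ and descending to $\,\,\widetilde{\!\!E}_{1}(R_\lambda,D_\lambda)$.
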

\begin{proof}
Using (\ref{eqE1RD}):
\begin{align}
& \lambda \,\,\widetilde{\!\!E}_{1}({R\mathstrut}_{1}, {D\mathstrut}_{1}) \, + \, (1 - \lambda) \,\,\widetilde{\!\!E}_{1}({R\mathstrut}_{2}, {D\mathstrut}_{2})
\nonumber \\
& = \;\;
\lambda\;\cdot\min_{\substack{T(x,\,y):\\
\\
{R\mathstrut}^{c}(T, \, Q, \, {D\mathstrut}_{1} \, + \, {R\mathstrut}_{1}) \; \leq \; {R\mathstrut}_{1}}}
\Big\{D(T \; \| \; Q \circ P)\Big\}
\; + \;
(1 - \lambda)\;\cdot\min_{\substack{T(x,\,y):\\
\\
{R\mathstrut}^{c}(T, \, Q, \, {D\mathstrut}_{2} \, + \, {R\mathstrut}_{2}) \; \leq \; {R\mathstrut}_{2}}}
\Big\{D(T \; \| \; Q \circ P)\Big\}
\nonumber \\
& = \;\;
\lambda D\big({T\mathstrut}_{1}^{*} \; \| \; Q \circ P\big) \, + \, (1 - \lambda) D\big({T\mathstrut}_{2}^{*} \; \| \; Q \circ P\big)
\nonumber \\
& \geq \;\;
D\big(\lambda{T\mathstrut}_{1}^{*} \, + \, (1 - \lambda){T\mathstrut}_{2}^{*} \; \| \; Q \circ P\big)
\nonumber \\
& \geq \;\;
\min_{\substack{T(x,\,y):\\
{R\mathstrut}^{c}\big(T, \; Q, \; \lambda({D\mathstrut}_{1} + {R\mathstrut}_{1}) \, + \, (1 - \lambda)({D\mathstrut}_{2} \, + \, {R\mathstrut}_{2})\big) \;\; \leq \\
{R\mathstrut}^{c}\big(\lambda{T\mathstrut}_{1}^{*} \, + \, (1 - \lambda){T\mathstrut}_{2}^{*}, \; Q, \; \lambda({D\mathstrut}_{1} + {R\mathstrut}_{1}) \, + \, (1 - \lambda)({D\mathstrut}_{2} \, + \, {R\mathstrut}_{2})\big)}}
\Big\{D(T \; \| \; Q \circ P)\Big\}
\nonumber
\end{align}
\begin{align}
& \overset{(*)}{\geq} \;\;
\;\;\,
\min_{\substack{T(x,\,y):\\
{R\mathstrut}^{c}\big(T, \; Q, \; \lambda({D\mathstrut}_{1} + {R\mathstrut}_{1}) \, + \, (1 - \lambda)({D\mathstrut}_{2} \, + \, {R\mathstrut}_{2})\big) \;\; \leq \\
\\
\lambda{R\mathstrut}^{c}({T\mathstrut}_{1}^{*}, \; Q, \; {D\mathstrut}_{1} \, + \, {R\mathstrut}_{1})
\, + \,
(1 - \lambda){R\mathstrut}^{c}({T\mathstrut}_{2}^{*}, \; Q, \; {D\mathstrut}_{2} \, + \, {R\mathstrut}_{2})
}}
\;\;\,
\Big\{D(T \; \| \; Q \circ P)\Big\}
\nonumber \\
& \geq \;\;
\;\;\;\;\;\;
\min_{\substack{T(x,\,y):\\
{R\mathstrut}^{c}\big(T, \; Q, \; \lambda({D\mathstrut}_{1} + {R\mathstrut}_{1}) \, + \, (1 - \lambda)({D\mathstrut}_{2} \, + \, {R\mathstrut}_{2})\big) \;\; \leq \\
\\
\lambda{R\mathstrut}_{1}
\, + \,
(1 - \lambda){R\mathstrut}_{2}
}}
\;\;\;\;\;\;
\Big\{D(T \; \| \; Q \circ P)\Big\}
\nonumber \\
& = \;\;
\,\,\widetilde{\!\!E}_{1}\big(\lambda{R\mathstrut}_{1} \, + \, (1 - \lambda){R\mathstrut}_{2}, \, \lambda{D\mathstrut}_{1} \, + \, (1 - \lambda){D\mathstrut}_{2}\big),
\nonumber
\end{align}
where ($*$) follows by Lemma~\ref{lemma13}.
\end{proof}

Lemma~\ref{lemma14} helps to prove the following Lagrangian duality:
\begin{lemma} \label{lemma15}
\begin{align}
\,\,\widetilde{\!\!E}_{1}(R, D) \; & = \;
\sup_{\alpha\,\geq\,0,\;\beta\,\geq\,0}\;
\min_{T(x,\,y),\,W(\hat{x} \,|\, x,\,y)}
\Big\{D(T \; \| \; Q \circ P)
\nonumber \\
& \;\;\;\;\;\;\;\;\;\;\;\;\;\;\;\;\;\;\;\;\;\;\;\;\;\;\;\;\;\;\;\;\;\;\;\;\;\;\;\;\;\;\;\;
+\,\alpha\big[D(T\, \circ\, W \;\|\; T \,\times\, Q)\,+\, d(T\,\circ\, W)\, - \, D \, - \, R\big]
\nonumber \\
& \;\;\;\;\;\;\;\;\;\;\;\;\;\;\;\;\;\;\;\;\;\;\;\;\;\;\;\;\;\;\;\;\;\;\;\;\;\;\;\;\;\;\;\;
+\,\beta\big[D(T\, \circ\, W \;\|\; T \,\times\, Q) \, - \, R\big]
\Big\}.
\label{eqE1Envelope}
\end{align}
\end{lemma}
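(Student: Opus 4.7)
The lemma asserts strong Lagrangian duality for the constrained minimization defining $\,\,\widetilde{\!\!E}_{1}(R, D)$ in (\ref{eqE1Tilde}). My plan is to establish this in two stages: first the easy weak duality (RHS $\leq$ LHS), and then the converse inequality by exploiting the convexity result just proven in Lemma~\ref{lemma14}.

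For weak duality, for any $\alpha, \beta \geq 0$ and any $(T, W)$ satisfying the two constraints $h_1 \triangleq D(T\circ W \,\|\, T\times Q) + d(T\circ W) - D - R \leq 0$ and $h_2 \triangleq D(T\circ W \,\|\, T\times Q) - R \leq 0$, both penalty terms $\alpha h_1$ and $\beta h_2$ are non-positive. Hence the unconstrained minimum over $(T, W)$ of the augmented objective is at most $D(T \,\|\, Q\circ P)$ for every feasible $(T, W)$, and consequently at most $\,\,\widetilde{\!\!E}_{1}(R, D)$; taking the sup over $\alpha, \beta \geq 0$ preserves this bound.

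For the reverse inequality, I would invoke the Fenchel--Moreau principle applied to the convex function $(R, D) \mapsto \,\,\widetilde{\!\!E}_{1}(R, D)$ guaranteed by Lemma~\ref{lemma14}. Writing
\[F(\alpha, \beta) \,\triangleq\, \min_{T, W}\big\{D(T \,\|\, Q\circ P) + (\alpha+\beta)\,D(T\circ W \,\|\,T\times Q) + \alpha\, d(T\circ W)\big\},\]
the essential identity to verify is
\[F(\alpha, \beta) \;=\; \inf_{R, D}\,\big\{\,\,\widetilde{\!\!E}_{1}(R, D) + \alpha D + (\alpha + \beta) R\big\},\]
which I would prove by swapping the outer infimum over $(R, D)$ with the inner one over $(T, W)$: for each fixed $(T, W)$, the constraints are linear in $(R, D)$, so the minimum of $\alpha D + (\alpha + \beta) R$ subject to $R \geq D(T \circ W \,\|\, T\times Q)$ and $R + D \geq D(T\circ W \,\|\, T\times Q) + d(T\circ W)$ is attained at $R = D(T\circ W \,\|\, T\times Q)$, $D = d(T\circ W)$, producing exactly the $F$-integrand. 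The biconjugate formula $\,\,\widetilde{\!\!E}_{1}(R, D) = \sup_{\alpha,\beta\geq 0}\{F(\alpha,\beta) - \alpha D - (\alpha + \beta) R\}$ then follows from convexity (Lemma~\ref{lemma14}). The restriction to $\alpha, \beta \geq 0$ is justified because $\,\,\widetilde{\!\!E}_{1}$ is non-increasing in both arguments --- raising $R$ by $\epsilon$ relaxes both constraints by $\epsilon$, while raising $D$ by $\epsilon$ relaxes only the first --- so any supporting subgradient $(g_R, g_D)$ of $\,\,\widetilde{\!\!E}_{1}$ must lie in the cone $\{g_R \leq g_D \leq 0\}$, matching precisely slopes $(-\alpha - \beta, -\alpha)$ with $\alpha, \beta \geq 0$.

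The main obstacle will be handling the boundary of the effective domain of $\,\,\widetilde{\!\!E}_{1}$, in particular where the feasible set of (\ref{eqE1Tilde}) is empty and the value is $+\infty$, or at possible discontinuity points. At interior points with finite value, Slater's condition supplies the regularity needed for strong duality and the plan above works directly; at the boundary one must replace $\,\,\widetilde{\!\!E}_{1}$ by its lower-semicontinuous envelope and verify that this closure also equals the RHS, which is automatically lower-semicontinuous as a supremum of affine functions of $(R, D)$.
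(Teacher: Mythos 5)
Your proof is correct and is essentially the paper's own Proof A: weak duality, followed by identifying the right-hand side as the lower convex envelope (biconjugate) of $\,\,\widetilde{\!\!E}_{1}(R, D)$, invoking the convexity of Lemma~\ref{lemma14} to close the gap, and using the same two monotonicity observations (nonincreasing in $D$ at fixed $R$, and nonincreasing in $R$ at fixed $D+R$) to confine the dual variables to $\alpha, \beta \geq 0$. Your Fenchel-conjugate packaging --- computing $F(\alpha,\beta)$ by exchanging the infima over $(R,D)$ and $(T,W)$ and noting the inner minimum is attained at $R = D(T\circ W \,\|\, T\times Q)$, $D = d(T\circ W)$ --- is a clean equivalent of the paper's "each plane touches the graph at an active point $(R^{*}, D^{*})$" step, so no substantive difference remains (the paper also gives an alternative Proof B via two applications of the minimax theorem, which you do not need).
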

{\em Proof A:}
This variant of the proof equates a convex function with its lower convex envelope.
\begin{align}
\,\,\widetilde{\!\!E}_{1}(R, D) \; &
\;\;\;\;\;
= \;\;
\;\;\;\;
\min_{\substack{T(x,\,y),\,W(\hat{x} \,|\, x,\,y):\\
\\
d(T\,\circ\, W)\,+\,D(T\, \circ\, W \;\|\; T \,\times\, Q)\;\leq\; D \, + \, R\\
\\
D(T\, \circ\, W \;\|\; T \,\times\, Q)\;\leq\;R
}}
\Big\{D(T \; \| \; Q \circ P)\Big\}
\nonumber \\
& \overset{\alpha\,\geq\,0,\;\beta\,\geq\,0}{\geq} \;
\min_{\substack{T(x,\,y),\,W(\hat{x} \,|\, x,\,y):\\
\\
d(T\,\circ\, W)\,+\,D(T\, \circ\, W \;\|\; T \,\times\, Q)\;\leq\; D \, + \, R\\
\\
D(T\, \circ\, W \;\|\; T \,\times\, Q)\;\leq\;R
}}
\Big\{D(T \; \| \; Q \circ P)
\nonumber \\
& \;\;\;\;\;\;\;\;\;\;\;\;\;\;\;\;\;\;\;\;\;\;\;\;\;\;\;\;\;\;\;\;\;\;\;\;\;\;\;\;\;\;\;\;
+\,\alpha\big[D(T\, \circ\, W \;\|\; T \,\times\, Q)\,+\, d(T\,\circ\, W)\, - \, D \, - \, R\big]
\nonumber \\
& \;\;\;\;\;\;\;\;\;\;\;\;\;\;\;\;\;\;\;\;\;\;\;\;\;\;\;\;\;\;\;\;\;\;\;\;\;\;\;\;\;\;\;\;
+\,\beta\big[D(T\, \circ\, W \;\|\; T \,\times\, Q) \, - \, R\big]
\Big\}
\nonumber \\
& \overset{\alpha\,\geq\,0,\;\beta\,\geq\,0}{\geq} \;\;
\;\;\;\;\;\;\;\;\;\;\,
\min_{T(x,\,y),\,W(\hat{x} \,|\, x,\,y)}
\;\;\;\;\;\;\;\;\;\;\;\,
\Big\{D(T \; \| \; Q \circ P)
\nonumber \\
& \;\;\;\;\;\;\;\;\;\;\;\;\;\;\;\;\;\;\;\;\;\;\;\;\;\;\;\;\;\;\;\;\;\;\;\;\;\;\;\;\;\;\;\;
+\,\alpha\big[D(T\, \circ\, W \;\|\; T \,\times\, Q)\,+\,d(T\,\circ\, W) \, - \, D \, - \, R\big]
\nonumber \\
& \;\;\;\;\;\;\;\;\;\;\;\;\;\;\;\;\;\;\;\;\;\;\;\;\;\;\;\;\;\;\;\;\;\;\;\;\;\;\;\;\;\;\;\;
+\,\beta\big[D(T\, \circ\, W \;\|\; T \,\times\, Q) \, - \, R\big]
\Big\}.
\label{eqPlaneRD}
\end{align}
Let ${T\mathstrut}^{*} \circ {W\mathstrut}^{*}$ achieve the minimum in (\ref{eqPlaneRD}). Then there exist ${R\mathstrut}^{*} \geq\, 0$ and ${D\mathstrut}^{*}$, such that
\begin{align}
&
D\big({T\mathstrut}^{*}\, \circ\, {W\mathstrut}^{*} \;\|\; {T\mathstrut}^{*} \,\times\, Q\big)\,+\, d\big({T\mathstrut}^{*}\,\circ\, {W\mathstrut}^{*}\big)\, - \, {D\mathstrut}^{*} \, - \, {R\mathstrut}^{*} \;\; = \;\; 0,
\nonumber \\
&
D\big({T\mathstrut}^{*}\, \circ\, {W\mathstrut}^{*} \;\|\; {T\mathstrut}^{*} \,\times\, Q\big) \, - \, {R\mathstrut}^{*}  \;\; = \;\; 0
.
\nonumber
\end{align}
It follows, that the two-dimensional plane (\ref{eqPlaneRD}) touches $\,\,\widetilde{\!\!E}_{1}(R, D)$.
Since (\ref{eqPlaneRD}) is also a lower bound on $\,\,\widetilde{\!\!E}_{1}(R, D)$, we conclude, that (\ref{eqPlaneRD}) is a supporting plane of the surface $\,\,\widetilde{\!\!E}_{1}(R, D)$,
for each pair $\alpha\,\geq\,0$, $\beta\,\geq\,0$.

Now consider the other possible pairs $(\alpha, \beta)$, with negative $\alpha$
or $\beta$, or both. Observe by the definition,
that $\,\,\widetilde{\!\!E}_{1}(R, D)$
is a nonincreasing function of $R$ when the sum $D\,+\,R$ is kept constant.
Consequently, there does not exist a supporting plane for $\,\,\widetilde{\!\!E}_{1}(R, D)$
given by
\begin{equation} \label{eqSupportingPlane}
{E\mathstrut}_{0} \, - \, \alpha(D \,+\, R) \, - \, \beta R,
\end{equation}
with a negative $\beta$.
Similarly,
$\,\,\widetilde{\!\!E}_{1}(R, D)$ is a nonincreasing function of the sum $D\,+\,R$ when $R$ is kept constant.
Consequently, there does not exist a supporting plane for $\,\,\widetilde{\!\!E}_{1}(R, D)$
given by (\ref{eqSupportingPlane})
with a negative $\alpha$.

We conclude, that the supremum of the two-dimensional planes over $\alpha\,\geq\,0$, $\beta\,\geq\,0$
on the RHS of (\ref{eqE1Envelope}) is the lower convex envelope of $\,\,\widetilde{\!\!E}_{1}(R, D)$.
On the other hand,
the LHS of (\ref{eqE1Envelope}) is a convex ($\cup$) function of $(R, D)$ by Lemma~\ref{lemma14}.
Therefore they must coincide.
$\square$

{\em Proof B:}
This is a proof by repeated application of the minimax theorem for convex-concave functions.
\begin{align}
\,\,\widetilde{\!\!E}_{1}(R, D) \; & = \; \min_{\substack{T(x,\,y),\,W(\hat{x} \,|\, x,\,y):\\
\\
d(T\,\circ\, W)\,+\,D(T\, \circ\, W \;\|\; T \,\times\, Q)\;\leq\; D \, + \, R\\
\\
D(T\, \circ\, W \;\|\; T \,\times\, Q)\;\leq\;R
}}
\Big\{D(T \; \| \; Q \circ P)\Big\}
\nonumber \\
& = \,\;\;
\min_{T(x,\,y)}
\;
\min_{W(\hat{x} \,|\, x,\,y)}
\;
\sup_{\alpha\,\geq\,0,\;\beta\,\geq\,0}
\Big\{D(T \; \| \; Q \circ P)
\nonumber \\
& \;\;\;\;\;\;\;\;\;\;\;\;\;\;\;\;\;\;\;\;\;\;\;\;\;\;\;\;\;\;\;\;\;\;\;\;\;\;\;\;\;\;\;\;
+\,\alpha\big[D(T\, \circ\, W \;\|\; T \,\times\, Q)\,+\,d(T\,\circ\, W) \, - \, D \, - \, R\big]
\nonumber \\
& \;\;\;\;\;\;\;\;\;\;\;\;\;\;\;\;\;\;\;\;\;\;\;\;\;\;\;\;\;\;\;\;\;\;\;\;\;\;\;\;\;\;\;\;
+\,\beta\big[D(T\, \circ\, W \;\|\; T \,\times\, Q) \, - \, R\big]
\Big\}
\nonumber \\
& \overset{(a)}{=} \;\;
\min_{T(x,\,y)}
\;
\sup_{\alpha\,\geq\,0,\;\beta\,\geq\,0}
\;
\min_{W(\hat{x} \,|\, x,\,y)}
\Big\{D(T \; \| \; Q \circ P)
\nonumber \\
& \;\;\;\;\;\;\;\;\;\;\;\;\;\;\;\;\;\;\;\;\;\;\;\;\;\;\;\;\;\;\;\;\;\;\;\;\;\;\;\;\;\;\;\;
+\,\alpha\big[D(T\, \circ\, W \;\|\; T \,\times\, Q)\,+\,d(T\,\circ\, W) \, - \, D \, - \, R\big]
\nonumber \\
& \;\;\;\;\;\;\;\;\;\;\;\;\;\;\;\;\;\;\;\;\;\;\;\;\;\;\;\;\;\;\;\;\;\;\;\;\;\;\;\;\;\;\;\;
+\,\beta\big[D(T\, \circ\, W \;\|\; T \,\times\, Q) \, - \, R\big]
\Big\}
\nonumber \\
& \overset{(b)}{=} \;\;
\sup_{\alpha\,\geq\,0,\;\beta\,\geq\,0}
\;
\min_{T(x,\,y)}
\;
\min_{W(\hat{x} \,|\, x,\,y)}
\Big\{D(T \; \| \; Q \circ P)
\nonumber \\
& \;\;\;\;\;\;\;\;\;\;\;\;\;\;\;\;\;\;\;\;\;\;\;\;\;\;\;\;\;\;\;\;\;\;\;\;\;\;\;\;\;\;\;\;
+\,\alpha\big[D(T\, \circ\, W \;\|\; T \,\times\, Q)\,+\,d(T\,\circ\, W) \, - \, D \, - \, R\big]
\nonumber \\
& \;\;\;\;\;\;\;\;\;\;\;\;\;\;\;\;\;\;\;\;\;\;\;\;\;\;\;\;\;\;\;\;\;\;\;\;\;\;\;\;\;\;\;\;
+\,\beta\big[D(T\, \circ\, W \;\|\; T \,\times\, Q) \, - \, R\big]
\Big\},
\nonumber
\end{align}
where\newline
($a$) follows by the minimax theorem, because the objective function is convex ($\cup$) in $W(\hat{x} \,|\, x, y)$
and concave (linear) in $(\alpha, \beta)$;\newline
($b$) follows by the minimax theorem, because the corresponding objective function
\begin{align}
& \min_{W(\hat{x} \,|\, x,\,y)}
\Big\{D(T \; \| \; Q \circ P)
\, + \, \alpha\big[D(T\, \circ\, W \;\|\; T \,\times\, Q)\,+\,d(T\,\circ\, W) \, - \, D \, - \, R\big]
\nonumber \\
& \;\;\;\;\;\;\;\;\;\;\;\;\;\;\;\;\;\;\;\;\;\;\;\;\;\;\;\;\;\;\;\;\;\;\;\;\;
+\,\beta\big[D(T\, \circ\, W \;\|\; T \,\times\, Q) \, - \, R\big]
\Big\}
\nonumber \\
& = \;\;
D(T \; \| \; Q \circ P) \, - \, \alpha D \, - \,(\alpha+\beta)R
\, + \,\min_{W(\hat{x} \,|\, x,\,y)}\big[(\alpha+\beta)D(T\, \circ\, W \;\|\; T \,\times\, Q)\,+\,\alpha d(T\,\circ\, W)\big]
\nonumber \\
& = \;\;
\sum_{x,\,y}T(x, y)\ln \frac{T(x, y)}{Q(x)P(y\,|\,x)} \, - \, \alpha D \, - \,(\alpha+\beta)R
\nonumber \\
& \;\;\;
+\,\min_{W(\hat{x} \,|\, x,\,y)}\bigg[(\alpha+\beta) \sum_{x,\,y, \,\hat{x}}T(x, y)W(\hat{x} \,|\, x, \, y)\ln \frac{W(\hat{x} \,|\, x, \, y)}{Q(\hat{x})}\,+\,\alpha \sum_{x, \, y, \,\hat{x}}T(x, y)W(\hat{x} \,|\, x, y)d\big((x, y), \hat{x}\big)\bigg]
\nonumber \\
& 
=
\;\;
\sum_{x,\,y}T(x, y)\ln \frac{T(x, y)}{Q(x)P(y\,|\,x)} \, - \, \alpha D \, - \,(\alpha+\beta)R
\nonumber \\
& \;\;\;\;\;\;\;\;\;\;\;\;\;\;\;\;\;\;\;\;\;\;\;\;\;\;\;\;\;\;\;\;\;\;\;\;\;\;\;\;\;\;\;\;\;
+\,\min_{W(\hat{x} \,|\, x,\,y)}\bigg[(\alpha+\beta) \sum_{x,\,y, \,\hat{x}}T(x, y)W(\hat{x} \,|\, x, \, y)\ln \frac{W(\hat{x} \,|\, x, \, y)}{Q(\hat{x})
{e\mathstrut}^{-\frac{\alpha}{\alpha\,+\,\beta}d((x, \,y), \,\hat{x})}
}\bigg]
\nonumber \\
& = \;\;
\sum_{x,\,y}T(x, y)\ln \frac{T(x, y)}{Q(x)P(y\,|\,x)} \, - \, \alpha D \, - \,(\alpha+\beta)R
\,-\,(\alpha+\beta) \sum_{x,\,y}T(x, y)\ln \sum_{\hat{x}} Q(\hat{x})
{e\mathstrut}^{-\frac{\alpha}{\alpha\,+\,\beta}d((x, \,y), \,\hat{x})}
\label{eqMinW}
\end{align}
is\newline
1) concave ($\cap$) in $(\alpha, \beta)$ as a minimum of affine functions of $(\alpha, \beta)$,\newline
2) convex ($\cup$) in $T(x, y)$.
$\square$

Continuing (\ref{eqE1Envelope}) with (\ref{eqMinW}) gives
\begin{align}
& \,\,\widetilde{\!\!E}_{1}(R, D)
\nonumber \\
& = \;\;
\sup_{\alpha\,\geq\,0,\;\beta\,\geq\,0}
\;\min_{T(x,\,y)}
\left\{\sum_{x,\,y}T(x, y)\ln \frac{T(x, y)}{Q(x)P(y\,|\,x)\left[\sum_{\hat{x}} Q(\hat{x})
{e\mathstrut}^{-\frac{\alpha}{\alpha\,+\,\beta}d((x, \,y), \,\hat{x})}\right]^{\alpha\,+\beta}} \, - \, \alpha D \, - \,(\alpha+\beta)R \right\}
\nonumber \\
& = \;\;
\sup_{\alpha\,\geq\,0,\;\beta\,\geq\,0}
\left\{-\ln \sum_{x,\,y}Q(x)P(y\,|\,x)\bigg[\sum_{\hat{x}} Q(\hat{x})
{e\mathstrut}^{-\frac{\alpha}{\alpha\,+\,\beta}\left[d((x, \,y), \,\hat{x})\,-\,D\right]}\bigg]^{\alpha\,+\beta} - \;\; (\alpha+\beta)R \right\}
\nonumber \\
& = \;\;
\sup_{\rho\,\geq\,0}\;\sup_{0\,\leq\,s\,\leq\,1} \;
\left\{-\ln \sum_{x,\,y}Q(x)P(y\,|\,x)\bigg[\sum_{\hat{x}} Q(\hat{x})
{e\mathstrut}^{-s\left[d((x, \,y), \,\hat{x})\,-\,D\right]}\bigg]^{\rho} - \; \rho R \right\},
\label{eqE1RDExplicit}
\end{align}
where we define $\;\rho \, \triangleq \, \alpha\,+\,\beta\;$ and $\;s \, \triangleq \, \frac{\alpha}{\alpha\,+\,\beta}$.\footnote{Note also, that the parameter in the explicit formula for ${R\mathstrut}^{c}(T, Q, D)$ (\ref{eqRTQRs}) is related to $(\alpha, \beta)$ as $\mu \, = \, \frac{\alpha}{\beta}$.}

With (\ref{eqE1RDExplicit}) for $\,\,\widetilde{\!\!E}_{1}(R, D)$ and (\ref{eqDecErrorArbD}) for $\,\,\widetilde{\!\!E}_{2}(R, D)$, the lower bound (\ref{eqLowerBoundForney}) now acquires its final form:
\begin{align}
& \min\big\{\,\,\widetilde{\!\!E}_{1}(R, D), \,\,\,\widetilde{\!\!E}_{2}(R, D)\big\} \;\; =
\nonumber \\
& \min \left\{
\;\;\;
\sup_{\rho\,\geq\,0}
\;\;
\;\sup_{0\,\leq\,s\,\leq\,1} \;
\Bigg\{-\ln \sum_{x,\,y}Q(x)P(y\,|\,x)\left[\sum_{\hat{x}} Q(\hat{x})
{e\mathstrut}^{-s\left[d((x, \,y), \,\hat{x})\,-\,D\right]}\right]^{\rho} - \; \rho R \Bigg\}
\right.,
\nonumber \\
& \;\;\;\;\;\;\;\;\;\;
\left.\sup_{0\,\leq\,\rho\,\leq\,1}
\;
\;\;
\sup_{s\,\geq\,0}\;
\;\;
\Bigg\{-\ln
\sum_{x,\,y}Q(x)P(y\,|\,x)\left[\sum_{\hat{x}} Q(\hat{x})
{e\mathstrut}^{-s\left[d((x, \,y), \,\hat{x})\,-\,D\right]}\right]^{\rho} - \; \rho R \Bigg\}
\;\right\}
\label{eqE1E2TildeFinal} \\
& \triangleq \;\; {E\mathstrut}_{\text{tradeoff decoder}}(Q, R, D),
\nonumber
\end{align}
where $d((x, y), \hat{x}) \, = \, \ln \frac{P(y \,|\, x)}{P(y \,|\, \hat{x})}$.

As can be seen, both $\,\,\widetilde{\!\!E}_{1}(R, D)$ and $\,\,\widetilde{\!\!E}_{2}(R, D)$
are suprema of affine functions of $(R, D)$ and, as such, are convex ($\cup$) in $(R, D)$.
Therefore, both $\,\,\widetilde{\!\!E}_{1}(R, D)$ and $\,\,\widetilde{\!\!E}_{2}(R, D)$ are,
basically, continuous. With the exception of boundary points where they switch to $+\infty$.
In this respect,
as can be verified from the expressions above,
the functions are {\em lower semi-continuous}, i.e. the convex sets of $(R, D)$,
on which the functions are finite, are closed sets.

Specifically, the second argument $\,\,\widetilde{\!\!E}_{2}(R, D)$ becomes $+\infty$ for
\begin{equation} \label{eqE2Dmin}
D \; < \; D_{\min} \; = \; \min_{(x, \, y), \, \hat{x}} d\big((x, y), \hat{x}\big)
\; = \;
\min_{x,\, y, \, \hat{x}}\;\ln \frac{P(y \,|\, x)}{P(y \,|\, \hat{x})}.
\end{equation}
The first argument
$\,\,\widetilde{\!\!E}_{1}(R, D)$, as can be seen from (\ref{eqE1RD}), equals $+\infty$ for
\begin{equation} \label{eqE1Rmin}
R \; < \; \min_{T(x, \, y)}{R\mathstrut}^{c}(T, \,Q, \,D \,+\, R).
\end{equation}
Note, that $\displaystyle\min_{T(x, \, y)}{R\mathstrut}^{c}(T, \,Q, \,D \, + \, R)$ itself is a nonincreasing right-continuous function of $R$
(in fact it is convex ($\cup$) and therefore lower semi-continuous).
In particular, given a sufficiently small $D$, like $D\, < \, D_{\min}$, the function
$\displaystyle f(R) \, = \,\min_{T(x, \, y)}{R\mathstrut}^{c}(T, \,Q, \,D \, + \, R)\;$ equals $+\infty$ for small $R$, then jumps from $+\infty$ to a finite value and decreases to $0$, with increase of $R$.
In any case, we can define
\begin{align}
R_{\min}(D) \;\; & \triangleq \;\; \min_{\min_{\,T(x, \, y)}{R\mathstrut}^{c}(T, \,Q, \,D \, + \, R) \; \leq \; R}\;\{R\,\}
\label{eqRmin} \\
& \geq \;\; \min_{T(x, \, y)}{R\mathstrut}^{c}\big(T, \,Q, \,D \, + \, R_{\min}(D)\big).
\nonumber
\end{align}
Thus, $\,\,\widetilde{\!\!E}_{1}(R, D)$ becomes $+\infty$ for
\begin{equation} \label{eqE1RminD}
R \; < \; R_{\min}(D).
\end{equation}

We conclude, that the only possible points, where the expressions
(\ref{eqE2TildeDefined}) and (\ref{eqEpsilonLimSimplified})
may not be equal, are the points with $D \, = \, D_{\min}$, and the points $(R, D) \, = \, \big(R_{\min}(D), D\big)$.
\begin{thm} \label{thm8}
\begin{align}
& \lim_{n \, \rightarrow \, \infty} \; \left\{-\frac{1}{n}\ln \Pr \, \{{\cal E}_{m}\}\right\}
\;\; = \;\;
\min \big\{ \,\,\widetilde{\!\!E}_{1}(R, D),\;
\,\,\widetilde{\!\!E}_{2}(R, D)\big\}
\nonumber
\end{align}
{\em for all $(R, D)$, with the possible exception of some points $(R, D_{\min})$ and $\big(R_{\min}(D), D\big)$, where still}
\begin{align}
\liminf_{n \, \rightarrow \, \infty} \; \left\{-\frac{1}{n}\ln \Pr \, \{{\cal E}_{m}\}\right\}
\;\; & \geq \;\;
\min \big\{ \,\,\widetilde{\!\!E}_{1}(R, D),\;
\,\,\widetilde{\!\!E}_{2}(R, D)\big\}
\nonumber \\
\limsup_{n \, \rightarrow \, \infty} \; \left\{-\frac{1}{n}\ln \Pr \, \{{\cal E}_{m}\}\right\}
\;\; & \leq \;\;
\lim_{\epsilon\,\rightarrow\,0}\;
\min \big\{ \,\,\widetilde{\!\!E}_{1}(R \, - \, \epsilon, \, D),\;
\,\,\widetilde{\!\!E}_{2}(R, \, D \, - \, \epsilon)\big\},
\nonumber
\end{align}
{\em with $\,\,\widetilde{\!\!E}_{1}(R, D)$ and $\,\,\widetilde{\!\!E}_{2}(R, D)$ given explicitly by (\ref{eqE1E2TildeFinal}).}
\end{thm}

\section{{\bf Comparison of decoding error exponents for arbitrary} \texorpdfstring{$D$}{\em D}} \label{S17}
For convenience, let us define
\begin{displaymath}
{E\mathstrut}_{0}(s, \rho, Q, D)
\;\; \triangleq \;\;
 -\ln\sum_{x, \,y}Q(x)P(y \,|\, x)\left[\sum_{\hat{x}}Q(\hat{x})\left[\frac{P(y \,|\, x)}{P(y \,|\, \hat{x})}\,e^{-D}\right]^{-s}\right]^{\rho}.
\end{displaymath}

We start with the exponent (\ref{eqDecErrorArbD}), which is the highest, and corresponds to the ``source duality'' decoding error event defined in (\ref{eqErrorEvent}):
\begin{align}
{E\mathstrut}_{e}(Q, R, D) \;\;\; & = \;\;\;\;\;\;\;\;\;\;\;\;
\sup_{0 \, \leq \,\rho \,\leq \,1}
\;\;\;
\sup_{s\,\geq\,0} \;
\;\;
\big\{ {E\mathstrut}_{0}(s, \rho, Q, D) \; - \; \rho R \big\}
\nonumber \\
& = \;\;
\min \Big\{
\;\;\,
\sup_{\rho\,\geq\,0}
\;\;
\;\;\;
\sup_{s\,\geq\,0} \;
\;\;
\big\{{E\mathstrut}_{0}(s, \rho, Q, D) \; - \; \rho R \big\},
\nonumber \\
& \;\;\;\;\;\;\;\;\;\;\;\;\;\;\;\;
\sup_{0\,\leq\,\rho\,\leq\,1}
\;\;\;
\sup_{s\,\geq\,0}\;
\;\;
\big\{{E\mathstrut}_{0}(s, \rho, Q, D) \; - \; \rho R \big\}
\;\;\Big\}
\nonumber \\
& \geq \;\;
\min \Big\{
\;\;\,
\sup_{\rho\,\geq\,0}
\;\;\;
\sup_{0\,\leq\,s\,\leq\,1} \;
\big\{{E\mathstrut}_{0}(s, \rho, Q, D) \; - \; \rho R \big\},
\nonumber \\
& \;\;\;\;\;\;\;\;\;\;\;\;\;\;\;\;
\underbrace{
\sup_{0\,\leq\,\rho\,\leq\,1}
\;\;\;
\sup_{s\,\geq\,0}\;
\;\;
\big\{{E\mathstrut}_{0}(s, \rho, Q, D) \; - \; \rho R \big\}}_{{E\mathstrut}_{e}(Q, R, D)}
\;\;\Big\}
\;\; = \;\; {E\mathstrut}_{\text{tradeoff decoder}}(Q, R, D)
\nonumber \\
& \geq \;\;
\min \Big\{
\;\;\,
\sup_{\rho\,\geq\,0}
\;\;\;
\sup_{0\,\leq\,s\,\leq\,1} \;
\big\{{E\mathstrut}_{0}(s, \rho, Q, D) \; - \; \rho R \big\},
\nonumber \\
& \;\;\;\;\;\;\;\;\;\;\;\;\;\;\;\;
\sup_{0\,\leq\,\rho\,\leq\,1}\;
\sup_{0\,\leq\,s\,\leq\,1}\;
\big\{{E\mathstrut}_{0}(s, \rho, Q, D) \; - \; \rho R \big\}
\;\;\Big\}
\nonumber \\
& = \;\;\;\;\;\;\;\;\;\;\;\;
\sup_{0\,\leq\,\rho\,\leq\,1}\;
\sup_{0\,\leq\,s\,\leq\,1}\;
\big\{{E\mathstrut}_{0}(s, \rho, Q, D) \; - \; \rho R \big\}
\;\;\;\;\;\; = \;\; {E\mathstrut}_{\text{bound}}(Q, R, D).
\nonumber
\end{align}
Thus we obtain
\begin{displaymath}
{E\mathstrut}_{e}(Q, R, D) \;\; \geq \;\; {E\mathstrut}_{\text{tradeoff decoder}}(Q, R, D) \;\; \geq \;\; {E\mathstrut}_{\text{bound}}(Q, R, D),
\end{displaymath}
where both ${E\mathstrut}_{\text{tradeoff decoder}}(Q, R, D)$ and ${E\mathstrut}_{\text{bound}}(Q, R, D)$
denote lower bounds on the random coding error exponent of Forney's decoder
(\ref{eqForneyErrorEvent}).
${E\mathstrut}_{\text{tradeoff decoder}}(Q, R, D)$ is our tight bound given by Theorem~\ref{thm8},
and ${E\mathstrut}_{\text{bound}}(Q, R, D)$ appears in \cite[eq.~(24)]{Forney68}
(subject to additional maximization over $Q$).
\begin{lemma} \label{lemma16}
{\em For $D \geq 0$}
\begin{displaymath}
{E\mathstrut}_{e}(Q, R, D) \;\; = \;\; {E\mathstrut}_{\text{tradeoff decoder}}(Q, R, D) \;\; = \;\; {E\mathstrut}_{\text{bound}}(Q, R, D).
\end{displaymath}
\end{lemma}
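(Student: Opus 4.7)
The plan is to leverage the already-established chain $E_e(Q,R,D)\ge E_{\text{tradeoff decoder}}(Q,R,D)\ge E_{\text{bound}}(Q,R,D)$ and reduce Lemma~\ref{lemma16} to the single inequality $E_e\le E_{\text{bound}}$ for $D\ge 0$. Equivalently, I must show that in the expression for $E_e$, the inner supremum over $s\ge 0$ is attained in $[0,1]$ whenever $\rho\in[0,1]$ and $D\ge 0$.

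The argument rests on two observations. First, factoring $e^{sD}$ out of the innermost bracket in the definition of $E_0$ gives the additive decomposition
\begin{equation*}
E_0(s,\rho,Q,D)\;=\;E_0(s,\rho,Q,0)\;-\;s\rho D,
\end{equation*}
so the $D$-dependence contributes only a linear term in $s$, with slope $-\rho D\le 0$ whenever $D\ge 0$ and $\rho\ge 0$. Second, $E_0(\cdot,\rho,Q,0)$ is concave in $s$ on $\mathbb{R}$ (minus the logarithm of a sum of products of log-affine and log-convex functions of $s$, the latter being cumulant generating functions of $\ln P(Y\,|\,\hat{X})$ under $Q$), and the H\"older step~$(*)$ in the derivation of~(\ref{eqEE}) identifies its unconstrained maximizer as $s_0(\rho)=1/(1+\rho)\in[0,1]$ for every $\rho\ge 0$.

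Combining these: for $\rho\in[0,1]$ and $D\ge 0$, the function $E_0(\cdot,\rho,Q,D)$ is concave in $s$ and its derivative at $s_0(\rho)$ equals $-\rho D\le 0$. By concavity, its unconstrained maximizer $s^{\star}(\rho,D)$ therefore satisfies $s^{\star}\le s_0(\rho)\le 1$. Imposing the constraint $s\ge 0$ either leaves the maximizer at $s^{\star}$ (when $s^{\star}\ge 0$) or pushes it to the boundary $s=0$; either way it lies in $[0,1]$. Consequently,
\begin{equation*}
\sup_{s\ge 0}\big[E_0(s,\rho,Q,D)-\rho R\big]\;=\;\sup_{0\le s\le 1}\big[E_0(s,\rho,Q,D)-\rho R\big]
\end{equation*}
for every $\rho\in[0,1]$ and $D\ge 0$, and taking $\sup_{0\le\rho\le 1}$ on both sides yields $E_e(Q,R,D)=E_{\text{bound}}(Q,R,D)$, closing the sandwich.

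The only delicate point is verifying that the H\"older critical point $s_0(\rho)=1/(1+\rho)$ is the global maximum of $E_0(\cdot,\rho,Q,0)$ rather than a mere stationary point; concavity settles this, and the degenerate boundary case $\rho=0$ (where $E_0\equiv 0$) is trivial.
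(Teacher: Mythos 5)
Your proof is correct and follows essentially the same route as the paper: both rest on the decomposition $E_0(s,\rho,Q,D)=E_0(s,\rho,Q,0)-s\rho D$ together with the H\"older-optimality of $s=\tfrac{1}{1+\rho}\le 1$ for the $D=0$ part, to conclude that for $D\ge 0$ the supremum over $s\ge 0$ is already attained in $[0,1]$, which collapses the sandwich $E_e\ge E_{\text{tradeoff decoder}}\ge E_{\text{bound}}$. The only difference is cosmetic: where you invoke concavity of $E_0(\cdot,\rho,Q,0)$ in $s$ and a derivative argument, the paper simply adds the two monotone bounds $E_0(s,\rho,Q,0)\le E_0(\tfrac{1}{1+\rho},\rho,Q,0)$ and $-s\rho D\le -\tfrac{\rho D}{1+\rho}$ for $s\ge\tfrac{1}{1+\rho}$, which reaches the same conclusion without needing concavity or differentiability.
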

\begin{proof}
\begin{align}
{E\mathstrut}_{0}\big(s\, = \,\tfrac{1}{1\,+\,\rho}, \,\rho, \,Q, \,D\big)
\;\; & = \;\;
-\ln\sum_{x, \,y}Q(x)P(y \,|\, x)\left[\sum_{\hat{x}}Q(\hat{x})\left[\frac{P(y \,|\, x)}{P(y \,|\, \hat{x})}\right]^{-\frac{1}{1\,+\,\rho}}\right]^{\rho}
\; - \; \tfrac{1}{1 \, + \, \rho} \cdot \rho D
\nonumber \\
& \overset{(*)}{\geq} \;\;
-\ln\sum_{x, \,y}Q(x)P(y \,|\, x)\left[\sum_{\hat{x}}Q(\hat{x})\left[\frac{P(y \,|\, x)}{P(y \,|\, \hat{x})}\right]^{-s}\right]^{\rho}
\; - \; s \cdot \rho D
\nonumber \\
& = \;\;
{E\mathstrut}_{0}(s, \rho, Q, D),
\nonumber
\end{align}
where ($*$) holds by (\ref{eqEE}) for $s \, \geq \, \tfrac{1}{1\,+\,\rho}$ and $D\,\geq\,0$.
We conclude, that
\begin{align}
& \sup_{0\,\leq\,\rho\,\leq\,1}
\;\;\;
\sup_{s\,\geq\,0}\;
\;\;
\big\{{E\mathstrut}_{0}(s, \rho, Q, D) \; - \; \rho R \big\}
\; = \;
\sup_{0\,\leq\,\rho\,\leq\,1}\;
\sup_{0\,\leq\,s\,\leq\,1}\;
\big\{{E\mathstrut}_{0}(s, \rho, Q, D) \; - \; \rho R \big\}.
\nonumber
\end{align}
\end{proof}

\section{{\bf Maximization over} \texorpdfstring{$\,Q\,$}{\em Q} {\bf of the random coding error exponent of Forney's decoder}} \label{S18}

When we try to maximize the random coding exponent, given by Theorem~\ref{thm8}, over $Q$,
straightforward maximization, at first glance, is hampered by the special points where the true exponent is unknown:
\begin{align}
&
\begin{array}{r l}
\big(R, \; D_{\min}(Q)\big),
& \displaystyle \;\;\;\;\;\;\;\;\;
D_{\min}(Q) \; = \; \min_{y}\;\;\;
\min_{x:\;\; Q(x)\, > \, 0}
\;\;\;
\min_{\hat{x}:\;\; Q(\hat{x})\, > \, 0}
\;\ln \frac{P(y \,|\, x)}{P(y \,|\, \hat{x})},
\\
\big(R_{\min}(Q, D), \; D\big),
& \displaystyle \;\;\;\;\;\;\;\;\;
R_{\min}(Q, D) \;\; = \;\; \min_{\min_{\,T(x, \, y)}{R\mathstrut}^{c}(T, \,Q, \,D \, + \, R) \; \leq \; R}\;\{R\,\}.
\end{array}
\nonumber
\end{align}
The special points of the first kind $\big(R, \, D_{\min}(Q)\big)$ can be avoided by simply maximizing for $D\,\neq\,D_{\min}(Q)$,
leaving the finite set of lines $\{D \, = \, D_{\min}(Q)\}$
(whose size is bounded by the number of all possible subsets of the channel input alphabet $\cal X$)
unaddressed.
The second kind of the special points $\big(R_{\min}(Q, D), \, D\big)$
cannot be avoided that simple, but, better still,
can be almost completely circumvented, as shown by the next lemmas.
\begin{lemma} \label{lemma17}
\begin{equation} \label{eqLemmaMinTRc}
\min_{T(x, \, y)} \; {R\mathstrut}^{c}(T, \,Q, \,D \,+\, R)
\;\; \leq \;\; f^{*}(R, D)
\;\; \triangleq \;\; \Bigg\{
\begin{array}{r l}
0, & \;\;\; R \; \geq \; -D, \\
+\infty, & \;\;\; R \; < \; -D.
\end{array}
\end{equation}
\end{lemma}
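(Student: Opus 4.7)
The plan is to exhibit a simple, explicit choice of $T$ and $W$ that makes both the divergence $D(T\circ W \,\|\, T\times Q)$ and the distortion $d(T\circ W)$ vanish simultaneously, so that the constraint defining ${R\mathstrut}^{c}$ is satisfied whenever $D+R\geq 0$. First, the case $R<-D$ is vacuous since $f^{*}(R,D)=+\infty$, so there is nothing to check.

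For $R\geq -D$, I would take $T(x,y) = Q(x)\,\lambda(y)$ for any distribution $\lambda$ on ${\cal Y}$ (e.g.\ uniform), together with $W(\hat{x}\,|\,x,y) = Q(\hat{x})$. Then the joint $T\circ W(x,y,\hat{x}) = Q(x)\lambda(y)Q(\hat{x})$ coincides with $T\times Q$, so $D(T\circ W \,\|\, T\times Q) = 0$. The average distortion is
\begin{equation*}
d(T\circ W) \;=\; \sum_{y}\lambda(y)\sum_{x,\hat{x}}Q(x)Q(\hat{x})\bigl[\ln P(y\,|\,x) - \ln P(y\,|\,\hat{x})\bigr] \;=\; 0,
\end{equation*}
since $X$ and $\hat{X}$ are independently drawn from the same marginal $Q$, making the two inner sums identical.

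Therefore the constraint $d(T\circ W) + D(T\circ W \,\|\, T\times Q) \leq D+R$ reduces to $0 \leq D+R$, which is exactly the hypothesis $R\geq -D$. This shows $W = Q$ is feasible in the minimization (\ref{eqRTQR}) defining ${R\mathstrut}^{c}(T,Q,D+R)$ for this particular $T$, so ${R\mathstrut}^{c}(T,Q,D+R) \leq D(T\circ W \,\|\, T\times Q) = 0$, and taking the minimum over $T$ establishes the claim.

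There is no substantive obstacle; the only conceptual point is to notice that pairing a product $T$ with the constant kernel $W\equiv Q$ trivially zeros out both terms in the constraint, and this is the whole content of the proof.
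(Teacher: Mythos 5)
Your proof is correct, and it takes a genuinely different and more elementary route than the paper's. The paper proves Lemma~\ref{lemma17} by starting from the explicit sup-over-$\mu$ formula of Lemma~\ref{lemma12} for ${R\mathstrut}^{c}$, upper-bounding the inner sum over $\hat{x}$ by its minimum over reproduction distributions (i.e.\ a degenerate point mass, which turns the sum into $\max_{\hat{x}}d((x,y),\hat{x})$), applying the minimax theorem to swap $\min_T$ and $\sup_\mu$, and finally invoking the structural identity $\min_{x,y}\max_{\hat{x}}\ln\frac{P(y|x)}{P(y|\hat{x})}=0$ (their eq.~(\ref{eqDistorionZeroRev})). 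You instead work directly in the primal definition (\ref{eqRTQR}) and exhibit a feasible pair: $T=Q\times\lambda$ and $W\equiv Q$, for which both the divergence and the distortion vanish — the cancellation $d(T\circ W)=0$ holding precisely because $X$ and $\hat X$ share the marginal $Q$ on the common alphabet. This gives ${R\mathstrut}^{c}(T,Q,D+R)\le 0$ whenever $D+R\ge 0$, which is all the lemma claims (and, since ${R\mathstrut}^{c}\ge 0$, in fact equality of the minimum with $0$ in that regime). Your argument is shorter and avoids both Lemma~\ref{lemma12} and the minimax theorem; what the paper's longer chain buys is that it passes through the degenerate-$Q$ substitution that is reused immediately afterwards in Lemma~\ref{lemma18} to show the bound is \emph{attained} by a point-mass $Q$, whereas your construction says nothing about which $Q$ saturates the bound. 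Both exploit the same underlying structural fact — that the log-likelihood-ratio distortion can be made zero on average — just expressed once as $\min_{x,y}\max_{\hat x}d=0$ and once as the symmetry of $\mathbb{E}[\ln P(Y|X)-\ln P(Y|\hat X)]$ under identical marginals.
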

\begin{proof}
\begin{align}
& 
\min_{T(x, \, y)} \; {R\mathstrut}^{c}(T, \,Q, \,D \,+\, R)
\nonumber \\
\overset{(a)}{=} \;\;
& \min_{T(x, \, y)} \;
\sup_{\mu\,\geq\,0} \; \Bigg\{-\sum_{x, \, y}T(x, y) \ln {\bigg[\sum_{\hat{x}}Q(\hat{x})e^{-\frac{\mu}{1\,+\,\mu}\big[d((x,\,y), \, \hat{x}) \, - \, D \, - \, R \big]}\bigg]\mathstrut}^{1\,+\,\mu}\Bigg\}
\nonumber \\
\leq \;\;
& \min_{T(x, \, y)} \; \sup_{\mu\,\geq\,0} \;
\Bigg\{-\sum_{x, \, y}T(x, y) \ln \;
\min_{Q(\hat{x})} \;
{\bigg[\sum_{\hat{x}}Q(\hat{x})e^{-\frac{\mu}{1\,+\,\mu}\big[d((x,\,y), \, \hat{x}) \, - \, D \, - \, R \big]}\bigg]\mathstrut}^{1\,+\,\mu}\Bigg\}
\nonumber \\
= \;\;
& \min_{T(x, \, y)} \; \sup_{\mu\,\geq\,0} \;
\Bigg\{-\sum_{x, \, y}T(x, y) \ln \;
{\bigg[e^{-\frac{\mu}{1\,+\,\mu}\big[\max_{\hat{x}} d((x,\,y), \, \hat{x}) \, - \, D \, - \, R \big]}\bigg]\mathstrut}^{1\,+\,\mu}\Bigg\}
\nonumber \\
= \;\;
& \min_{T(x, \, y)} \; \sup_{\mu\,\geq\,0} \;
\Bigg\{\mu\sum_{x, \, y}T(x, y)
\Big[\max_{\hat{x}} \; d((x,\,y), \, \hat{x}) \, - \, D \, - \, R \Big]\Bigg\}
\nonumber \\
\overset{(b)}{=} \;\;
& \sup_{\mu\,\geq\,0} \; \min_{T(x, \, y)} \;
\Bigg\{\mu\sum_{x, \, y}T(x, y)
\Big[\max_{\hat{x}} \; d((x,\,y), \, \hat{x}) \, - \, D \, - \, R \Big]\Bigg\}
\nonumber \\
= \;\;
& \sup_{\mu\,\geq\,0} \;
\bigg\{\mu
\Big[\underbrace{\min_{x, \, y}\;\max_{\hat{x}}\; d((x,\,y), \, \hat{x})}_{=\,0} \, - \, D \, - \, R \Big]\bigg\}
\nonumber \\
\overset{(c)}{=} \;\;
& \sup_{\mu\,\geq\,0} \;
\big\{\mu\,
[ - \, D \, - \, R ]\big\}
\;\; = \;\; \Bigg\{
\begin{array}{r l}
0, & \;\;\; R \; \geq \; -D, \\
+\infty, & \;\;\; R \; < \; -D,
\end{array}
\nonumber
\end{align}
where ($a$) follows by Lemma~\ref{lemma12},
($b$) follows by the minimax theorem for the objective function
convex (linear) in $T(x, y)$ and concave (linear) in $\mu$,
and ($c$) follows by the property of $d\big((x, y), \hat{x}\big)$, similar to (\ref{eqDistorionZero}):
\begin{align}
\min_{x, \,y} \;\max_{\hat{x}}\; d\big((x, y), \hat{x}\big) \;\; & = \;\; \min_{x} \; \min_{y} \; \max_{\hat{x}}\; \ln \frac{P(y \,|\, x)}{P(y \,|\, \hat{x})}
\nonumber \\
& \geq \;\; \min_{x} \; \min_{y} \;\;\;\;\;\;\;\;\; \ln \frac{P(y \,|\, x)}{P(y \,|\, x)} \;\; = \;\; 0
\;\; = \;\; \;\;\;\;\;\;\;\, \min_{y} \; \max_{\hat{x}}\; \ln \frac{P(y \,|\, \hat{x})}{P(y \,|\, \hat{x})}
\nonumber \\
& \;\;\;\;\;\;\;\;\;\;\;\;\;\;\;\;\;\;\;\;\;\;\;\;\;\;\;\;\;\;\;\;\;\;\;\;\;\;\;\;\;\;\;\;\;\;\;\;\;\;\;\;\;\;\;\;\;\;
\geq \;\; \min_{x} \; \min_{y} \; \max_{\hat{x}}\; \ln \frac{P(y \,|\, x)}{P(y \,|\, \hat{x})},
\nonumber \\
\min_{x, \,y} \;\max_{\hat{x}}\; d\big((x, y), \hat{x}\big) \;\; & = \;\; 0.
\label{eqDistorionZeroRev}
\end{align}
\end{proof}
\begin{lemma} \label{lemma18}
\begin{equation} \label{eqMaxRmin}
\max_{Q(\hat{x})} \; R_{\min}(Q, D) \;\; = \;\; \max\,\{0, \,-D\}.
\end{equation}
\end{lemma}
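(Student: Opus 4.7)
The upper bound $\max_{Q} R_{\min}(Q,D) \le \max\{0,-D\}$ is an immediate corollary of Lemma~\ref{lemma17}. Plugging $R = \max\{0,-D\}$ into (\ref{eqLemmaMinTRc}), one has $R \ge 0$ and $R \ge -D$, hence $f^{*}(R,D)=0$, so $\min_{T} R^{c}(T,Q,D+R) \le 0 \le R$ holds for every $Q$. This is exactly the condition appearing in the definition (\ref{eqRmin}) of $R_{\min}(Q,D)$, so $R_{\min}(Q,D) \le \max\{0,-D\}$ uniformly in $Q$, and taking the maximum preserves the inequality.

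For the reverse inequality the case $D \ge 0$ is trivial, because $R$ is a rate so $R_{\min}(Q,D) \ge 0 = \max\{0,-D\}$ for every $Q$. The interesting case is $D < 0$. My plan is to exhibit a distribution $Q^{*}$ achieving $R_{\min}(Q^{*},D) = -D$. A natural candidate is a degenerate $Q^{*} = \delta_{\hat{x}^{*}}$: under such $Q^{*}$, any $W$ with $D(T\circ W\;\|\;T\times Q^{*}) < +\infty$ must be $W=\delta_{\hat{x}^{*}}$, which gives $D(T\circ W\;\|\;T\times Q^{*}) = 0$ and reduces the feasibility constraint in $R^{c}(T,Q^{*},D+R)$ to $d(T\circ W) = \sum_{x,y} T(x,y)\,d\bigl((x,y),\hat{x}^{*}\bigr) \le D+R$. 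Consequently $\min_{T} R^{c}(T,Q^{*},D+R)$ equals $0$ when $\min_{x,y} d\bigl((x,y),\hat{x}^{*}\bigr) \le D+R$ and $+\infty$ otherwise, yielding
\[
R_{\min}(\delta_{\hat{x}^{*}},D) \;=\; \max\bigl\{0,\ \mu(\hat{x}^{*}) - D\bigr\}, \qquad \mu(\hat{x}^{*}) \;\triangleq\; \min_{x,y} d\bigl((x,y),\hat{x}^{*}\bigr).
\]
Since $d\bigl((\hat{x}^{*},y),\hat{x}^{*}\bigr) = 0$, one has $\mu(\hat{x}^{*}) \le 0$, with equality precisely when $P(y|x) \ge P(y|\hat{x}^{*})$ for every $(x,y)$; for any such $\hat{x}^{*}$ the value $R_{\min}(\delta_{\hat{x}^{*}},D) = -D$ is attained, closing the lower bound.

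The main obstacle I expect is that a uniformly dominated input letter $\hat{x}^{*}$ with $\mu(\hat{x}^{*}) = 0$ need not exist: equation~(\ref{eqDistorionZeroRev}) only guarantees the ``flipped'' identity $\min_{x,y}\max_{\hat{x}} d\bigl((x,y),\hat{x}\bigr) = 0$, and the minimax interchange to $\max_{\hat{x}}\min_{x,y}$ may fail. To handle the general case, my plan is to approach the supremum through a sequence of near-degenerate distributions $Q_{n}$ concentrating appropriate mass on the argmaxes of $d((x,y),\cdot)$ for the critical $(x,y)$ pairs, and to pass to the limit. Equivalently, one can bypass the construction altogether by dualizing $R_{\min}(Q,D)$ via Lemma~\ref{lemma12} (using the Lagrangian representation of $R^{c}$) and computing $\sup_{Q}$ of the resulting Fenchel-type expression, showing its value equals $-D$.
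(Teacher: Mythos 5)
Your upper bound is exactly the paper's argument (Lemma~\ref{lemma17} applied at $R=\max\{0,-D\}$), and the reduction of the remaining work to the case $D<0$ is fine. The gap is in the achievability direction, and it is genuine: your own computation shows that under the literal definition (\ref{eqRmin}) the degenerate $Q^{*}=\delta_{\hat{x}^{*}}$ yields only $\max\{0,\mu(\hat{x}^{*})-D\}$ with $\mu(\hat{x}^{*})=\min_{x,y}\ln\frac{P(y|x)}{P(y|\hat{x}^{*})}$, and since each row $P(\cdot\,|\,x)$ is a probability vector, $\mu(\hat{x}^{*})=0$ would force $P(\cdot\,|\,x)=P(\cdot\,|\,\hat{x}^{*})$ for every $x$; so a ``dominated'' letter essentially never exists for a non-trivial channel. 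Your two fallback plans do not repair this: near-degenerate $Q_{n}$ inherit the same deficit (the binding pairs are the $(x,y)$ with $x\neq\hat{x}^{*}$, which remain admissible in the unconstrained minimization over $T$ in (\ref{eqRmin})), and dualizing the literal (\ref{eqRmin}) runs into the same obstruction.

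The paper closes the lower bound differently, and the difference is instructive. It substitutes the degenerate $Q$ of (\ref{eqQDegen}) into the explicit formula (\ref{eqE1E2TildeFinal}) and reads off $\widetilde{E}_{1}(Q,R,D)=\sup_{\rho\ge 0}\sup_{0\le s\le 1}\{-\rho sD-\rho R\}$, which is $+\infty$ precisely for $R<\max\{0,-D\}$ (see (\ref{eqE1TildeDeg})). The reason this gives $-D$ rather than your $\mu(\hat{x}^{*})-D$ is that $\widetilde{E}_{1}$ in (\ref{eqE1RD}) only sees those $T$ with $D(T\,\|\,Q\circ P)<\infty$; for $Q=\delta_{a}$ these are supported on $\{a\}\times{\cal Y}$, where $d((a,y),a)=0$, so the pairs with $x\neq a$ that drive $\mu(\hat{x}^{*})$ negative are excluded. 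In other words, the quantity the lemma (and Theorem~\ref{thm9}) actually needs is the finiteness threshold of $\widetilde{E}_{1}(Q,\cdot,D)$, which agrees with the literal (\ref{eqRmin}) for full-support $Q$ but exceeds it for degenerate $Q$. To finish, either adopt that reading and redo your degenerate computation with the restriction $T\ll Q\circ P$ (then $R^{c}(T,\delta_{a},D+R)=0$ iff $0\le D+R$, giving threshold exactly $\max\{0,-D\}$), or substitute $\delta_{a}$ directly into (\ref{eqE1E2TildeFinal}) as the paper does.
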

\begin{proof}
\begin{align}
R_{\min}(Q, D) \;\; & = \;\; \min_{\min_{\,T(x, \, y)}{R\mathstrut}^{c}(T, \,Q, \,D \, + \, R) \; \leq \; R}\;\{R\,\} \;\;
\overset{(*)}{\leq} \;\; \min_{{f\mathstrut}^{*}(R, \,D) \; \leq \; R}\;\{R\,\}
\;\; = \;\; \max\,\{0, \,-D\},
\nonumber
\end{align}
where ($*$) follows by Lemma~\ref{lemma17}.
This upper bound is achieved by any degenerate distribution
\begin{equation} \label{eqQDegen}
Q(\hat{x}) \; = \;
\Bigg\{
\begin{array}{r l}
1, & \hat{x} \; = \; a, \\
0, & \hat{x} \; \neq \; a.
\end{array}
\end{equation}
Substitution of such $Q$ in the explicit formula (\ref{eqE1E2TildeFinal})
gives
\begin{equation} \label{eqE1TildeDeg}
\,\,\widetilde{\!\!E}_{1}(Q, R, D) \;\; = \;\;
\sup_{\rho\,\geq\,0}
\;\;
\;\sup_{0\,\leq\,s\,\leq\,1} \;\{-\rho s D \, - \, \rho R\}
\;\; = \;\;
\Bigg\{
\begin{array}{r l}
0, & \;\;\; R \; \geq \; \max\,\{0, \,-D\}, \\
+\infty, & \;\;\; \text{else}. 
\end{array}
\end{equation}
\end{proof}

The conclusion of Lemma~\ref{lemma18} is that $\,\,\widetilde{\!\!E}_{1}(Q, R, D)$ is finite, and hence continuous in $R$,
for $R \, > \, \max\,\{0, \,-D\}$.

Observe also, that substitution of the degenerate distribution (\ref{eqQDegen}) in (\ref{eqE1E2TildeFinal}) gives
\begin{equation} \label{eqE2TildeDeg}
\,\,\widetilde{\!\!E}_{2}(Q, R, D) \;\; = \;\;
\sup_{0\,\leq\,\rho\,\leq\,1}
\;\;
\sup_{s\,\geq\,0}\;\;
\{-\rho s D \, - \, \rho R\}
\;\; = \;\;
\Bigg\{
\begin{array}{r l}
0, & \;\;\; D \; \geq \; 0, \\
+\infty, & \;\;\; D \; < \; 0.
\end{array}
\end{equation}
It follows from (\ref{eqE1TildeDeg}) and (\ref{eqE2TildeDeg}),
that, in the case of negative $D$ and $R \, < \, -D$, the maximum of the random coding exponent over $Q$ is
\begin{equation} \label{eqRndExpInf}
\sup_{Q(x)} \; \min\big\{\,\,\widetilde{\!\!E}_{1}(Q, R, D), \,\,\,\widetilde{\!\!E}_{2}(Q, R, D)\big\} \;\; = \;\; +\infty,
\;\;\;\;\;\;\;\;\; 0 \, < \, R \, < \, -D.
\end{equation}
Therefore, we can formulate the following
\begin{thm} \label{thm9}
\begin{align}
& \sup_{Q(x)} \; \lim_{n \, \rightarrow \, \infty} \; \left\{-\frac{1}{n}\ln \Pr \, \{{\cal E}_{m}\}\right\}
\;\; = \;\;
\sup_{Q(x)} \; \min \big\{ \,\,\widetilde{\!\!E}_{1}(Q, R, D),\;
\,\,\widetilde{\!\!E}_{2}(Q, R, D)\big\},
\nonumber
\end{align}
{\em for all $(R, D)$, with the possible exception of
points with $R = -D$,
and points with
$0 >D \in{\{D_{\min}(Q)\}\mathstrut}_{Q}\,$
(for $R >  -D$),
where still}
\begin{align}
\sup_{Q(x)} \; \liminf_{n \, \rightarrow \, \infty} \; \left\{-\frac{1}{n}\ln \Pr \, \{{\cal E}_{m}\}\right\}
\;\; & \geq \;\;
\sup_{Q(x)} \; \min \big\{ \,\,\widetilde{\!\!E}_{1}(Q, R, D),\;
\,\,\widetilde{\!\!E}_{2}(Q, R, D)\big\}
\nonumber \\
\sup_{Q(x)} \; \limsup_{n \, \rightarrow \, \infty} \; \left\{-\frac{1}{n}\ln \Pr \, \{{\cal E}_{m}\}\right\}
\;\; & \leq \;\;
\sup_{Q(x)} \; \lim_{\epsilon\,\rightarrow\,0}\;
\min \big\{ \,\,\widetilde{\!\!E}_{1}(Q, \, R \, - \, \epsilon, \, D),\;
\,\,\widetilde{\!\!E}_{2}(Q, \, R, \, D \, - \, \epsilon)\big\},
\nonumber
\end{align}
{\em with $\,\,\widetilde{\!\!E}_{1}(Q, R, D)$ and $\,\,\widetilde{\!\!E}_{2}(Q, R, D)$ given explicitly by (\ref{eqE1E2TildeFinal}).}
\end{thm}
Now, using Lemma~\ref{lemma16} with Theorem~\ref{thm9}, we obtain, that the original Forney's random coding exponent is tight at least for $D \, \geq \, 0\,$:
\begin{cor} \label{cor1}
{\em For $D\, \geq \, 0$}
\begin{align}
& \sup_{Q(x)} \; \lim_{n \, \rightarrow \, \infty} \; \left\{-\frac{1}{n}\ln \Pr \, \{{\cal E}_{m}\}\right\}
\;\; = \;\;
\sup_{Q(x)} \; {E\mathstrut}_{\text{bound}}(Q, R, D).
\nonumber
\end{align}
\end{cor}
\section{\bf Derivation of the encoding success exponent} \label{S19}
This will lead to both (\ref{eqSuccess}) and (\ref{eqErrorD}).

Upper bound on the probability of successful encoding:
\begin{align}
P_{s}
\; & \leq \sum_{\;\;\;{P\mathstrut}_{{\bf x}, \, \hat{\bf x}}^{}}
\,\Pr\,\big\{{\bf X} \, \in \, T({P\mathstrut}_{\bf x}^{})\big\}
\,\cdot\,\Pr\,\Big\{\exists \, m: \; \hat{\bf X\mathstrut}_{m} \; \in \; T\big({P\mathstrut}_{\hat{\bf x} \, | \,{\bf x}}^{}, \, {\bf X}\big) \,\Big|\, {P\mathstrut}_{\bf x}^{}\Big\}\,\cdot\,
\mathbbm{1}_{\displaystyle\big\{d\big({P\mathstrut}_{{\bf x}, \, \hat{\bf x}}^{}\big)
\; \leq \; D\big\}}({P\mathstrut}_{{\bf x}, \, \hat{\bf x}}^{})
\nonumber \\
& \leq \sum_{\substack{{P\mathstrut}_{{\bf x}, \, \hat{\bf x}}^{}\,:\\
D({P\mathstrut}_{{\bf x}, \, \hat{\bf x}}^{} \,\|\, {P\mathstrut}_{\bf x}^{} \times \,Q)
\; \leq \; R}}
\,\Pr\,\big\{{\bf X} \, \in \, T({P\mathstrut}_{\bf x}^{})\big\}
\,\cdot\,
\mathbbm{1}_{\displaystyle\big\{d\big({P\mathstrut}_{{\bf x}, \, \hat{\bf x}}^{}\big)
\; \leq \; D\big\}}({P\mathstrut}_{{\bf x}, \, \hat{\bf x}}^{})
\nonumber \\
&
\, + \sum_{\substack{{P\mathstrut}_{{\bf x}, \, \hat{\bf x}}^{}\,:\\
D({P\mathstrut}_{{\bf x}, \, \hat{\bf x}}^{} \,\|\, {P\mathstrut}_{\bf x}^{} \times \,Q)
\; \geq \; R}}
\,\Pr\,\big\{{\bf X} \, \in \, T({P\mathstrut}_{\bf x}^{})\big\}
\,\cdot\,\Pr\,\Big\{\exists \, m: \; \hat{\bf X\mathstrut}_{m} \; \in \; T\big({P\mathstrut}_{\hat{\bf x} \, | \,{\bf x}}^{}, \, {\bf X}\big) \,\Big|\, {P\mathstrut}_{\bf x}^{}\Big\}\,\times
\nonumber \\
&
\;\;\;\;\;\;\;\;\;\;\;\;\;\;\;\;\;\;\;\;\;\;\;\;\;\;\;\;\;\;\;\;\;\;\;\;\;\;
\;\;\;\;\;\;\;\;\;\;\;\;\;\;\;\;\;\;\;\;\;\;\;\;\;\;\;\;\;\;\,
\mathbbm{1}_{\displaystyle\big\{d\big({P\mathstrut}_{{\bf x}, \, \hat{\bf x}}^{}\big)
\; \leq \; D\big\}}({P\mathstrut}_{{\bf x}, \, \hat{\bf x}}^{})
\nonumber \\
& \overset{(a)}{\leq} \sum_{\substack{{P\mathstrut}_{{\bf x}, \, \hat{\bf x}}^{}\,:\\
D({P\mathstrut}_{{\bf x}, \, \hat{\bf x}}^{} \,\|\, {P\mathstrut}_{\bf x}^{} \times \,Q)
\; \leq \; R}}
\exp\big\{-nD({P\mathstrut}_{\bf x}^{}\;\|\;P )\big\}
\,\cdot\,
\mathbbm{1}_{\displaystyle\big\{d\big({P\mathstrut}_{{\bf x}, \, \hat{\bf x}}^{}\big)
\; \leq \; D\big\}}({P\mathstrut}_{{\bf x}, \, \hat{\bf x}}^{})
\nonumber \\
&
\, + \sum_{\substack{{P\mathstrut}_{{\bf x}, \, \hat{\bf x}}^{}\,:\\
D({P\mathstrut}_{{\bf x}, \, \hat{\bf x}}^{} \,\|\, {P\mathstrut}_{\bf x}^{} \times \,Q)
\; \geq \; R}}
\exp\big\{-nD({P\mathstrut}_{\bf x}^{}\;\|\;P )\big\}
\,\cdot\,
\exp\Big\{-n\left[D\big({P\mathstrut}_{{\bf x}, \, \hat{\bf x}}^{} \,\big\|\, {P\mathstrut}_{\bf x}^{} \!\times Q\big) \, - \, R\right]\Big\}
\,\times
\nonumber \\
&
\;\;\;\;\;\;\;\;\;\;\;\;\;\;\;\;\;\;\;\;\;\;\;\;\;\;\;\;\;\;\;\;\;\;\;\;\;\;
\;\;\;\;\;\;\;\;\;\;\;\;\;\;\;\;\;\;\;\;\;\;\;\;\;\;\;\;\;\;\,
\mathbbm{1}_{\displaystyle\big\{d\big({P\mathstrut}_{{\bf x}, \, \hat{\bf x}}^{}\big)
\; \leq \; D\big\}}({P\mathstrut}_{{\bf x}, \, \hat{\bf x}}^{})
\nonumber \\
& = \sum_{\;\;\;{P\mathstrut}_{{\bf x}, \, \hat{\bf x}}^{}}
\exp\big\{-nD({P\mathstrut}_{\bf x}^{}\;\|\;P )\big\}
\,\cdot\,
\mathbbm{1}_{\bigg\{\substack{\displaystyle
\;\;\;\;\;\;\;\;\;\;\;\;\;\;\;\,
d\big({P\mathstrut}_{{\bf x}, \, \hat{\bf x}}^{}\big)
\; \leq \; D\\
\displaystyle D\big({P\mathstrut}_{{\bf x}, \, \hat{\bf x}}^{} \,\big\|\, {P\mathstrut}_{\bf x}^{} \!\times Q\big)
\; \leq \; R}
\bigg\}}({P\mathstrut}_{{\bf x}, \, \hat{\bf x}}^{})
\nonumber \\
&
\, + \sum_{\;\;\;{P\mathstrut}_{{\bf x}, \, \hat{\bf x}}^{}}
\exp\Big\{-n\left[D({P\mathstrut}_{\bf x}^{}\;\|\;P )
\, + \,
D\big({P\mathstrut}_{{\bf x}, \, \hat{\bf x}}^{} \,\big\|\, {P\mathstrut}_{\bf x}^{} \!\times Q\big) \, - \, R\right]\Big\}
\,\times
\nonumber \\
&
\;\;\;\;\;\;\;\;\;\;\;\;\;\;\;\;\;\;\;\;\;\;\;\;\;\;\;\;\;\;\;\;\;\;\;\;\;\;
\;\;\;\;\;\;\;\;\;\;\;\;\,
\mathbbm{1}_{\bigg\{\substack{\displaystyle
\;\;\;\;\;\;\;\;\;\;\;\;\;\;\;\,
d\big({P\mathstrut}_{{\bf x}, \, \hat{\bf x}}^{}\big)
\; \leq \; D\\
\displaystyle D\big({P\mathstrut}_{{\bf x}, \, \hat{\bf x}}^{} \,\big\|\, {P\mathstrut}_{\bf x}^{} \!\times Q\big)
\; \geq \; R}
\bigg\}}({P\mathstrut}_{{\bf x}, \, \hat{\bf x}}^{})
\nonumber \\
& \overset{(b)}{\leq} \sum_{\;\;\;{P\mathstrut}_{{\bf x}, \, \hat{\bf x}}^{}}
\exp\big\{-nE_{1}(R, D)\big\}
\, + \, \sum_{\;\;\;{P\mathstrut}_{{\bf x}, \, \hat{\bf x}}^{}}
\exp\big\{-nE_{2}(R, D)\big\}
\nonumber \\
& \leq \; 2 {(n\, + \, 1)\mathstrut}^{|{\cal X}|\cdot|\hat{\cal X}|}\,\cdot\,
\exp\big\{-n\min\big\{E_{1}(R, D), \, E_{2}(R, D)\big\}\big\},
\nonumber
\end{align}
where ($a$) uses the bound analogous to (\ref{eqTypeProb})
\begin{equation} \label{eqTypeProbSource}
\Pr\,\big\{{\bf X} \, \in \, T({P\mathstrut}_{\bf x}^{})\big\}
\; \leq \;
\exp\big\{-nD\big({P\mathstrut}_{\bf x}^{}(x)\;\big\|\;P(x)\big)\big\},
\end{equation}
and the union bound analogous to (\ref{eqUBTypesR})
\begin{displaymath}
\Pr\,\Big\{\exists \, m: \; \hat{\bf X\mathstrut}_{m} \; \in \; T\big({P\mathstrut}_{\hat{\bf x} \, | \,{\bf x}}^{}, \, {\bf X}\big) \,\Big|\, {P\mathstrut}_{\bf x}^{}\Big\}
\; \leq \;
\exp\Big\{-n\left[D\big({P\mathstrut}_{{\bf x}, \, \hat{\bf x}}^{}(x, \hat{x}) \,\big\|\, {P\mathstrut}_{\bf x}^{}(x) \cdot Q(\hat{x})\big) \, - \, R\right]\Big\};
\end{displaymath}
($b$) uses the definitions
\begin{align}
&
E_{1}(R, D)
\;\; \triangleq \;\;
\min_{\substack{T(x),\,W(\hat{x} \,|\, x):\\
\\
d(T\,\circ\, W) \; \leq \; D \\
\\
D(T\, \circ\, W \;\|\; T \,\times\, Q) \; \leq \; R
}}
\big\{D(T \; \| \; P)\big\},
\label{eqE1Sources} \\
&
E_{2}(R, D)
\;\; \triangleq \;\;
\min_{\substack{T(x),\,W(\hat{x} \,|\, x):\\
\\
d(T\,\circ\, W) \; \leq \; D\\
\\
D(T\, \circ\, W \;\|\; T \,\times\, Q) \; \geq \; R
}}
\big\{D(T \; \| \; P)\, + \,
D(T \circ W \;\|\; T \times Q) \, - \, R\big\},
\label{eqE2Sources}
\end{align}
analogous to (\ref{eqE1Defined}) and (\ref{eqThirdSumExponent}).
Thus, we obtain the lower bound on the encoding success exponent:
\begin{thm} \label{thm10}
\begin{align}
& \liminf_{n \, \rightarrow \, \infty} \; \left\{-\frac{1}{n}\ln P_{s}\right\}
\;\; \geq \;\;
\min \big\{ E_{1}(R, D),\;
E_{2}(R, D)\big\}.
\label{eqLowerSuccess}
\end{align}
\end{thm}

Next, we construct two alternative lower bounds on the probability of successful encoding:
\begin{align}
P_{s} \;\; & \geq \;\;
\max_{\;\;\;{P\mathstrut}_{{\bf x}, \, \hat{\bf x}}^{}}
\,\Pr\,\big\{{\bf X} \, \in \, T({P\mathstrut}_{\bf x}^{})\big\}
\,\cdot\,\Pr\,\Big\{\exists \, m: \; \hat{\bf X\mathstrut}_{m} \; \in \; T\big({P\mathstrut}_{\hat{\bf x} \, | \,{\bf x}}^{}, \, {\bf X}\big) \,\Big|\, {P\mathstrut}_{\bf x}^{}\Big\}\,\times
\nonumber \\
& \;\;\;\;\;\;\;\;\;\;\;\;\;\;\;\;\;\;\;\;\;\;\;\;\;\;\;\;\;\;\;\;\;\;\;\;\;
\;\;\;\;\;\;\;\;\;\;\;\;\;\;\;\;\;\;\;\;
\mathbbm{1}_{\bigg\{\substack{\displaystyle
\;\;\;\;\;\;\;
d\big({P\mathstrut}_{{\bf x}, \, \hat{\bf x}}^{}\big)
\; \leq \; D \\
\displaystyle D\big({P\mathstrut}_{{\bf x}, \, \hat{\bf x}}^{} \,\big\|\, {P\mathstrut}_{\bf x}^{} \!\times Q\big)
\; \leq \; R \, - \, \epsilon_{1}}
\bigg\}}({P\mathstrut}_{{\bf x}, \, \hat{\bf x}}^{})
\nonumber \\
& = \;\;
\max_{\;\;\;{P\mathstrut}_{{\bf x}, \, \hat{\bf x}}^{}}
\,\Pr\,\big\{{\bf X} \, \in \, T({P\mathstrut}_{\bf x}^{})\big\}
\,\cdot\,\Pr\,\bigg\{
\sum_{m}
\mathbbm{1}_{\displaystyle\big\{\hat{\bf X\mathstrut}_{m} \; \in \; T\big({P\mathstrut}_{\hat{\bf x} \, | \,{\bf x}}^{}, \, {\bf X}\big)\big\}}(m)
\; \geq \; 1
\,\bigg|\, {P\mathstrut}_{\bf x}^{}\bigg\}\,\times
\nonumber \\
& \;\;\;\;\;\;\;\;\;\;\;\;\;\;\;\;\;\;\;\;\;\;\;\;\;\;\;\;\;\;\;\;\;\;\;\;\;
\;\;\;\;\;\;\;\;\;\;\;\;\;\;\;\;\;\;\;\;
\mathbbm{1}_{\bigg\{\substack{\displaystyle
\;\;\;\;\;\;\;
d\big({P\mathstrut}_{{\bf x}, \, \hat{\bf x}}^{}\big)
\; \leq \; D \\
\displaystyle D\big({P\mathstrut}_{{\bf x}, \, \hat{\bf x}}^{} \,\big\|\, {P\mathstrut}_{\bf x}^{} \!\times Q\big)
\; \leq \; R \, - \, \epsilon_{1}}
\bigg\}}({P\mathstrut}_{{\bf x}, \, \hat{\bf x}}^{})
\nonumber \\
& \overset{(a)}{\geq} \;\;
\max_{\;\;\;{P\mathstrut}_{{\bf x}, \, \hat{\bf x}}^{}}
\,\Pr\,\big\{{\bf X} \, \in \, T({P\mathstrut}_{\bf x}^{})\big\}
\,\cdot\,\Pr\,\Bigg\{\underbrace{\sum_{i \, = \, 1}^{{e\mathstrut}^{nR}}Z_{i} \; \geq \; 1}_{Z_{i}\;\sim\;\text{Ber}({e\mathstrut}^{-nR})}
\Bigg\}
\,\cdot \, 
\mathbbm{1}_{\bigg\{\substack{\displaystyle
\;\;\;\;\;\;\;
d\big({P\mathstrut}_{{\bf x}, \, \hat{\bf x}}^{}\big)
\; \leq \; D \\
\displaystyle D\big({P\mathstrut}_{{\bf x}, \, \hat{\bf x}}^{} \,\big\|\, {P\mathstrut}_{\bf x}^{} \!\times Q\big)
\; \leq \; R \, - \, \epsilon_{1}}
\bigg\}}({P\mathstrut}_{{\bf x}, \, \hat{\bf x}}^{})
\nonumber \\
& \overset{(b)}{\geq} \;\;
\max_{\;\;\;{P\mathstrut}_{{\bf x}, \, \hat{\bf x}}^{}}\;
\underbrace{{(n\, + \, 1)\mathstrut}^{-|{\cal X}|}\cdot
\exp\big\{-nD({P\mathstrut}_{\bf x}^{}\;\|\;P )\big\}}_{\leq\;\Pr\,\{{\bf X} \, \in \, T({P\mathstrut}_{\bf x}^{})\}}
\,\cdot\,
\underbrace{
{\Big(1\,-\,{e\mathstrut}^{-nR}\Big)\mathstrut}^{{e\mathstrut}^{nR}}}_{ \rightarrow \, 1/e}
\,\times
\nonumber \\
& \;\;\;\;\;\;\;\;\;\;\;\;\;\;\;\;\;\;\;\;\;\;\;\;\;\;\;\;\;\;\;\;\;\;\;\;\;
\;\;\;\;\;\;\;\;\;\;\;\;\;\;\;\;\;\;\;\;
\;\;\;\;\;\;\;\;\;\;\;\;\;\;\;\;\;\;\;\;\;\;\;\;\,
\mathbbm{1}_{\bigg\{\substack{\displaystyle
\;\;\;\;\;\;\;
d\big({P\mathstrut}_{{\bf x}, \, \hat{\bf x}}^{}\big)
\; \leq \; D \\
\displaystyle D\big({P\mathstrut}_{{\bf x}, \, \hat{\bf x}}^{} \,\big\|\, {P\mathstrut}_{\bf x}^{} \!\times Q\big)
\; \leq \; R \, - \, \epsilon_{1}}
\bigg\}}({P\mathstrut}_{{\bf x}, \, \hat{\bf x}}^{})
\nonumber \\
& \overset{(c)}{\geq} \;\;
\exp\big\{-n\big[E_{1}^{\,\text{types}}(R \, - \, \epsilon_{1}, \, D) \, + \, \epsilon_{2}\big]\big\}
\nonumber \\
& \overset{(d)}{\geq} \;\; \exp\big\{-n\big[E_{1}(R \, - \, \epsilon_{1} \, - \, \epsilon_{3}, \, D \, - \, \epsilon_{3})\, + \, \epsilon_{2} \, + \, \epsilon_{3}\big]\big\}.
\label{eqLowBoA}
\end{align}
Explanation of steps:\newline
($a$) holds for sufficiently large $n$, when
\begin{displaymath} 
\Pr\,\big\{\hat{\bf X\mathstrut}_{m} \; \in \; T\big({P\mathstrut}_{\hat{\bf x} \, | \,{\bf x}}^{}, \, {\bf X}\big)
\;\big|\;
{\bf X} \, \in \, T({P\mathstrut}_{\bf x}^{})\big\}
\;\geq\;
\exp\Big\{-n\left[D\big({P\mathstrut}_{{\bf x}, \, \hat{\bf x}}^{} \,\big\|\, {P\mathstrut}_{\bf x}^{} \!\times Q\big) \, + \, \epsilon_{1}\right]\Big\}
\;\geq\;
\exp\{-n R\},
\end{displaymath}
for
\begin{displaymath}
Z_{i} \;\; \sim \;\; \text{i.i.d}\;\;\text{Bernoulli}\left(\exp\{-n R\}\right).
\end{displaymath}
($b$) uses a lower bound on the probability of a type, and the second part of Lemma~\ref{lemma10} with $I \,=\, R$.\newline
($c$) holds for sufficiently large $n$, given $\epsilon_{2}\,>\,0$, with the exponent $E_{1}^{\,\text{types}}(R \, - \, \epsilon_{1}, \, D)$
defined as in (\ref{eqE1Sources}) with types in place of $T\circ W$.\newline
($d$) Analogous to the steps in (\ref{eqTermOne}). Let ${T\mathstrut}^{*}\circ {W\mathstrut}^{*}$ denote the joint distribution, achieving
\begin{displaymath}
E_{1}(R \, - \, \epsilon_{1} \, - \, \epsilon_{3}, \, D \, - \, \epsilon_{3}),
\end{displaymath}
defined by (\ref{eqE1Sources}), for some $\epsilon_{3} \, > \, 0$.
This implies
\begin{align}
D\big({T\mathstrut}^{*} \;\|\; P\big) \;\; & = \;\; E_{1}(R \, - \, \epsilon_{1} \, - \, \epsilon_{3}, \, D \, - \, \epsilon_{3}),
\label{eqAch} \\
d\big({T\mathstrut}^{*} \circ {W\mathstrut}^{*}\big)
\;\; & \leq \;\; D \, - \, \epsilon_{3},
\nonumber \\
D\big({T\mathstrut}^{*} \circ {W\mathstrut}^{*} \,\|\, {T\mathstrut}^{*} \times Q\big)
\;\; & \leq \;\; R \, - \, \epsilon_{1} \, - \, \epsilon_{3}.
\nonumber
\end{align}
Let ${T\mathstrut}_{n}^{*}\circ {W\mathstrut}_{n}^{*}$
denote a quantized version of the joint distribution ${T\mathstrut}^{*} \circ {W\mathstrut}^{*}$
with precision $\frac{1}{n}$, i.e. a joint type with denominator $n$.
Note, that the divergences, as functions of $T\circ W$, have bounded derivatives, and also the distortion measure $d(x, \hat{x})$
is bounded. Therefore, for any $\epsilon_{3}\,>\,0$ there exists $n$ large enough,
such that the quantized distribution ${T\mathstrut}_{n}^{*}\circ {W\mathstrut}_{n}^{*}$
satisfies
\begin{align}
D\big({T\mathstrut}_{n}^{*} \;\|\; P\big) \;\; & \leq \;\; D({T\mathstrut}^{*} \;\|\; P) \, + \, \epsilon_{3},
\label{eqOptDist} \\
d\big({T\mathstrut}_{n}^{*} \circ {W\mathstrut}_{n}^{*}\big)
\;\; & \leq \;\; D,
\nonumber \\
D\big({T\mathstrut}_{n}^{*} \circ {W\mathstrut}_{n}^{*} \,\|\, {T\mathstrut}_{n}^{*} \times Q\big)
\;\; & \leq \;\; R \, - \, \epsilon_{1}.
\nonumber
\end{align}
It follows from the last two inequalities that for $n$ sufficiently large
\begin{equation} \label{eqQW}
D\big({T\mathstrut}_{n}^{*} \;\|\; P\big) \;\; \geq \;\; E_{1}^{\,\text{types}}(R \, - \, \epsilon_{1}, \, D).
\end{equation}
The relations (\ref{eqQW}), (\ref{eqOptDist}), (\ref{eqAch}) result in
\begin{displaymath}
E_{1}(R \, - \, \epsilon_{1} \, - \, \epsilon_{3}, \, D \, - \, \epsilon_{3}) \, + \, \epsilon_{3}
\;\; \geq \;\; E_{1}^{\,\text{types}}(R \, - \, \epsilon_{1}, \, D).
\end{displaymath}
This explains ($d$).

The second bound:
\begin{align}
P_{s} \;\; & \geq \;\;
\max_{\;\;\;{P\mathstrut}_{{\bf x}, \, \hat{\bf x}}^{}}
\,\Pr\,\big\{{\bf X} \, \in \, T({P\mathstrut}_{\bf x}^{})\big\}
\,\cdot\,\Pr\,\Big\{\exists \, m: \; \hat{\bf X\mathstrut}_{m} \; \in \; T\big({P\mathstrut}_{\hat{\bf x} \, | \,{\bf x}}^{}, \, {\bf X}\big) \,\Big|\, {P\mathstrut}_{\bf x}^{}\Big\}\,\times
\nonumber \\
& \;\;\;\;\;\;\;\;\;\;\;\;\;\;\;\;\;\;\;\;\;\;\;\;\;\;\;\;\;\;\;\;\;\;\;\;\;
\;\;\;\;\;\;\;\;\;\;\;\;\;\;\;\;\;\;\;\;
\mathbbm{1}_{\bigg\{\substack{\displaystyle
\;\;\;\;\;\;\;\;\;\;\;\;\;\;\;\,
d\big({P\mathstrut}_{{\bf x}, \, \hat{\bf x}}^{}\big)
\; \leq \; D \\
\displaystyle D\big({P\mathstrut}_{{\bf x}, \, \hat{\bf x}}^{} \,\big\|\, {P\mathstrut}_{\bf x}^{} \!\times Q\big)
\; \geq \; R}
\bigg\}}({P\mathstrut}_{{\bf x}, \, \hat{\bf x}}^{})
\nonumber \\
& = \;\;
\max_{\;\;\;{P\mathstrut}_{{\bf x}, \, \hat{\bf x}}^{}}
\,\Pr\,\big\{{\bf X} \, \in \, T({P\mathstrut}_{\bf x}^{})\big\}
\,\cdot\,\Pr\,\bigg\{
\sum_{m}
\mathbbm{1}_{\displaystyle\big\{\hat{\bf X\mathstrut}_{m} \; \in \; T\big({P\mathstrut}_{\hat{\bf x} \, | \,{\bf x}}^{}, \, {\bf X}\big)\big\}}(m)
\; \geq \; 1
\,\bigg|\, {P\mathstrut}_{\bf x}^{}\bigg\}\,\times
\nonumber \\
& \;\;\;\;\;\;\;\;\;\;\;\;\;\;\;\;\;\;\;\;\;\;\;\;\;\;\;\;\;\;\;\;\;\;\;\;\;
\;\;\;\;\;\;\;\;\;\;\;\;\;\;\;\;\;\;\;\;
\mathbbm{1}_{\bigg\{\substack{\displaystyle
\;\;\;\;\;\;\;\;\;\;\;\;\;\;\;\,
d\big({P\mathstrut}_{{\bf x}, \, \hat{\bf x}}^{}\big)
\; \leq \; D \\
\displaystyle D\big({P\mathstrut}_{{\bf x}, \, \hat{\bf x}}^{} \,\big\|\, {P\mathstrut}_{\bf x}^{} \!\times Q\big)
\; \geq \; R}
\bigg\}}({P\mathstrut}_{{\bf x}, \, \hat{\bf x}}^{})
\nonumber \\
& \overset{(a)}{\geq} \;\;
\max_{\;\;\;{P\mathstrut}_{{\bf x}, \, \hat{\bf x}}^{}}
\,\Pr\,\big\{{\bf X} \, \in \, T({P\mathstrut}_{\bf x}^{})\big\}
\,\cdot\,\Pr\,\Bigg\{\sum_{i \, = \, 1}^{{e\mathstrut}^{nR}}Z_{i} \; \geq \; 1
\Bigg\}
\,\cdot \,
\mathbbm{1}_{\bigg\{\substack{\displaystyle
\;\;\;\;\;\;\;\;\;\;\;\;\;\;\;\,
d\big({P\mathstrut}_{{\bf x}, \, \hat{\bf x}}^{}\big)
\; \leq \; D \\
\displaystyle D\big({P\mathstrut}_{{\bf x}, \, \hat{\bf x}}^{} \,\big\|\, {P\mathstrut}_{\bf x}^{} \!\times Q\big)
\; \geq \; R}
\bigg\}}({P\mathstrut}_{{\bf x}, \, \hat{\bf x}}^{})
\nonumber \\
& \overset{(b)}{\geq} \;\;
\max_{\;\;\;{P\mathstrut}_{{\bf x}, \, \hat{\bf x}}^{}}\;
\overbrace{{(n\, + \, 1)\mathstrut}^{-|{\cal X}|}\cdot
\exp\big\{-nD({P\mathstrut}_{\bf x}^{}\;\|\;P )\big\}}^{\Pr\,\{{\bf X} \, \in \, T({P\mathstrut}_{\bf x}^{})\}\; \geq}
\,\cdot \,
\exp\Big\{-n\left[
D\big({P\mathstrut}_{{\bf x}, \, \hat{\bf x}}^{} \,\big\|\, {P\mathstrut}_{\bf x}^{} \!\times Q\big) \, - \, R
\, + \, \epsilon_{1}
\right]\Big\}\,\times
\nonumber \\
& \;\;\;\;\;\;\;\;\;\;\;\;\;\;\;\;\;\;\;\;\;\;\;\;\;\;\;\;\;\;\,
\;\;\;\;\;\;\;\;\;\;\;\;\;\;\;\;\;\;\;\;\;\;\;\;
\underbrace{
{\Big(1\,-\,{e\mathstrut}^{-nR}\Big)\mathstrut}^{{e\mathstrut}^{nR}}}_{ \rightarrow \, 1/e}
\cdot\,
\mathbbm{1}_{\bigg\{\substack{\displaystyle
\;\;\;\;\;\;\;\;\;\;\;\;\;\;\;\,
d\big({P\mathstrut}_{{\bf x}, \, \hat{\bf x}}^{}\big)
\; \leq \; D \\
\displaystyle D\big({P\mathstrut}_{{\bf x}, \, \hat{\bf x}}^{} \,\big\|\, {P\mathstrut}_{\bf x}^{} \!\times Q\big)
\; \geq \; R}
\bigg\}}({P\mathstrut}_{{\bf x}, \, \hat{\bf x}}^{})
\nonumber \\
& \overset{(c)}{\geq} \;\;
\exp\big\{-n\big[E_{2}^{\,\text{types}}(R, D) \, + \, \epsilon_{1} \, + \, \epsilon_{4}\big]\big\}
\nonumber \\
& \overset{(d)}{\geq} \;\; \exp\big\{-n\big[E_{2}(R \, + \, \epsilon_{5}, \, D \, - \, \epsilon_{5})\, + \,
\epsilon_{1} \, + \, \epsilon_{4} \, + \, 2\epsilon_{5}\big]\big\}.
\label{eqLowBoB}
\end{align}
Explanation of steps:\newline
($a$)  holds for sufficiently large $n$, when
\begin{displaymath} 
\Pr\,\big\{\hat{\bf X\mathstrut}_{m} \; \in \; T\big({P\mathstrut}_{\hat{\bf x} \, | \,{\bf x}}^{}, \, {\bf X}\big)
\;\big|\;
{\bf X} \, \in \, T({P\mathstrut}_{\bf x}^{})\big\}
\;\geq\;
\exp\Big\{-n\left[D\big({P\mathstrut}_{{\bf x}, \, \hat{\bf x}}^{} \,\big\|\, {P\mathstrut}_{\bf x}^{} \!\times Q\big) \, + \, \epsilon_{1}\right]\Big\},
\end{displaymath}
for
\begin{displaymath}
Z_{i} \;\; \sim \;\; \text{i.i.d}\;\;\text{Bernoulli}\left(\exp\Big\{-n\left[D\big({P\mathstrut}_{{\bf x}, \, \hat{\bf x}}^{} \,\big\|\, {P\mathstrut}_{\bf x}^{} \!\times Q\big) \, + \, \epsilon_{1}\right]\Big\}\right).
\end{displaymath}
($b$) uses the lower bound on the probability of a type, and the second part of Lemma~\ref{lemma10}.\newline
($c$) holds for sufficiently large $n$, given $\epsilon_{4}\,>\,0$, with the exponent $E_{2}^{\,\text{types}}(R, \, D)$
defined as in (\ref{eqE2Sources}) with types in place of $T\circ W$.\newline
($d$) parallels the analogous step in (\ref{eqTermTwo}) with $E_{2}(\cdot, \cdot)$ defined in (\ref{eqE2Sources}).

The two lower bounds on the probability (\ref{eqLowBoA}) and (\ref{eqLowBoB}) result in the upper bound on the exponent:
\begin{displaymath}
\lim_{\epsilon\,\rightarrow\,0}\;
\min \big\{ E_{1}(R \, - \, \epsilon, \, D \, - \, \epsilon),\;
E_{2}(R \, + \, \epsilon, \, D \, - \, \epsilon)\big\}
\end{displaymath}
Analogously to (\ref{eqNewRightTermSimplified}),
this limit can be simplified as
\begin{displaymath}
\lim_{\epsilon\,\rightarrow\,0}\;
\min \big\{ E_{1}(R \, - \, \epsilon, \, D \, - \, \epsilon),\;
E_{2}(R \, + \, \epsilon, \, D \, - \, \epsilon)\big\}
\;\; = \;\;
\lim_{\epsilon\,\rightarrow\,0}\;
\min \big\{ E_{1}(R, \, D \, - \, \epsilon),\;
E_{2}(R, \, D \, - \, \epsilon)\big\}.
\end{displaymath}
\begin{thm} \label{thm11}
\begin{align}
&
\limsup_{n \, \rightarrow \, \infty} \; \left\{-\frac{1}{n}\ln P_{s}\right\}
\;\; \leq \;\;
\lim_{\epsilon\,\rightarrow\,0}\;
\min \big\{ E_{1}(R, \, D \, - \, \epsilon),\;
E_{2}(R, \, D \, - \, \epsilon)\big\}.
\nonumber
\end{align}
\end{thm}

In order to combine the lower and upper bounds given by Theorem~\ref{thm10} and Theorem~\ref{thm11},
and determine the true exponent of successful encoding,
observe, that the lower bound (\ref{eqLowerSuccess}) of Theorem~\ref{thm10}
can be rewritten, analogously to (\ref{eqE2Tilde}), as the RHS of (\ref{eqSuccess}).
As we have shown previously, the {\em implicit} expression on the RHS of (\ref{eqSuccess}) equals the {\em explicit} expression (\ref{eqES}).
As can be seen from (\ref{eqES}), it is a convex ($\cup$) function of $(R, D)$,
and therefore it is continuous in $(R, D)$, except for the points where its result switches to the value $+\infty$.
This occurs for $D \, = \, D_{\min} \, = \, \min_{x, \, \hat{x}}d(x, \hat{x})$.
For this value of $D$, the upper bound of Theorem~\ref{thm11} is $+\infty$, while the lower bound,
given by Theorem~\ref{thm10}, is finite.
For all other values of $D$ the bounds of Theorem~\ref{thm10} and Theorem~\ref{thm11} coincide.
Therefore we have proved Theorem~\ref{thm1}.

\section{{\bf Maximization over} \texorpdfstring{$\,Q\,$}{\em Q} {\bf of the decoding error exponent for arbitrary} \texorpdfstring{$D$}{\em D}} \label{S20}
The decoding error exponent, corresponding to the ``source duality'' decoder (\ref{eqErrorEvent}), is given by
(\ref{eqDecErrorArbD}), with the possible exception of the points $\big(R, \; D_{\min}(Q)\big)$, where
\begin{displaymath}
D_{\min}(Q) \; = \; \min_{y}\;\;\;
\min_{x:\;\; Q(x)\, > \, 0}
\;\;\;
\min_{\hat{x}:\;\; Q(\hat{x})\, > \, 0}
\;\ln \frac{P(y \,|\, x)}{P(y \,|\, \hat{x})}.
\end{displaymath}
Let us assume
$D_{\min}(Q)\, < \, 0$
for all $Q$, except for the degenerate $Q$ given by (\ref{eqQDegen}),
for which $D_{\min}(Q)\, = \, 0$.
Otherwise, there exist distinct channel inputs with exactly the same $P(y \,|\, x)$,
as a function of $y$,
i.e. indistinguishable at the channel output.
Such input letters can be merged without loss of generality.

With this assumption, we obtain the following. For $D \, > \, 0$, the maximal random coding exponent over $Q$ is given by the supremum of (\ref{eqDecErrorArbD})
over $Q$.
For any $D \, < \, 0$ the expression (\ref{eqDecErrorArbD}) yields the true exponent for the
degenerate distibution (\ref{eqQDegen}),
which equals $+\infty$.
Therefore, for $D \, < \, 0$ the maximal random coding exponent over $Q$ is $+\infty$.
For $D \, = \, 0$, the true exponent for the
degenerate distibution (\ref{eqQDegen}) is $0$,
which can be inferred directly from the definition of the decoder (\ref{eqErrorEvent}) itself,
and the same is given by the expression (\ref{eqDecErrorArbD}).
For all other $Q$, in the case of $D \, = \, 0$, the true exponent is also given by (\ref{eqDecErrorArbD}).
We conclude, that, for $D \, = \, 0$, the maximal random coding exponent over $Q$ is given by the supremum of (\ref{eqDecErrorArbD})
over $Q$.

To summarize the above, we have
\begin{thm} \label{thm12}
\begin{align}
& \sup_{Q(x)}\;
\lim_{n \, \rightarrow \, \infty} \; \left\{-\frac{1}{n}\ln P_{e}\right\}
\;\; =
\nonumber \\
&\sup_{Q(x)}\;\sup_{0 \, \leq \,\rho \,\leq \,1} \; \Bigg\{ -\inf_{s\,\geq\,0} \; \ln \; \sum_{x, \,y}Q(x)P(y \,|\, x)\Bigg[\sum_{\hat{x}}Q(\hat{x})\left[\frac{P(y \,|\, x)}{P(y \,|\, \hat{x})}\,e^{-D}\right]^{-s}\Bigg]^{\rho} \; - \; \rho R \Bigg\}.
\label{eqDecErrorArbDMaxQ}
\end{align}
\end{thm}
Note, that this is equal $+\infty$ for $D\, < \, 0$.

Now, using Lemma~\ref{lemma16} with Theorem~\ref{thm12}, we obtain, that the original Forney's random coding exponent coincides with
the ``source duality'' exponent for $D \, \geq \, 0\,$:
\begin{cor} \label{cor2}
{\em For $D \, \geq \, 0$}
\begin{displaymath}
\sup_{Q(x)}\;
\lim_{n \, \rightarrow \, \infty} \; \left\{-\frac{1}{n}\ln P_{e}\right\}
\;\; = \;\;
\sup_{Q(x)} \; {E\mathstrut}_{\text{bound}}(Q, R, D).
\end{displaymath}
\end{cor}

\section{\bf Derivation of the encoding failure exponent} \label{S21}
This will lead to both (\ref{eqFailure}) and (\ref{eqCorrectExtended}).

Here another generic lemma is needed, similar to Lemma~\ref{lemma9} and Lemma~\ref{lemma10}.
\begin{lemma} \label{lemma19}
{\em Let $Z_{i}\,\sim\, \text{i.i.d}\;\text{Bernoulli}\left({e\mathstrut}^{-nI}\right)$,
$i \, = \, 1, \, 2, \, ... \, , \, {e\mathstrut}^{nR}$.}
{\em If $I\,\leq\,R \, - \, \epsilon$, with $\epsilon \, > \, 0$, then}
\begin{equation} \label{eqBoundForZero}
\Pr\,\Bigg\{\sum_{i \, = \, 1}^{{e\mathstrut}^{nR}}Z_{i} \, = \, 0\Bigg\}
\;\; < \;\; \exp\big\{-{e\mathstrut}^{n\epsilon}\big\}.
\end{equation}
\end{lemma}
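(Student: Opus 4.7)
The plan is to observe that, by independence of the $Z_i$, the event $\{\sum_i Z_i = 0\}$ has probability exactly $(1-e^{-nI})^{e^{nR}}$, after which a single application of the elementary inequality $1-x \le e^{-x}$ (strict for $x>0$) reduces everything to a monotonicity argument in the exponent.

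First I would write
\begin{equation*}
\Pr\,\Bigg\{\sum_{i=1}^{e^{nR}} Z_i \,=\, 0\Bigg\} \;=\; \prod_{i=1}^{e^{nR}} \Pr\{Z_i = 0\} \;=\; \big(1 - e^{-nI}\big)^{e^{nR}}.
\end{equation*}
Next, since $e^{-nI} \in (0,1)$, the strict inequality $1 - e^{-nI} < e^{-e^{-nI}}$ yields
\begin{equation*}
\big(1 - e^{-nI}\big)^{e^{nR}} \;<\; \exp\!\big\{-e^{nR}\cdot e^{-nI}\big\} \;=\; \exp\!\big\{-e^{n(R-I)}\big\}.
\end{equation*}
Finally, invoking the hypothesis $R - I \ge \epsilon$ and the monotonicity of $t \mapsto \exp\{-e^{nt}\}$, I would conclude
\begin{equation*}
\exp\!\big\{-e^{n(R-I)}\big\} \;\le\; \exp\!\big\{-e^{n\epsilon}\big\},
\end{equation*}
which is the claimed bound.

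There is no real obstacle here: unlike Lemma~\ref{lemma9} and Lemma~\ref{lemma10}, we are bounding the probability of the clean event that \emph{all} Bernoullis vanish, so the exact product form is available and no Chernoff, Chebyshev, or union-bound machinery is needed. The only small care points are to keep the inequality strict (which comes for free from $1-x < e^{-x}$ whenever $x>0$) and to note that the condition $I \le R - \epsilon$ is used only at the very last step, through $e^{n(R-I)} \ge e^{n\epsilon}$; in particular the bound would already hold with $R-I$ in place of $\epsilon$, and weakening it to $\epsilon$ gives the form required for later use.
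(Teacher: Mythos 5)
Your proof is correct and follows essentially the same route as the paper: both start from the exact product form $(1-e^{-nI})^{e^{nR}}$ and then apply the same elementary bound (the paper writes it as $(1-x)^{-1/x}>e$ for $0<x<1$, which is just your $1-x<e^{-x}$ in disguise) before invoking $R-I\geq\epsilon$. No difference of substance.
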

\begin{proof}
\begin{align}
\Pr\,\Bigg\{\sum_{i \, = \, 1}^{{e\mathstrut}^{nR}}Z_{i} \; = \; 0\Bigg\}
\;\; & = \;\;
\prod_{i \, = \, 1}^{{e\mathstrut}^{nR}}
\Pr\,\big\{Z_{i} \; = \; 0\big\}
\nonumber \\
& = \;\;
{\Big[1\,-\,{e\mathstrut}^{-nI}\Big]\mathstrut}^{{e\mathstrut}^{nR}}
\nonumber \\
& = \;\;
\bigg[\underbrace{\left(1 \, - \, {e\mathstrut}^{-nI}\right)^{-{e\mathstrut}^{nI}}}_{>\,e}\bigg]^{-{e\mathstrut}^{-nI}\cdot\,{e\mathstrut}^{nR}}
\; \overset{(*)}{<} \;\; \exp\Big\{-{e\mathstrut}^{n(R\,-\,I)}\Big\}
\;\; \leq \;\; \exp\big\{-{e\mathstrut}^{n\epsilon}\big\},
\nonumber
\end{align}
where ($*$) holds because $\;(1 - x)^{-1/x} \, > \, e\;$ for $\;0\,<\,x\,<\,1$.
\end{proof}

Upper bound on the probability of encoding failure:
\begin{align}
P_{f}
\;\; \leq \;
& \sum_{\substack{{P\mathstrut}_{\bf x}^{}:\\
\\
{R\mathstrut}^{\,\text{types}}({P\mathstrut}_{\bf x}^{}, \, Q, \, D)\; \leq \; R \, - \, 2\epsilon_{1}
}}
\,\Pr\,\big\{{\bf X} \, \in \, T({P\mathstrut}_{\bf x}^{})\big\}
\,\times
\nonumber \\
&
\;\;\;\;\;\;\;\;\;\;\;\;\;\;\;
\min_{\substack{{P\mathstrut}_{\hat{\bf x} \, | \,{\bf x}}^{}:\\
\\
d({P\mathstrut}_{{\bf x},\,\hat{\bf x}}^{}) \; \leq \; D
}}
\,\Pr\,\bigg\{
\sum_{m}
\mathbbm{1}_{\displaystyle\big\{\hat{\bf X\mathstrut}_{m} \; \in \; T\big({P\mathstrut}_{\hat{\bf x} \, | \,{\bf x}}^{}, \, {\bf X}\big)\big\}}(m)
\; = \; 0
\,\bigg|\, {P\mathstrut}_{\bf x}^{}\bigg\}
\nonumber \\
&
\;\;\;\;\;\;\;\;\;\;\;\;\;\;\;\;\;\;\;\;\;\;\;\;\;\;\;\;\;\;\;\;\;\;\;\;\;\;\;\;\;\;
\;\;\;\;\;\;\;\;\;\;\;\;\;\;\;\;\;\;\;\;\;\;\;\;\;
+ \sum_{\substack{{P\mathstrut}_{\bf x}^{}:\\
\\
{R\mathstrut}^{\,\text{types}}({P\mathstrut}_{\bf x}^{}, \, Q, \, D)\; \geq \; R \, - \, 2\epsilon_{1}}}
\,\Pr\,\big\{{\bf X} \, \in \, T({P\mathstrut}_{\bf x}^{})\big\}
\nonumber \\
\overset{(a)}{\leq} \;
& \sum_{\substack{{P\mathstrut}_{\bf x}^{}:\\
\\
{R\mathstrut}^{\,\text{types}}({P\mathstrut}_{\bf x}^{}, \, Q, \, D)\; \leq \; R \, - \, 2\epsilon_{1}
}}
\,\underbrace{\Pr\,\big\{{\bf X} \, \in \, T({P\mathstrut}_{\bf x}^{})\big\}}_{\leq\, 1}
\,\times
\nonumber \\
&
\;\;\;\;\;\;\;\;\;\;\;\;\;\;\;
\min_{\substack{{P\mathstrut}_{\hat{\bf x} \, | \,{\bf x}}^{}:\\
\\
d({P\mathstrut}_{{\bf x},\,\hat{\bf x}}^{}) \; \leq \; D
}}
\,\Pr\,\Bigg\{\sum_{i \, = \, 1}^{{e\mathstrut}^{nR}}Z_{i} \, = \, 0\Bigg\}
\;\;\;\;\;\;\;\,
+ \sum_{\substack{{P\mathstrut}_{\bf x}^{}:\\
\\
{R\mathstrut}^{\,\text{types}}({P\mathstrut}_{\bf x}^{}, \, Q, \, D)\; \geq \; R \, - \, 2\epsilon_{1}}}
\,\Pr\,\big\{{\bf X} \, \in \, T({P\mathstrut}_{\bf x}^{})\big\}
\nonumber \\
&
\nonumber \\
\overset{(b)}{\leq} \;
& \sum_{\substack{{P\mathstrut}_{\bf x}^{}:\\
\\
{R\mathstrut}^{\,\text{types}}({P\mathstrut}_{\bf x}^{}, \, Q, \, D)\; \leq \; R \, - \, 2\epsilon_{1}
}}
\,\Pr\,\Bigg\{\sum_{i \, = \, 1}^{{e\mathstrut}^{nR}} B_{i} \, = \, 0\Bigg\}
\;\;\;\;\;\;\;\,
+ \sum_{\substack{{P\mathstrut}_{\bf x}^{}:\\
\\
{R\mathstrut}^{\,\text{types}}({P\mathstrut}_{\bf x}^{}, \, Q, \, D)\; \geq \; R \, - \, 2\epsilon_{1}}}
\,\Pr\,\big\{{\bf X} \, \in \, T({P\mathstrut}_{\bf x}^{})\big\}
\nonumber \\
&
\nonumber \\
\overset{(c)}{\leq} \; &
\;\;\;\;\;\;\;\;\;\;\;\;\,
\sum_{{P\mathstrut}_{\bf x}^{}}
\;\;\;\;\;\;\;\;\;\;\;\;\;
\exp\big\{-{e\mathstrut}^{n\epsilon_{1}}\big\}
\;\;\;\;
\;\;\;\;\;\;\;\;\;\;\;\,
+ \sum_{\substack{{P\mathstrut}_{\bf x}^{}:\\
\\
{R\mathstrut}^{\,\text{types}}({P\mathstrut}_{\bf x}^{}, \, Q, \, D)\; \geq \; R \, - \, 2\epsilon_{1}}}
\,\Pr\,\big\{{\bf X} \, \in \, T({P\mathstrut}_{\bf x}^{})\big\}
\nonumber \\
&
\nonumber \\
\overset{(d)}{\leq} \; &
\;\;\;\;\;\;\;\;\;\;\;\;
{(n\, + \, 1)\mathstrut}^{|{\cal X}|}
\cdot \;
\exp\big\{-{e\mathstrut}^{n\epsilon_{1}}\big\}
\;\;\;\;
\;\;\;\;\;\;\;\;\;\;\;\,
+ \sum_{\substack{{P\mathstrut}_{\bf x}^{}:\\
\\
{R\mathstrut}^{\,\text{types}}({P\mathstrut}_{\bf x}^{}, \, Q, \, D)\; \geq \; R \, - \, 2\epsilon_{1}}}
\,\exp\big\{-nD({P\mathstrut}_{\bf x}^{}\;\|\;P )\big\}
\nonumber \\
\overset{(e)}{\leq} \; &
\;\;\;\;\;\;\;\;\;\;\;\;
{(n\, + \, 1)\mathstrut}^{|{\cal X}|}
\cdot \;
\exp\big\{-{e\mathstrut}^{n\epsilon_{1}}\big\}
\;\;\;\;
\;\;\;\;\;\;\;\;\,
+ \sum_{\substack{{P\mathstrut}_{\bf x}^{}:\\
\\
R({P\mathstrut}_{\bf x}^{}, \, Q, \, D \, - \, \epsilon_{2})\; \geq \; R \, - \, 2\epsilon_{1} \, - \, \epsilon_{2}}}
\,\exp\big\{-nD({P\mathstrut}_{\bf x}^{}\;\|\;P )\big\}
\nonumber \\
\overset{(f)}{\leq} \; &
\;\;\;\;\;\;\;\;\;\;\;\;
{(n\, + \, 1)\mathstrut}^{|{\cal X}|}
\cdot \;
\exp\big\{-{e\mathstrut}^{n\epsilon_{1}}\big\}
\;\;
+ \;\; {(n\, + \, 1)\mathstrut}^{|{\cal X}|}
\cdot
\,\exp\big\{-n E_{f}^{\,\text{types}}(R \, - \, 2\epsilon_{1} \, - \, \epsilon_{2}, \, D \, - \, \epsilon_{2})\big\}
\nonumber \\
\overset{(g)}{\leq} \; &
\;\;\;\;\;\;\;\;\;\;\;\;
{(n\, + \, 1)\mathstrut}^{|{\cal X}|}
\cdot \;
\exp\big\{-{e\mathstrut}^{n\epsilon_{1}}\big\}
\;\;
+ \;\; {(n\, + \, 1)\mathstrut}^{|{\cal X}|}
\cdot
\,\exp\big\{-n E_{f}(R \, - \, 2\epsilon_{1} \, - \, \epsilon_{2}, \, D \, - \, \epsilon_{2})\big\}.
\label{eqPfUpperBound}
\end{align}
Explanation of steps:\newline
($a$)   holds for sufficiently large $n$, when
\begin{align} 
\Pr\,\big\{\hat{\bf X\mathstrut}_{m} \; \in \; T\big({P\mathstrut}_{\hat{\bf x} \, | \,{\bf x}}^{}, \, {\bf X}\big)
\;\big|\;
{\bf X} \, \in \, T({P\mathstrut}_{\bf x}^{})\big\}
\; & \geq \;
\exp\Big\{-n\left[D\big({P\mathstrut}_{{\bf x}, \, \hat{\bf x}}^{} \,\big\|\, {P\mathstrut}_{\bf x}^{} \!\times Q\big) \, + \, \epsilon_{1}\right]\Big\},
\nonumber
\end{align}
with
\begin{displaymath}
Z_{i} \;\; \sim \;\; \text{i.i.d}\;\;\text{Bernoulli}\left(
\exp\Big\{-n\left[D\big({P\mathstrut}_{{\bf x}, \, \hat{\bf x}}^{} \,\big\|\, {P\mathstrut}_{\bf x}^{} \!\times Q\big) \, + \, \epsilon_{1}\right]\Big\}
\right).
\end{displaymath}
($b$) holds for
\begin{displaymath}
B_{i} \;\; \sim \;\; \text{i.i.d}\;\;\text{Bernoulli}\left(
\exp\Big\{-n\big[{R\mathstrut}^{\,\text{types}}({P\mathstrut}_{\bf x}^{}, Q, D) \, + \, \epsilon_{1}\big]\Big\}
\right),
\end{displaymath}
where
\begin{equation} \label{eqRTypes}
{R\mathstrut}^{\,\text{types}}({P\mathstrut}_{\bf x}^{}, Q, D)
\;\; \triangleq \;
\min_{\substack{{P\mathstrut}_{\hat{\bf x} \, | \,{\bf x}}^{}:\\
\\
d({P\mathstrut}_{{\bf x},\,\hat{\bf x}}^{}) \; \leq \; D
}}
D\big({P\mathstrut}_{{\bf x}, \, \hat{\bf x}}^{} \,\big\|\, {P\mathstrut}_{\bf x}^{} \!\times Q\big).
\end{equation}
($c$) holds by Lemma~\ref{lemma19} for
\begin{displaymath}
I \;\; = \;\; {R\mathstrut}^{\,\text{types}}({P\mathstrut}_{\bf x}^{}, Q, D) \, + \, \epsilon_{1}
\;\; \leq \;\; R \, - \, 2\epsilon_{1} \, + \, \epsilon_{1}
\;\; = \;\; R \, - \, \epsilon_{1}.
\end{displaymath}
($d$) uses the upper bound on the probability of a type (\ref{eqTypeProbSource}).\newline
($e$) Let ${W\mathstrut}^{*}$ denote the conditional distribution, achieving $R({P\mathstrut}_{\bf x}^{}, \, Q, \, D \, - \, \epsilon_{2})\,<\,+\infty$
for some $\epsilon_{2}\,>\,0$.
This implies
\begin{align}
D({P\mathstrut}_{\bf x}^{} \circ {W\mathstrut}^{*} \,\|\, {P\mathstrut}_{\bf x}^{} \!\times Q)
\;\; & = \;\; R({P\mathstrut}_{\bf x}^{}, \, Q, \, D \, - \, \epsilon_{2}),
\label{eqComb1} \\
d({P\mathstrut}_{\bf x}^{} \circ {W\mathstrut}^{*}) \;\; & \leq \;\; D \, - \, \epsilon_{2}.
\nonumber
\end{align}
Let ${W\mathstrut}_{n}^{*}$
denote a quantized version of the conditional distribution ${W\mathstrut}^{*}$
with variable precision $1/\big(n {P\mathstrut}_{\bf x}^{}(x)\big)$, i.e. a set of types with denominators $n {P\mathstrut}_{\bf x}^{}(x)$,
such that the joint distribution ${P\mathstrut}_{\bf x}^{} \circ {W\mathstrut}_{n}^{*}$ is a type with denominator $n$.
Observe, that the differences between ${P\mathstrut}_{\bf x}^{} \circ {W\mathstrut}^{*}$ and ${P\mathstrut}_{\bf x}^{} \circ {W\mathstrut}_{n}^{*}$
do not exceed $\tfrac{1}{n}$.
Therefore,
since the divergence, as a function of ${P\mathstrut}_{\bf x}^{} \circ W$, has bounded derivatives, and also the distortion measure $d(x, \hat{x})$
is bounded, for any $\epsilon_{2}\,>\,0$ there exists $n$ large enough,
such that the quantized distribution ${W\mathstrut}_{n}^{*}$
satisfies
\begin{align}
D({P\mathstrut}_{\bf x}^{} \circ {W\mathstrut}_{n}^{*} \,\|\, {P\mathstrut}_{\bf x}^{} \!\times Q)
\;\; & \leq \;\; D({P\mathstrut}_{\bf x}^{} \circ {W\mathstrut}^{*} \,\|\, {P\mathstrut}_{\bf x}^{} \!\times Q) \, + \, \epsilon_{2},
\label{eqComb2} \\
d({P\mathstrut}_{\bf x}^{} \circ {W\mathstrut}_{n}^{*}) \;\; & \leq \;\; D.
\nonumber
\end{align}
The last inequality implies
\begin{align}
D({P\mathstrut}_{\bf x}^{} \circ {W\mathstrut}_{n}^{*} \,\|\, {P\mathstrut}_{\bf x}^{} \!\times Q)
\;\; & \geq \;\; {R\mathstrut}^{\,\text{types}}({P\mathstrut}_{\bf x}^{}, Q, D).
\label{eqComb3}
\end{align}
The relations (\ref{eqComb3}), (\ref{eqComb2}), (\ref{eqComb1}) together give
\begin{align}
R({P\mathstrut}_{\bf x}^{}, \, Q, \, D \, - \, \epsilon_{2}) \, + \, \epsilon_{2}
\;\; & \geq \;\; {R\mathstrut}^{\,\text{types}}({P\mathstrut}_{\bf x}^{}, Q, D).
\label{eqResult4}
\end{align}
This explains ($e$).\newline
($f$) uses the definition
\begin{equation} \label{eqEfTypes}
E_{f}^{\,\text{types}}(R, D)
\;\; \triangleq \;\;
\min_{\substack{{P\mathstrut}_{\bf x}^{}:\\
\\
R({P\mathstrut}_{\bf x}^{}, \,Q, \,D) \; \geq \; R}} \; D({P\mathstrut}_{\bf x}^{} \;\|\; P).
\end{equation}
($g$) $E_{f}^{\,\text{types}}(R, D)$ is bounded from below by $E_{f}(R, D)$ defined in (\ref{eqFailure}).

We conclude from (\ref{eqPfUpperBound}):
\begin{thm} \label{thm13}
\begin{align}
& \liminf_{n \, \rightarrow \, \infty} \; \left\{-\frac{1}{n}\ln P_{f}\right\}
\;\; \geq \;\;
\lim_{\epsilon\,\rightarrow\,0}\;
E_{f}(R \, - \, \epsilon, \, D \, - \, \epsilon).
\label{eqLowerFailure}
\end{align}
\end{thm}

Lower bound on the probability of encoding failure:
\begin{align}
P_{f}
\;\; \overset{(a)}{\geq} \;
& \max_{\substack{{P\mathstrut}_{\bf x}^{}:\\
\\
{R\mathstrut}({P\mathstrut}_{\bf x}^{}, \, Q, \, D)\; \geq \; R \, + \, \epsilon_{1}
}}
\,\Pr\,\big\{{\bf X} \, \in \, T({P\mathstrut}_{\bf x}^{})\big\}
\,\times
\nonumber \\
& \;\;\;\;\;\;\;\;\;\;\;
\left[1 \; - \;
\sum_{{P\mathstrut}_{\hat{\bf x} \, | \,{\bf x}}^{}:\;\;
d({P\mathstrut}_{{\bf x},\,\hat{\bf x}}^{}) \; \leq \; D
}
\;\;
\sum_{m \, = \, 1}^{{e\mathstrut}^{nR}}
\Pr\,\Big\{
\hat{\bf X\mathstrut}_{m} \; \in \; T\big({P\mathstrut}_{\hat{\bf x} \, | \,{\bf x}}^{}, \, {\bf X}\big)
\,\Big|\, {P\mathstrut}_{\bf x}^{}\Big\}
\right]
\nonumber \\
\overset{(b)}{\geq} \;
& \max_{\substack{{P\mathstrut}_{\bf x}^{}:\\
\\
{R\mathstrut}({P\mathstrut}_{\bf x}^{}, \, Q, \, D)\; \geq \; R \, + \, \epsilon_{1}
}}
\,\Pr\,\big\{{\bf X} \, \in \, T({P\mathstrut}_{\bf x}^{})\big\}
\,\times
\nonumber \\
& \;\;\;\;\;\;\;\;\;\;\;
\left[1 \; - \;
\sum_{{P\mathstrut}_{\hat{\bf x} \, | \,{\bf x}}^{}:\;\;
d({P\mathstrut}_{{\bf x},\,\hat{\bf x}}^{}) \; \leq \; D
}
\;\;
\exp\Big\{-n\left[D\big({P\mathstrut}_{{\bf x}, \, \hat{\bf x}}^{} \,\big\|\, {P\mathstrut}_{\bf x}^{} \!\times Q\big) \, - \, R\right]\Big\}
\right]
\nonumber \\
\overset{(c)}{\geq} \;
& \max_{\substack{{P\mathstrut}_{\bf x}^{}:\\
\\
{R\mathstrut}({P\mathstrut}_{\bf x}^{}, \, Q, \, D)\; \geq \; R \, + \, \epsilon_{1}
}}
\,\Pr\,\big\{{\bf X} \, \in \, T({P\mathstrut}_{\bf x}^{})\big\}
\,\cdot\,
\Bigg[1 \; - \;
\sum_{{P\mathstrut}_{\hat{\bf x} \, | \,{\bf x}}^{}
}
\exp\Big\{-n\left[{R\mathstrut}({P\mathstrut}_{\bf x}^{}, \, Q, \, D) \, - \, R\right]\Big\}
\Bigg]
\nonumber \\
\geq \;
& \max_{\substack{{P\mathstrut}_{\bf x}^{}:\\
\\
{R\mathstrut}({P\mathstrut}_{\bf x}^{}, \, Q, \, D)\; \geq \; R \, + \, \epsilon_{1}
}}
\,\Pr\,\big\{{\bf X} \, \in \, T({P\mathstrut}_{\bf x}^{})\big\}
\,\cdot\,
\Bigg[1 \; - \;
\sum_{{P\mathstrut}_{\hat{\bf x} \, | \,{\bf x}}^{}
}
\exp\{-n\epsilon_{1}\}
\Bigg]
\nonumber \\
\overset{(d)}{\geq} \;
& \max_{\substack{{P\mathstrut}_{\bf x}^{}:\\
\\
{R\mathstrut}({P\mathstrut}_{\bf x}^{}, \, Q, \, D)\; \geq \; R \, + \, \epsilon_{1}
}}
{(n\, + \, 1)\mathstrut}^{-|{\cal X}|}\cdot
\exp\big\{-nD({P\mathstrut}_{\bf x}^{}\;\|\;P )\big\}
\cdot
\Big[1 \; - \;
{(n\, + \, 1)\mathstrut}^{|{\cal X}|\cdot|\hat{\cal X}|}\cdot
\exp\{-n\epsilon_{1}\}
\Big]
\nonumber \\
\overset{(e)}{\geq} \;
& \max_{\substack{{P\mathstrut}_{\bf x}^{}:\\
\\
{R\mathstrut}({P\mathstrut}_{\bf x}^{}, \, Q, \, D)\; \geq \; R \, + \, \epsilon_{1}
}}
\exp\big\{-n\big[D({P\mathstrut}_{\bf x}^{}\;\|\;P )\, + \, \epsilon_{2}\big]\big\}
\;\; \overset{(f)}{=} \;\;
\exp\big\{-n \big[E_{f}^{\,\text{types}}(R \, + \, \epsilon_{1}, \, D) \, + \, \epsilon_{2}\big]\big\}
\nonumber \\
\overset{(g)}{\geq} \;
& \;
\exp\big\{-n \big[E_{f}(R \, + \, \epsilon_{1} \, + \, \epsilon_{3}, \, D) \, + \, \epsilon_{2} \, + \, \epsilon_{3}\big]\big\}.
\label{eqPfLowerBound}
\end{align}
Explanation of steps:\newline
($a$) uses the union bound for the probability of the {\em complementary} event of encoding success.\newline
($b$) uses the upper bound on the probability of a conditional type
\begin{align}
\Pr\,\big\{\hat{\bf X\mathstrut}_{m} \; \in \; T\big({P\mathstrut}_{\hat{\bf x} \, | \,{\bf x}}^{}, \, {\bf X}\big)
\;\big|\;
{\bf X} \, \in \, T({P\mathstrut}_{\bf x}^{})\big\}
\; & \leq \;
\exp\big\{-n D\big({P\mathstrut}_{{\bf x}, \, \hat{\bf x}}^{} \,\big\|\, {P\mathstrut}_{\bf x}^{} \!\times Q\big)\big\},
\nonumber
\end{align}
($c$) follows by the definition (\ref{eqRTypes}) and the property
$\,R({P\mathstrut}_{\bf x}^{}, Q, D) \, \leq \, {R\mathstrut}^{\,\text{types}}({P\mathstrut}_{\bf x}^{}, Q, D)$.\newline
($d$) uses the lower bound on the probability of a type and the polynomial upper bound on the number of conditional types.\newline
($e$) holds for sufficiently large $n$ for a given $\epsilon_{2} \, > \, 0$.\newline
($f$) follows by the definition (\ref{eqEfTypes}).\newline
($g$) Let ${T\mathstrut}^{*}$ denote the distribution achieving $E_{f}(R \, + \, \epsilon_{1} \, + \, \epsilon_{3}, \, D)\,<\,+\infty$.
Then
\begin{align}
D({T\mathstrut}^{*}\;\|\; P)
\;\; & = \;\; E_{f}(R \, + \, \epsilon_{1} \, + \, \epsilon_{3}, \, D),
\label{eqRel1} \\
R({T\mathstrut}^{*}\!, Q, D) \;\; & \geq \;\; R \, + \, \epsilon_{1} \, + \, \epsilon_{3}.
\nonumber
\end{align}
Let ${T\mathstrut}_{n}^{*}$
denote a quantized version of the distribution ${T\mathstrut}^{*}$
with precision $\frac{1}{n}$, i.e. a type with denominator $n$.
We note that both $D({T\mathstrut}^{*}\;\|\; P)$ and $R({T\mathstrut}^{*}\!, Q, D)$
are convex ($\cup$)
functions of $T$,
$R({T\mathstrut}^{*}\!, Q, D)$ is lower semi-continuous.
The latter property implies that if $R({T\mathstrut}^{*}\!, Q, D)$ is $+\infty$,
so is $R({T\mathstrut}_{n}^{*} , Q, D)$
for sufficiently large $n$.
Thus for any $\epsilon_{3} \, > \, 0$ there exists $n$ sufficiently large, such that
\begin{align}
D({T\mathstrut}_{n}^{*}\;\|\; P)
\;\; & \leq \;\; D({T\mathstrut}^{*}\;\|\; P) \, + \, \epsilon_{3},
\label{eqRel2} \\
R({T\mathstrut}_{n}^{*} , Q, D) \;\; & \geq \;\; R \, + \, \epsilon_{1}.
\nonumber
\end{align}
The last inequality implies
\begin{equation} \label{eqRel3}
D({T\mathstrut}_{n}^{*}\;\|\; P)
\;\; \geq \;\; E_{f}^{\,\text{types}}(R \, + \, \epsilon_{1}, \, D).
\end{equation}
The relations (\ref{eqRel3}), (\ref{eqRel2}), (\ref{eqRel1}) give
\begin{displaymath}
E_{f}(R \, + \, \epsilon_{1} \, + \, \epsilon_{3}, \, D) \, + \, \epsilon_{3}
\;\; \geq \;\; E_{f}^{\,\text{types}}(R \, + \, \epsilon_{1}, \, D).
\end{displaymath}
This explains ($g$).

We conclude from (\ref{eqPfLowerBound}):
\begin{thm} \label{thm14}
\begin{align}
& \limsup_{n \, \rightarrow \, \infty} \; \left\{-\frac{1}{n}\ln P_{f}\right\}
\;\; \leq \;\;
\lim_{\epsilon\,\rightarrow\,0}\;
E_{f}(R \, + \, \epsilon, \, D).
\label{eqUpperFailure}
\end{align}
\end{thm}

The bounds of Theorem~\ref{thm13} and Theorem~\ref{thm14} prove Theorem~\ref{thm2}.

\section{{\bf Proof of the identity} \texorpdfstring{$\;R(Q\circ P, \, Q, \, 0) \, = \, I(Q\circ P)$}{\em R(Q o P, Q, 0) = I(Q o P)}} \label{S22}

{\em Proposition 2:}
\begin{equation} \label{eqRQPQ0IQP}
R(Q\circ P, \, Q, \, 0) \, = \, I(Q\circ P).
\end{equation}
\begin{proof} This proof uses Lemma~\ref{lemma1}. Alternatively, it can be proved by the method of Lagrange multipliers.
We use the explicit expression of Lemma~\ref{lemma1}:
\begin{align}
R(Q\circ P, \, Q, \, 0) \; & = \; \sup_{s\,\geq\,0} \; \Bigg\{-\sum_{x, \, y}Q(x)P(y\,|\, x) \ln \sum_{\hat{x}}Q(\hat{x})\left[\frac{P(y \,|\, x)}{P(y \,|\, \hat{x})}\right]^{-s}\Bigg\}
\label{eqRQPQ0Expl} \\
& = \; \sup_{s \,\geq \,0} \; \min_{W(\hat{x} \,|\, x, \, y)} \;\big\{ D\big((Q\circ P) \circ W \; \| \; (Q\circ P) \times Q\big) \; + \; s\big[d\big((Q\circ P)\circ W\big) \; - \; D\big]\big\}.
\label{eqConcavityins}
\end{align}
In the above, (\ref{eqRQPQ0Expl}) is the same as (\ref{eqRTQD}), and (\ref{eqConcavityins}) is the same as (\ref{eqWs}).
Observe, that the expression inside the supremum of (\ref{eqRQPQ0Expl}) and (\ref{eqConcavityins}) is a concave ($\cap$) function of $s$,
as a minimum of affine functions of $s$.
We conclude, that in order to find the maximum over $s$, it suffices to find such $s$, for which the derivative of the expression in (\ref{eqRQPQ0Expl}) is zero.

Differentiation with respect to $s$ gives:
\begin{align}
& \frac{d}{d s}\;
\Bigg\{-\sum_{x, \, y}Q(x)P(y\,|\, x) \ln \sum_{\hat{x}}Q(\hat{x})\left[\frac{P(y \,|\, x)}{P(y \,|\, \hat{x})}\right]^{-s}\Bigg\}
\nonumber \\
& =
-\sum_{x, \, y}Q(x)P(y\,|\, x) \,\frac{1}{\sum_{a}Q(a)\left[\frac{P(y \,|\, x)}{P(y \,|\, a)}\right]^{-s}}
\,\sum_{\hat{x}}Q(\hat{x})\left[\frac{P(y \,|\, x)}{P(y \,|\, \hat{x})}\right]^{-s}
\left(-\ln \frac{P(y \,|\, x)}{P(y \,|\, \hat{x})}\right)
\nonumber \\
& =
\sum_{x, \, y}Q(x)P(y\,|\, x) \,\frac{1}{\sum_{a}Q(a)P^{s}(y \,|\, a)}
\,\sum_{\hat{x}}Q(\hat{x})P^{s}(y \,|\, \hat{x})
\ln \frac{P(y \,|\, x)}{P(y \,|\, \hat{x})}.
\nonumber
\end{align}
Now, let us substitute $s\, = \, 1$:
\begin{align}
& \sum_{x, \, y}Q(x)P(y\,|\, x) \,\frac{1}{\sum_{a}Q(a)P(y \,|\, a)}
\,\sum_{\hat{x}}Q(\hat{x})P(y \,|\, \hat{x})
\ln \frac{P(y \,|\, x)}{P(y \,|\, \hat{x})}
\nonumber \\
= \; &
\sum_{x, \, y}Q(x)P(y\,|\, x) \,\frac{\sum_{\hat{x}}Q(\hat{x})P(y \,|\, \hat{x})}{\sum_{a}Q(a)P(y \,|\, a)}
\,
\ln P(y \,|\, x) \, - \,
\sum_{y} \,\frac{\sum_{x}Q(x)P(y\,|\, x)}{\sum_{a}Q(a)P(y \,|\, a)}\sum_{\hat{x}}Q(\hat{x})P(y \,|\, \hat{x})
\,
\ln P(y \,|\, \hat{x})
\nonumber \\
= \; &
\sum_{x, \, y}Q(x)P(y\,|\, x)
\,
\ln P(y \,|\, x) \, - \,
\sum_{\hat{x}, \, y}Q(\hat{x})P(y \,|\, \hat{x})
\,
\ln P(y \,|\, \hat{x}) \; = \; 0.
\nonumber
\end{align}
We conclude, that $s^{*} \, = \, 1$. With $s^{*} \, = \, 1$ we obtain:
\begin{align}
R(Q\circ P, \, Q, \, 0) \; & = \; -\sum_{x, \, y}Q(x)P(y\,|\, x) \ln \sum_{\hat{x}}Q(\hat{x})\left[\frac{P(y \,|\, x)}{P(y \,|\, \hat{x})}\right]^{-1}
\nonumber \\
& = \; -\sum_{x, \, y}Q(x)P(y\,|\, x) \ln \frac{\sum_{\hat{x}}Q(\hat{x})P(y \,|\, \hat{x})}{P(y \,|\, x)}
\; = \; I(Q\circ P).
\nonumber
\end{align}
\end{proof}

Note, that the minimizing $W(\hat{x} \, | \, x, \, y)$ in (\ref{eqConcavityins}) is given by
\begin{displaymath}
W^{*}(\hat{x} \, | \, x, \, y) \; = \;
\frac{Q(\hat{x})P(y \,|\, \hat{x})}{\sum_{\hat{x}}Q(\hat{x})P(y \,|\, \hat{x})}.
\end{displaymath}

\section{{\bf Proof of the identity} \texorpdfstring{$\;R(Q\circ P, \, Q, \, 0) \, = \, R(Q\circ P, \, 0)$}{\em R(Q o P, Q, 0) = R(Q o P, 0)}} \label{S23}
The rate-distortion function of the effective source $Q\circ P$ is given by the minimum of the function $ R(Q\circ P, Q', D)$ over $Q'$:
\begin{align}
R(Q\circ P, \, 0) \; = \; \min_{Q'} \; R(Q\circ P, \, Q', \, 0) \;
& \overset{(\ref{eqRTQD})}{=} \; \min_{Q'} \; \sup_{s\,\geq\,0} \; \Bigg\{-\sum_{x, \, y}Q(x)P(y\,|\, x) \ln \sum_{\hat{x}}Q'(\hat{x})\left[\frac{P(y \,|\, x)}{P(y \,|\, \hat{x})}\right]^{-s}\Bigg\}
\nonumber \\
& \geq \; \min_{Q'} \;  \Bigg\{-\sum_{x, \, y}Q(x)P(y\,|\, x) \ln \sum_{\hat{x}}Q'(\hat{x})\left[\frac{P(y \,|\, x)}{P(y \,|\, \hat{x})}\right]^{-1}\Bigg\}
\nonumber \\
& = \; -\sum_{x, \, y}Q(x)P(y\,|\, x) \ln \frac{\sum_{\hat{x}}Q(\hat{x})P(y \,|\, \hat{x})}{P(y \,|\, x)}
\; = \; I(Q\circ P)
\nonumber \\
& \!\!\!\! \overset{(\ref{eqRQPQ0IQP})}{=} \; R(Q\circ P, \, Q, \, 0)
\; \geq \;
\min_{Q'} \; R(Q\circ P, \, Q', \, 0) \; = \; R(Q\circ P, \, 0).
\nonumber
\end{align}
We conclude, that the mutual information $I(Q\circ P)$ can be viewed also as the rate-distortion function $R(Q\circ P, \, D)$,
corresponding to the distortion measure (\ref{eqDmeasure}),
evaluated at $D\, = \, 0$.
The channel capacity therefore is given by the maximum of this rate-distortion function at $D \, = \, 0\,$\footnote{Not the lowest $D$, because the chosen distortion measure (\ref{eqDmeasure}) can have negative values.}:
\begin{equation} \label{eqCapacityRDF}
C(P) \; = \; \max_{Q} \; R(Q\circ P, \, 0).
\end{equation}

\bibliographystyle{IEEEtran}

\begin{thebibliography}{1}

\bibitem
{ZamirRose01}
R.~Zamir and K.~Rose.
\newblock Natural Type Selection in Adaptive Lossy Compression.
\newblock {\em IEEE Trans. on Information Theory}, vol. 47, no. 1, pp. 99--111, Jan 2001.


\bibitem
{Forney68}
G.D.~Forney.
\newblock Exponential Error Bounds for Erasure, List, and Decision Feedback Schemes.
\newblock {\em IEEE Trans. on Information Theory}, vol. 14, no. 2, pp. 206--220, Mar 1968.


\bibitem
{Gallager73}
R.~Gallager.
\newblock The Random Coding Bound is Tight for the Average Code.
\newblock {\em IEEE Trans. on Information Theory}, vol. 19, no. 2, pp. 244--246, Mar 1973.






\end{thebibliography}

\end{document}